%% file: main.tex
\newif\ifsubmit     
\newif\ifllncs      
\newif\ifexabs      
\newif\ifblind      

\blindfalse

\ifllncs
  \documentclass[runningheads,a4paper]{llncs}

\else
  \documentclass[letterpaper,11pt]{article}
  \usepackage[in]{fullpage}
  \usepackage{setspace}
  \usepackage[margin=1in]{geometry}
\fi
\usepackage{fontawesome5}
\usepackage{cmap}

\usepackage{iftex}
\ifPDFTeX
  \usepackage[utf8]{inputenc}
  \usepackage[noTeX]{mmap}
  \usepackage[T1]{fontenc}
\fi
\ifLuaTeX
  \usepackage{luatex85}
  \usepackage[noTeX]{mmap}
\fi

\usepackage{mdframed}
\usepackage{amsmath}
\usepackage{amsfonts}
\usepackage{amssymb}
\usepackage{amsthm}
\usepackage{color}
\usepackage{xspace}

\allowdisplaybreaks[3]

\usepackage{appendix}
\usepackage{algorithm}
\usepackage{algpseudocode}
\usepackage{braket}
\usepackage{comment}
\usepackage{url}
\usepackage{bbm}
\usepackage{multicol}
\usepackage{mdframed}
\usepackage{multirow}
\usepackage{marvosym}
\usepackage[bookmarks]{hyperref}
\usepackage[nameinlink]{cleveref}
\hypersetup{
  colorlinks,
  allcolors=blue
}

\ifllncs

  \spnewtheorem{claim}{Claim}{\bfseries}{\rmfamily}
  \crefname{claim}{claim}{claims}
  \Crefname{claim}{Claim}{Claims}
\else
  \newtheorem{theorem}{Theorem}[section]
  \newtheorem{definition}[theorem]{Definition}
  
  \newtheorem{remark}[theorem]{Remark}
  \newtheorem{lemma}[theorem]{Lemma}
  \newtheorem{corollary}[theorem]{Corollary}

  \newtheorem{claim}[theorem]{Claim}
  \newtheorem*{remark*}{Remark}
  \newtheorem{construction}[theorem]{Construction}
  \newtheorem{fact}[theorem]{Fact}

  \newtheorem*{theorem*}{Theorem}
  \newtheorem*{lemma*}{Lemma}

\usepackage[style=alphabetic,minalphanames=3,maxalphanames=4,maxnames=99,backref=true]{biblatex}

  \DeclareFieldFormat{eprint:iacr}{Cryptology ePrint Archive: \href{https://ia.cr/#1}{\texttt{#1}}}
  \DeclareFieldFormat{eprint:iacrarchive}{Cryptology ePrint Archive: \href{https://eprint.iacr.org/archive/#1}{\texttt{#1}}}
  \AtEveryBibitem{
    \clearlist{address}
    \clearfield{date}
    \clearfield{isbn}
    \clearfield{issn}
    \clearlist{location}
    \clearfield{month}
    \clearfield{series}
 
    \ifentrytype{book}{}{
      \clearlist{publisher}
      \clearname{editor}
    }
  }
\fi

\usepackage{appendix}
\usepackage{algorithm}
\usepackage{algpseudocode}
\usepackage{braket}
\usepackage{comment}
\usepackage{url}
\usepackage{multicol}
\usepackage{tikz}
\usetikzlibrary{calc,arrows,automata,positioning,fit,decorations.pathreplacing,decorations.pathmorphing}
\usepackage{caption}
\usepackage[caption=false]{subfig}
\usepackage[font=small,labelfont=bf]{caption}
\usepackage{comment}
\usepackage{pifont}
\usepackage{fontawesome5}
\usetikzlibrary{patterns}
\usetikzlibrary{arrows.meta}

\usepackage{dsfont}

\usepackage{enumitem}
\setlist[description]{noitemsep}
\setlist[enumerate]{noitemsep}
\setlist[itemize]{noitemsep}

\usepackage{soul, xcolor, xparse}
\makeatletter
  \ExplSyntaxOn
    \cs_new:Npn \white_text:n #1
    {
      \fp_set:Nn \l_tmpa_fp {#1 * .01}
      \llap{\textcolor{white}{\the\SOUL@syllable}\hspace{\fp_to_decimal:N \l_tmpa_fp em}}
      \llap{\textcolor{white}{\the\SOUL@syllable}\hspace{-\fp_to_decimal:N \l_tmpa_fp em}}
    }
    \NewDocumentCommand{\whiten}{ m }
    {
      \int_step_function:nnnN {1}{1}{#1} \white_text:n
    }
  \ExplSyntaxOff
  
  \NewDocumentCommand{ \varul }{ D<>{5} O{0.2ex} O{0.1ex} +m } {%
    \begingroup
    \setul{#2}{#3}%
    \def\SOUL@uleverysyllable{%
      \setbox0=\hbox{\the\SOUL@syllable}%
      \ifdim\dp0>\z@
      \SOUL@ulunderline{\phantom{\the\SOUL@syllable}}%
      \whiten{#1}%
      \llap{%
        \the\SOUL@syllable
        \SOUL@setkern\SOUL@charkern
      }%
      \else
      \SOUL@ulunderline{%
        \the\SOUL@syllable
        \SOUL@setkern\SOUL@charkern
      }%
      \fi}%
    \ul{#4}%
    \endgroup
  }
\makeatother

\input{header}

\title{
	How To Track Qubits Through Space and Time\\
     {\large{(Or: Sailing in a Quantum Boat)}}
}

\ifllncs
\author{James Bartusek}
\institute{Columbia University}
\author{Zikuan Huang}
\institute{Shanghai Qizhi Institute}
\author{Leo Orshansky}
\institute{Columbia University}
\author{Henry Yuen}
\institute{Columbia University}
\else
\ifblind
\author{}
\else
\author{James Bartusek  \\ \small{Columbia} \and Zikuan Huang  \\ \small{Shanghai Qizhi Institute} \and Leo Orshansky \\ \small{Columbia} \and Henry Yuen  \\ \small{Columbia}}
\fi
\fi

\date{}

\begin{document}

\maketitle

\ifllncs 
\begin{abstract}
\end{abstract}
\else
\begin{abstract}
\input{abstract}

\end{abstract}
\fi

\newpage

\tableofcontents

\newpage

\input{introduction}
\input{technical_overview}
\input{preliminary}
\input{quantum_anchor}
\input{collapsing_oracles_versus_non-collapsing_oracles}
\input{monogamy-of-entanglement_games_with_abort}
\input{entanglement_extraction_with_abort}
\input{warmup_entanglement_localization}
\input{localizing_quantum_information_sequential_repetition}
\input{applications}
\printbibliography
\appendix

\end{document}

%% file: header.tex
\ifblind
    \newcommand{\james}[1]{}
    \newcommand{\zikuan}[1]{}
    \newcommand{\henry}[1]{}
    \newcommand{\leo}[1]{}
\else
    \newcommand{\james}[1]{{\color{olive} James: #1}}
    \newcommand{\zikuan}[1]{{\color{blue} Zikuan: #1}}
    \newcommand{\henry}[1]{{\color{orange} Henry: #1}}
    \newcommand{\leo}[1]{{\color{magenta} Leo: #1}}
\fi

\newcommand{\E}{\mathop{\mathbb{E}}}

\newcommand{\abs}[1]{\left|#1\right|}
\newcommand{\norm}[1]{\left\lVert#1\right\rVert}

\newcommand{\As}{\mathcal{A}}

\newcommand{\Sim}{{\sf Sim}}

\newcommand{\cA}{\mathcal{A}}

\newcommand{\yb}{{\sf YB}}

\newcommand{\enc}{{\sf Enc}}
\newcommand{\dec}{{\sf Dec}}

\newcommand{\genchain}{{\sf GenAnchorState}}

\newcommand{\sto}{{\sf StO}}

\newcommand{\csto}{{\sf CStO}}

\newcommand{\stddecomp}{{\sf StdDecomp}}

\newcommand{\F}{\mathbb{F}}

\newcommand{\N}{\mathbb{N}}

\newcommand{\cF}{\mathcal{F}}

\newcommand{\cB}{{\mathcal{B}}}
\newcommand{\cC}{{\mathcal{C}}}
\newcommand{\cD}{{\mathcal{D}}}

\newcommand{\cE}{{\mathcal{E}}}
\newcommand{\cI}{{\mathcal{I}}}
\newcommand{\cO}{{\mathcal{O}}}

\newcommand{\epr}{{\sf EPR}}
\newcommand{\eps}{\varepsilon}

\newcommand{\keygen}{{\sf KeyGen}}

\newcommand{\Sign}{\mathsf{Sign}}
\newcommand{\Ver}{\mathsf{Ver}}
\newcommand{\QSIO}{\mathsf{QSO}}

\newcommand{\register}{\mathsf{register}}

\newcommand{\sparam}{\mathsf{sp}}

\newcommand{\ek}{{\sf ek}}
\newcommand{\sk}{{\sf sk}}

\newcommand{\rand}{\overset{\$}{\gets}}

\renewcommand{\kappa}{\ell}

\newcommand{\poly}{{\sf poly}}
\newcommand{\negl}{{\sf negl}}

\DeclareMathOperator{\Tr}{Tr}

\newcommand{\Eval}{{\sf Eval}}

\newcommand{\p}{\mathbf{p}}

\renewcommand{\ket}[1]{\left|#1\right\rangle}
\renewcommand{\bra}[1]{\left\langle#1\right|}
\newcommand{\ketbra}[2]{\ket{#1}\!\bra{#2}}
\renewcommand{\braket}[2]{\left\langle #1\vert #2\right\rangle}
\renewcommand{\cal}[1]{\mathcal{#1}}
\newcommand{\R}{\mathbb{R}}

\newcommand{\brackets}[1]{\left(#1 \right)}

\newcommand{\ipd}[1]{\left \langle {#1} \right \rangle}

\newcommand{\Enc}{\mathsf{Enc}}

\DeclareMathOperator{\TD}{TD}
\newcommand{\TraceDist}[2]{{\TD\left(#1, #2\right)}}

\newcommand{\secparam}{\lambda}

\newcommand{\Setup}{\ensuremath{{\sf Setup}}\xspace}

\DeclareMathAlphabet\mathbfcal{OMS}{cmsy}{b}{n}

\newcommand{\Protect}{{\sf{Protect}}}
\newcommand{\secp}{{\sf{\lambda}}}
\newcommand{\Localize}{\ensuremath{\mathsf{Localize}}\xspace}
\newcommand{\param}{{\sf{pp}}}
\newcommand{\CS}{{\sf CS}}
\newcommand{\Can}{{\sf Can}}
\newcommand{\QSO}{{\sf QSO}}
\newcommand{\Authenticate}{{\sf Auth}}

\newcommand{\tr}{\mathsf{Tr}}

\newcommand{\prob}[1]{{\sf{Pr}}\left[#1\right]}
\newcommand{\probs}[2]{\underset{#1}{\sf{Pr}}\left[#2\right]}

\DeclareFontFamily{U}{skulls}{}
\DeclareFontShape{U}{skulls}{m}{n}{ <-> skull }{}

\DeclareRobustCommand{\substack}[1]{\subarray{c}#1\endsubarray}

\renewcommand{\set}[1]{\left\{#1\right\}}
\newcommand{\expect}[1]{\E\left[#1\right]}

\newcommand{\anchor}{\text{\faAnchor}}
\newcommand{\vessel}{\text{\faShip}}
\newcommand{\key}{\text{\faKey}}
\newcommand{\chain}{\Psi}

\newcommand{\expectc}[2]{\mathop{\mathbb{E}}_{#1}\left[#2\right]}

\newenvironment{boxfig}[2]{\begin{figure}[#1]\fbox{\begin{minipage}{0.97\linewidth}
                        \vspace{0.2em}
                        \makebox[0.025\linewidth]{}
                        \begin{minipage}{0.95\linewidth}
            {{
                        #2 }}
                        \end{minipage}
                        \vspace{0.2em}
                        \end{minipage}}
                        }
                        {\end{figure}}

\newcommand{\areg}{\mathbf{A}}
\newcommand{\breg}{\mathbf{B}}

\newcommand{\dreg}{\mathbf{D}}

\newcommand{\lreg}{\mathbf{L}}
\newcommand{\mreg}{\mathbf{M}}

\newcommand{\rreg}{\mathbf{R}}
\newcommand{\sreg}{\mathbf{S}}

\newcommand{\ureg}{\mathbf{U}}
\newcommand{\vreg}{\mathbf{V}}
\newcommand{\wreg}{\mathbf{W}}
\newcommand{\xreg}{\mathbf{X}}

\newcommand{\zreg}{\mathbf{Z}}
\newcommand{\trace}[1]{\Tr\brackets{#1}}

%% file: abstract.tex

While quantum position verification aims to certify a prover's location using quantum information, existing security definitions only guarantee that \emph{part} of the successful adversarial party is in the claimed location. This leaves open the possibility that a distributed team of adversaries can jointly simulate a prover in a way that defeats the intended meaning of ``being at a location'' in position-based cryptography.

We introduce stronger notions of position verification that we call \emph{quantum localization}, which requires that there is a specified, unclonable state at the verified spacetime point -- and that this state can be found nowhere else. We show that quantum localization leads naturally to a meaningful notion of \emph{trajectory} verification, in which quantum information is verifiably tracked through space and time. We construct quantum localization and trajectory verification protocols using \emph{quantum anchor states}, which generalize coset states from unclonable cryptography. The security of our schemes is proven in the classical oracle (i.e. ideal obfuscation) model, which can be heuristically instantiated in the plain model using post-quantum indistinguishability obfuscation.

We also introduce and instantiate the concept of \emph{functionality localization}, which guarantees that the adversary has the ability to compute a secret function at the verified spacetime point, and this function cannot be computed anywhere else. This raises the intriguing possibility of localizing computational capabilities in space and time. 

More broadly, we believe our notions of quantum localization and our feasibility results provide stronger foundations for position-based cryptography.

%% file: introduction.tex
\section{Introduction}
\label{sec:intro}


\paragraph{Position verification.} Is it possible to verify a quantum state's location in space and time?
In the field of quantum cryptography, this is ostensibly the goal of 
\emph{quantum position verification (QPV)}. A QPV protocol attempts to verify the physical location of a user (called the \emph{prover}) by sending them challenges (which are, in general, quantum states) and receiving responses. Intuitively, the protocol is deemed secure if a successful prover, responding within some timing constraint, must be in a purported location $L$. As argued by~\cite{KMS11qubit-routing,chandran2009position,buhrman2014position}, secure position verification is impossible in the classical setting because a team of spoofers (all of whom are \emph{not} in location $L$) can copy and forward messages to each other to simulate a fictitious prover in location $L$. On the other hand, the No-Cloning Theorem of quantum mechanics stymies such copy-and-forwarding attacks, and in fact secure QPV protocols have been established in the bounded-qubits model~\cite{Tomamichel_2013,bluhm2021positionbased,asadi2025ranklowerboundsnonlocal,Asadi_2025} and the random oracle model~\cite{unruh2014pvqrom}. 

One particularly intriguing motivation for QPV is concept of \emph{position-based cryptography}, in which a user's credential is established by their physical location (as opposed to being established by having some secret information such as a password or private key)~\cite{chandran2009position}. Position-based cryptography includes tasks such as position-based encryption (i.e., an encrypted message can only be decrypted by recipients in an authorized physical location) or position-based signatures (i.e., a message's signature ensures that it was authorized from a specific location). Recently, QPV has been generalized to \emph{privacy-preserving} proofs of location that only reveal part of a prover's location and nothing else~\cite{girish2026private} (i.e., a user can prove to the police that they were somewhere far from the scene of a crime, without revealing any other information about their specific whereabouts).


\paragraph{A conceptual gap.} However, closer examination reveals a gap between the intuitive goals of QPV and the formal notion of position verification considered thus far. In prior work, security of QPV was defined as follows: a team of spoofers $\{\cal{P}_1,\ldots,\cal{P}_k\}$ can succeed in a QPV protocol with high probability only if \emph{at least one} of the spoofers $\cal{P}_i$ is in the correct location $L$. This is a rather weak guarantee, however. Imagine trying to use QPV to prove that one is not near the scene of a crime. Unfortunately, the definition only guarantees that \emph{at least one} spoofer wasn't at the crime location; it doesn't say anything about the other spoofers! In other words, the security definition of QPV does not rule out a spoofing strategy that is distributed across space and time. 

We illustrate another conceptual difficulty with the security definition of QPV. Consider a natural generalization of position verification that we call \emph{trajectory verification}: here we want to verify that a prover has traveled along a trajectory $L(\cdot)$ described as a function of time $t$. A natural approach is to run separate QPV protocols $\set{\Pi_i}_i$ for many spacetime points $\set{(L(t_i),t_i)}_i$, respectively, such that these points divide the trajectory into sufficiently small segments. An immediate problem is that the existing security guarantees of $\Pi_1,\Pi_2,\ldots$ do not distinguish between whether a \emph{single} prover has traveled along the trajectory $L(\cdot)$ or whether an entire team $\{\cal{P}_1,\cal{P}_2,\ldots\}$ of provers participated in the protocol, where each prover $\cal{P}_i$ is stationed at location $L(t_i)$, only participating in protocol $\Pi_i$ at time $t_i$. To rule out the latter scenario, one needs to explain why a line of provers is fundamentally different from a single prover that is moving!


\paragraph{This work.} These issues call for a notion of \emph{localizing} the behavior of an adversary who might be \emph{a priori} distributed in space and time.  The main contributions of this paper are to define several notions of ``quantum localization'' and establish their feasibility via constructions in the (classical) ideal obfuscation model,\footnote{We provide more details about this model, which is sometimes referred to as the classical oracle model, in \Cref{subsec:ideal-obf}.} which can be heuristically instantiated using post-quantum indistinguishability obfuscation of classical circuits. We then show that these notions enable meaningful notions of trajectory verification, yielding the first feasibility result for verifying the trajectory of an entity through spacetime.

We believe that our notions of localization come closer to capturing the concept of secure position verification. They give operational meaning to tracking the physical location of quantum states and computations, and lay the proper foundation for more complicated tasks such as trajectory verification. 


\subsection{Entanglement Localization and Trajectory Verification}
We present stronger security notions for QPV that we call \emph{quantum localization}, or \emph{localization} for short. Informally, these security notions will guarantee that any successful prover strategy (which may involve a team of provers $\{\cal{P}_1,\ldots,\cal{P}_k\}$ moving around) must contain a specific object at the correct point $(L,t)$ in spacetime. By ``contain,'' we mean that there exists a procedure called the \emph{extractor} acting only on the prover's strategy at the point $(L,t)$ that recovers the desired object. In this paper we identify three types of objects that can be localized: entanglement, (unclonable families of) quantum states, and (copy-protectable) functionalities. 

We first describe \emph{entanglement} localization and then describe how to build on entanglement localization in order to achieve a meaningful notion of \emph{trajectory} verification. 

\paragraph{Entanglement Localization.} Let $\psi$ be a bipartite entangled state on registers $\areg$ and $\breg$. Consider a QPV protocol where the verifiers generate $\psi$, keep register $\areg$, and send register $\breg$ to the prover who is purportedly at spacetime point $(L,t)$. 

We say that the protocol \emph{localizes the entanglement in $\psi$ at $(L,t)$} if for any (possibly nonlocal) prover strategy that is accepted with probability $\eta$, there exists an extractor $\cal{E}$ that acts on the quantum state at spacetime point $(L,t)$ and achieves the following. With probability $\eta$, it outputs a register $\breg$ that, together with the verifier's register $\areg$, is close to being in the state $\psi$. In other words, it has recovered the entangled state $\psi$ between register $\areg$ and the spacetime point $(L,t)$ with the same probability as the probability that the prover succeeds in the protocol. We formally define this in \Cref{def:NDEL}.

Entanglement localization captures the idea that, in order to be successful, a team of provers \emph{must} ensure that the quantum entanglement in the state $\psi$ arrives at spacetime point $(L,t)$. Suppose that $\psi$ is a pure state that is maximally entangled across the $V : P$ cut. By monogamy of entanglement, the verifier's register $\areg$ must then be unentangled with any other location $L' \neq L$ at time $t$.

\paragraph{Trajectory Verification.} The notion of entanglement localization gives rise to a natural definition of trajectory verification. Let $L(\cdot)$ denote a trajectory in spacetime. Intuitively, a trajectory verification protocol for $L(\cdot)$ is secure if there exists a maximally entangled state $\psi$ such that entanglement localization can be performed with respect to $\psi$ at spacetime points $(L(t),t)$ for all time $t$ along the trajectory $L(\cdot)$. Importantly, the verifier's part of $\psi$ remains at the same register $V$ throughout the protocol, while the prover's part of $\psi$ may travel around. This definition captures the idea of a \emph{single} entity moving along the trajectory $L(\cdot)$ by tracking the movement of the entanglement $\psi$. If the prover is accepted by the protocol, monogamy of entanglement implies that $\psi$ couldn't have strayed from the trajectory $L(\cdot)$. 

We note that in order to repeatedly run an entanglement localization protocol on the same state, it must satisfy an additional completeness guarantee which we call \emph{non-destructive}. That is, the prover's state must remain intact after interacting with the verifier. Therefore, our focus in this work is on building \emph{non-destructive} versions of quantum localization.


\paragraph{Construction.} 
We show how to construct secure trajectory verification in the ideal obfuscation model. In this model, the \Setup samples the description of an efficient classical circuit, which all parties are given \emph{black-box} access to throughout the protocol. Any protocol in this model can be heuristically instantiated in the plain model by applying a candidate post-quantum indistinguishability obfuscator to the classical circuit sampled by \Setup. 

Trajectory verification as (informally) defined above immediately yields a protocol for entanglement localization as a special case (where the trajectory is stationary at $L$). For the sake of exposition we first describe (a simplified version of) the entanglement localization protocol, and then describe how to extend it to obtain trajectory verification. Moreover, we focus here on the ``high success probability'' regime, where we only guarantee extraction success if the prover passes the protocol with probability close to 1. 

Assume that space is one-dimensional and the location to be verified is the origin $L = 0$. Place verifiers $\cal{V}_L,\cal{V}_R$ at locations $-1$ and $+1$, respectively. We assume messages travel one unit of space per unit time. 

The entangled state used in our protocol is what we call a \emph{quantum anchor state}. Let $S \leq T \leq \F_2^{3n}$ be subspaces of dimension $n$ and $2n$, respectively, and let $u, v \in \F_2^{3n}$ be some vectors. Let $\CS = \{x_1,x_2,\ldots\}$ denote some canonical set of $2^n$ coset representatives of $S$ within $T$, indexed by $i \in [2^n]$. We define the state
\[
	\ket{\chain}_{\anchor \vessel} = \frac{1}{\sqrt{2^n}} \sum_{i \in [2^n]} \ket{i}_{\anchor} \otimes \ket{\psi_i}_{\vessel}
\]
where for each $i \in [2^n]$,
\[
	\ket{\psi_i}_{\vessel} = \frac{1}{\sqrt{|S|}} \sum_{s \in S} (-1)^{\langle s, u \rangle} \ket{s + x_i + v}_{\vessel}~.
\]
We use the icons $\anchor$ and $\vessel$ to denote the ``anchor'' and ``vessel'' registers, respectively. The reason for this naming will become apparent shortly.

Readers familiar with unclonable cryptography may recognize $\ket{\psi_i}$ as a \emph{coset state}. That is, the state $\ket{\chain}$ can be seen as a uniform superposition over a set of $2^n$ possible coset states, each defined with respect to subspace $S$, dual shift $u$, and primal shift $v + x_i$ for some choice of $x_i \in \CS$. Note further that $\ket{\chain}$ is maximally entangled across the $\anchor : \vessel$ cut with Schmidt rank $2^n$.


%
%
%

In the \Setup phase of the protocol (run any time before time $t = 0$), the verifiers generate the quantum anchor state $\ket{\chain}$ corresponding to random subspaces $S \leq T \leq \F_2^{3n}$ and shifts $u,v \in \F_2^{3n}$. Furthermore, the verifiers prepare a classical oracle $\cal{O}$ (i.e., ideal obfuscation of some classical functionality) which depends on $S,T,u,v$ and whose behavior we describe shortly. The verifiers publish the oracle $\cal{O}$, which all parties (honest or adversarial) can access as a black box. They also release the vessel register $\vessel$ to the prover, while keeping hold of the entangled anchor register $\anchor$.

In the ``online'' phase of the protocol (at time $t = 0$), the verifiers sample random strings $a,b \in \{0,1\}^n$, the left verifier sends string $a$, and the right verifier sends string $b$. The honest prover with register $\vessel$ at location $L = 0$ gets $a,b$ at time $t = 1$. Then,
\begin{enumerate}
	\item The honest prover coherently queries the oracle $\cal{O}$ on input $\ket{a \oplus b, z}$ where $\ket{z}$ is the state of the vessel register $\vessel$ in the standard basis. The oracle checks whether $z \in T + v$ (For the honest prover, this will be true.) and if so, outputs a string $r_0 = H(0,a \oplus b)$, where $H$ is a random oracle.\footnote{Technically, $H$ will be implemented by a pseudorandom function so that the oracle $\cal{O}$ is efficiently computable.} Otherwise, the oracle outputs $\bot$.
	\item The honest prover then queries the oracle $\cal{O}$ on input $\ket{a \oplus b,w}$ where $\ket{w}$  is the state of the vessel register $\vessel$ in the Hadamard basis. The oracle checks whether $w \in S^\perp + u$ (which will be the case) and if so, outputs a string $r_1 = H(1,a \oplus b)$. Otherwise, the oracle outputs $\bot$.
	\item The honest prover returns $(r_0,r_1)$ back to the verifiers.
\end{enumerate}
We note that after each query, the vessel register and the anchor state $\ket{\chain}$ have not been disturbed. 

The verifiers check that the responses came back by time $t = 2$, and that $r_0,r_1$ are indeed equal to $H(0,a\oplus b), H(1,a \oplus b)$.\footnote{One way to check this is to have the verifiers remember a secret verification key $\sk = (u,v)$ consisting of the shifts and have them query $\cO(a \oplus b,v)$ and $\cO(a \oplus b,u)$.} The idea is that the only way for the prover to have ``unlocked'' $r_0,r_1$ is if it performed the honest strategy above in the correct location $L = 0$. 

As mentioned above, this is the basic idea behind a protocol that allows us to extract from any prover that passes with probability close to 1.  We formalize this in \Cref{thm:NDEL_overall}. Then, in \Cref{subsec:entanglement-localization}, we apply an appropriate notion of sequential repetition in order to establish a more general result, informally stated as follows.


\begin{theorem}[Entanglement localization (informal)]
\label{thm:intro-qpv-entanglement-localization}
Let $(L,t)$ denote the spacetime point being verified. Sequentially repeating the protocol\footnote{For technical reasons, our actual protocol is slightly more elaborate than this, where the standard- and Hadamard-basis queries are sequentially staggered (by a tiny time difference). We elaborate more on this in \Cref{sec:overview}.} above satisfies the following properties:
\begin{itemize}
    \item \textbf{Completeness}: The honest prover strategy at spacetime point $(L,t)$ is accepted by the verifiers with probability $1$.

    \item \textbf{Extraction soundness}: Let $\cal{A}$ be any (possibly nonlocal) prover strategy that makes at most a polynomial number of queries to $\cal{O}$ and is accepted with probability at least $\eta = 1/\poly(\secp)$. There exists an extractor $\cal{E}$ that takes as input the quantum state generated by the strategy at spacetime point $(L,t)$ and, with probability $\eta$, outputs a vessel register $\vessel$ such that the joint state of the anchor register $\anchor$ (held by the verifiers) and vessel register $\vessel$ has fidelity $1-1/\poly(\secp)$ with the quantum anchor state $\ket{\chain}$. 
    
\end{itemize}
\end{theorem}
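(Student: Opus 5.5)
Completeness is immediate from the observation above: both oracle queries leave $\ket{\chain}$ undisturbed, so the honest prover at $(L,t)$ can run every round of the repetition. It returns $(H(0,a\oplus b),H(1,a\oplus b))$ within the time window each time and is accepted with probability $1$. The work is in extraction soundness, which I would prove in two stages: a high-success-probability extraction theorem for a single round, followed by a sequential-repetition amplification.

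\textbf{Single round.} Fix a nonlocal strategy and look at the spacetime region that sees both $a$ and $b$ in time. This is the causal intersection, which in one dimension is the point $(L,t)$ itself. The first step is to argue that outside this region the oracle is effectively inert. For any query register that knows only $a$ or only $b$, the string $a\oplus b$ is uniformly random from its view. By a one-way-to-hiding/compressed-oracle argument, its queries on input $a\oplus b$ have negligible total weight. So, up to negligible error, every successful query to $\cO$ on $(a\oplus b,\cdot)$ must originate at $(L,t)$. Next, suppose the prover outputs both $r_0$ and $r_1$ with probability close to $1$. Then $H(0,a\oplus b)$ and $H(1,a\oplus b)$ are random values that can only be learned by querying. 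Hence the party at $(L,t)$ must hold a register that passes both the $T+v$ membership check and the $S^\perp+u$ check. The extractor would simulate the oracle, record these queries, and output the register the party used.

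The anchor-state structure then enters through a monogamy / uncertainty argument for coset states with abort. The state $\ket{\chain}$ is the unique state, up to local isometry, whose vessel lies in $T+v$ in the standard basis and in $S^\perp+u$ in the Hadamard basis, while the anchor register is maximally entangled with the coset index. Passing both projections with probability $1-\epsilon$ therefore implies fidelity $1-O(\sqrt\epsilon)$ with $\ket{\chain}$ on $\anchor\vessel$, after a rigidity/gentle-measurement step. Here I would invoke the earlier sections: the collapsing-versus-non-collapsing oracle lemma, which ensures the oracle checks act as the projectors $\Pi_{T+v}$ and $\Pi_{S^\perp+u}$, and the monogamy game with abort. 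These give that no other spacetime location can simultaneously hold this entanglement. This stage yields the single-round result for success close to $1$, as in \Cref{thm:NDEL_overall}.

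\textbf{Amplification.} For success probability only $\eta=1/\poly$, I would use the non-destructive property to run $k=\poly$ rounds in sequence, with the standard- and Hadamard-basis queries staggered as in the footnote. A standard averaging argument over rounds shows the following. Conditioned on acceptance, with probability $\eta$ there is a round $j$, chosen at random by the extractor, in which the conditional success probability of the next round exceeds $1-O(\log(1/\eta)/k)$. The extractor applies the single-round extractor at round $j$, run at $(L,t)$.

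I expect this amplification step to be the main obstacle. Conditioning on past acceptance means earlier rounds' oracle queries and measurements may have disturbed the state. One therefore needs an ``extraction with abort'' statement, namely the entanglement-extraction-with-abort section, that tolerates a postselected prior history. The plan there is to use the fact that the oracle measurements are nearly non-disturbing on accepted branches, a gentle-measurement argument. This should keep the fidelity loss at $1/\poly$ and the extraction probability at $\eta$.
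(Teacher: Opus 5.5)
Your completeness argument and the averaging over adjacent checkpoints are fine. The single-round analysis, however, has a genuine gap, and it sits exactly where the paper's main technical work lies.

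You claim that every useful query on $(a\oplus b,\cdot)$ must be made at $(L,t)$, and this is false. The answer sent to $\mathcal{V}_L$ need not come from the same party as the answer sent to $\mathcal{V}_R$. Any party on the outgoing ray $t'+p=t+L$, $p\le L$, holds both shares and can still reach $\mathcal{V}_L$ in time, and symmetrically on the right. This is precisely the freedom used by the $f$-BB84 attacks: the party at $(L,t)$ only forwards a scrambled or coherently encoded register, and the actual ``measurement'' happens on the two rays. So ``$H(0,a\oplus b)$ can only be learned by querying'' does not imply that anything is committed at $(L,t)$. With both bases challenged at the same instant, you have no lever to force such a commitment.

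The missing ingredient is the staggered pair of challenges together with the multi-stage monogamy game. \Cref{lem:multi-stage_search_monogamy-of-entanglement} shows the following: if the prover wins with probability $1-\epsilon$ conditioned on not aborting, then just after the middle component $\mathcal{A}_M^1$ acts and before its output splits, the purified register holding the first answer $b_0$ has overlap at most $2^{-m}+2\sqrt{2}\,\epsilon^{1/2}+\negl(\secp)$ with $\ket{+}$. In other words, the first-basis query has effectively been made and recorded in the middle. The proof compares with a game in which the middle sees an independent $b_0'$, and then reduces to \Cref{lem:asymmetric_oracle_monogamy}. There the left side must answer the $T+v$ challenge, while the right side, even when handed $b_0$, must answer the later $S^\perp+u$ challenge; no adversary can do both with probability noticeably above $1/2$. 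So monogamy is used to force this commitment, not, as in your plan, merely to argue that no other location holds the entanglement.

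Your rigidity step is also false as stated. Passing the $T+v$ check in the standard basis and the $S^\perp+u$ check in the Hadamard basis does not certify entanglement with $\anchor$. Every state in the span of the coset states $\ket{\psi_i}$ passes both checks with certainty, including the product state $\ket{0}_{\anchor}\ket{\psi_1}_{\vessel}$. Such states are excluded only because a query-bounded prover cannot learn the finer hidden coset structure ($S$ inside $T$, resp.\ $T^\perp$ inside $S^\perp$).

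Exploiting this is the actual role of \Cref{lem:collapsing_versus_non-collapsing}. It lets the answer $b_0$ be replaced by $\cO_{\sf random}(\Can_S(z))$, or by $\cO_{\sf random}(\Can_{T^\perp}(z))$ in the dual basis, without the prover noticing. Combined with the recorded-query property above, measuring the compressed-oracle database then reveals the $S$-coset, and hence the anchor's standard-basis outcome (\Cref{lem:extracting_standard_basis_measurement_results}). Dually it reveals the Hadamard-basis outcome. The Vidick--Zhang rigidity lemma turns these two predictions into EPR fidelity $1-O(\epsilon^{1/4})$, which the extractor, knowing $S,T,u,v$, re-encodes into $\ket{\chain}$.

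Finally, the obstacle you anticipate in the amplification is not handled by gentle measurement. Gentle measurement says nothing about a cheating prover, whose earlier rounds may disturb its state arbitrarily. In the paper, the reduction's first component simulates checkpoints $0,\dots,i^\star-1$ and aborts on failure. Every monogamy and extraction lemma is stated with abort, requiring only inverse-polynomial non-abort probability and weighting expectations over $(\sk,x)$ by the post-selected amplitudes. Whatever happened in earlier rounds is therefore simply absorbed.
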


We note that our extractor requires knowledge of $S,T,u,v$, which it can obtain by making exponentially many queries to $\cO$. Indeed, we do not require any computational (or query) bound on the extractor for the above notion to be meaningful, since monogamy-of-entanglement is an information-theoretic property. However, we do show that the extractor is efficient if \emph{given} a secret ``extraction key'' $\ek = (S,T,u,v)$, which may be a useful feature, depending on the application.

Given our entanglement localization protocol, the extension to trajectory verification is simple. Let $L(\cdot)$ be a physically realizable trajectory in spacetime that starts at time $0$ and ends at $\tau$, and the spatial points are all contained within the convex hull of the verifiers (i.e., strictly between $L = -1$ and $L = 1$). Discretize time $0 = t_1 < t_2 < \cdots < t_m = \tau$, and let $L(t_i)$ denote the spatial location of the trajectory at time $t_i$. 


The trajectory verification protocol has the same \Setup as the entanglement localization protocol. In the online phase of the protocol, the verifiers run $m$ invocations of the entanglement localization protocol in sequence for spacetime points $(L(t_i),t_i)$, sending freshly sampled $a_i,b_i$ each time. We can compose the entanglement localization guarantees of the QPV protocol above to obtain \Cref{thm:TV_overall}, informally stated as follows.

\begin{theorem}[Trajectory verification (informal)]
\label{thm:intro-trajectory-verification}
The trajectory verification protocol above\footnote{Again, the actual trajectory verification protocol we construct is slightly different, for technical reasons.} satisfies the following properties:
\begin{itemize}
    \item \textbf{Completeness}: An honest prover traversing the trajectory $L(\cdot)$ is accepted with probability $1$.

    \item \textbf{Extraction soundness}: Let $\cal{A}$ be any (possibly nonlocal) prover strategy that makes at most a polynomial number of queries to $\cal{O}$ and is accepted with probability at least $\eta = 1/\poly(\secp)$. There exists an extractor $\cal{E}$ that, for any $i \in [m]$, takes as input the quantum state generated by the strategy at spacetime point $(L(t_i),t_i)$ and, with probability $\eta$, outputs a vessel register $\vessel$ such that the joint state of the anchor register $\anchor$ (held by the verifiers) and vessel register $\vessel$ has fidelity $1-1/\poly(\secp)$ with the quantum anchor state $\ket{\chain}$. 
\end{itemize}
\end{theorem}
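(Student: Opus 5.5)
The plan is to derive trajectory verification from the entanglement localization guarantee of \Cref{thm:intro-qpv-entanglement-localization}, applied once for each spacetime point $(L(t_i),t_i)$ but in a \emph{composed} setting. Completeness is immediate. The honest prover carries $\vessel$ along $L(\cdot)$, and at each $t_i$ it makes the two coherent queries to $\cO$ on $\ket{a_i\oplus b_i,z}$ and on the Hadamard-basis analogue. These queries leave $\ket{\chain}_{\anchor\vessel}$ undisturbed, because every standard-basis component lies in $T+v$ and every Hadamard-basis component lies in $S^\perp+u$. So the protocol is non-destructive. The trajectory also stays strictly inside the convex hull of the verifiers, so each response arrives in time. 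Induction over $i$ then gives acceptance probability $1$.

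For soundness, I fix $i\in[m]$ and treat rounds $1,\dots,i-1$ and $i+1,\dots,m$ as part of the adversary's environment. The key step is a reduction. Given a trajectory adversary $\cA$ accepted with probability $\eta$, I build a single-point adversary $\cA_i$ for the entanglement localization protocol at $(L(t_i),t_i)$, and require that $\cA_i$ is accepted with probability at least $\eta$.

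To construct $\cA_i$, I would use the following ingredients.
\begin{itemize}
\item $\cA_i$ simulates rounds $j<i$ itself. This works because the challenges $a_j,b_j$ are fresh and independent, and accepting transcripts can be checked through oracle access. In particular, the verifiers' checks can be expressed via $\cO$ queries on $(a_j\oplus b_j,\cdot)$.
\item $\cA_i$ treats round $i$ as the real challenge.
\item $\cA_i$ simulates rounds $j>i$ after the fact. These rounds only add further acceptance conditions, and dropping them can only increase the acceptance probability.
\item The oracle $H(\cdot,a_j\oplus b_j)$ values for $j\neq i$ must be consistent. They are consistent because $\cO$ is a single shared oracle, so $\cA_i$ simply samples the other challenges and queries $\cO$ itself.
\end{itemize}
All of this simulation is spatially consistent with the nonlocal-strategy model: the simulating parties sit where $\cA$'s parties sit, and round-$j$ challenges can be injected by parties located at the verifiers' positions.

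Applying the extractor from \Cref{thm:intro-qpv-entanglement-localization} to $\cA_i$ then yields an extractor $\cE$. It acts on the state at $(L(t_i),t_i)$ and, with probability $\eta$, outputs $\vessel$ with fidelity $1-1/\poly$ to $\ket{\chain}$ together with $\anchor$. Because the anchor register never moves, this is the same $\ket{\chain}$ for every $i$, which gives the single-entity interpretation.

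The main obstacle is the subtlety hidden in the reduction. The entanglement localization theorem is really proven by sequential repetition with a success-probability guarantee, and the relevant acceptance event is conditioned on \emph{all} rounds succeeding. So I need the single-point theorem to hold relative to an arbitrary pre-existing adversarial state and arbitrary post-processing, i.e.\ a version that is robust to prior oracle queries with other prefixes $a_j\oplus b_j$.

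My first attempt would be to strengthen the inductive statement. I would show that the single-point extraction works for adversaries holding auxiliary oracle-dependent side information, as long as that information was generated by queries on prefixes differing from the fresh $a_i\oplus b_i$. This should hold except with probability $\poly\cdot 2^{-n}$, by a standard argument that fresh random-oracle prefixes are unqueried. Once that holds, the loss in fidelity and probability is negligible, and the stated bounds follow for every $i$.
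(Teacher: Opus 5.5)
Your completeness argument is fine. The soundness argument, however, has a genuine gap at the step ``apply the extractor of \Cref{thm:intro-qpv-entanglement-localization} to $\cA_i$.''

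That theorem concerns the \emph{stationary}, sequentially repeated protocol, in which many consecutive challenges target the same location. In the protocol whose completeness you argue, the prover moves and only round $i$ targets $L(t_i)$; the later challenges target $L(t_{i+1}),L(t_{i+2}),\ldots$. So $\cA_i$ is only a prover for the one-shot test. That test's guarantee (\Cref{thm:NDEL_overall}) is fidelity $1-O(\epsilon^{1/4})$ in the \emph{failure} probability $\epsilon$, which is vacuous when acceptance is only $\eta=1/\poly(\secp)$.

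Making each invocation a full stationary repeated protocol does not rescue this. An honest single-party prover can answer a challenge aimed at $(L,s)$ on time only if it is at $L$ at time $s$, so a prover moving continuously along $L(\cdot)$ would be rejected. Your proposed fix (robustness to auxiliary oracle-dependent side information) does not touch the real obstacle: the single-point theorem already quantifies over arbitrary provers, and fresh-prefix reprogramming is already part of its proof. Note also that later challenges can causally influence round-$i$ replies, since $\mathcal{V}_L$ broadcasts round $i+1$ well before the round-$i$ reply is due. So they cannot be simulated ``after the fact.''

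The missing idea is amplification carried out along the moving trajectory itself. Let $q_j$ be the probability of passing checkpoint $j$ given that all earlier ones were passed. Then $\prod_j q_j\ge\eta$ forces, among the $\gamma$ adjacent pairs in a window of checkpoints just after the target time $t$, a pair $i^\star,i^\star+1$ with $q_{i^\star}q_{i^\star+1}\ge\eta^{2/\gamma}$. One extracts from that pair \emph{conditioned} on having passed all checkpoints before $i^\star$. That event has probability at least $\eta$, which is where the extraction probability $\eta$ comes from.

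A black-box $\cA_i$ cannot implement this conditioning. Contrary to your first bullet, a prover cannot check earlier responses: computing the correct value $\cO_{\sf random}(\theta_j,x_j)$ requires the secret $u,v$ (or a coset point), which $\cA_i$ does not hold independently of the simulated prover. This is exactly why the black-box composition discussed in \Cref{subsec:public_verifiability} needs public verifiability.

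The paper instead reduces directly to the two-stage independent-basis monogamy game \emph{with abort}, as follows.
\begin{itemize}
\item The first-stage component simulates everything up to checkpoint $i^\star-1$. It answers oracle queries using the challenger's membership oracles plus a $2r$-wise independent function in place of the hidden random oracle, so it knows the correct answers and aborts on any earlier failure.
\item $\cO$ is reprogrammed only at the two target challenges (\Cref{lem:oracle_reprogram}).
\item The spacetime chunking is redone with the two targets at different locations $\widetilde L(i^\star\delta)$ and $\widetilde L((i^\star+1)\delta)$.
\end{itemize}
\Cref{thm:extracting_EPR_pairs} then gives the EPR extractor on the middle register. The extractor for time $t$ forward-simulates from $\register[L(t)_\Delta~@~t]$. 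It uses the speed bound on $L$ to place the backward light cone of the $2\delta$-neighbourhood at time $i^\star\delta$ inside the $\Delta=(\gamma+2)\delta$ neighbourhood at time $t$. The $\gamma$ extra checkpoints after $\tau$ ensure such a window exists for every $t\in[0,\tau]$.
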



At this point, we hope the ``anchor'' and ``vessel'' terminology should appear natural. The $\anchor$ part of the state $\ket{\chain}$ is ``anchored'' at the verifier's location, while the $\vessel$ part is launched into the spacetime region under the provers' control. The provers can \emph{a priori} perform arbitrarily complex operations on the vessel, but in the end, if they are to pass the trajectory verification protocol, they must faithfully sail the quantum vessel $\vessel$ along the trajectory $L(\cdot)$. The fact that the entanglement persists between the anchor and the points along the trajectory captures the fact that we are tracking the same quantum information through space and time.

\begin{figure}[h]
    \centering
    \includegraphics[scale = 0.35]{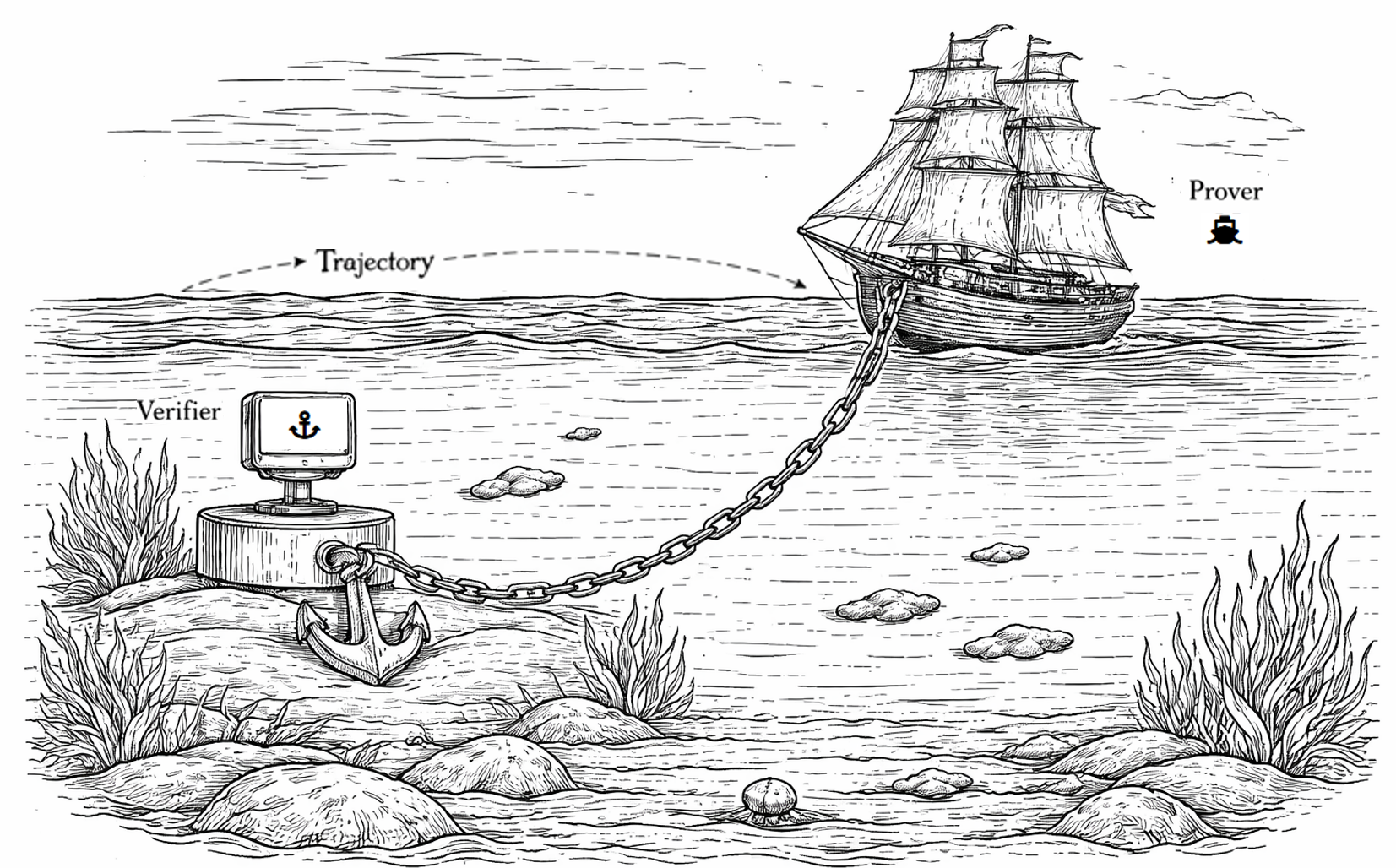}
    \caption{A quantum anchor. Generated by GPT.}
\end{figure}

\subsection{State Localization}

A curious aspect of the entanglement localization and trajectory verification protocols described above is that the verifier's register (the quantum anchor) is never acted upon after the entangled state is prepared. In fact, the verifiers don't even need this register to check the prover's responses. Thus, a slightly different way to describe the generation of the anchor state is as follows.

\begin{itemize}
    \item Sample $S,T,u,v$ as above, and define an isometry $\Enc_{S,T,u,v}$ that maps 
    \[\Enc_{S,T,u,v}: \ket{i} \to \ket{\psi_i} = \frac{1}{\sqrt{|S|}} \sum_{s \in S} (-1)^{\langle s, u \rangle} \ket{s + x_i + v}.\]
    \item Prepare the maximally entangled state \[\frac{1}{\sqrt{2^n}}\sum_{i \in [2^n]}\ket{i}_\anchor\ket{i}_\vessel\] and apply $\Enc_{S,T,u,v}$ to the $\vessel$ register.
    \item Discard $\anchor$.
\end{itemize}

Imagine, then, that instead of applying $\Enc_{S,T,u,v}$ to half of a maximally entangled state, the verifiers sample a pure state $\psi_k$ from some unclonable family of states $\{\psi_k\}_{k \in [K]}$ (say, the BB84 states) and output $\Enc_{S,T,u,v}\ket{\psi_k}$. If the purification of this sampling procedure is equivalent to the maximally mixed state,  then this is completely identical from both the prover's \emph{and} extractor's perspective, since neither touches the $\anchor$ register. It follows then that at each intermediate time $t_i$, the extractor acting on spacetime point $(L(t_i),t_i)$ will output a register in a state close to $\Enc_{S,T,u,v}\ket{\psi_k}$. In fact, since the extractor knows $S,T,u,v$, it can undo the isometry to recover $\ket{\psi_k}$ itself.


Thus, even though there is no entanglement to keep track of anymore in this scenario, there is still a meaningful sense in which we have localized a quantum state $\psi_k$ along the trajectory: At any given time $t_i$, the state $\psi_k$ cannot be found at any other location in space since otherwise the adversary / extractor would have been able to clone it. 


This motivates the following general definition of \emph{state localization}. Let $\cal{S} = \{\psi_k \}_k$ denote a family of quantum states indexed by $k$ (for example, BB84 states or coset states). Suppose the verifiers sample $k$, generate (and potentially encode) $\ket{\psi_k}$, and send the resulting state to the prover. We say that a protocol \emph{localizes the state family $\cal{S}$} if for any (possibly nonlocal) prover strategy that is accepted with high probability, there exists an extractor $\cal{E}$ that acts on the quantum state at spacetime point $(L,t)$ and outputs $\ket{\psi_k}$ with high fidelity. Furthermore, the state extracted from $(L,t)$ must be \emph{unique}, in that, in expectation over $\psi_k \gets \cal{S}$, it is impossible for any adversary to output \emph{two} copies of $\psi_k$ at any time $t$. We present a formal definition in \Cref{def:state-localization}. 

By applying the techniques used to prove \Cref{thm:NDEL_overall,thm:TV_overall}, we prove \Cref{thm:state-localization}, informally stated as follows.


\begin{theorem}[State localization (informal)]
\label{thm:intro-state-localization}
    Let $\cal{S}$ be any set of $n$-qubit states that consist of a union of orthonormal bases. Then if $\cal{S}$ is unclonable, there exists a protocol that (non-destructively) localizes $\cal{S}$ in the ideal obfuscation model.
\end{theorem}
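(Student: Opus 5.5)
The plan is to reduce state localization to the entanglement localization and trajectory verification results (\Cref{thm:NDEL_overall,thm:TV_overall}), using the purification observation from the discussion preceding the theorem. Write $\cal{S} = \bigcup_{j\in[J]} \{\ket{\phi^{(j)}_1},\ldots,\ket{\phi^{(j)}_{2^n}}\}$ as a union of $J$ orthonormal bases of $(\mathbb{C}^2)^{\otimes n}$. Sampling $k=(j,i)$ uniformly gives average state $\frac{1}{J}\sum_j \frac{1}{2^n}\sum_i \ketbra{\phi^{(j)}_i}{\phi^{(j)}_i} = I/2^n$. So the ensemble is exactly the reduced state of half of a maximally entangled state $\frac{1}{\sqrt{2^n}}\sum_i \ket{i}_\anchor\ket{i}$.

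\textbf{The protocol.} \Setup samples $S\le T\le\F_2^{3n}$ and shifts $u,v$, builds the same oracle $\cO$ as in the entanglement localization protocol, samples $k\gets\cal{S}$, and sends $\Enc_{S,T,u,v}\ket{\psi_k}$ as the vessel. The online phase is unchanged. Completeness follows as before. The honest oracle queries only check membership of the standard-basis content in $T+v$ and of the Hadamard-basis content in $S^\perp+u$. These checks hold on the entire image of $\Enc_{S,T,u,v}$, so they hold on any encoded state, not just the maximally entangled one, and they leave it undisturbed.

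\textbf{Extraction.} Fix a prover strategy $\cal{A}$ and consider the hybrid where the verifiers instead prepare $\frac{1}{\sqrt{2^n}}\sum_i\ket{i}_\anchor \Enc\ket{i}_\vessel$ and keep $\anchor$. Neither the prover, the oracle, nor the verifiers' checks touch $\anchor$. Hence the joint distribution of everything outside $\anchor$ is identical in the two experiments, and $\cal{A}$ is accepted with the same probability $\eta$. By \Cref{thm:NDEL_overall} (or \Cref{thm:TV_overall} for trajectories), the extractor $\cal{E}$ acting at $(L,t)$ outputs, with probability $\eta$, a register $\vessel$ whose joint state with $\anchor$ has fidelity $1-1/\poly(\secp)$ with $\ket{\chain}$. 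Composing with $\Enc^{\dagger}$ (which needs $S,T,u,v$, available to the extractor) gives fidelity close to $\frac{1}{\sqrt{2^n}}\sum_i\ket{i}\ket{i}$.

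To pass to the actual experiment, measure $\anchor$ in the conjugate basis $\{\overline{\ket{\phi^{(j)}_i}}\}_i$ for a uniformly random $j$. This collapses the vessel to exactly $\Enc\ket{\phi^{(j)}_i}$ with the correct distribution over $k$. Since this measurement commutes with the prover and the extractor, the average fidelity of the extracted state with $\ket{\psi_k}$ (conditioned on acceptance) is at least the entangled fidelity, because measuring one side cannot decrease average fidelity with a maximally entangled target. A small Markov-type argument converts this into the required high-fidelity guarantee.

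\textbf{Uniqueness.} Uniqueness requires that no adversary output two copies of $\psi_k$ at any time. I would argue this by reduction to unclonability of $\cal{S}$. A cloner given $\ket{\psi_k}$ can simulate the entire experiment itself: it samples its own $S,T,u,v$, applies $\Enc$, and simulates $\cO$ (which depends only on $S,T,u,v$ and $H$, not on $k$). If two copies of $\psi_k$ could be produced by the adversary plus extractors at a single time, the cloner would copy $\psi_k$.

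\textbf{Main obstacle.} I expect the hardest step to be formalizing ``two copies at time $t$'' against a distributed adversary whose two locations each act on part of a shared state. I would handle it by having the cloner run both spacelike-separated extractors on the simulated global state. The remaining subtlety is the fidelity-averaging bookkeeping when the success probability is only $1/\poly(\secp)$.
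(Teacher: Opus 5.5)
Your proposal is correct and follows essentially the same route as the paper. Both encode $\psi_k$ with $\Enc_{S,T,u,v}$ and purify the uniform ensemble over the union of bases into encoded EPR pairs with the anchor. Both then reuse the entanglement-localization extractor; the low-success-probability guarantee you state comes from the sequentially repeated \Cref{thm:entanglement_localization}, not the warmup \Cref{thm:NDEL_overall}. Finally, both measure the anchor in the conjugate basis $\{\bar U\ket{x}\}_x$, which cannot lower the average fidelity with $\ket{\epr}^{\otimes n}$.

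Your explicit simulation reduction for uniqueness spells out what the paper treats as immediate, since it simply defines $\zeta$ as the optimal cloning fidelity of $\cal{S}$. Your Markov step likewise makes explicit a step the paper skips: it reports the averaged-fidelity bound directly with $\beta'=\eta$.
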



While state localization is closely related to entanglement localization, there are some meaningful differences. The extractor produces a quantum state (close to) $\psi_k$ that is classically correlated with the verifiers' choice of $k$. That is, unlike in the case of entanglement localization, the verifier has an actual classical description $k$ of the state being tracked, as opposed to some entangled register. Thus, there may exist some other side information in the protocol that also depends on $k$, which may be a useful feature depending on the application. Furthermore, in the setting of purely classical verifiers (such as that of~\cite{liu2021beating}) we cannot obtain entanglement localization, but state localization is still potentially possible (e.g., the verifiers can delegate the preparation of $\psi_k$ to the prover). 

\paragraph{Trajectory verification with localized states.} Similarly to how trajectory verification follows naturally from our (non-destructive) entanglement localization protocol, it is straightforward to see that one could build trajectory verification from our (non-destructive) state localization protocol as well. We do not make this explicit, and only present trajectory verification based on entanglement localization in the body.

\subsection{Functionality Localization}

Entanglement and state localization capture the idea that the prover must carry a specific quantum state to some spacetime point $(L,t)$. However, can we say anything about the prover's \emph{computational capabilities} at $(L,t)$? This could be desirable in a scenario like export control of secret, proprietary programs (like a foundation language model), where a company or a government would like to guarantee that the ability to run the secret program is localized to a specific location (say within an authorized datacenter). 

Let $\cal{F} = \{ f_k \}_k$ denote a family of functions indexed by some key $k$ (for example, a pseudorandom function family). Suppose that at the beginning of the protocol, the verifiers sample $k$ privately. We say that a protocol \emph{localizes functionality $\cal{F}$} if for any (possibly nonlocal) prover strategy that is accepted with high probability, there exists an extractor $\cal{E}$ that acts on the quantum state at verified spacetime point $(L,t)$ and is able to compute $f_k(x)$ for a random input $x$. On the other hand, there must be no (efficient) adversary that can, for random inputs $x,y$, compute $f_k(x)$ and $f_k(y)$ at two separate locations without communicating. In other words, the ability to compute $f_k(x)$ has been localized to $(L,t)$, and no other location (at time $t$) has the ability to compute $f_k$. This is formally defined in \Cref{def:functionality_localization_revised}. 

The notion of functionality localization is closely related to the notion of \emph{quantum copy-protection}, which is the concept of encoding a function $f$ in a quantum state $\psi_f$ that can be used to evaluate $f$, but cannot used to compute $f$ in two locations that cannot communicate \cite{10.1109/CCC.2009.42}. There are broad classes of function families that are known to be copy-protectable under various cryptographic assumptions and in various models (see e.g., \cite{10.1007/978-3-030-84242-0_19, 10.1007/978-3-030-84242-0_20, Coladangelo2024quantumcopy, 10.1007/978-3-031-22318-1_11, 10.1007/978-3-031-68394-7_1, 10.1007/978-3-032-25291-3_19,10.1007/978-3-032-25291-3_17}). Here, we show that any functionality $\cal{F}$ that can \emph{in principle} be copy-protected can also be localized. In fact, we don't even need to know an explicit copy-protection scheme for $\cal{F}$. Our construction relies on a combination of techniques described above as well as the ``best-possible'' copy-protection guarantees given by quantum state obfuscation \cite{CG24,BBV24}. We defer further details to \Cref{sec:overview}, and establish the following theorem, stated informally.

\begin{theorem}[Functionality localization (informal)]
    \label{thm:intro-functionality-localization}
    Let $\cal{F} = \{ f_k \}_k$ denote any copy-protectable functionality. Then there exists a protocol that (non-destructively) localizes the functionality $\cal{F}$ in the ideal obfuscation model. 
\end{theorem}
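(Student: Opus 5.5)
We will build the protocol by composing three ingredients already developed: the non-destructive entanglement/state localization machinery behind \Cref{thm:NDEL_overall,thm:state-localization}, and quantum state obfuscation \cite{CG24,BBV24}, which provides ``best-possible'' copy-protection. Concretely, \Setup samples $k$ and prepares a state-obfuscated program $\rho_k$ for $f_k$. Since $\cal{F}$ is copy-protectable, the best-possible guarantee says $\rho_k$ is at least as hard to split as any copy-protection state for $\cal{F}$; in particular, no pair of non-communicating parties holding $\rho_k$ (together with the oracle) can compute $f_k(x)$ and $f_k(y)$ on independent random inputs. The verifiers then encode $\rho_k$ under a random anchor isometry $\Enc_{S,T,u,v}$, exactly as in the state-localization construction, and release it together with an oracle $\cO$.

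The oracle $\cO$ implements the standard/Hadamard membership checks keyed by $a\oplus b$ from the entanglement localization protocol. It also supports non-destructive evaluation: given the vessel register and an input $x$, it decodes, runs the obfuscated evaluation, and re-encodes.

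Completeness is straightforward. The honest prover answers the timed challenges as before. Obfuscated evaluation is gentle (near-deterministic), so the vessel is returned essentially undisturbed, and sequential repetition keeps the error negligible.

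Extraction soundness has two parts. For the \emph{evaluation} part, we treat $\rho_k$ as a state to be localized. We apply the extractor of \Cref{thm:state-localization} at $(L,t)$, which recovers (after undoing $\Enc_{S,T,u,v}$ with $\ek$) a state of high fidelity with the encoded program. We then run the honest evaluator on a random $x$; fidelity bounds transfer to the probability of outputting $f_k(x)$.

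The subtlety is that the state-localization theorem is stated for families forming a union of orthonormal bases, while $\rho_k$ is an arbitrary (possibly mixed) obfuscation state. To handle this, we will instead localize entanglement. We purify the preparation of $\rho_k$ so that the anchor part is maximally entangled with the vessel, which is possible by choosing the obfuscation's internal randomness appropriately or by applying $\Enc$ to one half of an EPR pair and teleporting/encoding the program through the oracle. We then invoke \Cref{thm:NDEL_overall} directly, since the prover and extractor never touch $\anchor$.

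For the \emph{uniqueness} part, suppose an adversary computes $f_k(x)$ at $(L,t)$ and $f_k(y)$ at some other location $(L',t)$. We reduce to the copy-protection game. The reduction simulates the protocol's oracle $\cO$ for everything except the obfuscation component, which it forwards, and it simulates the timed challenges itself. This requires showing that the anchor-encoding oracle can be simulated without knowing $k$, i.e. that the encoding is ``independent'' of the program. We would argue this by a hybrid that replaces the coset-membership parts of $\cO$ with ones generated by the reduction, which is possible since the reduction samples $S,T,u,v$ itself and $\Enc$ commutes with the evaluation interface.

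We expect the main obstacle to be this reduction. The difficulty is ensuring that oracle access to the combined decode–evaluate–re-encode functionality does not give the two split parties more power than holding $\rho_k$ alone, e.g. by evaluating on superpositions of encodings. We would first try to present the whole combined oracle as part of a single ideal obfuscation of a program of which $\rho_k$ is the quantum state obfuscation, so that the best-possible copy-protection guarantee of \cite{CG24,BBV24} applies to the composite object directly.
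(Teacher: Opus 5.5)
There is a genuine gap. Your protocol tries to localize the \emph{program state} by hiding $\rho_k$ inside $\Enc_{S,T,u,v}$, and this breaks down at evaluation.

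A classical (ideal-obfuscation) oracle cannot ``decode, run the obfuscated evaluation, and re-encode.'' $\QSO$ evaluation is a quantum procedure acting on $\rho_k$, while an oracle only computes a classical function in superposition. The only classical way to let the honest prover evaluate is to publish decode/re-encode maps, such as $z\mapsto$ (middle $n$ coordinates of $U_{\sf shift}^{-1}(z-v)$) and $x\mapsto U_{\sf shift}(0^n\times x\times 0^n)$. But then a cheating prover can compute the logical register and subtract the shift. This leaves the vessel in $\sum_{s\in S}(-1)^{\langle s,u\rangle}\ket{s+v}$, which is supported on $T+v$ in the standard basis and on $S^\perp+u$ in the Hadamard basis. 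That leftover coset state answers every timed challenge non-destructively, while $\rho_k$ is shipped elsewhere. Since your timed challenges never involve $f$, nothing ties the program to $(L,t)$.

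There are further problems, independent of the first:
\begin{itemize}
\item \Cref{thm:state-localization} only covers unions of orthonormal bases. Your purification fix is not substantiated: averaged over its randomness, a $\QSO$ state need not be maximally mixed, so the anchor-state analysis does not carry over.
\item An extractor that undoes $\Enc$ needs $\ek$, so it is not QPT. \Cref{def:functionality_localization_revised} demands a QPT extractor precisely because uniqueness holds only against efficient cloners.
\item The uniqueness reduction for your combined oracle is, as you acknowledge, left open.
\end{itemize}

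The missing idea is to localize a \emph{computation} of $f$ rather than the program. In \Cref{con:function_localization_from_qsio}, the $\QSO$ state and a plain coset state $X^vZ^u\ket{S}$ are separate registers. $\QSO$ is applied to the signed functionality $\widehat f_{\sk}(x)=(f(x),\Sign(\sk,(x,f(x))))$. The oracle $\cO(h,x,y,\sigma,c,z)$ releases $\cO_{\sf random}(h,x,y,c)$ only if $\sigma$ is a valid MAC on $(x,y)$ and $z\in C_c$, where $C_0=S+v$ and $C_1=S^\perp+u$. The challenge input $x$ is masked by $\cO_{\sf chal}(\hat x_L\oplus\hat x_R)$, so it is available only at $(L,t)$. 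MAC unforgeability together with ideal obfuscation makes the only usable $y$ equal to $f(x)$ (\Cref{claim:oracle_switch_function_localization}).

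The extractor then works as follows. It reprograms the $c=0$ answer at the challenge point to $H(y)$ for a compressed oracle $H$, runs the prover over the middle light-cone triangle, and reads $y=f(x)$ off $H$'s database. \Cref{lem:multi-stage_search_monogamy-of-entanglement}, instantiated with $n_e=0$, guarantees that this database entry is non-$\bot$. This extractor is efficient and needs no $\ek$.

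Uniqueness then becomes a direct reduction to copy-protection of $\QSO(\widehat f_{\sk})$, which is where your best-possible copy-protection idea does apply. This works because everything else in $\Setup$ is independent of $f$, and $\widehat f_{\sk}$ is implementable from black-box access to $f$.
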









\subsection{The Ideal Obfuscation Model}\label{subsec:ideal-obf}

We now provide some more context on the ideal obfuscation model (also referred to as the classical oracle model) that we use to establish the security of our protocols. In this model, one can prepare and then ``obfuscate'' any polynomial-time computable classical functionality $f$, and thereafter all entities are  granted black-box (superposition) access to $f$. That is, anyone can apply the unitary $U_f : \ket{x}\ket{z} \to \ket{x}\ket{z \oplus f(x)}$ without learning anything else about $f$.

The ideal obfuscation model has a long history of study in quantum cryptography. In particular, it has been used to establish the feasibility of primitives for which no prior construction existed, e.g. publicly-verifiable quantum money \cite{10.1145/2213977.2213983}, signature tokens \cite{BenDavid2023quantumtokens}, copy-protection for unlearnable programs \cite{10.1007/978-3-030-84242-0_19}, witness encryption for QMA \cite{bartusek_et_al:LIPIcs.ITCS.2022.15}, and obfuscation for quantum circuits \cite{10.1145/3564246.3585179,BBV24,11369076,huang2026obfuscationarbitraryquantumcircuits}.

While ideal obfuscation of classical circuits has long been known to be impossible to achieve for certain contrived classes of functionalities \cite{JACM:BGIRSVY12}, the ideal obfuscation model remains a useful model in which to obtain feasibility results, for the following reasons. 

\begin{itemize}
    \item First, any construction in the ideal obfuscation model (that does not require obfuscating the contrived functions from \cite{JACM:BGIRSVY12}) yields a plausibly secure construction in the plain model by using \emph{indistinguishability obfuscation} (iO) to obfuscate the function $f$. While we may not have a formal reduction to the security of iO, it is a reasonable heuristic to expect that the plain model construction is secure. This is analogous to how the community proves security in the \emph{random oracle} model and then replaces the random oracle with a concrete hash function in practice, without necessarily having a formal reduction to any security property of the hash function.
    \item  Second, a recent work \cite{bartusek2025newapproachargumentsquantum}, building on \cite{10.1007/978-3-031-38551-3_8}, has shown that the (quantum-accessible) ideal obfuscation model can be instantiated from indistinguishability obfuscation plus the heuristic use of a \emph{hash function} (as opposed to an obfuscator) modeled as a ``pseudorandom oracle.'' This brings the ideal obfuscation model even closer to the widely-used random oracle model in that the only cryptographic object we need to treat heuristically is a hash function, which can be instantiated with a cryptographic hash such as SHA3.
\end{itemize}

We also mention that our use of oracles is to be expected, as our protocols are strengthenings of plain position-verification, which (in the unbounded entanglement setting) is only known in the random oracle model \cite{unruh2014pvqrom}. It would be considered a major breakthrough to prove the security of any our protocols, or indeed position-verification itself, in the plain model.

Finally, we remark on a slight technical gap in the informal theorem statements above. While the ideal obfuscation model requires the circuit $f$ to be obfuscated to be polynomial-time computable, many of our constructions are most simply described as using a random oracle (plus other manipulations) to define $f$. Thus, in order to make $f$ efficient, we technically have to replace the random oracle with a (post-quantum) pseudorandom function, which is known from any post-quantum one-way function \cite{10.1145/3450745}. Thus, all of our results additionally assume one-way functions.

\subsection{\texorpdfstring{Non-Localizability of $f$-BB84}{Non-Localizability of f-BB84}}\label{sec:non-local-fbb84}

Now that we have stronger notions of position verification and have shown how to achieve them, it is natural to wonder whether these notions can be achieved with simpler protocols. We show that the well-studied $f$-BB84 protocol does \emph{not} satisfy entanglement localization. Let $f(x,y)$ be a random boolean function. In this protocol, the honest prover at the proper location $L$ receives classical challenges $x,y$ as well as half of an EPR pair. 
It is supposed to compute a bit $\theta = f(x,y)$ and measure its qubit in the standard or Hadamard basis according to $\theta$ to obtain a bit $b$, which it sends back in response. The verifier performs the same measurement on its half of the EPR pair and checks if it got the same outcome. 

Unruh showed that if $f$ is modeled as a random oracle, then unless the spoofers $\{ \cal{P}_1,\ldots,\cal{P}_k\}$ make $\exp(\Omega(n))$ queries to $f$ (where we think of $x,y$ as $n$-bit strings), then any successful spoofing strategy requires at least one of the provers $\cal{P}_i$ to be in the correct location $L$~\cite{unruh2014pvqrom}. Now, is it possible to strengthen Unruh's result to show that the EPR entanglement can be localized to where the honest prover was supposed to be? We show that, perhaps surprisingly, this is not possible. We describe one attack here and will actually describe a slightly different one in \Cref{sec:overview}.
\begin{figure}[H]
    \hspace{-2em}\resizebox{1.1\linewidth}{!}{%
    \begin{tikzpicture}[
        >=Latex,
        line width=0.9pt,
        line cap=round,
        line join=round,
        font=\small,
        sbox/.style={
            draw,
            rounded corners=3pt,
            minimum width=1.60cm,
            minimum height=0.88cm,
            inner sep=2.5pt,
            align=center
        },
        rbox/.style={
            draw,
            rounded corners=3pt,
            text width=2.05cm,
            minimum height=1.12cm,
            inner sep=3pt,
            align=center
        },
        mbox/.style={
            draw,
            dashed,
            rounded corners=3pt,
            text width=2.40cm,
            minimum height=1.50cm,
            inner sep=4pt,
            align=center
        },
        axis/.style={->, line width=0.9pt},
        msg/.style={->, line width=0.9pt},
        share/.style={line width=0.8pt},
        lab/.style={fill=white, inner sep=1.3pt, outer sep=0pt}
    ]

    \draw[axis] (-10.0,-0.90) -- (10.05,-0.90) node[right] {space};
    \draw[axis] (-10.0,-0.90) -- (-10.0,7.90) node[right] {time};

    \node (VL0) at (-9.10,0.35) {$V_L$};
    \node (VR0) at ( 9.10,0.35) {$V_R$};
    \node (VL1) at (-9.10,7.30) {$V_L$};
    \node (VR1) at ( 9.10,7.30) {$V_R$};

    \node[sbox] (PL0) at (-5.75,1.95) {$P_L$\\[-1mm]$r,s$};
    \node[sbox] (PR0) at ( 5.75,1.95) {$P_R$\\[-1mm]$r,s$};

    \node[mbox] (PM) at (0,4.12)
    {$P_M$\\[0.3mm]$\theta=f(x,y)$\\[0.3mm]$\mathsf{M}_{\theta}(X^rZ^sQ)=c$};

    \node[rbox] (PL1) at (-5.75,6.30)
    {$P_L$\\[-1mm]$\theta=0:\ c\oplus r$\\[-1mm]$\theta=1:\ c\oplus s$};

    \node[rbox] (PR1) at ( 5.75,6.30)
    {$P_R$\\[-1mm]$\theta=0:\ c\oplus r$\\[-1mm]$\theta=1:\ c\oplus s$};

    \coordinate (PL0in)  at ($(PL0.west)!0.30!(PL0.south west)$);
    \coordinate (PL0out) at ($(PL0.east)!0.22!(PL0.north east)$);
    \coordinate (PR0in)  at ($(PR0.east)!0.30!(PR0.south east)$);
    \coordinate (PR0out) at ($(PR0.west)!0.22!(PR0.north west)$);

    \coordinate (PMlin)  at ($(PM.west)!0.66!(PM.south west)$);
    \coordinate (PMrin)  at ($(PM.east)!0.66!(PM.south east)$);
    \coordinate (PMlout) at ($(PM.west)!0.66!(PM.north west)$);
    \coordinate (PMrout) at ($(PM.east)!0.66!(PM.north east)$);

    \coordinate (PL1in)  at ($(PL1.east)!0.30!(PL1.south east)$);
    \coordinate (PL1out) at ($(PL1.west)!0.22!(PL1.north west)$);
    \coordinate (PR1in)  at ($(PR1.west)!0.30!(PR1.south west)$);
    \coordinate (PR1out) at ($(PR1.east)!0.22!(PR1.north east)$);

    \draw[msg] (VL0) -- (PL0in);
    \draw[msg] (VR0) -- (PR0in);
    \draw[share]
        ($(PL0.south east)+(-0.08,-0.02)$)
        .. controls (-3.20,0.50) and (3.20,0.50) ..
        ($(PR0.south west)+(0.08,-0.02)$);
    \draw[msg] (PL0out) -- (PMlin);
    \draw[msg] (PR0out) -- (PMrin);
    \draw[msg] (PMlout) -- (PL1in);
    \draw[msg] (PMrout) -- (PR1in);
    \draw[msg] (PL1out) -- (VL1);
    \draw[msg] (PR1out) -- (VR1);

    \node[lab] at (-8.03,1.45) {$Q,x$};
    \node[lab] at ( 8.03,1.45) {$y$};
    \node[lab] at (-3.8,3.08) {$X^rZ^sQ,x$};
    \node[lab] at ( 3.8,3.08) {$y$};
    \node[lab] at (-2.15,5.58) {$(\theta,c)$};
    \node[lab] at ( 2.15,5.58) {$(\theta,c)$};
    \node[lab] at (-7.49,6.96) {$b$};
    \node[lab] at ( 7.49,6.96) {$b$};
    \node[lab] at (0,0.78) {pre-share $r,s$};

    \end{tikzpicture}%
    }
    \caption{Non-localizability attack on $f$-BB84}
    \label{fig:attack}
\end{figure}

Consider the following $3$ prover strategy, pictured in \Cref{fig:attack}. There are provers $\cal{P}_L,\cal{P}_M,\cal{P}_R$ (for ``left'', ``middle'', and ``right''). Prover $\cal{P}_M$ is in the correct location but $\cal{P}_L,\cal{P}_R$ are on either side of $\cal{P}_M$. Suppose that $\cal{P}_L$ and $\cal{P}_R$ share a random quantum one-time pad key $r,s \in \{0,1\}$ in superposition. Suppose that the leftmost verifier sends half of the EPR pair; call this qubit $Q$. It gets intercepted by $\cal{P}_L$, who applies a quantum one-time pad $X^r Z^s$ to $Q$, and forwards it to $\cal{P}_M$. The middle prover $\cal{P}_M$ gets the one-time padded qubit, along with $x,y$. It computes $\theta = f(x,y)$, and measures the qubit in the corresponding basis to obtain a bit $c$. The middle prover $\cal{P}_M$ then sends $\theta,c$ to both $\cal{P}_L$ and $\cal{P}_R$. Both $\cal{P}_L$ and $\cal{P}_R$ compute $b$ as follows: if $\theta = 0$, then $b = c \oplus r$, and otherwise $b = c \oplus s$. All operations are controlled by the superposition of $r,s$. It is easy to check that this strategy satisfies the timing constraints and will be accepted with probability $1$.

Although the middle prover is performing the correct measurement, the quantum entanglement in the EPR pair cannot be localized to the middle prover. At all times in this strategy, the marginal state of the middle prover is completely uncorrelated with $\theta$ and the verifier's EPR qubit. Thus there is no way for the middle prover to locally extract the other end of the EPR pair; it has been nonlocally distributed between the three provers $\cal{P}_L,\cal{P}_M,\cal{P}_R$. 

Thus, (entanglement) localization is a rather strong security condition that doesn't hold for all QPV protocols studied in the literature.





\subsection{Outlook and Future Directions}

\paragraph{Position-based cryptography.} In light of our new notions of localization and trajectory verification, it will be interesting to revisit implications for position-based cryptography. As mentioned, there has been a subtle but important gap between the goals of position-based cryptography and the previous formalizations of position security. In \Cref{subsec:additional}, we illustrate this gap by presenting an ``attack'' on the position-based signature protocol of \cite{buhrman2014position} that does not violate their security definition but intuitively should not be allowed in any reasonable real-world application of position-based authentication. We then further discuss why we believe our concepts and techniques should provide a stronger foundation for position-based cryptography. We leave further exploration of new definitions and constructions of position-based cryptography to future work.



\paragraph{Future work.} Our work raises several other directions for future exploration, which we list here.

\begin{itemize}
    \item In this work, we restrict our attention to one spatial dimension. While some subtleties typically arise when generalizing to more dimensions (see e.g. \cite{unruh2014pvqrom}), we expect that our techniques should apply in higher dimensions, and we leave a formalization of this to future work.
	\item A recent work \cite{girish2026private} has shown how to build position \emph{commitments} and \emph{zero-knowledge} position-verification. Can we build on their techniques to obtain \emph{trajectory commitments} and \emph{zero-knowledge localization} protocols?
	\item \cite{liu2021beating,kaleoglu2025equivalenceclassicalpositionverification} have shown how to achieve position verification with purely classical communication. Can we obtain quantum localization and trajectory verification with only classical communication? 
	\item Can we further generalize the notion of quantum localization? For example, can we localize any (unclonable) QMA witness, perhaps by integrating our techniques with recent work \cite{bartusek2025newapproachargumentsquantum, kalai2026classicallyverifyquantumcat} that shows how to non-destructively verify QMA witnesses?
	\item Our localization and trajectory verification protocols all rely on highly entangled and difficult-to-implement states. Might there exist simpler protocols that are more amenable to experimental realization in the short term? We remark that simpler localization protocols are likely easier to achieve if we drop the non-destructive requirement, but such protocols would not serve as building blocks to trajectory verification.
	\item Our protocols assume noiseless transmission of quantum states, and that the prover is in exactly the right position at the right time. Can we achieve more \emph{robust} versions of localization and trajectory verification, which resist environmental noise and allow the prover some leeway in their declared position / trajectory?
    \item In this work, we discuss the way to achieve publicly verifiable trajectory verification and publicly verifiable localization schemes except publicly verifiable function localization (See \Cref{footnote:unknown_public_verifiable_functionality_localization} for more details). Can we construct a publicly verifiable function localization scheme?
\end{itemize}

\ifblind
\paragraph{AI Disclosure.} We used LLMs (ChatGPT, Gemini, Claude and Deepseek) to assist with figure generation, LaTeX templates, and grammar checking. No section's content was materially affected. The authors verified the correctness and originality of all content including references. 
\else
\paragraph{Acknowledgments.} We thank Tal Malkin, Alex May, and Saachi Mutreja for helpful discussions. HY is supported by AFOSR
award FA9550-23-1-0363, NSF awards CCF-2530159, CCF-2144219, and CCF-2329939, and by the
Sloan Foundation. LO is supported by a NSF Graduate Fellowship. 
\fi

%% file: technical_overview.tex
\section{Technical Overview}\label{sec:overview}

In this overview, we first introduce the main techniques used to construct trajectory verification. Along the way, we'll see how to perform entanglement and state localization. Then, we'll cover some additional ideas required to obtain functionality localization.

\subsection{Entanglement Localization and Trajectory Verification}

To develop our trajectory verification scheme, we first develop a \emph{non-destructive test of entanglement}, next upgrade it to \emph{entanglement localization}, and finally derive trajectory verification. A non-destructive test of entanglement distributes an entangled state between two parties, $\cal{V}$ (the verifier) and $\cal{P}$ (the prover), and then specifies an interactive protocol that can be repeatedly used to test whether the prover still holds the state entangled with the verifier. An entanglement localization scheme is a proof system where the prover can convince the verifier that the entanglement exists across the verifier's state and a state within a specific defined region, such as an interval $[L-\Delta, L+\Delta]$. We first consider how to verify entanglement in a non-destructive manner.

\paragraph{Is verifying entanglement always destructive?}
We begin with the standard approach for verifying one bit of entanglement. Suppose two parties $\cal{V}$ (an honest verifier) and $\cal{P}$ (a potentially adversarial prover) initially share an $\epr$ pair. The prover $\cal{P}$ may perform some operations on its subsystem, resulting in a joint state $\rho_{\areg\breg}$, where $\areg$ is held by $\cal{V}$ and $\breg$ is held by $\cal{P}$. As discussed for example in  \cite{vidick2021classicalproofsquantumknowledge}, the entanglement can be tested via the following protocol:
\begin{enumerate}
    \item $\cal{V}$ samples a random bit $b \rand \set{0,1}$. If $b=0$, it measures $\areg$ in the standard basis; otherwise, it measures $\areg$ in the Hadamard basis. Let the measurement outcome be $x$. The verifier then sends $b$ to $\cal{P}$.
    \item Upon receiving $b$, the prover $\cal{P}$ performs the corresponding measurement operation and outputs a value $x'$.
\end{enumerate}
If $\cal{P}$ can perfectly predict the standard or Hadamard basis measurement outcome on $\cal{V}$'s side, i.e., $x = x'$ with probability $1$ over the choice of $b$, then $\cal{V}$ is convinced that its qubit was maximally entangled with $\cal{P}$'s system. However, this procedure necessarily collapses the entanglement, and at first glance this appears unavoidable.

Our main observation here is the following:
\begin{center}
    {\it To verify entanglement, it suffices to argue the \textbf{existence} of a procedure that would pass the measurement test, without actually executing this procedure. In other words, there is a \textbf{non-collapsing test} such that, if the prover passes the non-collapsing test, it implies that the prover could be used to pass the collapsing test, which is only used in the analysis.}
\end{center}

More concretely, we design two tests $\Pi_{\sf ncol}$ and $\Pi_{\sf col}$ for $\cal{P}$. The test $\Pi_{\sf ncol}$ is the non-collapsing test that we actually execute. We then argue, via an indistinguishability argument, that any prover passing $\Pi_{\sf ncol}$ must also perform well in a corresponding collapsing test $\Pi_{\sf col}$, which certifies the presence of entanglement.

\paragraph{Deploying decoy qubits.}
We now present a first toy example. In addition to the original $\epr$ pair shared between $\cal{V}$ and $\cal{P}$, we sample $n$ random standard-basis qubits from $\set{\ket{0},\ket{1}}$ and $n$ random Hadamard-basis qubits from $\set{\ket{+},\ket{-}}$. These $2n$ qubits, together with the prover's half of the $\epr$ pair, are randomly shuffled into the register $\breg$, which now contains $2n+1$ qubits.

To test for entanglement, instead of querying the unique (hidden) entangled qubit directly, $\cal{V}$ proceeds as follows:
\begin{enumerate}
    \item Sample a random basis (standard or Hadamard).
    \item Send $\cal{P}$ the indices of the $n$ qubits prepared in that basis and request their measurement outcomes in the corresponding basis.
    \item Check whether the answers are correct.
\end{enumerate}

From the prover's perspective, all $2n+1$ qubits appear as random BB84 states and are thus maximally mixed. The $2n$ additional qubits serve as decoys that hide the location of the entangled qubit. To succeed, $\cal{P}$ must effectively retain all qubits, and hence preserve the entanglement. Discarding any qubit risks failing the test, since that qubit may be queried.

To formalize this intuition, consider the single-shot setting with a prover $\cal{P}$ that passes the protocol with probability 1. Let $S$ denote the indices of the standard-basis qubits, and consider some arbitrary position $x \in S$. We claim that $\cal{P}$ would also answer correctly if, instead of querying $S$, we queried $(S \setminus \set{x}) \cup \set{e}$, where $e$ is the index of the entangled qubit. Here, correctness for $e$ means consistency with the standard-basis measurement outcome on $\areg$. This follows because $S$ and $(S \setminus \set{x}) \cup \set{e}$ are indistinguishable from the prover's perspective.

This illustrates our earlier principle: the test on $S$ is non-collapsing, while the test on $(S \setminus \set{x}) \cup \set{e}$ is collapsing, as it measures the entangled qubit.

However, this approach has a clear limitation. If the test is repeated multiple times, $\cal{P}$ may learn which qubit is never queried, thereby identifying the entangled qubit. Once identified, the prover can discard it and still answer future queries correctly.

\paragraph{Reusable security via coset states.} To address this issue, we introduce a stronger method for hiding the decoys, based on \emph{coset states}. A coset state $\ket{A_{s,s'}}$, for a subspace $A \leq \mathbb{F}_2^n$ of dimension $n/2$ and $s, s' \in \mathbb{F}_2^n$, is defined as
\[
\ket{A_{s,s'}} = \frac{1}{\sqrt{|A|}} \sum_{a \in A} (-1)^{\ipd{a,s'}} \ket{a+s}.
\]
This state admits the following alternative interpretation:
\begin{enumerate}
    \item Prepare an $n$-qubit state in which the first $n/2$ qubits are random Hadamard-basis states, and the remaining $n/2$ qubits are random standard-basis states.
    \item Sample a random change-of-basis (invertible) matrix $U_{\sf shift} \in \mathbb{F}_2^{n \times n}$ and apply the unitary $\cal{U}_{\sf shift}$ defined as
    \[
    \cal{U}_{\sf shift}\ket{x} := \ket{U_{\sf shift} x}.
    \]
\end{enumerate}
The second step plays a crucial role: without knowledge of $U_{\sf shift}$, the subspace $A$ and the cosets $A+s$ and $A^\perp + s'$ are computationally hidden. In particular, given $\ket{A_{s,s'}}$, one cannot distinguish between:
\begin{itemize}
    \item Access to membership oracles $\cal{O}_{A+s}$ and $\cal{O}_{A^\perp+s'}$, and
    \item Access to membership oracles $\cal{O}_{B+s}$ and $\cal{O}_{C+s'}$, where $B$ and $C$ are random super-subspaces of dimensions $n - \omega(\log n)$ containing $A$ and $A^\perp$, respectively.
\end{itemize}
Here, $\cal{O}_S$ denotes the membership oracle for a set $S$, outputting $\top$ on inputs in $S$ and $\bot$ otherwise.

Intuitively, the random change of basis hides the ``positions'' of the standard- and Hadamard-basis qubits, in a strictly stronger manner than the random permutation from before. We now apply this idea to also hide the positions of \emph{entangled} qubits. Consider the following construction:
\begin{enumerate}
    \item Let $\anchor$ be an $n$-qubit register and $\vessel$ a $3n$-qubit register. Initialize the first $n$ qubits of $\vessel$ as random Hadamard-basis states, the next $n$ qubits as halves of $\epr$ pairs with $\anchor$, and the final $n$ qubits as random standard-basis states.
    \item Sample a random change-of-basis matrix $U_{\sf shift} \in \mathbb{F}_2^{3n \times 3n}$ and apply the unitary $\cal{U}_{\sf shift}$ defined as
    \[
    \cal{U}_{\sf shift}\ket{x} := \ket{U_{\sf shift} x}
    \]
    to the register $\vessel$.
\end{enumerate}
We refer to the resulting state as the \emph{anchor state} $\ket{\chain^\sk}_{\anchor\vessel}$, where $\sk = (S,T,u,v)$.
\begin{enumerate}
    \item $S$ is the subspace spanned by the first $n$ columns of $U_{\sf shift}$.
    \item $T$ is the subspace spanned by the first $2n$ columns of $U_{\sf shift}$.
    \item $u := U_{\sf shift}^{-t} u'$ where $u'$ is the vector obtained by appending $2n$ zeros to the vector of Hadamard-basis values (where $\ket{+}$ is mapped to $0$ and $\ket{-}$ is mapped to $1$), and $U_{\sf shift}^{-t}$ is the inverse transpose of $U_{\sf shift}$.
    \item $v := U_{\sf shift} v'$ where $v'$ is the vector obtained by appending $2n$ zeros before the vector of the standard-basis values.
\end{enumerate}
We clarify some facts here. First, if we measure $\vessel$ in the standard basis, we will get a vector in $T + v$. If we measure $\vessel$ in the Hadamard basis, we will get a vector in $S^\perp + u$. Combining our previous non-destructive test with the hiding property achieved by oracles, we define the following test:
\begin{enumerate}
    \item In the setup phase, $\anchor$ is given to $\cal{A}$ and $\vessel$ is given to $\cal{B}$. Sample $\theta \rand \{0,1\}$ and $s \rand \{0,1\}^\secp$. $\theta$ is given to $\cal{B}$.
    \item If $\theta=0$, give the oracle $\cal{O}_{T+v}^{\sf ncol}$ to $\cal{B}$, where the oracle is defined as follows:
    \begin{itemize}
        \item $\cal{O}_{T+v}^{\sf ncol}$: On input $z$, check whether $z \in T + v$. Output $s$ if it is, and output $\bot$ otherwise.
    \end{itemize}
    If $\theta=1$, give the oracle $\cal{O}_{S^\perp+u}^{\sf ncol}$ to $\cal{B}$, where the oracle is defined as follows:
    \begin{itemize}
        \item $\cal{O}_{S^\perp+u}^{\sf ncol}$: On input $z$, check whether $z \in S^\perp + u$. Output $s$ if it is, and output $\bot$ otherwise.
    \end{itemize}
    $\cal{B}$ is supposed to return $s$.
\end{enumerate}
An honest $\cal{B}$ can simply perform the corresponding basis evaluation coherently. That is, for $\theta=0$, it computes $\cal{O}_{T+v}^{\sf ncol}$ coherently on register $\vessel$ and outputs the measurement result. For example, \Cref{fig:non-collapsing} shows the non-collapsing test when $U_{\sf shift}=I$ and $\theta=0$. Only the correctness of the standard-basis qubits is tested. One may notice that this is a non-collapsing test, and the state is unchanged after the test is completed.
\begin{figure}[h]
    
    \begin{tikzpicture}[scale=1]
    
    \node at (0, 5) {$\areg$:};
    \node at (0, 3) {$\breg$:};
    
    \node[draw, circle, minimum size=0.8cm] at (1, 3) (h0) {$+$};
    \node[draw, circle, minimum size=0.8cm] at (2, 3) {$-$};
    \node at (3, 3) {...};
    \node[draw, circle, minimum size=0.8cm] at (4, 3) {$+$};
    \node[draw, circle, minimum size=0.8cm] at (5, 3) (h3) {$+$};
    
    \draw[decorate,decoration={brace,amplitude=10pt,mirror}] 
        (h0.south) -- (h3.south) node[midway,below=6pt] {Hadamard-basis Qubits};
    
    \node[draw, circle, minimum size=0.8cm] (q0) at (6, 3) {};
    \node[draw, circle, minimum size=0.8cm] (q1) at (7, 3) {};
    \node at (8, 3) {...};
    \node[draw, circle, minimum size=0.8cm] (q2) at (9, 3) {};
    \node[draw, circle, minimum size=0.8cm] (q3) at (10, 3) {};
    
    \node[draw, circle, minimum size=0.8cm] (p0) at (6, 5) {};
    \node[draw, circle, minimum size=0.8cm] (p1) at (7, 5) {};
    \node at (8, 5) {...};
    \node[draw, circle, minimum size=0.8cm] (p2) at (9, 5) {};
    \node[draw, circle, minimum size=0.8cm] (p3) at (10, 5) {};

    \draw[decorate, decoration={snake,amplitude=1mm,segment length=5mm}] (p0) -- (q0);
    \draw[decorate, decoration={snake,amplitude=1mm,segment length=5mm}] (p1) -- (q1);
    \draw[decorate, decoration={snake,amplitude=1mm,segment length=5mm}] (p2) -- (q2);
    \draw[decorate, decoration={snake,amplitude=1mm,segment length=5mm}] (p3) -- (q3);
    
    \node[draw, circle, minimum size=0.8cm] at (11, 3) (s0) {$0$};
    \node[draw, circle, minimum size=0.8cm] at (12, 3) {$1$};
    \node at (13, 3) {...};
    \node[draw, circle, minimum size=0.8cm] at (14, 3) {$1$};
    \node[draw, circle, minimum size=0.8cm] at (15, 3) (s3) {$0$};
    
    \draw[decorate,decoration={brace,amplitude=10pt,mirror}] 
        (s0.south) -- (s3.south) node[midway,below=6pt] {Standard-basis Qubits};

    \node at (11.5,3.5) {$\checkmark$};
    \node at (12.5,3.5) {$\checkmark$};
    \node at (14.5,3.5) {$\checkmark$};
    \node at (15.5,3.5) {$\checkmark$};
    \end{tikzpicture}
    \caption{The Non-collapsing Test.}
    \label{fig:non-collapsing}
\end{figure}

Now, using the subspace bloating lemma, we design a \emph{collapsing} test that is indistinguishable from the non-collapsing test from $\cal{B}$'s perspective.
\begin{enumerate}
    \item In the setup phase, $\anchor$ is given to $\cal{A}$ and $\vessel$ is given to $\cal{B}$. Sample $\theta \rand \{0,1\}$ and a random oracle $\cal{O}_{\sf random}$. $\theta$ is given to $\cal{B}$.
    \item If $\theta=0$, give the oracle $\cal{O}_{T+v}^{\sf col}$ to $\cal{B}$, where the oracle is defined as follows:
    \begin{itemize}
        \item $\cal{O}_{T+v}^{\sf col}$: On input $z$, check whether $z \in T + v$. Output \textcolor{red}{$\cal{O}_{\sf random}(\Can_S(z))$ where $\Can_S(z)$ is the canonical representation of coset $S+z$} if it is, and output $\bot$ otherwise.
    \end{itemize}
    If $\theta=1$, give the oracle $\cal{O}_{S^\perp+u}^{\sf col}$ to $\cal{B}$, where the oracle is defined as follows:
    \begin{itemize}
        \item $\cal{O}_{S^\perp+u}^{\sf col}$: On input $z$, check whether $z \in S^\perp + u$. Output \textcolor{red}{$\cal{O}_{\sf random}(\Can_{T^\perp}(z))$ where $\Can_{T^\perp}(z)$ is the canonical representation of coset $T^\perp + z$} if it is, and output $\bot$ otherwise.
    \end{itemize}
    \item If $\theta = 0$, the verifier measures $\anchor$ in the standard basis to obtain $x$. Otherwise, if $\theta = 1$, the verifier measures $\anchor$ in the Hadamard basis to obtain $x$.
\end{enumerate}
One can think of the canonical representation of a coset as a unique identifier for the coset. Now suppose that $\theta=0$ and the prover coherently computes $\cal{O}_{T+v}^{\sf col}$ on register $\vessel$. A measurement on the oracle output will collapse the state according to which coset of $S$ the vector $z$ belongs to, where $z$ is the standard-basis value on $\vessel$. This is equivalent to measuring all entangled qubits in the standard basis. For example, \Cref{fig:collapsing} shows the collapsing test when $U_{\sf shift}=I$ and $\theta=0$.
\begin{figure}[h]
    \begin{tikzpicture}[scale=1]
    
    \node at (0, 5) {$\areg$:};
    \node at (0, 3) {$\breg$:};
    
    \node[draw, circle, minimum size=0.8cm] at (1, 3) (h0) {$+$};
    \node[draw, circle, minimum size=0.8cm] at (2, 3) {$-$};
    \node at (3, 3) {...};
    \node[draw, circle, minimum size=0.8cm] at (4, 3) {$+$};
    \node[draw, circle, minimum size=0.8cm] at (5, 3) (h3) {$+$};
    
    \draw[decorate,decoration={brace,amplitude=10pt,mirror}] 
        (h0.south) -- (h3.south) node[midway,below=6pt] {Hadamard-basis Qubits};
    
    \node[draw, circle, minimum size=0.8cm] (q0) at (6, 3) {$1$};
    \node[draw, circle, minimum size=0.8cm] (q1) at (7, 3) {$1$};
    \node at (8, 3) {...};
    \node[draw, circle, minimum size=0.8cm] (q2) at (9, 3) {$0$};
    \node[draw, circle, minimum size=0.8cm] (q3) at (10, 3) {$0$};
    
    \node (m0) at (6, 4) {};
    \node (m1) at (7, 4) {};
    \node (m2) at (9, 4) {};
    \node (m3) at (10, 4) {};
    
    \node[draw, circle, minimum size=0.8cm] (p0) at (6, 5) {$1$};
    \node[draw, circle, minimum size=0.8cm] (p1) at (7, 5) {$1$};
    \node at (8, 5) {...};
    \node[draw, circle, minimum size=0.8cm] (p2) at (9, 5) {$0$};
    \node[draw, circle, minimum size=0.8cm] (p3) at (10, 5) {$0$};

    \draw[decorate, decoration={snake,amplitude=1mm,segment length=5mm}] (p0) -- (m0);
    \draw[decorate, decoration={snake,amplitude=1mm,segment length=5mm}] (p1) -- (m1);
    \draw[decorate, decoration={snake,amplitude=1mm,segment length=5mm}] (p2) -- (m2);
    \draw[decorate, decoration={snake,amplitude=1mm,segment length=5mm}] (p3) -- (m3);

    \draw[dashed] (5,4)--(11,4);
    \node at (5,4) {\Rightscissors};

    \draw[decorate, decoration={snake,amplitude=1mm,segment length=5mm}] (m0) -- (q0);
    \draw[decorate, decoration={snake,amplitude=1mm,segment length=5mm}] (m1) -- (q1);
    \draw[decorate, decoration={snake,amplitude=1mm,segment length=5mm}] (m2) -- (q2);
    \draw[decorate, decoration={snake,amplitude=1mm,segment length=5mm}] (m3) -- (q3);
    
    \node[draw, circle, minimum size=0.8cm] at (11, 3) (s0) {$0$};
    \node[draw, circle, minimum size=0.8cm] at (12, 3) {$1$};
    \node at (13, 3) {...};
    \node[draw, circle, minimum size=0.8cm] at (14, 3) {$1$};
    \node[draw, circle, minimum size=0.8cm] at (15, 3) (s3) {$0$};
    
    \draw[decorate,decoration={brace,amplitude=10pt,mirror}] 
        (s0.south) -- (s3.south) node[midway,below=6pt] {Standard-basis Qubits};

    \node at (11.5,3.5) {$\checkmark$};
    \node at (12.5,3.5) {$\checkmark$};
    \node at (14.5,3.5) {$\checkmark$};
    \node at (15.5,3.5) {$\checkmark$};
    \end{tikzpicture}
    \caption{The Collapsing Test.}
    \label{fig:collapsing}
\end{figure}

In the collapsing test, each potential measurement result $x$ obtained by the verifier corresponds to a different input on which the honest prover queried $\cal{O}_{\sf random}$. In the main proof, we show that \emph{any} (potentially adversarial) prover that passes with probability close to $1$ in the non-collapsing test will yield the following behavior in the collapsing test with overwhelming probability:
\begin{itemize}
    \item If $\theta=0$, the only input on which $\cal{O}_{\sf random}$ is accessed is $\Can_S(U_{\sf shift}(0^n \times x \times 0^n) + v)$, where $x$ is the standard-basis measurement result obtained by the verifier.
    \item If $\theta=1$, the only input on which $\cal{O}_{\sf random}$ is accessed is $\Can_{T^\perp}(U_{\sf shift}^{-t}(0^n \times x \times 0^n) + u)$, where $x$ is the Hadamard-basis measurement result obtained by the verifier.
\end{itemize}
Hence, we can test entanglement by repeatedly asking the prover to answer non-collapsing tests. If the prover is able to answer all of them, they would also succeed in the collapsing test, and we will be convinced that the prover is holding the entanglement. Indeed, a prover that passes the collapsing test will predict the verifier's standard  / Hadamard basis measurement results, allowing the extraction of EPR pairs following the strategy of \cite{vidick2021classicalproofsquantumknowledge} .

\paragraph{Non-localizability of $f$-BB84.} We now move to the second component mentioned above: Localization of entanglement. As discussed in the introduction, the famous $f$-BB84 protocol does not satisfy entanglement localization. Here we give an alternative attack on this protocol that better conveys the motivation for our own construction. 

Consider the following strategy involving three provers. There are provers $\cal{P}_L$, $\cal{P}_M$, and $\cal{P}_R$ (for ``left", ``middle", and ``right"). Prover $\cal{P}_M$ is in the correct location, while $\cal{P}_L$ and $\cal{P}_R$ are positioned on either side of $\cal{P}_M$. Upon receiving the $\epr$ pair $\ket{\Phi}_{\areg\breg}$, the prover applies the following unitary:
\[
\ket{x}_{\breg} \rightarrow \frac{1}{\sqrt{2}} \sum_{b \in \set{0,1}} \ket{b}_{\lreg} \ket{b}_{\mreg_0} \ket{b \oplus x}_{\mreg_1} \ket{b \oplus x}_{\rreg}
\]
and sends $\lreg$, $\mreg = \mreg_0 \otimes \mreg_1$, and $\rreg$ to $\cal{P}_L$, $\cal{P}_M$, and $\cal{P}_R$, respectively. It can be seen that there is no entanglement between $\areg$ and $\mreg$. To pass the challenge, the prover does the following.\\

\noindent If $\theta = 0$:
\begin{itemize}
    \item The middle prover $\cal{P}_M$ sends $\mreg_1$ to $\cal{P}_L$ and $\mreg_0$ to $\cal{P}_R$.
    \item The left prover $\cal{P}_L$ measures $\lreg$ and $\mreg_1$. If the bits are the same, it outputs $0$; otherwise, it outputs $1$.
    \item The right prover $\cal{P}_R$ measures $\rreg$ and $\mreg_0$. If the bits are the same, it outputs $0$; otherwise, it outputs $1$.
\end{itemize}
If $\theta = 1$:
\begin{itemize}
    \item The middle prover $\cal{P}_M$ sends $\mreg_0$ to $\cal{P}_L$ and $\mreg_1$ to $\cal{P}_R$.
    \item The left prover $\cal{P}_L$ outputs $0$ if its state is $\frac{1}{\sqrt{2}} \ket{00}_{\lreg \mreg_0} + \frac{1}{\sqrt{2}} \ket{11}_{\lreg \mreg_0}$; otherwise, it outputs $1$.
    \item The right prover $\cal{P}_R$ outputs $0$ if its state is $\frac{1}{\sqrt{2}} \ket{00}_{\rreg \mreg_1} + \frac{1}{\sqrt{2}} \ket{11}_{\rreg \mreg_1}$; otherwise, it outputs $1$.
\end{itemize}
Intuitively, this attack works because the left prover and the right prover ``encrypt'' the answer, such that the answer is stored in the interference between $\mreg$ and $\lreg \rreg$, while the middle prover has no clue about the answer when receiving the challenge. The middle prover does not measure anything, and the measurement is delayed until the global interference becomes local at the left and right provers.

\paragraph{Forcing a measurement via monogamy-of-entanglement.}

To bypass this attack, the key idea is to \emph{force} the prover to perform a particular ``measurement'' (in quotes because in reality they will be performing a non-destructive oracle query) at the position of the middle prover. More precisely, we want to ensure that the prover demonstrates its ability, in principle, to measure the state in a specific basis. We begin by replacing the random basis test on the $\epr$ pair with a non-destructive test for entanglement:

\begin{enumerate}
    \item In the setup phase, $\anchor$ is given to $\cal{A}$ and $\vessel$ is given to $\cal{B}$. Sample $s \rand \{0,1\}^\secp$.
    \item The left verifier $\cal{V}_L$ samples $x_L$, and the right verifier $\cal{V}_R$ samples $x_R$. Both $x_L$ and $x_R$ are broadcast at the appropriate time to ensure they meet at the correct position.
    \item The middle prover computes $\theta = f(x_L, x_R)$ and performs the following:
    \begin{itemize}
        \item If $\theta = 0$, it is granted access to the oracle $\cal{O}_{T+v}^{\sf ncol}$ and performs the coherent standard-basis measurement described above.
        \item If $\theta = 1$, it is granted access to the oracle $\cal{O}_{S^\perp + u}^{\sf ncol}$ and performs the coherent Hadamard-basis measurement described above.
        \item It broadcasts the result.
    \end{itemize}
    \item Both the left and right verifiers receive the result broadcast by the prover. The verifier returns $\top$ if both messages are on time and correct (equal to $s$); otherwise, it returns $\bot$.
\end{enumerate}

One may ask how the prover can be given access to an oracle during the middle of the computation. It can be assumed that there is a class of oracles indexed by $x_L, x_R$, and a particular oracle is selected for the parties once they know both $x_L$ and $x_R$. 

Now, actually proving the extractability of this simple protocol appears challenging. Our solution is to repeat this protocol twice: one standard round followed by one Hadamard round, or vice versa. For example, let's consider the case where the first round is a standard-basis round and the second round is a Hadamard-basis round.

\begin{enumerate}
    \item In the setup phase, $\anchor$ is given to $\cal{A}$ and $\vessel$ is given to $\cal{B}$. Sample $s_0, s_1 \rand \{0,1\}^\secp$.
    \item The first round has $\theta_0 = 0$. The prover is granted access to the oracle $\cal{O}_{T+v}^{\sf ncol}$ with $s_0$ encoded as the secret and performs the coherent standard-basis measurement described above.
    \item After a brief interval, the second round arrives with $\theta_1 = 1$. The prover is granted access to the oracle $\cal{O}_{S^\perp + u}^{\sf ncol}$ with $s_1$ encoded as the secret and performs the coherent Hadamard-basis measurement described above.
    \item Both the left and right verifiers receive the results broadcast by the prover. The verifier returns $\top$ if all messages are on time and correct; otherwise, it returns $\bot$.
\end{enumerate}



Now, let's examine how this two-round protocol prevents the attack described above. Suppose there are three provers $\cal{P}_L, \cal{P}_M, \cal{P}_R$ that succeed at passing the protocol, but where only $\cal{P}_M$ is at the correct spacetime location. Furthermore, suppose that $\cal{P}_M$ fails to ``measure'' and determine the answer for the first challenge at the correct location and time. In this case, the strategy can be recast as a successful adversary $\cal{A} = (\cal{A}_L, \cal{A}_M, \cal{A}_R)$ in the following monogamy-of-entanglement type game.\footnote{We note that the description of this game is slightly inaccurate, and we refer the reader to \Cref{sec:monogomy}, and in particular \Cref{lem:asymmetric_oracle_monogamy}, for formal details.} 

\begin{enumerate}
    \item The challenger generates $\ket{\chain^{\sk}}$ and gives the register $\vessel$ to $\cal{A}_M$.
    \item $\cal{A}_M$ generates a bipartite state on $\lreg\rreg$. It sends $\lreg$ to $\cal{A}_L$ and sends $\rreg$ to $\cal{A}_R$.
    \item The challenger samples $s_0, s_1 \rand \{0,1\}^\secp$. It gives $\cal{O}_{T+v}^{\sf ncol}$ to $\cal{A}_L$ that encodes $s_0$ as its underlying secret, and it gives $\cal{O}_{S^\perp + u}^{\sf ncol}$ to $\cal{A}_R$ that encodes $s_1$ as the underlying secret. In addition, it gives $s_0$ to $\cal{A}_R$.
    \item $\cal{A}_L$ returns $s_0'$ and $\cal{A}_R$ returns $s_1'$.
    \item The challenge is passed if $s_0' = s_0$ and $s_1' = s_1$.
\end{enumerate}

Intuitively, the reduction works as follows: $\cal{P}_M$ corresponds to $\cal{A}_M$, the left prover $\cal{P}_L$'s process for solving the $\cal{O}_{T+v}^{\sf ncol}$ challenge is used to construct $\cal{A}_L$, and the right prover $\cal{P}_R$'s process for solving the $\cal{O}_{S^\perp + u}^{\sf ncol}$ challenge is used to construct $\cal{A}_R$. Due to the structure of the game, we also have to include the correct ``left'' answer $s_0$ in the view of the right prover, which is a non-standard variant of monogamy-of-entanglement. 

In \Cref{sec:monogomy}, we show, via a sequence of reductions, that this game is hard for any adversary to win with probability better than $\frac{1}{2} + \negl(\secp)$. Thus, we can eliminate this type of prover, meaning that any successful prover must indeed ``measure'' (meaning query the oracle on its state) at the correct location and time. Finally, we argue that this implies the existence of an \emph{extractor} that, using quantum recording techniques \cite{zhandry2019record}, can obtain either the result of measuring $\vessel$ in the standard or the Hadamard basis at the correct location and time. By the previous entanglement verification arguments, we can thus localize the entanglement to the desired spacetime location.

\paragraph{Trajectory verification.} As mentioned earlier, trajectory verification follows fairly naturally from our entanglement localization protocol. At a high level, the verifier first sends the vessel register $\vessel$ to the prover, and then runs the entanglement localization protocol at a series of steps along the prover's claimed trajectory. In slightly more detail, at each step, the verifier sends a random standard- or Hadamard- basis challenge.\footnote{We note that one could alternatively specify that the verify sent strictly alternating challenges, but this requires it to keep some state in between each challenge, which may be undesirable.} Therefore, each \emph{pair} of steps gives rise to a slightly modified version of the two-round protocol described above, where the cadence is not fixed standard followed by Hadamard, but rather both are randomly chosen. We show that in this modified game, one can still extract entanglement from any successful prover at the time of the first challenge. This means that one can extract entanglement from the prover at any step on their trajectory, assuming they have a high probability of passing all challenges. This is formalized in \Cref{sec:trajectory}.

\paragraph{Optimizations and applications.} We conclude by highlighting two additional features of our constructions. First, while the informal guarantees stated above only yield meaningful extraction when the adversary succeeds with probability $1-o(1)$, this limitation can be overcome via sequential repetition. By executing sufficiently many copies of the localization protocol in rapid succession, we can amplify soundness so that any prover accepted with non-negligible probability must still contain the localized object at the verified spacetime point. In this way, our localization guarantees can be strengthened to hold against adversaries whose success probability is merely inverse polynomial. 

Second, we discuss how to upgrade our protocols to achieve \emph{public verifiability}. Intuitively, because the verification procedure is implemented using public classical oracles, one can arrange the protocol so that anyone with oracle access can verify the prover’s responses, without requiring access to any secret verification key. We defer the precise formulation and construction of publicly-verifiable localization protocols to \Cref{subsec:public_verifiability}. After that, in \Cref{subsec:additional}, we discuss broader implications for position-based cryptography and related applications.

\subsection{Functionality Localization}

Functionality localization can be seen as a strengthening of quantum \emph{copy-protection}, which enables encoding a function $f$ (sampled from some family $\cal{F}$) into a quantum state $\rho_f$ such that (i) $\rho_f$ can be used to compute $f(x)$ for any input $x$, and (ii) no adversary can create two disjoint registers that can simultaneously be used to compute $f$ (over some distribution $\cal{D}$ on inputs $x$).

That is, copy-protection establishes that certain functionalities can be rendered unclonable, meaning they cannot exist in two places at once. This then raises the possibility that functionalities can be \emph{localized}! In this work, we show that any functionality $\cal{F}$ that can in principle be copy-protected, can also be localized, in the classical ideal obfuscation model.\footnote{Technically, we say that a functionality is ``copy-protectable'' if there exists some distribution $\cal{D}$ over inputs such that $\cal{F}$ is copy-protectable with respect to independent challenges $x_0,x_1$ sampled from $\cal{D}$. See \Cref{subsec:functionality} for more details.} 

Our construction combines ``best-possible'' copy-protection with the techniques introduced above. A best-possible copy-protector is a scheme that takes as input a function $f \gets \cal{F}$ and outputs a state $\rho_f$ such that, if there \emph{exists} any method of copy-protecting $f$, then $\rho_f$ is itself a copy-protected version of $f$. The idea of best-possible copy-protection was introduced by \cite{CG24}, who showed that any ``quantum state obfuscator'' serves as a best-possible copy protection scheme. Quantum state obfuscation was then shown to exist in the classical oracle model \cite{BBV24}.

Let $\QSO$ be a quantum state obfuscator, let $(\Authenticate,\Ver)$ be a message authentication code, and consider the following (simplified) proposal for functionality localization.

\begin{itemize}
    \item In the setup phase, the verifiers samples a functionality $f \gets \cal{F}$, an authentication key $\sk$, and a random coset state $\ket{S_{v,u}}$. Let $\widehat{f}_\sk$ be the functionality that, on input $x$, outputs $f(x)$ along with a signature $\sigma$ on the pair $(x,f(x))$. Output $\rho_f \gets \QSO(\widehat{f}_\sk)$ and $\ket{S_{v,u}}$.
    \item In the online phase, the verifiers sample $\theta \gets \{0,1\}, x \gets \cal{D}$, and a random oracle $\cal{O}_{\sf random}$, and do the following.
    \begin{itemize}
        \item If $\theta = 0$, they send $\theta,x$ to the prover along with an oracle $\cal{O}_{\sk,S,v}$ that takes as input $(z,x,y,\sigma)$ and if (i) $z \in S+v$ and (ii) $\Ver(\sk,(x,y),\sigma) = \top$, outputs $\cO_{\sf random}(x,y)$.
        \item If $\theta = 1$, they send $\theta,x$ to the prover along with an oracle $\cal{O}_{\sk,S^\perp,u}$ that takes as input $(z,x,y,\sigma)$ and if (i) $z \in S^\perp+u$ and (ii) $\Ver(\sk,(x,y),\sigma) = \top$, outputs $\cO_{\sf random}(x,y)$.
    \end{itemize}
    \item The honest prover, sitting at the correct spacetime point, uses $\rho_f$ to compute $f(x), \sigma_{x,f(x)}$, and then queries its oracle using either $\ket{S_{u,v}}$ in the standard or Hadamard basis to obtain the value $\cal{O}_{\sf random}(x,f(x))$, which it sends back to the verifier.
\end{itemize}

We note that there are several ways in which our final protocol differs from this simplified description, and we refer the reader to \Cref{subsec:functionality} for more details.

The intuition here is that, in order to succeed in the protocol, the prover \emph{must} query their oracle at the correct location on $z,x,f(x),\sigma_{x,f(x)}$, where $z$ is either in $S+v$ or $S^\perp + u$. On the other hand, due to the security of the message authentication code and the hiding of $\QSO$, they will not be able to learn a signature $\sigma_{x,y}$ on \emph{any} $y \neq f(x)$. Hence, the oracle $\cal{O}_{\sf random}$ will be queried on $(x,f(x))$ but not on $(x,y)$ for any $y \neq f(x)$. This yields the existence of an \emph{extractor} that, using quantum recording techniques, can extract the value of $f(x)$ from any successful adversarial prover at the location to be verified. We note that we crucially rely on the techniques developed above in the context of entanglement localization in order to establish that the adversary must indeed query their oracle at the correct spacetime location. In particular, even though there is no entanglement in this setting, we can view a random coset state as a ``degenerate'' version of the anchor state described above with the number of EPR pairs set to 0, and our techniques carry over naturally to this setting.

%% file: preliminary.tex
\section{Preliminaries}
\label{sec:prelims}
\subsection{General Notation}
In many contexts, objects are parameterized by a security parameter $\secp\in\N$, which is often omitted to reduce clutter. We write PPT and QPT to denote probabilistic polynomial-time and quantum polynomial-time, respectively. Algorithms are denoted with calligraphic letters, such as $\cA,\cB,\cal{P},\cal{V}$. Quantum registers are denoted with bold letters, such as $\areg,\breg,\xreg$. We write $\negl(\secparam)$ to denote a negligible function in $\secparam$. For a protocol $\sf Prot$ and two parties $\cal{P}$ and $\cal{V}$ (prover and verifier), we write $Y_{\cal{P}},Y_{\cal{V}}\gets{\sf Prot}(\cal{P}(\areg)\rightleftharpoons \cal{V}(\breg))(x)$ to denote the experiment where $\cal P$ and $\cal V$ interact by running $\sf Prot$ with public input $x$, private prover input $\areg$, and private verifier input $\breg$. Here, $Y_{\cal P}$ is some set of prover outputs and $Y_{\cal{V}}$ is some set of verifier outputs. $Y_{\cal V}$ typically includes a bit $b\in\{\top,\bot\}$ known as the acceptance decision, with $\top$ meaning the verifiers accept and $\bot$ meaning that they reject.
\subsection{Modeling Computations and Interactions in Spacetime}
In this section, we describe our model for how algorithms and interactive protocols take place in physical spacetime, which will be important for the interpretation of our main results. This model is somewhat new to our work, but is inspired by similar sections in \cite{CGMO09PBC},\cite{unruh2014pvqrom}, and \cite{girish2026private}. Note that later sections of this paper, as well as our main results, do not rely closely on the wording of this section --- instead, they tend to use terminology of a broader and more standard flavor. We strongly believe that our results are general enough that there is a wide variety of acceptable modeling choices under which they can retain correctness and meaningful interpretability.

\paragraph{Spacetime.} All of our protocols are set in a bounded one-dimensional space\footnote{It is natural to ask whether our protocols can be extended to work in standard three-dimensional space. We expect that, similarly to \cite{unruh2014pvqrom}, there is a natural generalization to three dimensions which can be then compiled to a more abstracted causal circuit, and analyzed without dealing with three-dimensional geometry. However, we leave this generalization as an open problem.}, represented as $[-1,1]$. This is the line segment between two statically placed verifiers, $\cal V_L$ at position $-1$ and $\cal V_R$ at position $1$. We model time as $\R^{\geq0}$, and we assume that all parties have access to a synchronized clock that reports the current time $t \in \R^{\geq0}$. Units of space and time are taken such that light travels 1 unit of space in 1 unit of time. A \emph{point in spacetime} is a pair $(L,t) \in [-1,1]\times\R^{\geq0}$, combining a spatial point with a time.

\paragraph{Messages.} 
We always consider messages to be directional --- i.e. they travel either only to the left, or only to the right --- and to travel with speed exactly 1 (the speed of light). We treat this process as the movement of a physical quantum register through space at the speed of light. In general, a message is a quantum state, although we often distinguish between the classical and quantum parts of a given message (for example, a party may send a string $x$ and a qubit $\ket{\psi}$ in the same message).
\paragraph{Algorithms.}
We define the following model of computation in spacetime. An algorithm consists of a finite number of movable ``cells'', which are quantum registers labeled $\rreg_1,\rreg_2,\dots,\rreg_M$. Corresponding to each cell is a predetermined, continuous trajectory through space, which we'll label $T_1,\dots,T_M:\R^{\geq0}\to[-1,1]$. $T_i(t)$ denotes the spatial location of cell $\rreg_i$ at time $t$. Cells can move at any speed up to the speed of light. Inputs are given to an algorithm by setting the initial value of some cells designated as input cells. Additionally, for each message that could possibly be sent or received by the algorithm, we add a cell which travels along the same path as this message, in perpetuity or until the message goes ``out of bounds'', in which case the cell lingers at $-1$ or $1$. Prior to the departure time of a message, its corresponding cell will simply be dormant in the message's starting location. The reason for this addition will become clear in the next paragraph.

The computational units of an algorithm are contained at the intersection points between its cells\footnote{This restriction is without loss of generality. Since we assume -- as is standard in QPV literature -- that all computation is instantaneous, any computation which is being done locally at one cell, whether the cell is stationary or moving, can always be freely pushed forward or backward to the nearest intersection points with other cells.} --- let $X$ be the set containing all of these points. At each intersection point $(L,t)\in X$, the algorithm contains a quantum circuit $C_{L,t}$ acting on all of the cells which intersect there. $C_{L,t}$ can implement any unitary on these cells, and the output registers can be arbitrarily subdivided among the participating cells. The size of a circuit, denoted $|C_{L,t}|$, refers to the number of gates used to implement it, taken from some fixed universal quantum gate set. No matter the size of a circuit, it is always assumed to be an instantaneous operation. An algorithm is said to eventually \emph{terminate} if $X$ is finite, and it is said to have terminated whenever the last intersection point is reached. At termination, some of its cells are designated as outputs, and the rest discarded. As mentioned above, all messages are also treated as cells --- to send a message, the algorithm transfers some information into the cell having that message's trajectory, which occurs at the intersection point corresponding to the message departing. To receive a message, an algorithm might place a SWAP circuit between a message cell and another ancilla cell, for example.

We sometimes specify that a family of algorithms $\{\cA_\secp\}_\secp$ is QPT, which is defined with respect to a security parameter, and abbreviated as ``$\cal A$ is QPT'' if dependence on $\secp$ is clear from context. Let $X_\secp$ and $\{C_{\secp}^{L,t}\}_{(L,t)\in X_\secp}$ be the sets of intersection points and circuits, respectively, for $\cA_\secp$. QPT is then shorthand for the following requirement: there exists some polynomial $p(\secp)$ such that for all $\secp\in\N$,
\begin{enumerate}
    \item $\cA_\secp$ terminates, and
    \item $\sum_{(L,t)\in X_\secp}|C_{\secp}^{L,t}|\leq p(\secp)~.$
\end{enumerate}

\paragraph{Positional Protocols.}
Our protocols take place between two kinds of parties: provers and verifiers. As stated above, the verifiers always consist of two parties $\cal V_L$ and $\cal V_R$, whose locations are publicly known to be $-1$ and $1$, respectively. Provers are, in general, allowed to have circuits anywhere in $[-1,1]$. The protocol typically includes a $\Setup$ operation, whose outputs are distributed between these parties unequally. We clarify this distinction, as well as other distinctions between provers and verifiers, below:
\begin{itemize}
    \item All verifiers can:
    \begin{itemize}
        \item Take as input some secret information $\sparam$ from $\Setup$, in addition to the public information $\param$. In our protocols, these are both classical.
        \item Send classical or quantum messages to any position $(L,t)$.
        \item Send classical or quantum messages to other verifiers through private, trusted channels.
    \end{itemize}
    \item A prover can:
    \begin{itemize}
        \item Take as input some prover-specific information from $\Setup$. In our protocols, this will always be a quantum state $\rho$.
        \item Send classical or quantum messages to any position $(L,t)$.
        \item Send classical or quantum messages between its various distributed components through private, trusted channels.
    \end{itemize}
    \item We assume that the prover's trajectory is a continuous function $L(\cdot)$ with $L(t)\in[-1,1]$ and $\big|\frac{L(t_1)-L(t_0)}{t_1-t_0}\big|\leq 1$ for all $t,t_0,t_1\in\R^{\geq0}$. The second condition is just the speed of light constraint, as measured by the secant line of the trajectory.
\end{itemize}

The following lemma is imported from \cite{unruh2014pvqrom}. It states that no adversary can distinguish an oracle reprogramming on the position decided by the xor of random domain element sampled by $\cal{V}_L,\cal{V}_R$ if one of the domain element hasn't reach the adversary.
\begin{lemma}[\cite{unruh2014pvqrom}]\label{lem:oracle_reprogram}
    Let $\cal{O}_{\sf random}$ be a random oracle with exponential domain and image size and the size is a power of $2$. Any QPT prover cannot distinguish between the following two cases with noticeable probability:
    \begin{itemize}
        \item $\cal{V}_L$ samples a uniform random vector $x_l$ and broadcast it at time $t_l$. $\cal{V}_R$ samples a uniform random vector $x_r$ and broadcast it at time $t_r$. 
        \item $\cal{V}_L$ samples a uniform random vector $x_l$ and broadcast it at time $t_l$. $\cal{V}_R$ samples a uniform random vector $x_r$ and broadcast it at time $t_r$. Then the oracle $\cal{O}_{\sf random}$ is reprogrammed to a uniform random output $y$ on input $x_l\oplus x_r$ for all space-time coordinates $(p,t)$ within the region:
        \begin{itemize}
            \item $t-t_l\geq p+1$.
            \item $t-t_r\geq 1-p$.
        \end{itemize}
    \end{itemize}
\end{lemma}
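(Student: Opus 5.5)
The plan is to reduce the lemma to a one-way-to-hiding (O2H) argument. The key observation is that, after a harmless relabeling, the two experiments differ only on queries made \emph{outside} the region; those queries are causally independent of one of $x_l,x_r$.

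\textbf{Step 1: linearize the spacetime algorithm.} Fix a QPT prover, with total circuit size at most $p(\secp)$ and hence at most $q\le p(\secp)$ oracle queries. Its intersection points $X$ are finite. Order them by time, breaking ties arbitrarily; this is consistent with causal order, since messages and cells travel at speed at most $1$. The whole interaction, including the verifiers' broadcasts of $x_l$ at $t_l$ and $x_r$ at $t_r$, then becomes a single sequential oracle algorithm. Each query is tagged with the spacetime point $(p,t)$ where it occurs. Call the region $R=\{(p,t): t-t_l\ge p+1,\ t-t_r\ge 1-p\}$; it is exactly the causal future of both broadcast events.

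\textbf{Step 2: move the reprogramming outside $R$.} In case 2, queries in $R$ see $H_{x^*\mapsto y}$ with $x^*=x_l\oplus x_r$, and queries outside $R$ see $H$. Because $H(x^*)$ and $y$ are both uniform and independent of everything else, swapping their roles leaves the distribution unchanged. After the swap, case 2 becomes: queries in $R$ see $H$, and queries outside $R$ see $G=H_{x^*\mapsto y'}$ for a fresh uniform $y'$. Case 1 is the same experiment with $G$ replaced by $H$. So the two cases differ only on the single point $x^*$, and only for queries outside $R$.

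\textbf{Step 3: apply O2H.} Apply the O2H lemma with the query set restricted to the out-of-$R$ queries. It bounds the distinguishing advantage by
\[
2q\sqrt{P_{\mathrm{find}}},
\]
where $P_{\mathrm{find}}$ is the probability that, running the case-1 experiment and measuring a uniformly chosen out-of-$R$ query register, we obtain $x^*$.

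\textbf{Step 4: bound $P_{\mathrm{find}}$ by causality.} Take a query at a point $(p,t)\notin R$. Then either $t-t_r<1-p$ or $t-t_l<p+1$; say the former. In that case $(p,t)$ is not in the causal future of the $x_r$ broadcast. Its entire causal past is therefore also outside $R$ and outside the light cone of $x_r$. Every register influencing the query input was produced from the initial state, $x_l$, and queries to $H$, all independent of $x_r$. Note that in the case-1 experiment the oracle is $H$ everywhere, so no information about $x_r$ can leak through the oracle.

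Hence the query input is independent of the uniform $x_r$ (symmetrically, of $x_l$ in the other case). It equals $x_l\oplus x_r$ with probability exactly $2^{-n}$, where $n$ is the input length of $\cal{O}_{\sf random}$. This gives $P_{\mathrm{find}}\le 2^{-n}$, so the advantage is at most $2q\,2^{-n/2}=\negl(\secp)$, since the domain is exponential.

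\textbf{Main obstacle.} The delicate step is Steps 1 and 4 together: justifying that the sequentialized algorithm's out-of-$R$ query registers really have reduced states independent of the missing string. This needs care because, in the sequential order, later out-of-$R$ queries may be interleaved with in-$R$ queries. I would handle it by noting that the reduced state on the cells at $(p,t)$ is determined by operations in its past light cone alone. Unitaries applied at spacelike-separated points commute with, and do not affect, that reduced state. This no-signalling fact is what makes the independence claim rigorous despite the global ordering.
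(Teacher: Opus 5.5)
Your proof is correct. The paper does not prove this lemma; it is imported from \cite{unruh2014pvqrom}, so there is no internal argument to compare against. What you have written is a self-contained version of the standard one-way-to-hiding argument.

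The key step is Step~2. Exchanging the roles of $H(x_l\oplus x_r)$ and $y$ moves the discrepancy between the two worlds onto the queries \emph{outside} the doubly-causal future $R$. There the light-cone argument gives $P_{\mathrm{find}}\le 2^{-n}$ in the unreprogrammed world. Without the swap, O2H would have to be applied to the in-$R$ queries, which can know $x_l\oplus x_r$, and would give nothing.

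Two points should be made explicit in a write-up.

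\textbf{The O2H variant.} You need a version of O2H in which the algorithm also has uncounted access to the unchanged oracle $H$ for its in-$R$ queries. The Ambainis--Hamburg--Unruh version, which allows an arbitrary correlated side input $z$, suffices. Here $z$ can contain the full truth table of $H$ together with $x_l,x_r$, since O2H counts only queries to the switched oracle and places no bound on the rest of the computation.

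\textbf{The independence claim in Step~4.} This is cleanest if you condition on a fixed classical $H$. Then:
\begin{itemize}
\item every oracle gate is a fixed unitary at a fixed intersection point, so the in-$R$/out-of-$R$ split of the queries is static;
\item any cell present at an out-of-$R$ point was never in $R$, because $R$ is closed under causal future and cells move at speed at most $1$;
\item operations on other cells therefore leave that cell's reduced state unchanged, so the joint law of the query input and $x_l$ is independent of $x_r$.
\end{itemize}
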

\subsection{Specifying Quantum Information in Spacetime}
We introduce notation, $\register[\cdot](\cdot)$, for convenience in formalizing many of our extraction-based security definitions in this paper. A definition is given below.
\begin{definition}
    Let ${\sf Exp}_{\cal P,V}$ refer to some experiment, taking place between a prover $\cal P$ and verifier $\cal V$, and let $R_1,\dots,R_k$ be disjoint regions of spacetime. Then, we use the following notation,
    \[\areg_1,\dots,\areg_k\gets\register[R_1;\dots;R_k]({\sf Exp}_{\cal P,V})~,\]
    to mean that for $i\in[k]$, $\areg_i$ is the tensor product of the following:
    \begin{enumerate}
        \item All of $\cal P$'s registers within region $R_i$.
        \item All message registers currently inflight within region $R_i$.
        \item The transcript of all classical messages sent by $\cal V$ which causally precede any points in $R_i$. In other words, any verifier messages which would be ``heard'' within $R_i$.
    \end{enumerate}
\end{definition}
Note that this may not correspond to a physical operation -- in particular, if $R$ contains points which are at the same spatial location at different times, these might have states which cannot physically coexist -- but we still allow this to be defined for notational convenience. Also, when convenient, we will use the notation ``$S~@~t$'', where $S$ is a spatial region and $t$ a time, to refer to the spacetime region $S\times\{t\}$.
\subsection{Compressed Oracle}
In this subsection, we recall the technique introduced by Zhandry \cite{zhandry2019record}. For more details please refer to \cite{zhandry2019record}. The following part is adapted from~\cite{hao2026needquantummemoryshort}. We will show two equivalent oracle forms: the standard oracle and the compressed oracle. Note that Fourier basis is considered in other papers that use this technique but here we use Hadamard basis instead for simplicity.

We first model an oracle quantum algorithm. It consists of the following registers:
\begin{itemize}
    \item $\xreg$ is the register that stores either a oracle query or an answer waiting to be written.
    \item $\ureg$ is the register that stores the oracle's response or is used to store phase for the output process (will explain later).
    \item $\wreg$ is the register that stores as ancilla qubits in the computation.
\end{itemize}
\paragraph{Standard oracle.} Let $\cal{O}:\mathbb{F}_2^n \rightarrow \mathbb{F}_2^m$ be a random oracle. We can view an algorithm that runs in the random oracle model with respect to $\cal{O}$ as the algorithm itself concatenated with a random oracle register $\dreg$ that is initialized to $\sum_{\cal{O}}\ket{\cal{O}}\bra{\cal{O}}_{\dreg}$ (ignoring the normalizing factor). The register $\dreg$ stores the random function $\ket{\cal{O}}_{\dreg} = \ket {\cal{O}(0^n)} \ket {\cal{O}(0^{n-1}1)} \cdots \ket {\cal{O}(1^n)}$. The oracle unitary $\sto$ can be written as follows:
\begin{align*}
    \sto \ket{x}_{\xreg} \ket {u}_{\ureg} \ket {w}_{\wreg} \otimes \ket {\cal{O}}_{\dreg} = \ket{x}_{\xreg} \ket {u + \cal{O}(x)}_{\ureg} \ket {w}_{\wreg} \otimes \ket {\cal{O}}_{\dreg},  
\end{align*}
The following lemma shows that the output distribution using a standard oracle is exactly the same as using a random oracle. 

\begin{lemma}[{\cite[Lemma 2]{zhandry2019record}}] \label{lem:purifyoracle}
    Let $\As$ be an (unbounded) quantum algorithm making oracle queries. The output of $\As$ given a random function $\cal{O}$ is exactly identical to the output of $\As$ given access to a standard oracle. 
    Therefore, a random oracle with quantum query access can be perfectly simulated as a standard oracle. 
\end{lemma}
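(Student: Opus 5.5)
The plan is to prove the statement by showing that the standard oracle register $\dreg$ acts only as a classical control that is never disturbed. Write the algorithm $\As$ as an alternating sequence
\[
U_q\, \mathcal{Q}\, U_{q-1} \cdots U_1\, \mathcal{Q}\, U_0
\]
acting on $\xreg\ureg\wreg$. Here the $U_i$ are oracle-independent unitaries and $\mathcal{Q}$ is the query. Any intermediate measurements are first deferred to the end using ancillas in $\wreg$, which does not change the output distribution. In the real-oracle world, for a fixed function $\cal{O}$ the query is the unitary $O_{\cal{O}}:\ket{x}\ket{u}\ket{w}\mapsto\ket{x}\ket{u+\cal{O}(x)}\ket{w}$. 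The output is therefore $\E_{\cal{O}}[\rho_{\cal{O}}]$, where $\rho_{\cal{O}}$ is the final state obtained by running $\As$ with query unitary $O_{\cal{O}}$.

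The key structural step is the identity
\[
\sto = \sum_{\cal{O}} O_{\cal{O}} \otimes \ketbra{\cal{O}}{\cal{O}}_{\dreg},
\]
which is immediate from the definition of $\sto$. Each $U_i$ acts as $U_i\otimes I_{\dreg}$. Hence every operator in the circuit is block diagonal with respect to the computational basis of $\dreg$, and the whole circuit factors as
\[
\sum_{\cal{O}} V_{\cal{O}}\otimes\ketbra{\cal{O}}{\cal{O}}_{\dreg},
\]
where $V_{\cal{O}}$ is exactly the algorithm run with query unitary $O_{\cal{O}}$. To verify this, I will induct on the number of queries: products of block-diagonal operators sharing the same blocks are block diagonal, with the blocks multiplied.

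Apply this circuit to the initial state $\rho_0\otimes \frac{1}{|\cal{F}|}\sum_{\cal{O}}\ketbra{\cal{O}}{\cal{O}}_{\dreg}$, where $|\cal{F}|$ is the number of functions. The block-diagonal form sends it to
\[
\frac{1}{|\cal{F}|}\sum_{\cal{O}} V_{\cal{O}}\rho_0V_{\cal{O}}^\dagger\otimes\ketbra{\cal{O}}{\cal{O}}.
\]
Tracing out $\dreg$ then yields $\E_{\cal{O}}[\rho_{\cal{O}}]$. This is the same state as in the random-oracle world, so measuring the output registers gives identical distributions.

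For completeness I would also remark that the result still holds if $\dreg$ is instead initialized to the pure uniform superposition over functions. The cross terms $\ketbra{\cal{O}}{\cal{O}'}$ with $\cal{O}\neq\cal{O}'$ remain orthogonal in $\dreg$, so they vanish under the partial trace; this is what justifies the word ``purify''.

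The only point needing care, which I consider the main (mild) obstacle, is the handling of intermediate measurements and adaptivity. The principle of deferred measurement addresses this. Equivalently, one can note that any measurement on $\xreg\ureg\wreg$ commutes with operators of the form $I\otimes\ketbra{\cal{O}}{\cal{O}}_{\dreg}$, so the block-diagonal argument goes through branch by branch.
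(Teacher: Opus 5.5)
Your proof is correct. Writing $\sto=\sum_{\cal{O}}O_{\cal{O}}\otimes\ketbra{\cal{O}}{\cal{O}}_{\dreg}$ shows that $\dreg$ only ever acts as a computational-basis control, so the whole run is block diagonal in $\dreg$, and tracing $\dreg$ out recovers $\E_{\cal{O}}[\rho_{\cal{O}}]$ (the off-diagonal terms vanish in the pure-superposition case). The paper gives no proof of its own and simply imports the lemma from Zhandry, whose argument is the same observation that measuring $\dreg$ commutes with every query, so your route is essentially the standard one.
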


\paragraph*{Compressed oracle.} 
The compressed oracle can be viewed a type of lazy sampling technique. Instead of initializing $\cal{O}$ at the very beginning, the compressed oracle creates a database $\ket{D}_{\dreg}=\ket{D(0^n)}_{\dreg_{0^n}}\ket{D(0^{n-1}1)}_{\dreg_{0^{n-1}1}}\cdots\ket{D(1^n)}_{\dreg_{1^n}}$ where $D(x)\in\mathbb{F}_2^m\cup\set{\bot}$ and $\ket{D}$ is initialized to $\ket{\emptyset}_{\dreg}=\ket{\bot, \bot, \cdots, \bot}$ where $\bot$ is a symbol that indicates the lack of information of the algorithm on certain function value. Let $|D|$ denote the number of entries in $D$ that are not $\bot$. 
The database is initialized as an empty list $D_0$ of length $N$, in other words, it is initialized as the pure state $\ket{\emptyset} := \ket{\bot, \bot, \cdots, \bot}$. Let $|D|$ denote the number of entries in $D$ that are not $\bot$. 

For any $D$ and $x$ such that $D(x) = \bot$, we define $D \cup (x, u)$ to be the database $D'$, such that for every $x' \ne x$, $D'(x') = D(x)$ and at the input $x$, $D'(x) = u$. 

The compressed oracle is the unitary $\csto := \stddecomp \cdot \csto' \cdot \stddecomp$, where
\begin{itemize}
    
    \item $\csto'$ writes $D(x)$ to the answer register $\ureg$ by writing $u+ D(x)$ into it when $D(x)\neq \bot$ as usual but does nothing when $D(x)=\bot$. Or to say that we can define addition for $\bot$: $u+\bot=u$, $\forall u\in\mathbb{F}_2^m$. 
    Formally,
    \begin{equation*}
        \csto'\ket{x}_{\xreg}\ket{u}_{\ureg}\ket{w}_{\wreg}\otimes\ket{D}_{\dreg} = \ket{x}_{\xreg}\ket{u+ D(x)}_{\ureg}\ket{w}_{\wreg}\otimes\ket{D}_{\dreg}.
    \end{equation*}
    
    \item When the algorithm queries, the database calls $\stddecomp$ which unfolds the database and samples a value $y$ for positions that the algorithm does not know what the value is. More specifically, $\stddecomp \ket{x}_{\xreg}\ket{u}_{\ureg}\ket{w}_{\wreg}\otimes\ket{D}_{\dreg} := \ket{x}_{\xreg}\ket{u}_{\ureg}\ket{w}_{\wreg}\otimes\stddecomp_x\ket{D}_{\dreg}$, where $\stddecomp_x$ works on $\dreg_x$. 
    \begin{itemize}
        \item If $D(x) = \bot$, $\stddecomp_x$ maps $\ket \bot$ to 
        \[
        \frac{1}{\sqrt{N}} \sum_{y\in\mathbb{F}_2^m} \ket y.
        \]
        
        \item If $D(x) \ne \bot$, $\stddecomp_x$ works on the $x$-th register, and it is an identity on 
        \[
        \frac{1}{\sqrt{N}} \sum_{y\in\mathbb{F}_2^m} (-1)^{\langle u,y\rangle} \ket y
        \]
        for all $u \ne 0$; it maps the uniform superposition $\frac{1}{\sqrt{N}} \sum_{y\in\mathbb{F}_2^m} \ket y$ to $\ket \bot$. 
        
        More formally, for a $D'$ such that $D'(x) = \bot$, 
        \begin{align*}
            \stddecomp_x  \frac{1}{\sqrt{N}} \sum_{y\in\mathbb{F}_2^m} (-1)^{\langle u,y\rangle} \ket{D' \cup (x, y)}_{\dreg}
                        = \frac{1}{\sqrt{N}} \sum_{y\in\mathbb{F}_2^m} (-1)^{\langle u,y\rangle}  \ket{D' \cup (x, y)}_{\dreg}
        \end{align*}
        for any $u\ne 0$ and, 
        \begin{align*}   
            \stddecomp_x  \frac{1}{\sqrt{N}} \sum_{y\in\mathbb{F}_2^m} \ket{D' \cup (x, y)}_{\dreg}  =\ket{D'}_{\dreg}.
        \end{align*}
    \end{itemize}
    Intuitively, it swaps a uniform superposition $\frac 1{\sqrt{N}} \sum_{y\in\mathbb{F}_2^m} \ket y$ with $\ket \bot$ on $\dreg_x$ and does nothing on other orthogonal basis. So it is a well defined unitary.
\end{itemize}
Note that the compressed oracle can be implemented in an efficient way where we don't create $\dreg_x$ for all $x$ and we only store all non-$\bot$ entries. For more details, please refer to \cite{zhandry2019record}. Zhandry proves that, ${\sf StO}$ and \csto{} are perfectly indistinguishable to any \textit{unbounded} quantum algorithm. 
\begin{lemma}[{\cite[Lemma 4]{zhandry2019record}}] 
    Let $\As$ be an (unbounded) quantum algorithm making oracle queries. The output of $\As$ given access to the standard oracle is exactly identical to the output of $\As$ given access to a compressed oracle. 
\end{lemma}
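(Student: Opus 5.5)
The approach is to exhibit an explicit unitary on the database register that intertwines the two oracles, and to carry the equality of states through the whole computation by induction on the number of queries. Enlarge each database cell $\dreg_x$ to the alphabet $\mathbb{F}_2^m\cup\set{\bot}$, so that standard-oracle databases are simply the databases with no $\bot$ entries. Define the global unitary
\[
\mathsf{Decomp} := \bigotimes_{x} \stddecomp_x ,
\]
where $\stddecomp_x$ acts only on $\dreg_x$. We first record two facts about it.
\begin{itemize}
\item Each $\stddecomp_x$ is an involution, since it swaps $\frac{1}{\sqrt N}\sum_y\ket{y}$ with $\ket{\bot}$ and acts as the identity on the orthogonal complement. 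Hence $\mathsf{Decomp}^2=I$.
\item $\mathsf{Decomp}$ maps the purified standard-oracle initial database $\bigotimes_x \frac{1}{\sqrt N}\sum_y \ket{y}_{\dreg_x}$ to $\ket{\emptyset}$. (The mixture $\sum_{\cal O}\ketbra{\cal O}{\cal O}$ is purified to this uniform superposition, which by \Cref{lem:purifyoracle} does not change the algorithm's output.)
\end{itemize}

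The invariant we will prove is the following. If $\ket{\phi_j}$ denotes the joint state after $j$ queries in the standard-oracle world (database uniform, or more generally supported on non-$\bot$ databases) and $\ket{\phi'_j}$ the corresponding compressed-oracle state, then
\[
\ket{\phi'_j}=(I\otimes\mathsf{Decomp})\ket{\phi_j}.
\]
The base case is the second fact above. The algorithm's own unitaries act only on $\xreg\ureg\wreg$, so they commute with $\mathsf{Decomp}$ and preserve the invariant.

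For a query, work with each basis component $\ket{x}_{\xreg}$ separately and write $\mathsf{Decomp}=\stddecomp_x\otimes\mathsf{Decomp}_{-x}$. Then
\[
\csto\,(I\otimes\mathsf{Decomp})\ket{\phi}=\stddecomp_x\,\csto'\,\stddecomp_x\,\stddecomp_x\,\mathsf{Decomp}_{-x}\ket{\phi}=\stddecomp_x\,\csto'\,\mathsf{Decomp}_{-x}\ket{\phi},
\]
using $\stddecomp_x^2=I$. Since $\csto'$ touches only $\dreg_x$ among the database cells, it commutes with $\mathsf{Decomp}_{-x}$. Moreover, $\mathsf{Decomp}_{-x}\ket{\phi}$ still has a non-$\bot$ entry at $x$, because $\ket{\phi}$ has none and $\mathsf{Decomp}_{-x}$ does not act on $\dreg_x$. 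On such entries $\csto'$ coincides exactly with $\sto$. The right-hand side therefore equals $(I\otimes\mathsf{Decomp})\,\sto\ket{\phi}$, and the invariant is maintained. Finally, $\sto$ keeps the database in non-$\bot$ form, so the hypothesis needed at the next query is preserved as well.

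At the end, the output is obtained by a measurement on the algorithm's registers only. Since the two final states differ by the unitary $I\otimes\mathsf{Decomp}$ acting solely on $\dreg$, their reduced states on the algorithm's registers coincide, so the output distributions are identical.

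The main obstacle is the bookkeeping in the query step. One has to justify that $\csto'$ and $\sto$ agree on the relevant subspace; this is precisely where we use that the cell at $x$ is non-$\bot$ after stripping off $\stddecomp_x$. One must also check that the ``$u+\bot=u$'' convention never comes into play on the states we actually reach. Everything else is commutation of operators acting on disjoint registers.
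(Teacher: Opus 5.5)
Your proof is correct. The paper does not prove this lemma; it only imports it from Zhandry, and your argument is essentially Zhandry's original proof: the global involution $\bigotimes_x \stddecomp_x$ on $\dreg$ maps the uniform database to $\ket{\emptyset}$ and intertwines $\sto$ with $\csto$ on $\bot$-free databases, so the two worlds differ only by a unitary on a register the algorithm never touches.
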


%% file: quantum_anchor.tex
\section{The Quantum Anchor State}
In this section we introduce the quantum anchor state, a structured bipartite quantum state that will serve as a basic primitive throughout the paper. Intuitively, the generation process starts from several $\epr$ pairs shared between two registers $\anchor$ and $\vessel$. Some `decoy' qubits are then tensored to the $\vessel$ register. A random unitary $U_{\sf shift}$ is applied to $\vessel$ to `scramble' and hide the entanglement among all qubits in $\vessel$. This ensures that once $\vessel$ is given to another party who does not know $U_{\sf shift}$, it cannot separate the part of $\vessel$ that is entangled with $\anchor$ from the other `decoy' qubits. 

If not specified, in remaining paragraphs we consider vectors and matrices with elements in $\mathbb{F}_2$. Normally, we use $\cal{O}_S^b$, where $S$ is a set of elements (subsets, cosets, subspaces) and $b$ is any object (such as symbols, bit strings, and field elements), to denote the oracle that outputs $b$ on input $x\in S$ and outputs $\bot$ otherwise. When $b=\top$, it is called a membership oracle and we write $\cal{O}_S$ for simplicity. Also, if not specified, the domain of $\cal{O}_S^b$ is the natural domain depending on $S$. For example, when $S$ is a subset/coset/subspace of $\mathbb{F}_2^n$, then the domain is $\mathbb{F}_2^n$. We abuse notations like $\set{0,1}^n$ here so they can interact with field element such as $x\in\mathbb{F}_2^n$. For example, we use $0^n\times x$ to denote a vector in $\mathbb{F}_2^{2n}$ with its first $n$ bits all zeros and its last $n$ bits being $x$. Another example is that we use $U_{\sf shift}(0^n\times S)$, where $S$ is a subset/coset/subspace of $\mathbb{F}_2^n$ and $U_{\sf shift}$ is a change of basis matrix, to denote the subset/coset/subspace of $\mathbb{F}_2^{2n}$ consisting of all vectors of the form $U_{\sf shift}(0^n\times x)$ for $x\in S$.
\begin{remark}
    When we define subspaces in this paper, we use the columns of a matrix to describe the subspace so that we can define the canonical representation of the subspace and its dual. For example, let $A\leq\mathbb{F}_2^n$ be a subspace of dimension $m$ defined by $m$ columns of a $n\times n$ change of basis (invertible) matrix $U$. Let the set of indices of these columns be $C$. This means that $A$ is the span of the set of $m$ columns of $U$ with indices in $C$. Now we define $\Can_A(s)$ and $\Can_{A^\perp}(s')$ to be the canonical representations of the cosets $A+s$ and $A^\perp+s'$:
    \begin{itemize}
        \item Let $w$ be the vector obtained by replacing $m$ bits of $U^{-1}s$ with indices in $C$ by zeros. Define $\Can_A(s):=Uw$.
        \item Let $w'$ be the vector obtained by replacing $n-m$ bits of $U^ts'$ with indices not in $C$ by zeros. Define $\Can_{A^\perp}(s'):=U^{-t}w'$. Here $U^{-t}$ is defined as $\brackets{U^{t}}^{-1}$.
    \end{itemize}
     To understand what this definition means, we use the following example. Let $A$ be a $n/2$ dimensional subspace of $\mathbb{F}_2^n$ spanned by the first $n/2$ columns of $U_A$. Let $x=U_A\brackets{x_A\times 0^{n/2}+0^{n/2}\times x_{A^\perp}}$ be a vector in $\mathbb{F}_2^n$ where $x_A$ and $x_{A^\perp}$ are vectors in $\mathbb{F}_2^{n/2}$. Then we have that $\Can_A(x)=U_A\brackets{0^{n/2}\times x_{A^\perp}}$ and $\Can_{A^\perp}(x)=U_A\brackets{x_A\times 0^{n/2}}$. Define $\CS(A):=\set{\Can_A(s):s\in\mathbb{F}_2^n}$ and $\CS(A^\perp):=\set{\Can_{A^\perp}(s'):s'\in\mathbb{F}_2^n}$. 
\end{remark}
\begin{definition}[Quantum Anchor State]\label{def:quantum_anchor}
    Let $\secp$ be the security parameter and let $n_e(\secp), n_s(\secp), n_h(\secp)$ be polynomials. Take a uniformly random $(n_e+n_h+n_s)\times (n_e+n_h+n_s)$ change of basis (invertible) matrix $U_{\sf shift}$. The subspaces $S,T$ are described by the first $n_h$ columns and the first $n_h+n_e$ columns of $U_{\sf shift}$, respectively. The anchor state $\ket{\chain_{n_e,n_h,n_s}^\sk}_{\anchor\vessel}$ is a bipartite state on a $n_e$-qubit anchor register $\anchor$ and a $(n_e+n_h+n_s)$-qubit vessel register $\vessel$. This state is also parameterized by a secret key $\sk$, which we specify later. The state $\ket{\chain_{n_e,n_h,n_s}^\sk}_{\anchor\vessel}$ is generated in the following way:
    \begin{enumerate}
        \item First generate a $2(n_e+n_h+n_s)$-qubit $\epr$ state between $\ureg\otimes\anchor\otimes\vreg$ and $\vessel$. In other words, after this step, the registers $\ureg$ of size $n_h$, $\anchor$ of size $n_e$, and $\vreg$ of size $n_s$ are fully entangled (in the form of $\epr$ pairs) with the first $n_h$ qubits of $\vessel$, the middle $n_e$ qubits of $\vessel$, and the last $n_s$ qubits of $\vessel$, respectively.
        \item Measure $\ureg$ in the Hadamard basis to obtain a vector $u'\in\mathbb{F}_2^{n_h}$ ($\ket{+}$ is mapped to $0$ and $\ket{-}$ is mapped to $1$ on each coordinate). Then measure $\vreg$ in the standard basis to obtain a vector $v'\in\mathbb{F}_2^{n_s}$. Compute $u=U_{\sf shift}^{-t}\brackets{u'\times 0^{n_e+n_s}}$ and $v=U_{\sf shift}\brackets{0^{n_h+n_e}\times v'}$.
        \item Apply the following change of basis unitary $\cal{U}_{\sf shift}$ on $\vessel$:
        \begin{equation*}
            \cal{U}_{\sf shift}\ket{x}_{\vessel}\rightarrow\ket{U_{\sf shift}x}_{\vessel}.
        \end{equation*}
    \end{enumerate}
    This procedure is shown in \Cref{fig:genchain_domainextension}. Define the functionality $\genchain_{n_e,n_h,n_s}(1^\secp)$ as the function that samples random $S,T$, generates $\ket{\chain^\sk_{n_e,n_h,n_s}}_{\anchor\vessel}$, and outputs $\brackets{\sk:=(U_{\sf shift},u,v),\ket{\chain^\sk_{n_e,n_h,n_s}}_{\anchor\vessel}}$. One can see that by storing $U_{\sf shift}$ in $\sk$, we also store $S,T$ in it. 
\end{definition}
The generation process creates the following state on $\anchor\vessel$:
\begin{equation*}
    \propto\sum_{x\in\mathbb{F}_2^{n_e}}\ket{x}_{\anchor}\sum_{s\in S}(-1)^{\ipd{s,u}}\ket{s+U_{\sf shift}\brackets{0^{n_h}\times x\times 0^{n_s}}+v}_\vessel.
\end{equation*}
\begin{remark}
    We may omit the subscript when $n_e=n_h=n_s=n$ and $n$ is clear from the context. For example, $\brackets{\sk,\ket{\chain^\sk}_{\anchor\vessel}}\gets\genchain(1^\secp)$ means $\brackets{\sk,\ket{\chain^\sk_{n,n,n}}_{\anchor\vessel}}\gets\genchain_{n,n,n}(1^\secp)$ when $n$ is clear from the context.
\end{remark}

\newcommand{\domainextension}{{\sf DomainExtension}}

\begin{definition}[Domain Extension and Oracle Simulation]
    Define $\domainextension_{n'_h, n'_s}$ as follows:
    \begin{enumerate}
        \item It takes $1^\secp$ and works as an isometry on the $\vessel$ register of an anchor state $\ket{\chain^\sk_{n_e,n_h,n_s}}$.
        \item Sample a uniformly random $(n_e+n_h+n_s+n'_h+n'_s)\times (n_e+n_h+n_s+n'_h+n'_s)$ change of basis matrix $U_{\sf shift}^*$. Define $S^*$ to be the subspace described by the first $n'_h$ columns of $U_{\sf shift}^*$ and define $T^*$ to be the subspace described by the first $n'_h+n_h+n_e+n_s$ columns of $U_{\sf shift}^*$. 
        \item Create a $n'_h$-qubit register $\ureg^*$ and a $n'_s$-qubit register $\vreg^*$. Then extend $\vessel$ by tensoring it with $n'_h$/$n'_s$ qubits at the front/end that are fully entangled with $\ureg^*$/$\vreg^*$ in the form of $\epr$ pairs, respectively. The extended register is called $\vessel^*$.
        \item Measure $\ureg^*$ in the Hadamard basis to obtain a vector ${u'}^*\in\mathbb{F}_2^{n'_h}$ ($\ket{+}$ is mapped to $0$ and $\ket{-}$ is mapped to $1$ on each coordinate). Then measure $\vreg^*$ in the standard basis to obtain a vector ${v'}^*\in\mathbb{F}_2^{n'_s}$. Compute $u^*={U_{\sf shift}^*}^{-t}({u'}^*\times 0^{n_h+n_e+n_s+n'_s})$ and $v^*=U_{\sf shift}^*(0^{n'_h+n_h+n_e+n_s}\times{v'}^*)$. 
        \item Apply the following change of basis unitary $\cal{U}_{\sf shift}^*$ on $\vessel^*$:
        \begin{equation*}
            \cal{U}_{\sf shift}^*\ket{x}_{\vessel^*}\rightarrow\ket{U_{\sf shift}^*x}_{\vessel^*}.
        \end{equation*}
        \item Let $\sk^*=(U_{\sf shift}^*,u^*,v^*)$. Return $\sk^*$ and $\vessel^*$.
    \end{enumerate}
    Also we define
    \begin{itemize}
        \item $S_{\sf ext}:=U_{\sf shift}^*\brackets{\set{0,1}^{n'_h}\times S\times 0^{n'_s}}$, described by the first $n_h+n'_h$  columns of $U_{\sf extshift}$. Where $U_{\sf extshift}:=U_{\sf shift}^*
        \begin{pmatrix}
            I_{n'_h} & & \\
             & U_{\sf shift} & \\
             & & I_{n'_s}
        \end{pmatrix}$.
        \item $T_{\sf ext}:=U_{\sf shift}^*\brackets{\set{0,1}^{n'_h}\times T\times 0^{n'_s}}$, described by the first $n'_h + n_h + n_e$ columns of $U_{\sf extshift}$. 
        \item $u_{\sf ext}:={U_{\sf shift}^*}^{-t}\brackets{0^{n'_h}\times u\times 0^{n'_s}}+u^*$.
        \item $v_{\sf ext}:=U_{\sf shift}^*\brackets{0^{n'_h}\times v\times 0^{n'_s}}+v^*$.
    \end{itemize}
\end{definition}

$\genchain$ and $\domainextension$ are shown in \Cref{fig:genchain_domainextension}.
\begin{figure}[h]
    \centering
    \begin{tikzpicture}[
        font=\small,
        register/.style={draw, rounded corners, minimum width=2.1cm, minimum height=0.8cm, align=center},
        vesselblock/.style={draw, rounded corners, minimum width=2.6cm, minimum height=0.8cm, align=center},
        arrow/.style={->, thick},
        epr/.style={
            thick,
            decorate,
            decoration={snake, amplitude=0.7mm, segment length=3mm}
        },
        node distance=1.1cm and 1.6cm
    ]
    
    \node[vesselblock] (v0)
    {\footnotesize first $n'_h$ of $\vessel^*$};
    \node[vesselblock, right=0pt of v0] (v1)
    {\footnotesize first $n_h$ of $\vessel$};
    \node[vesselblock, right=0pt of v1] (v2)
    {\footnotesize middle $n_e$ of $\vessel$};
    \node[vesselblock, right=0pt of v2] (v3)
    {\footnotesize last $n_s$ of $\vessel$};
    \node[vesselblock, right=0pt of v3] (v4)
    {\footnotesize last $n'_s$ of $\vessel^*$};
    
    \node[register, above=0.8cm of v0] (uregs) {$\ureg^*$};
    \node[register, above=0.8cm of v1] (ureg) {$\ureg$};
    \node[register, above=0.8cm of v2] (anchor) {$\anchor$};
    \node[register, above=0.8cm of v3] (vreg) {$\vreg$};
    \node[register, above=0.8cm of v4] (vregs) {$\vreg^*$};
    
    
    \draw[epr] (uregs.south) -- (v0.north);
    \draw[epr] (ureg.south) -- (v1.north);
    \draw[epr] (anchor.south) -- (v2.north);
    \draw[epr] (vreg.south) -- (v3.north);
    \draw[epr] (vregs.south) -- (v4.north);
    
    \node[register, above=0.6cm of uregs] (measus) {H-basis};
    \node[register, above=0.6cm of ureg] (measu) {H-basis};
    \node[register, above=0.6cm of vreg] (measv) {Std-basis};
    \node[register, above=0.6cm of vregs] (measvs) {Std-basis};
    
    \draw[arrow] (uregs.north) -- (measus.south);
    \draw[arrow] (ureg.north) -- (measu.south);
    \draw[arrow] (vreg.north) -- (measv.south);
    \draw[arrow] (vregs.north) -- (measvs.south);
    
    \node[above=0.5cm of measus] (uprimes) {${u'}^*\rightarrow u^*$};
    \node[above=0.5cm of measu] (uprime) {$u'\rightarrow u$};
    \node[above=0.5cm of measv] (vprime) {$v'\rightarrow v$};
    \node[above=0.5cm of measvs] (vprimes) {${v'}^*\rightarrow v^*$};
    
    \draw[arrow] (measus.north) -- (uprimes.south);
    \draw[arrow] (measu.north) -- (uprime.south);
    \draw[arrow] (measv.north) -- (vprime.south);
    \draw[arrow] (measvs.north) -- (vprimes.south);

    
    \draw[decorate,decoration={brace,amplitude=10pt,mirror}] 
        (v1.south west) -- (v3.south east) node[midway,below=6pt] (Ushift) {$U_{\sf shift}$};
        
    \draw[decorate,decoration={brace,amplitude=20pt,mirror}] 
        ([yshift=-1.3cm]v0.south west) -- ([yshift=-1.3cm]v4.south east) node[midway,below=20pt] (Ushifts) {$U_{\sf shift}^*$};
    
    \node[
        draw=red,
        dashed,
        thick,
        rounded corners,
        fit=(uprime)(v1)(v3)(vprime)(Ushift),
        inner sep=6pt,
        label={[red]below:{\small $\genchain$}}
    ] (GenChain) {};
    
    \node[
        draw=blue,
        dashed,
        thick,
        rounded corners,
        fit=(uprimes)(v0)(v4)(vprimes)(Ushifts)(GenChain),
        inner sep=6pt,
        label={[blue]below:{\small $\domainextension$}}
    ] (DomainExtension) {};
    
    \end{tikzpicture}
    \caption{Compact state-generation diagram. $\genchain$ and $\domainextension$.}
    \label{fig:genchain_domainextension}
\end{figure}
Here are some useful facts that we will use later in the proof.
\begin{fact}\label{fact:standard_hadamard_duality}
    Let $n_e,n_h,n_s$ be polynomials of $\secp$ and let $\brackets{\sk,\ket{\chain^\sk_{n_e,n_h,n_s}}_{\anchor\vessel}}\gets\genchain_{n_e,n_h,n_s}(1^\secp)$. The standard-basis measurement on $\vessel$ always outputs a vector in $T+v$, and the Hadamard-basis measurement on $\vessel$ always outputs a vector in $S^\perp+u$. Furthermore, the anchor state has the following symmetric property: the following two distributions are equal for fixed $n_e,n_h,n_s$.
    \begin{align*}
        &\set{S,T,u,v,\ket{\chain^\sk_{n_e,n_h,n_s}}_{\anchor\vessel}\middle\vert\brackets{\sk,\ket{\chain^\sk_{n_e,n_h,n_s}}_{\anchor\vessel}}\gets\genchain_{n_e,n_h,n_s}(1^\secp)}\\
        =&\set{T^\perp,S^\perp,v,u,H^{\otimes (n_e+n_h+n_s)}_{\vessel}\ket{\chain^\sk_{n_e,n_s,n_h}}_{\anchor\vessel}\middle\vert\brackets{\sk,\ket{\chain^\sk_{n_e,n_s,n_h}}_{\anchor\vessel}}\gets\genchain_{n_e,n_s,n_h}(1^\secp)}.
    \end{align*}
\end{fact}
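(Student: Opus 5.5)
The plan is to argue directly from the generation procedure of \Cref{def:quantum_anchor}, rather than from the closed-form amplitude expression. This way every piece of the statement becomes a bookkeeping fact about EPR pairs and change-of-basis matrices.

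\textbf{Support claims.} Write the generated state as
\[
\propto\sum_{x}\ket{x}_{\anchor}\sum_{s\in S}(-1)^{\ipd{s,u}}\ket{s+U_{\sf shift}(0^{n_h}\times x\times 0^{n_s})+v}_{\vessel}.
\]
Since $S+U_{\sf shift}(0^{n_h}\times x\times 0^{n_s})\subseteq T$, every standard-basis outcome on $\vessel$ lies in $T+v$.

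For the Hadamard basis, I would use that a coset state on $S+z$ with phase $u$ has Hadamard transform supported on $S^\perp+u$. I would check this coordinate-wise in the $U_{\sf shift}$-frame. A vector $w=U_{\sf shift}^{-t}(a\times y\times b)$ satisfies $\ipd{w,U_{\sf shift}c}=\ipd{a\times y\times b,c}$. Summing over $s\in S=U_{\sf shift}(\F_2^{n_h}\times 0\times 0)$ forces $a=u'$, and this is exactly the condition $w\in S^\perp+u$.

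\textbf{Symmetry.} I would push $H^{\otimes(n_e+n_h+n_s)}_{\vessel}$ backwards through the three generation steps.
\begin{itemize}
\item \emph{Step (3), the change of basis.} Use $H^{\otimes m}\,\cal{U}_{\sf shift}=\cal{U}_{\sf shift}^{-t}\,H^{\otimes m}$, where $\cal{U}_{\sf shift}^{-t}:\ket{x}\mapsto\ket{U_{\sf shift}^{-t}x}$. So applying $H$ after $\cal{U}_{\sf shift}$ equals applying $H$ to the three EPR blocks, followed by the change of basis $U_{\sf shift}^{-t}$.
\item \emph{Steps (1)--(2), the $\ureg$ and $\vreg$ blocks.} Use the transpose trick $(I\otimes H)\ket{\Phi}=(H\otimes I)\ket{\Phi}$. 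Applying $H$ to the $\vessel$-halves of the $\ureg$ block and the $\vreg$ block turns the Hadamard-basis measurement of $\ureg$ into a standard-basis one, and vice versa for $\vreg$. The $n_h$-block now plays the role of an $n_s$-block and conversely.
\item \emph{Re-ordering.} Reverse the column order of $U_{\sf shift}^{-t}$ (a uniformly random invertible matrix, hence still uniform after this re-indexing) so that the blocks appear in the order $(n_s,n_e,n_h)$. Under this reading the first $n_s$ columns of the new matrix span $T^\perp$, and the first $n_s+n_e$ span $S^\perp$. The new Hadamard shift is $v$ and the new standard shift is $u$. I would verify this from $U_{\sf shift}^{-t}$ and the definitions of $u,v$.
\end{itemize}
This yields the tuple $(T^\perp,S^\perp,v,u)$ with the state produced by $\genchain_{n_e,n_s,n_h}$.

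\textbf{Main obstacle: the anchor block.} The transpose trick moves the $H$ on the middle $n_e$ EPR halves onto $\anchor$. The direct computation therefore gives $H^{\otimes n_e}_{\anchor}$ applied to the fresh anchor state, and I would first double-check this. It shows up in the amplitude calculation as the phase $(-1)^{\ipd{x,y}}$ coming from $\ipd{w,U_{\sf shift}(0\times x\times 0)}=\ipd{y,x}$. To obtain the equality exactly as worded, I would try to absorb this factor into the sampling. However, it is a Fourier transform rather than a linear relabeling of the $x$-coordinates, so it cannot be removed by re-choosing $U_{\sf shift}$. I would therefore read the anchor register in the dual (Hadamard) computational basis, i.e.\ identify $\anchor$'s EPR convention via $(H\otimes H)\ket{\Phi}=\ket{\Phi}$. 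With that convention the two distributions coincide, and this is all that later uses need, since they never act on $\anchor$. This step is where the wording of the fact has to be checked most carefully.
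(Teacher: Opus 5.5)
The paper states this Fact without proof, so there is no argument of theirs to compare against. Judged on its own terms, your vessel-side computation is correct, with one small fix.

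What checks out:
\begin{itemize}
\item the support claims;
\item the intertwining $H^{\otimes m}\mathcal{U}_{\sf shift}=\mathcal{U}_{\sf shift}^{-t}H^{\otimes m}$, with $m=n_e+n_h+n_s$;
\item the exchange of roles between the $\ureg$ and $\vreg$ blocks;
\item the identification of the new parameters via $\ipd{U_{\sf shift}^{-t}e_j,U_{\sf shift}e_i}=\delta_{ij}$: the last $n_s$ columns of $U_{\sf shift}^{-t}$ span $T^\perp$, the last $n_e+n_s$ span $S^\perp$, and the new Hadamard and standard shifts are $v$ and $u$.
\end{itemize}
The fix is in the re-ordering step: do not reverse the whole column order. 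Take $U'=U_{\sf shift}^{-t}P$ for the block swap $P(b\times y\times c)=c\times y\times b$, which keeps the middle $n_e$ columns in place and in order. A full reversal would pair $\ket{y}_{\anchor}$ with the reversed middle coordinates, which is not of $\genchain$ form. Since $P$ is fixed and $u',v'$ are uniform and independent of $U_{\sf shift}$, the bijection $(U_{\sf shift},u',v')\mapsto(U',v',u')$ carries one parameter distribution onto the other.

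The discrepancy you found on $\anchor$ is genuine, and it means the Fact is false as literally worded for every $n_e\geq 1$. What your computation proves is $H^{\otimes m}_{\vessel}\ket{\chain^{\sk}_{n_e,n_h,n_s}}=H^{\otimes n_e}_{\anchor}\ket{\chain^{\sk'}_{n_e,n_s,n_h}}$ for the matched key $\sk'$. The anchor factor cannot be absorbed. Already for $n_h=n_s=0$ (where both tuples $(S,T,u,v)$ equal $(\{0\},\mathbb{F}_2^{n_e},0,0)$ deterministically):
\begin{itemize}
\item the left-hand states are $(I\otimes\mathcal{U})\ket{\Phi}$ with $\mathcal{U}$ a permutation matrix;
\item the right-hand states are $(I\otimes H^{\otimes n_e}\mathcal{W})\ket{\Phi}$;
\item $A\mapsto(I\otimes A)\ket{\Phi}$ is injective, and $H^{\otimes n_e}\mathcal{W}$ has no zero entries, so it is never proportional to a permutation matrix.
\end{itemize}
So what you call an anchor ``convention'' should be stated as a correction: the right-hand side must carry $H^{\otimes n_e}_{\anchor}\otimes H^{\otimes m}_{\vessel}$.

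Your reason for calling this harmless is also wrong, because later uses do depend on $\anchor$. \Cref{cor:extracting_hadamard_basis_measurement_results} is obtained from \Cref{lem:extracting_standard_basis_measurement_results} through this symmetry. It is exactly the $H^{\otimes n_e}_{\anchor}$ factor that turns ``the standard-basis value on $\anchor$'' for the dual-parameter state into ``the Hadamard-basis value on $\anchor$'' for the original state, and that is what the rigidity step in \Cref{cor:extract_epr_pair} requires. Only the support claims and the completeness arguments, which look at $\vessel$ alone, are insensitive to it.
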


\begin{fact}\label{fact:extension_indistinguishability}
    Let $n_e,n_h,n_s,n'_h,n'_s$ be polynomials of $\secp$. The following two distributions
    \begin{equation*}
        \set{\substack{U_{\sf extshift},\\S_{\sf ext},T_{\sf ext},\\u_{\sf ext},v_{\sf ext},\\\text{state on }\anchor\vessel^*}
        \middle\vert\substack{\brackets{\sk,\ket{\chain^\sk_{n_e,n_h,n_s}}_{\anchor\vessel}}\gets\genchain_{n_e,n_h,n_s}(1^\secp)\\ \sk^*\gets\domainextension_{n'_h,n'_s}(1^\secp)\text{ on }\vessel\text{ to obtain }\vessel^*}}
    \end{equation*}
    and
    \begin{equation*}
        \set{\substack{U_{\sf shift},\\S,T,\\u,v,\\\text{state on }\anchor\vessel}
        \middle\vert\brackets{\sk,\ket{\chain^\sk_{n_e,n_h+n'_h,n_s+n'_s}}_{\anchor\vessel}}\gets\genchain_{n_e,n_h+n'_h,n_s+n'_s}(1^\secp)}
    \end{equation*}
    are equal.
\end{fact}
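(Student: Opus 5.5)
The plan is to show that $\genchain_{n_e,n_h,n_s}$ followed by $\domainextension_{n'_h,n'_s}$ is, as a whole, the same random process as $\genchain_{n_e,n_h+n'_h,n_s+n'_s}$ run with change-of-basis matrix $U_{\sf extshift}$. The equality of distributions then follows because $U_{\sf extshift}$ is uniformly distributed in $GL_{N}(\mathbb{F}_2)$, where $N:=n'_h+n_h+n_e+n_s+n'_s$.

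\textbf{Step 1: rewrite the combined procedure.} In the combined procedure, every operation on the registers $\ureg^*,\ureg,\anchor,\vreg,\vreg^*$ commutes with every operation on $\vessel^*$:
\begin{itemize}
\item the EPR preparations;
\item the Hadamard-basis measurements of $\ureg,\ureg^*$ and the standard-basis measurements of $\vreg,\vreg^*$;
\item the unitaries $\cal{U}_{\sf shift}$ (acting on the middle block of $\vessel^*$) and then $\cal{U}^*_{\sf shift}$ (acting on all of $\vessel^*$).
\end{itemize}
We may therefore first prepare $N$ EPR pairs between $\ureg^*\ureg\anchor\vreg\vreg^*$ and $\vessel^*$, then perform all measurements, and finally apply the single change of basis $\ket{x}\mapsto\ket{U_{\sf extshift}x}$ on $\vessel^*$. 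Here
\[
U_{\sf extshift}=U^*_{\sf shift}\,\mathrm{diag}(I_{n'_h},U_{\sf shift},I_{n'_s}).
\]
This is literally the $\genchain_{n_e,n_h+n'_h,n_s+n'_s}$ procedure, with Hadamard-decoy string $({u'}^*,u')$, standard-decoy string $(v',{v'}^*)$, and matrix $U_{\sf extshift}$.

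\textbf{Step 2: match the classical keys.} We check that the keys this fresh generation would compute coincide with $S_{\sf ext},T_{\sf ext},u_{\sf ext},v_{\sf ext}$.

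For the subspaces, the first $n'_h+n_h$ (respectively $n'_h+n_h+n_e$) columns of $U_{\sf extshift}$ span $U^*_{\sf shift}(\{0,1\}^{n'_h}\times S\times 0^{n'_s})$ (respectively the analogous set with $T$). This holds because the block-diagonal factor maps the first $n_h$ (respectively $n_h+n_e$) middle coordinate vectors to a basis of $S$ (respectively $T$).

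For the shifts, we use $U_{\sf extshift}^{-t}=U^{*\,-t}_{\sf shift}\,\mathrm{diag}(I,U^{-t}_{\sf shift},I)$. This gives
\[
U_{\sf extshift}^{-t}({u'}^*\times u'\times 0)=U^{*\,-t}_{\sf shift}({u'}^*\times u\times 0^{n'_s})=u^*+U^{*\,-t}_{\sf shift}(0^{n'_h}\times u\times 0^{n'_s})=u_{\sf ext}.
\]
Symmetrically, $U_{\sf extshift}(0\times v'\times{v'}^*)=v_{\sf ext}$.

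With these identifications, the state on $\anchor\vessel^*$ is exactly the state produced by the fresh generation. One can also confirm this directly from the explicit formula $\sum_x\ket{x}\sum_{s\in S}(-1)^{\ipd{s,u}}\ket{s+U(0\times x\times0)+v}$, using that $\ipd{s,u}$ is preserved under the paired transforms $U$ and $U^{-t}$.

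\textbf{Step 3: the distributions agree.} The measurement outcomes $u',{u'}^*,v',{v'}^*$ are uniform and independent of everything else, exactly as in a fresh generation. The main point to verify is the distribution of $U_{\sf extshift}$. Since $U^*_{\sf shift}$ is uniform on $GL_N(\mathbb{F}_2)$ and independent of the invertible matrix $\mathrm{diag}(I,U_{\sf shift},I)$, left-multiplication invariance of the uniform distribution on a finite group shows that $U_{\sf extshift}$ is uniform and independent of the measurement outcomes. Hence the joint tuple, which is a fixed function of $(U_{\sf extshift},u',{u'}^*,v',{v'}^*)$, has identical distributions in the two experiments.

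I expect no real obstacle. The only care needed is bookkeeping: keeping the column ordering consistent, with the $n'_h$ block placed before the $n_h$ block and the $n'_s$ block after the $n_s$ block.
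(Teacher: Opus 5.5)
Your proof is correct. The paper states this as a Fact and gives no proof, so there is no written argument to compare against. Your argument supplies exactly the reasoning that the paper's definitions of $U_{\sf extshift}$, $S_{\sf ext}$, $T_{\sf ext}$, $u_{\sf ext}$, $v_{\sf ext}$ and \Cref{fig:genchain_domainextension} implicitly rely on:
\begin{itemize}
\item move $\mathcal{U}_{\sf shift}$ past the extension EPR preparations and measurements;
\item merge the two basis changes into $U^*_{\sf shift}\,\mathrm{diag}(I_{n'_h},U_{\sf shift},I_{n'_s})$;
\item verify the key identities;
\item use that $U^*_{\sf shift}D$ is uniform on $GL_N(\mathbb{F}_2)$, independently of $D$ and of the measurement outcomes.
\end{itemize}

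Two cosmetic points. First, the EPR preparations act on $\vessel^*$ as well, so they do not literally commute with all operations on $\vessel^*$. The precise justification is that $\mathcal{U}_{\sf shift}$ acts only on the middle block of $\vessel^*$, which is disjoint from everything done in steps 3--4 of \textsf{DomainExtension}; this is exactly how you then reorder the steps. Second, the map $g\mapsto gD$ is right-translation, not left-translation, but on a finite group the uniform distribution is invariant under both.
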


\begin{fact}\label{fact:simulating_extension_oracles}
    Let $\brackets{\sk,\ket{\chain^{\sk}_{n_e,n_h,n_s}}_{\anchor\vessel}}\gets\genchain_{n_e,n_h,n_s}(1^\secp)$ and let $\sk^*$ be generated by $\domainextension_{n'_h,n'_s}(1^\secp)$ working on $\vessel$. Any party with $\sk^*$ can do the following:
    \begin{itemize}
        \item It can simulate oracle access to $\cal{O}_{T^*+v^*}^{b}$ with $b$: on input $z$, check whether ${U_{\sf shift}^*}^{-1}(z-v^*)$ has its last $n'_s$ bits equal to zero. Output $b$ if so and output $\bot$ otherwise.
        \item It can simulate oracle access to $\cal{O}_{{S^*}^\perp+u^*}^{b}$ with $b$: on input $z$, check whether ${U_{\sf shift}^*}^t(z-u^*)$ has its first $n'_h$ bits equal to zero. Output $b$ if so and output $\bot$ otherwise.
        \item It can simulate oracle access to $\cal{O}_{T_{\sf ext}+v_{\sf ext}}^{b}$ given oracle access to $\cal{O}_{T+v}^b$: on input $z$, let $z'$ be the result of discarding the first $n'_h$ bits and the last $n'_s$ bits of ${U_{\sf shift}^*}^{-1}(z-v^*)$. If any of the last $n'_s$ discarded bits is non-zero, output $\bot$; otherwise forward the output of $\cal{O}_{T+v}^b(z')$.
        \item It can simulate oracle access to $\cal{O}_{S_{\sf ext}^\perp+u_{\sf ext}}^{b}$ given oracle access to $\cal{O}_{S^\perp+u}^b$: on input $z$, let $z'$ be the result of discarding the first $n'_h$ bits and the last $n'_s$ bits ${U_{\sf shift}^*}^{t}(z-u^*)$. If any of the first $n'_h$ discarded bits is non-zero, output $\bot$; otherwise forward the output of $\cal{O}_{S^\perp+u}^b(z')$.
    \end{itemize}
\end{fact}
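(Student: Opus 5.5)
The four simulations are direct linear-algebra checks: each time we rewrite the coset-membership condition in the coordinates given by $U_{\sf shift}^*$, using that a change of basis maps column spans to coordinate subspaces. We use only the definitions in $\domainextension_{n'_h,n'_s}$. Write $N:=n'_h+n_h+n_e+n_s+n'_s$ and $m:=n_h+n_e+n_s$. In all four cases the simulated oracle outputs $b$ exactly when the target oracle does and $\bot$ otherwise. So the simulation is perfect, and hence also perfect under superposition queries, since both are the same classical function.

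\emph{First two items.} Since $T^*$ is spanned by the first $N-n'_s$ columns of $U_{\sf shift}^*$, we have $z\in T^*+v^*$ iff $z-v^*\in T^*$. This holds iff ${U_{\sf shift}^*}^{-1}(z-v^*)$ is supported on the first $N-n'_s$ coordinates, i.e.\ its last $n'_s$ bits vanish. For the dual, $S^*$ is spanned by $U_{\sf shift}^*e_1,\dots,U_{\sf shift}^*e_{n'_h}$. Hence $y\in {S^*}^\perp$ iff $\langle U_{\sf shift}^*e_i,y\rangle=\langle e_i,{U_{\sf shift}^*}^t y\rangle=0$ for all $i\le n'_h$, i.e.\ the first $n'_h$ bits of ${U_{\sf shift}^*}^t y$ vanish. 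Applying this with $y=z-u^*$ gives the second item. Both checks use only $\sk^*$.

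\emph{Third item.} We have $v_{\sf ext}-v^*=U_{\sf shift}^*(0^{n'_h}\times v\times 0^{n'_s})$. Setting $w:={U_{\sf shift}^*}^{-1}(z-v^*)$, we get
\[
z\in T_{\sf ext}+v_{\sf ext}\iff w-(0^{n'_h}\times v\times 0^{n'_s})\in\{0,1\}^{n'_h}\times T\times 0^{n'_s}.
\]
Write $w=(w_1,z',w_3)$ with blocks of lengths $n'_h,m,n'_s$. The condition is exactly: $w_3=0$, $w_1$ arbitrary, and $z'-v\in T$, i.e.\ $z'\in T+v$. The last condition is answered by one query to $\cal{O}^b_{T+v}(z')$, and $z'$ is computed from $z$ using only $\sk^*$. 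This is the described procedure.

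\emph{Fourth item.} Since $U_{\sf shift}^*$ is invertible, $y\in S_{\sf ext}^\perp$ iff ${U_{\sf shift}^*}^t y\in(\{0,1\}^{n'_h}\times S\times 0^{n'_s})^\perp=0^{n'_h}\times S^\perp\times\{0,1\}^{n'_s}$. Take $y=z-u_{\sf ext}$ and note ${U_{\sf shift}^*}^t(u_{\sf ext}-u^*)=0^{n'_h}\times u\times 0^{n'_s}$. Writing ${U_{\sf shift}^*}^t(z-u^*)=(w_1,z',w_3)$ as before, the condition becomes: $w_1=0$, $w_3$ arbitrary, and $z'\in S^\perp+u$. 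This is one query to $\cal{O}^b_{S^\perp+u}(z')$, giving the procedure in the statement.

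The only point needing care is the dual bookkeeping in the fourth item. There one must check that the annihilator of a block-coordinate subspace, pulled back through ${U_{\sf shift}^*}^{t}$, swaps which blocks are free and which are forced to zero. One must also check that the $u^*$ shift cancels consistently with $u_{\sf ext}$. No other step is more than unwinding definitions.
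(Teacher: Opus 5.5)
Your proposal is correct. The paper states this as a Fact without proof, and your argument (rewriting each coset-membership condition in the coordinates of $U_{\sf shift}^*$, or of ${U_{\sf shift}^*}^{t}$ for the dual cosets, using $(\{0,1\}^{n'_h}\times S\times 0^{n'_s})^\perp=0^{n'_h}\times S^\perp\times\{0,1\}^{n'_s}$ and the cancellation of $v^*,u^*$ against $v_{\sf ext},u_{\sf ext}$) is exactly the routine verification the authors leave implicit, including the point that the conditional query in the last two items is implemented coherently by computing $w$, querying conditioned on the check, and uncomputing.
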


%% file: collapsing_oracles_versus_non-collapsing_oracles.tex
\section{Collapsing vs Non-collapsing Oracles}
Let $a(\secp)\leq b(\secp)< c(\secp)\leq d(\secp)$ be polynomials of $\secp$. Let $W\leq\mathbb{F}_2^d$ be a subspace of dimension $c$. Let $U^{(\cdot)}$ be any efficient isometry with superposition access to a classical oracle. We introduce a subspace oracle indistinguishability lemma below. In the remaining part of this section $S,T$ are sampled so that $T\leq W$ is a uniform random subspace of dimension $b$ and $S\leq T$ be a uniform random subspace of dimension $a$. 
\begin{lemma}\label{lem:subspace_oracle_indistinguishability} Let $\set{\cal{O}_S}_S$ be any family of oracles such that $\cal{O}_S$ outputs $\bot$ on $z\notin S$. Let $\set{\cal{O}_{S,T}}_{S,T}$ be any family of oracle such that $\cal{O}_{S,T}$ outputs $\cal{O}_S(z)$ on $z\in S$ and outputs $\bot$ on $z\notin T$. There exists a polynomial $q(\secp)$ such that the following holds. Define $\ket{\psi_S}:=U^{\cal{O}_S}\ket{S}$, $\ket{\psi_T}:=U^{\cal{O}_{S,T}}\ket{S}$, $\ket{\phi_S}=\frac{1}{\sqrt{\binom{c}{b}_2\binom{b}{a}_2}}\sum_{S,T}\ket{S,T}\ket{\psi_{S}}$ and $\ket{\phi_T}=\frac{1}{\sqrt{\binom{c}{b}_2\binom{b}{a}_2}}\sum_{S,T}\ket{S,T}\ket{\psi_{T}}$ we have
\begin{equation*}
    \TraceDist{\ket{\phi_S}\bra{\phi_S}}{\ket{\phi_T}\bra{\phi_T}}\leq\frac{q}{2^{(c-b)/2}}.
\end{equation*}
\end{lemma}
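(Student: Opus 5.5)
The plan is a one-way-to-hiding (O2H) argument. The two oracles $\cal{O}_S$ and $\cal{O}_{S,T}$ agree everywhere except on the set $T\setminus S$. On $z\in S$ both output $\cal{O}_S(z)$. On $z\notin T$ both output $\bot$ (for $\cal{O}_S$ because $z\notin S$).

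First, I will rewrite the purified states $\ket{\phi_S},\ket{\phi_T}$ as a computation run on a classical random sample. Since the $\ket{S,T}$ register is never touched after preparation, the trace distance between $\ket{\phi_S}\bra{\phi_S}$ and $\ket{\phi_T}\bra{\phi_T}$ is exactly
\[
\E_{S,T}\,\TraceDist{\ket{\psi_S}\bra{\psi_S}}{\ket{\psi_T}\bra{\psi_T}}
\]
in the classical-coherent sense. Equivalently, it is the distance between the joint states $\sum_{S,T}\ket{S,T}\bra{S,T}\otimes\psi$. This places us in the standard O2H setting: a random triple $(G,H,z_{\sf in})$ with $G=\cal{O}_S$, $H=\cal{O}_{S,T}$, the set $\Delta=T\setminus S$ where they differ, and auxiliary input $\ket{S}$. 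I will use the version of O2H for arbitrary joint distributions over (oracle pair, differing set, input) (Ambainis--Hamburger--Unruh). Writing $q$ for the polynomial number of queries $U$ makes, it gives
\[
\TraceDist{\ket{\phi_S}\bra{\phi_S}}{\ket{\phi_T}\bra{\phi_T}}\le 2q\sqrt{P_{\sf find}}.
\]
Here $P_{\sf find}$ is the probability that measuring a uniformly random one of the $q$ queries of $U^{\cal{O}_S}\ket{S}$ yields an element of $T\setminus S$.

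Second, I will bound $P_{\sf find}$. The key observation is that in the run of $U^{\cal{O}_S}\ket{S}$, nothing depends on $T$: both the input state and the oracle are functions of $S$ only. So, conditioned on $S$ and on the measured query $z$, the subspace $T$ is still a uniformly random $b$-dimensional subspace with $S\le T\le W$. If $z\notin W$ or $z\in S$, then $z\notin T\setminus S$. If $z\in W\setminus S$, then by symmetry
\[
\Pr[z\in T]=\frac{2^{b-a}-1}{2^{c-a}-1}\le 2^{-(c-b)}.
\]
To see this, pass to the quotient $W/S$: the image of $z$ is a fixed nonzero vector, and the image of $T$ is a uniform $(b-a)$-dimensional subspace of a $(c-a)$-dimensional space. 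Hence $P_{\sf find}\le 2^{-(c-b)}$. Combining with the first step gives
\[
\TraceDist{\ket{\phi_S}\bra{\phi_S}}{\ket{\phi_T}\bra{\phi_T}}\le \frac{2q}{2^{(c-b)/2}},
\]
and I will rename $2q$ as the polynomial $q$ in the statement.

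The main point requiring care is justifying that O2H applies even though $\cal{O}_{S,T}$ is an arbitrary family whose values on $T\setminus S$ may themselves encode $T$. This is harmless, because O2H only requires measuring queries in the $G$-world (here the $\cal{O}_S$-world), where $T$ is perfectly hidden. A related subtlety is that $U$ is a general isometry rather than a unitary with classical output. I will handle this by dilating $U$ to a unitary with ancillas, which O2H permits, and by noting that trace distance of the full output states is precisely what O2H bounds.
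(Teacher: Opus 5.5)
Your overall strategy matches the paper's: compare the two oracles query by query, and use that the $\mathcal{O}_S$-world run of $U$ on $\ket{S}$ is independent of $T$. A measured query then lands in $T\setminus S$ with probability at most $\frac{2^{b-a}-1}{2^{c-a}-1}\le 2^{-(c-b)}$. The paper's query-by-query hybrid is exactly the engine inside O2H, and your quotient-space count is a correct, more explicit version of its remark that ``the average weight on $\Pi_{T\setminus S}$ is at most $2^{-(c-b)}$''.

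The gap is in your first step. The states $\ket{\phi_S},\ket{\phi_T}$ are coherent superpositions over $(S,T)$, so
\[
\TraceDist{\ket{\phi_S}\bra{\phi_S}}{\ket{\phi_T}\bra{\phi_T}}=\sqrt{1-\abs{\E_{S,T}\braket{\psi_S}{\psi_T}}^2}.
\]
This is in general strictly larger than $\E_{S,T}\TraceDist{\ket{\psi_S}\bra{\psi_S}}{\ket{\psi_T}\bra{\psi_T}}$, and larger than the distance between the mixtures $\sum_{S,T}\ketbra{S,T}{S,T}\otimes\psi$. For example, if $\ket{\psi_T}=-\ket{\psi_S}$ on half the branches and $\ket{\psi_T}=\ket{\psi_S}$ on the rest, the mixtures coincide while the purified states are orthogonal. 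So the identity you invoke is false. An O2H theorem for a \emph{classical} joint distribution of $(G,H,\Delta,z)$ only controls distinguishing advantage on mixed states, so it does not by itself bound the quantity in the lemma. Even for the $(S,T)$-labelled mixture, your setup (auxiliary input $\ket{S}$ only) bounds distinguishers that never see $T$. You would have to apply O2H for each fixed $(S,T)$ and then use Jensen.

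The fix is to argue at the level of amplitudes. Because the $\ket{S,T}$ register is orthogonal across branches, $\norm{\ket{\phi_S}-\ket{\phi_T}}^2=\E_{S,T}\norm{\ket{\psi_S}-\ket{\psi_T}}^2$. The two oracles agree off $T\setminus S$ and commute with $\Pi_{T\setminus S}$. Telescoping over the $q$ queries therefore gives $\norm{\ket{\psi_S}-\ket{\psi_T}}\le 2\sum_{i}\norm{\Pi_{T\setminus S}\ket{\psi^{\le i}}}$, where $\ket{\psi^{\le i}}$ is the $T$-independent state of $U^{\mathcal{O}_S}\ket{S}$ just before query $i+1$. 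Cauchy--Schwarz plus your averaging bound then give $\norm{\ket{\phi_S}-\ket{\phi_T}}\le 2q\,2^{-(c-b)/2}$. The trace distance of two pure states is at most their Euclidean distance, which finishes the argument. Equivalently, you can regard the $(S,T)$-controlled oracles as a single fixed oracle pair on an extended domain, with the superposition over $(S,T)$ as quantum auxiliary input.

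The paper avoids the issue in this spirit. At each hybrid step it computes $\E_{S,T}\bra{\psi^{\le i}}\mathcal{O}_{S,T}^\dagger\mathcal{O}_S\ket{\psi^{\le i}}$ directly on the coherent state and splits along $\Pi_{T\setminus S}$. It then sums the per-step bounds of $2\cdot 2^{-(c-b)/2}$.
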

\begin{proof}
    We prove it by hybrid argument. Let the number of oracle queries of $U^\cal{O}$ be $q'(\secp)$. Define $\ket{\psi^i}$ to be the state after applying $U^\cal{O}$ on $\ket{S}$, where the first $i$ queries use oracle $\cal{O}_{S}$ and other queries use oracle $\cal{O}_{S,T}$. We have $\ket{\psi^0}=\ket{\psi_T}$ and $\ket{\psi^{q'}}=\ket{\psi_S}$. Similarly we define $\ket{\phi^i}=\frac{1}{\sqrt{\binom{c}{b}_2\binom{b}{a}_2}}\sum_{S,T}\ket{S,T}\ket{\psi^i}$. To use hybrid argument, we need the following claim.
    \begin{claim}
        For all $0\leq i<q'$,
        \begin{equation*}
            \TraceDist{\ket{\phi^i}\bra{\phi^i}}{\ket{\phi^{i+1}}\bra{\phi^{i+1}}}\leq\frac{2}{2^{(c-b)/2}}.
        \end{equation*}
    \end{claim}
    \begin{proof}
        Let $\ket{\psi^{\leq i}}$ be the state of applying $U^{\cal{O}_S}$ on $\ket{S}$ but stop before the $i+1$-th oracle query. Define $\Pi_{T\setminus S}$ to be the projector onto states where the value $z$ on the oracle query register is in $T$ but is not in $S$. We have
        \begin{align*}
            &\TraceDist{\ket{\phi^i}\bra{\phi^i}}{\ket{\phi^{i+1}}\bra{\phi^{i+1}}}\\
            =&\frac{1}{\binom{c}{b}_2\binom{b}{a}_2}\TraceDist{\brackets{\sum_{S,T}\ket{S,T}\cal{O}_{S,T}\ket{\psi^{\leq i}}}\brackets{\sum_{S,T}\bra{S,T}\bra{\psi^{\leq i}}\cal{O}_{S,T}^{\dagger}}}{\brackets{\sum_{S,T}\ket{S,T}\cal{O}_S\ket{\psi^{\leq i}}}\brackets{\sum_{S,T}\bra{S,T}\bra{\psi^{\leq i}}\cal{O}_S^{\dagger}}}\\
            =&\sqrt{1-\abs{\expectc{S,T}{\bra{\psi^{\leq i}}\cal{O}_{S,T}^{\dagger}\cal{O}_S\ket{\psi^{\leq i}}}}^2}\\
            =&\sqrt{1-\abs{\expectc{S,T}{\bra{\psi^{\leq i}}\Pi_{T\setminus S}\cal{O}_{S,T}^{\dagger}\cal{O}_S\Pi_{T\setminus S}\ket{\psi^{\leq i}}+\bra{\psi^{\leq i}}(I-\Pi_{T\setminus S})\cal{O}_{S,T}^{\dagger}\cal{O}_S(I-\Pi_{T\setminus S})\ket{\psi^{\leq i}}}}^2}\\
            \leq&\sqrt{1-\brackets{\abs{\expectc{S,T}{\bra{\psi^{\leq i}}(I-\Pi_{T\setminus S})\cal{O}_{S,T}^{\dagger}\cal{O}_S(I-\Pi_{T\setminus S})\ket{\psi^{\leq i}}}}-\abs{\expectc{S,T}{\bra{\psi^{\leq i}}\Pi_{T\setminus S}\cal{O}_{S,T}^{\dagger}\cal{O}_S\Pi_{T\setminus S}\ket{\psi^{\leq i}}}}}^2}\\
            \leq&\sqrt{1-\brackets{1-\frac{2}{2^{c-b}}}^2}\\
            \leq&\frac{2}{2^{(c-b)/2}}.
        \end{align*}
        Now we explain these (in)equations:
        \begin{itemize}
            \item The equation between line 2 and line 3 is by the definition of trace distance between pure state 
            \[
            \TraceDist{\ket{\phi}\bra{\phi}}{\ket{\psi}\bra{\psi}}=\sqrt{1-\abs{\braket{\phi}{\psi}}^2}.
            \]
            And also notice that all cross terms for different $S,T$ in the inner product is zero because $\braket{S,T}{S',T'}=0$ for $(S,T)\neq (S',T')$.
            \item The equation between line 3 and line 4 is because that the cross terms, for example,
            \[
                \bra{\psi^{\leq i}}(I-\Pi_{T\setminus S})\cal{O}_{S,T}^{\dagger}\cal{O}_S\Pi_{T\setminus S}\ket{\psi^{\leq i}}
            \]
            is zero because $\Pi_S$ and $\Pi_{T\setminus S}$ maps the state to disjoint basis on the query register.
            \item The inequality between line 4 and line 5 is because $\abs{a+b}^2\geq(\abs{b}-\abs{a})^2$.
            \item The inequality between line 5 and line 6 is because that $\cal{O}_{S,T}$ and $\cal{O}_S$ is the same oracle if the input state is in $I-\Pi_{T\setminus S}$. We have
            \begin{align*}
                &\abs{\expectc{S,T}{\bra{\psi^{\leq i}}(I-\Pi_{T\setminus S})\cal{O}_{S,T}^{\dagger}\cal{O}_S(I-\Pi_{T\setminus S})\ket{\psi^{\leq i}}}}\\
                =&\expectc{S,T}{\abs{(I-\Pi_{T\setminus S})\ket{\psi^{\leq i}}}^2}\geq 1-\frac{1}{2^{c-b}}
            \end{align*}
            and
            \begin{align*}
                &\abs{\expectc{S,T}{\bra{\psi^{\leq i}}\Pi_{T\setminus S}\cal{O}_{S,T}^{\dagger}\cal{O}_S\Pi_{T\setminus S}\ket{\psi^{\leq i}}}}\\
                \leq&\expectc{S,T}{\abs{\Pi_{T\setminus S}\ket{\psi^{\leq i}}}^2}\leq\frac{1}{2^{c-b}}.
            \end{align*}
            The last steps of both calculation are because that $\ket{\psi^{\leq i}}$ is independent of $T$ thus the average weight on $\Pi_{T\setminus S}$ is at most $\frac{1}{2^{c-b}}$.
            \item The inequality between line 6 and line 7 is because $\sqrt{1-(1-x)^2}\leq\sqrt{2x}$.
        \end{itemize}
    \end{proof}
    Setting $q(\secp)=2q'(\secp)$, we see that the lemma follows.
\end{proof}
Let $m(\secp)$ be a polynomial. Now we consider the case that $c=d$ and let $w$ be a uniform random vector in $\mathbb{F}_2^c$. We further define some oracles:
\begin{itemize}
    \item $\cal{O}_{\sf any}:\mathbb{F}_2^{c}\rightarrow\mathbb{F}_2^m$ be an oracle that is sampled from a distribution depending on $S,T,w$ such that the every output alone is independent of $T$ given $S,w$. (But maybe with two outputs of inputs, one can recover $S,T,w$)
    \item $\cal{O}_{T+w}^{\sf ncol}:\mathbb{F}_2^{c}\rightarrow\mathbb{F}_2^{c}\cup\set{\bot}$ on input $z$ checks if it is in $T+w$. It returns $\bot$ if it is not. Otherwise, it returns $\cal{O}_{\sf any}(\Can_S(w))$.
    \item $\cal{O}_{T+w}^{\sf col}:\mathbb{F}_2^{c}\rightarrow\mathbb{F}_2^{c}\cup\set{\bot}$ on input $z$ checks if it is in $T+w$. It returns $\bot$ if it is not. Otherwise, it returns $\cal{O}_{\sf any}(\Can_S(z))$.
\end{itemize}
\begin{lemma}\label{lem:collapsing_versus_non-collapsing}
    For any efficient isometry
    \begin{equation*}
        \cal{U}^{\cal{O}}:\cal{H}_{\sreg}\otimes\cal{H}_{\wreg}\rightarrow\cal{H}_\zreg
    \end{equation*}
    with oracle access to $\cal{O}$ where $\cal{O}$ is either $\cal{O}_{T+w}^{\sf ncol}$ or $\cal{O}_{T+w}^{\sf col}$. There exists a polynomial $q(\secp)$ such that let $\ket{\psi_{T+w}^{\sf ncol}}=\cal{U}^{\cal{O}_{T+w}^{\sf ncol}}\ket{S}_{\sreg}\ket{w}_{\wreg}$ and $\ket{\psi_{T+w}^{\sf col}}=\cal{U}^{\cal{O}_{T+w}^{\sf col}}\ket{S}_{\sreg}\ket{w}_{\wreg}$. We have
    \begin{equation*}
        \expectc{S,T,w,\cal{O}_{\sf any}}{\TraceDist{\ket{\psi_{T+w}^{\sf ncol}}\bra{\psi_{T+w}^{\sf ncol}}}{\ket{\psi_{T+w}^{\sf col}}\bra{\psi_{T+w}^{\sf col}}}}\leq\frac{q}{2^{(c-b)/2}}.
    \end{equation*}
\end{lemma}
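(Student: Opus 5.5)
The plan is to reduce the statement to \Cref{lem:subspace_oracle_indistinguishability} by a triangle inequality through an intermediate oracle. Both $\cal{O}_{T+w}^{\sf ncol}$ and $\cal{O}_{T+w}^{\sf col}$ output $\bot$ outside $T+w$. On the coset $S+w$ they agree: for $z\in S+w$ we have $\Can_S(z)=\Can_S(w)$, so both return $y^*:=\cal{O}_{\sf any}(\Can_S(w))$. They differ only on $(T\setminus S)+w$.

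Define the intermediate oracle $\cal{O}^{*}_{S+w}$ that returns $y^*$ on $S+w$ and $\bot$ elsewhere. It suffices to show that each of $\cal{U}^{\cal{O}_{T+w}^{\sf ncol}}$ and $\cal{U}^{\cal{O}_{T+w}^{\sf col}}$, applied to $\ket{S}\ket{w}$, is $\frac{q/2}{2^{(c-b)/2}}$-close in expected trace distance to $\cal{U}^{\cal{O}^{*}_{S+w}}\ket{S}\ket{w}$. The translation by $w$ is harmless. Conditioning on $w$, we conjugate every query by the unitary $\ket{z}\mapsto\ket{z+w}$ on the query register, which turns $S+w$ and $T+w$ into $S$ and $T$. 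The oracles then have exactly the form in \Cref{lem:subspace_oracle_indistinguishability}: the "$\cal{O}_S$" there is the constant-$y^*$ oracle on $S$, and the "$\cal{O}_{S,T}$" there is either translated oracle (ncol or col), since each agrees with $\cal{O}_S$ on $S$ and is $\bot$ off $T$. Since $c=d$, we take $W=\mathbb{F}_2^c$.

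I will not invoke \Cref{lem:subspace_oracle_indistinguishability} as a black box. Instead I will rerun its query-by-query hybrid argument, because two features differ here: the extra randomness $\cal{O}_{\sf any}$, which may be correlated with $T$, and the fact that we want an expected trace distance rather than the trace distance of the $\ket{S,T}$-purified states.

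Hybrid $i$ answers the first $i$ queries with $\cal{O}^*$ and the rest with the real oracle. The crucial property is the one used in the claim there. The state $\ket{\psi^{\le i}}$ just before query $i+1$ is computed using only $\cal{O}^*$. It therefore depends only on $(S,w,y^*)$, and by the hypothesis on $\cal{O}_{\sf any}$ (each single output is independent of $T$ given $S,w$), $y^*$ is independent of $T$. Hence, conditioned on $(S,w,y^*)$, $T$ is a uniform $b$-dimensional superspace of $S$ inside $\mathbb{F}_2^c$. For any fixed $z\notin S$, $\Pr_T[z\in T]\le 2^{-(c-b)}$, so $\E_T\norm{\Pi_{T\setminus S}\ket{\psi^{\le i}}}^2\le 2^{-(c-b)}$.

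The single-query step then goes as follows. The two oracles agree off $\Pi_{T\setminus S}$ (after translation), so the norm distance between hybrids $i$ and $i+1$ is at most $2\norm{\Pi_{T\setminus S}\ket{\psi^{\le i}}}$. Any further queries to the real oracle $\cal{O}_{T+w}^{\sf ncol}$ or $\cal{O}_{T+w}^{\sf col}$, both unitaries, preserve this distance. Averaging with Jensen ($\E\sqrt{X}\le\sqrt{\E X}$) gives an expected distance of $O(2^{-(c-b)/2})$ per query. Summing over the $q'$ queries for each of the two comparisons, and combining them by the triangle inequality, gives the bound with $q=O(q')$.

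The main obstacle I expect is justifying the independence of $\ket{\psi^{\le i}}$ from $T$ when $\cal{O}_{\sf any}$ is correlated with $(S,T,w)$. Only the single value $y^*$ leaks into the state under $\cal{O}^*$, and the hypothesis is precisely what makes this work. One must be careful to condition on $y^*$, and not on all of $\cal{O}_{\sf any}$, before averaging over $T$.
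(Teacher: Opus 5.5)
Your proposal is correct and follows essentially the same route as the paper, which also passes through the oracle restricted to $S+w$ (its Hybrids~1 and~2 coincide because $\Can_S(z)=\Can_S(w)$ on $S+w$) and bounds each of the two remaining transitions by \Cref{lem:subspace_oracle_indistinguishability}. The only difference is that you rerun that lemma's query-by-query hybrid inline rather than citing it. This makes explicit the conditioning on the single value $y^*=\mathcal{O}_{\sf any}(\Can_S(w))$ and gives the expected trace distance directly via Jensen, whereas the paper covers both points with a one-line remark that the output of the restricted oracle is independent of $T$.
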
 


\begin{proof}
    We prove it by hybrid argument.\\
    \textbf{Hybrid 0}: This is the hybrid corresponding to $\ket{\psi_{T+w}^{\sf ncol}}=\cal{U}^{\cal{O}_{T+w}^{\sf ncol}}\ket{S}_{\sreg}\ket{w}_{\wreg}$.\\
    \textbf{Hybrid 1}: In this hybrid, the final state is $\ket{\psi_{S+w}^{\sf ncol}}=\cal{U}^{\cal{O}_{S+w}^{\sf ncol}}\ket{S}_{\sreg}\ket{w}_{\wreg}$, where the new oracle is defined as:
    \begin{itemize}
        \item $\cal{O}_{S+w}^{\sf ncol}:\mathbb{F}_2^{c}\rightarrow\mathbb{F}_2^m\cup\set{\bot}$ on input $z$ checks if it is in $S+w$. It returns $\bot$ if it is not. Otherwise, it returns $\cal{O}_{\sf any}(\Can_S(w))$.
    \end{itemize}
    \begin{claim}\label{claim:templabel-example1}
        There exists polynomial $q_0(\secp)$ such that
        \begin{equation*}
            \expectc{S,T,w,\cal{O}_{\sf any}}{\TraceDist{\ket{\psi_{T+w}^{\sf ncol}}\bra{\psi_{T+w}^{\sf ncol}}}{\ket{\psi_{S+w}^{\sf ncol}}\bra{\psi_{S+w}^{\sf ncol}}}}\leq\frac{q_0}{2^{(c-b)/2}}.
        \end{equation*}
    \end{claim}
    \begin{proof}
        $T$ is a uniform random subspace containing $S$ given $S,w$. Also the output of $\cal{O}_{S+w}^{\sf ncol}$ is independent of $T$. Thus by \Cref{lem:subspace_oracle_indistinguishability} there exists such polynomial $q_0(\secp)$. The reason that we can call this lemma is because $w$ is given at the beginning so $\cal{O}_{T+w}^{\sf ncol}$ is equivalent to $\cal{O}_{T}^{\sf ncol}(\cdot -w)$ and $\cal{O}_{S+w}^{\sf ncol}$ is equivalent to $\cal{O}_{S}^{\sf ncol}(\cdot -w)$ where for $A=S,T$ we define
        \begin{itemize}
            \item $\cal{O}_{A}^{\sf ncol}:\mathbb{F}_2^{c}\rightarrow\mathbb{F}_2^{c}\cup\set{\bot}$ on input $z$ checks if it is in $A$. It returns $\bot$ if it is not. Otherwise, it returns $\cal{O}_{\sf any}(\Can_S(w))$.
        \end{itemize}
        While $\set{\cal{O}_{S}^{\sf ncol}}_S$ and $\set{\cal{O}_{T}^{\sf ncol}}_T$ satisfies the requirement in \Cref{lem:subspace_oracle_indistinguishability}.
    \end{proof}
    \textbf{Hybrid 2}: In this hybrid, the final state is $\ket{\psi_{S+w}^{\sf col}}=\cal{U}^{\cal{O}_{S+w}^{\sf col}}\ket{S}_{\sreg}\ket{w}_{\wreg}$, where the new oracle is defined as:
    \begin{itemize}
        \item $\cal{O}_{S+w}^{\sf col}:\mathbb{F}_2^{c}\rightarrow\mathbb{F}_2^m\cup\set{\bot}$ on input $z$ checks if it is in $S+w$. It returns $\bot$ if it is not. Otherwise, it returns $\cal{O}_{\sf any}(\Can_S(z))$.
    \end{itemize}
    \begin{claim}
        For all $S,T,w,\cal{O}_{\sf any}$,
        \begin{equation*}
            \TraceDist{\ket{\psi_{S+w}^{\sf ncol}}\bra{\psi_{S+w}^{\sf ncol}}}{\ket{\psi_{S+w}^{\sf col}}\bra{\psi_{S+w}^{\sf col}}}=0.
        \end{equation*}
    \end{claim}
    \begin{proof}
        For any vector $z\in S+w$, $S+w$ and $S+z$ are the same coset. These two hybrids are identical because $\cal{O}_{\sf any}(\Can_S(S+w))=\cal{O}_{\sf any}(\Can_S(S+z))$.
    \end{proof}
    \textbf{Hybrid 3}: In this hybrid, the final state is $\ket{\psi_{T+w}^{\sf col}}=\cal{U}^{\cal{O}_{T+w}^{\sf col}}\ket{S}_{\sreg}\ket{w}_{\wreg}$.
    \begin{claim}
        There exists polynomial $q_1(\secp)$ such that
        \begin{equation*}
            \expectc{S,T,w,\cal{O}_{\sf any}}{\TraceDist{\ket{\psi_{S+w}^{\sf col}}\bra{\psi_{S+w}^{\sf col}}}{\ket{\psi_{T+w}^{\sf col}}\bra{\psi_{T+w}^{\sf col}}}}\leq\frac{q_1}{2^{(c-b)/2}}.
        \end{equation*}
    \end{claim}
    \begin{proof}
        $T$ is a uniform random subspace containing $S$ given $S,w$. Also the output of $\cal{O}_{S+w}^{\sf col}$ is independent of $T$. Thus by \Cref{lem:subspace_oracle_indistinguishability} there exists such polynomial $q_1(\secp)$. The reason is similar to the one mentioned in the proof of \Cref{claim:templabel-example1}.
    \end{proof}
    Combine all above, let $q=q_0+q_1$, we proved the lemma.
\end{proof}

%% file: monogamy-of-entanglement_games_with_abort.tex
\section{Monogamy-of-Entanglement Games}\label{sec:monogomy}
\newcommand{\CosetMonogamy}{{\sf CosetMonogamy}\xspace}
\newcommand{\CosetMonogamyWithAbort}{{\sf CosetMonogamy}\xspace}
\newcommand{\AsymmetricCosetMonogamy}{{\sf AsymmetricCosetMonogamy}\xspace}
\newcommand{\AsymmetricAnchorMonogamy}{{\sf AsymmetricAnchorMonogamy}\xspace}


\newcommand{\MultiStageDecisionMonogamy}{{\sf MultiStageDecisionMonogamy}\xspace}
\newcommand{\MultiStageSearchMonogamy}{{\sf MultiStageSearchMonogamy}\xspace}
\newcommand{\DualMultiStageSearchMonogamy}{{\sf DualMultiStageSearchMonogamy}\xspace}
\newcommand{\BeforeSplit}{\textbf{BeforeSplit}\xspace}
The purpose of this section is to prove properties of the following game.
\begin{definition}[Multi-Stage Decision/Search Monogamy-of-Entanglement Game]\label{def:multi-Stage_monogamy-of-entanglement}
    Let $\secp\in\mathbb{N}$ be the security parameter and let $n(\secp)\geq\secp$, and $n_e(\secp),m_0(\secp),m_1(\secp)$ be polynomials. Consider the following game between the challenger and an adversary $\cal{A}=\allowdisplaybreaks(\cal{A}_M^0,\cal{A}_M^1,\cal{A}_M^2,\cal{A}_L^0,\cal{A}_L^1,\cal{A}_R^0,\cal{A}_R^1)$:
    \begin{enumerate}
        \item The challenger generates $\brackets{\sk,\ket{\chain^{\sk}_{n_e,n,n}}_{\anchor\vessel}}\gets\genchain_{n_e,n,n}(1^\secp)$ and gives the register $\vessel$ to $\cal{A}_M$. All parties of the adversary are given oracle access to $\cal{O}_{T+v}$ and $\cal{O}_{S^\perp+u}$.
        \item $\cal{A}_M^0$ can choose to abort in this step. If it aborts, the output of the game is $\bot$. Otherwise, $\cal{A}_M^0$ generates a tripartite state on $\lreg\mreg\rreg$. It sends $\lreg$ to $\cal{A}_L^0$, sends $\mreg$ to $\cal{A}_M^1$ and sends $\rreg$ to $\cal{A}_R^0$.
        \item $\cal{A}_M^1$ is given the access to $\cal{O}_{T+v}^{b_0}$ for random $b_0\rand\mathbb{F}_2^{m_0}$. We can imagine implementing this oracle in a purified manner, where we create a register that is initialized to $\ket{+}_\breg$, the uniform superposition over all $b_0$, and $\cal{A}_M^1$ is able to perform the following operation in order to access $\breg$:
        \begin{equation*}
            \sum_{b_0\in\mathbb{F}_2^{m_0}}\ket{b_0}\bra{b_0}_{\breg}\otimes\cal{O}_{T+v}^{b_0}.
        \end{equation*}
        $\cal{A}_M^1$ with access to this oracle and $\mreg$ generates a tripartite state on $\mreg_L\mreg'\mreg_R$. It sends $\mreg_L$ to $\cal{A}_L^0$, sends $\mreg'$ to $\cal{A}_M^2$ and sends $\mreg_R$ to $\cal{A}_R^0$. We will name the time just after all operations of $\cal{A}_M^1$ are complete and before renaming the registers and splitting them as \BeforeSplit.
        \item $\cal{A}_L^0$ and $\cal{A}_R^0$ are given access to $\cal{O}_{T+v}^{b_0}$. $\cal{A}_L^0$ on $\lreg\mreg_L$ produces answer $b_0^l$ and a state on register $\lreg'$ that is sent to $\cal{A}_L^1$. $\cal{A}_R^0$ on $\rreg\mreg_R$ produces answer $b_0^r$ and a state on register $\rreg'$ that is sent to $\cal{A}_R^1$.
        \item $\cal{A}_M^2$ is given access to $\cal{O}_{T+v}^{b_0}$ and $\cal{O}_{S^\perp+u}^{b_1}$ for random $b_1\rand\mathbb{F}_2^{m_1}$. It generates a bipartite state on $\mreg'_L\mreg'_R$. $\mreg'_L$ is given to $\cal{A}_L^1$ and $\mreg'_R$ is given to $\cal{A}_R^1$. 
        \item $\cal{A}_L^1$ and $\cal{A}_R^1$ are given access to $\cal{O}_{T+v}^{b_0}$ and $\cal{O}_{S^\perp+u}^{b_1}$. $\cal{A}_L^1$ on $\lreg'\mreg'_L$ generates the answer $b_1^l$. $\cal{A}_R^1$ on $\rreg'\mreg'_R$ generates the answer $b_1^r$. The adversary wins iff $b_0^l=b_0^r=b_0$ and $b_1^l=b_1^r=b_1$, where $b_0$ is obtained by measuring the $\breg$ register.
    \end{enumerate}
    This game is displayed in \Cref{fig:multi-stage_monogamy-of-entanglement}. Let $\MultiStageSearchMonogamy(\cal{A},1^\secp)$ be the random variable that takes value $1/0/\bot$ if the adversary $\cal{A}$ wins/loses/aborts in the above game, respectively. Specifically, we use  $\MultiStageDecisionMonogamy(\cal{A},1^\secp)$ to denote the same random variable \textbf{when $m_0(\secp)=1$} (but $m_1$ can still be arbitrary).
\end{definition}
\begin{figure}[h]
    \centering
    \begin{tikzpicture}[
        node distance=1.6cm and 2.2cm,
        every node/.style={font=\small},
        box/.style={draw, rounded corners, align=center, minimum width=3.2cm, minimum height=1.1cm},
        oracle/.style={draw, dashed, rounded corners, align=center, minimum width=3.2cm, minimum height=1.1cm},
        arr/.style={->, thick}
    ]
    
    \node[box] (chal) {Challenger\\
    $\brackets{\sk,\ket{\chain^{\sk}_{n_e,n,n}}_{\anchor\vessel}}$\\
    $\gets\genchain_{n_e,n,n}(1^\secp)$};
    
    \node[box, right=of chal] (AM0) {$\mathcal A_M^0$};
    \node[box, right=of AM0] (PublicOracles) {Public Oracles:\\
    $\cal{O}_{T+v},\cal{O}_{S^\perp+u}$};
    
    \node[box, below=of AM0] (AM1) {$\mathcal A_M^1\leftrightarrows\cal{O}_{T+v}^{b_0}$};
    \node[box, below left=of AM1] (AL0) {$\mathcal A_L^0\leftrightarrows\cal{O}_{T+v}^{b_0}$};
    \node[box, below right=of AM1] (AR0) {$\mathcal A_R^0\leftrightarrows\cal{O}_{T+v}^{b_0}$};
    
    \node[box, below=of AM1] (AM2) {$\mathcal A_M^2\leftrightarrows\cal{O}_{T+v}^{b_0},\cal{O}_{S^\perp+u}^{b_1}$};
    
    \node[box, below=of AL0] (AL1) {$\mathcal A_L^1\leftrightarrows\cal{O}_{T+v}^{b_0},\cal{O}_{S^\perp+u}^{b_1}$};
    \node[box, below=of AR0] (AR1) {$\mathcal A_R^1\leftrightarrows\cal{O}_{T+v}^{b_0},\cal{O}_{S^\perp+u}^{b_1}$};
    
    \draw[arr] (chal) -- node[above] {$\vessel$} (AM0);
    
    \draw[arr] (AM0) -- node[above] {$\lreg$} (AL0);
    \draw[arr] (AM0) -- node[above] {$\rreg$} (AR0);
    \draw[arr] (AM0) -- node[right] {$\mreg$} (AM1);
    
    \draw[arr] (AM1) -- node[above] {$\mreg_L$} (AL0);
    \draw[arr] (AM1) -- node[above] {$\mreg_R$} (AR0);
    \draw[arr] (AM1) -- node[right] {$\mreg'$} (AM2);
    
    \draw[arr] (AL0) -- node[left] {$\lreg'$} (AL1);
    \draw[arr] (AR0) -- node[right] {$\rreg'$} (AR1);
    
    \draw[arr] (AM2) -- node[above] {$\mreg'_L$} (AL1);
    \draw[arr] (AM2) -- node[above] {$\mreg'_R$} (AR1);
    \path
    ($(AL0.north west)+(-1.2cm,0)$)
    ($(AR0.north east)+(1.2cm,0)$);
    
    \draw[dashed, thick]
    ($(AL0.north west)+(0cm,1.7cm)$) --
    ($(AR0.north east)+(-2.0cm,1.7cm)$)
    node[right=6pt, font=\scriptsize] {\BeforeSplit};

    \node[left=0.3cm of AL0] (b0l){$b_0^l$};
    \node[right=0.3cm of AR0] (b0r){$b_0^r$};
    \node[left=0.3cm of AL1] (b1l){$b_1^l$};
    \node[right=0.3cm of AR1] (b1r){$b_1^r$};
    \draw[arr] (AL0) -- node[left] {} (b0l);
    \draw[arr] (AR0) -- node[right] {} (b0r);
    \draw[arr] (AL1) -- node[left] {} (b1l);
    \draw[arr] (AR1) -- node[right] {} (b1r);
    
    \end{tikzpicture}

    \caption{The multi-stage monogamy-of-entanglement game \MultiStageSearchMonogamy/\MultiStageDecisionMonogamy.}
    \label{fig:multi-stage_monogamy-of-entanglement}
\end{figure}
The main result of this section is \Cref{lem:multi-stage_search_monogamy-of-entanglement}. Intuitively, it states that any adversary passing the above game with good probability has to recover the value of $b_0$ by querying its oracle already at the time \textbf{BeforeSplit}. Formally, we show that for each value of $b_0$, the states corresponding to $b_0$ at \textbf{BeforeSplit} are almost orthogonal to each other. Our proof establishing this fact will involve several reductions between various monogamy-of-entanglement games.
\subsection{Asymmetric Monogamy-of-Entanglement Games}
\begin{theorem}[\cite{cryptoeprint:2025/1219} Theorem 5.4]\label{thm:coset_monogamy}
    Let $A\leq\mathbb{F}_2^{n}$ be a uniformly random subspace of dimension $n/2$ described by the first $n/2$ columns of a uniformly random change of basis matrix $U$. Let $s$ be a vector in $\CS(A)$ and $t$ be a vector in $\CS(A^\perp)$. Define the coset state as
    \begin{equation*}
        \ket{A_{s,t}}=\frac{1}{\sqrt{\abs{A}}}\sum_{a\in A}(-1)^{\ipd{a,t}}\ket{a+s},
    \end{equation*}
    Consider the following coset monogamy game for adversary $\cal{A}=(\cal{A}_M,\cal{A}_L,\cal{A}_R)$.
    \begin{enumerate}
        \item The challenger picks a random change of basis matrix $U$ that describes a uniformly random subspace $A\subseteq\mathbb{F}_2^n$ of dimension $n/2$, samples $s\rand \CS(A)$ and $t\rand\CS(A^\perp)$. The challenger sends $\ket{A_{s,t}}_{\mreg}$ to the adversary $\cal{A}_M$.
        \item $\cal{A}_M$ generates a bipartite state on $\lreg\rreg$, sends $\lreg$ to $\cal{A}_L$ and sends $\rreg$ to $\cal{A}_R$.
        \item The challenger samples $r_L,r_R\rand\mathbb{F}_2^n$. $\cal{A}_L$ is given $(U,r_L)$ and $\cal{A}_R$ is given $(U,r_R)$. $\cal{A}_L$ returns $b^l_0$ and $\cal{A}_R$ returns $b^r_1$.
    \end{enumerate}
    The adversary wins if and only if $b^l_0\oplus b^r_1=\ipd{r_L,s}\oplus\ipd{r_R,t}$. Let $\CosetMonogamy(\cal{A},1^\secp)$ be the random variable that takes value $1/0/\bot$ if the adversary $\cal{A}$ wins/loses/aborts in the above game, respectively. For any adversary $\cal{A}$,
    \begin{equation*}
        \prob{\CosetMonogamy(\cal{A},1^\secp)=1}\leq\frac{1}{2}+\negl(\secp).
    \end{equation*}
\end{theorem}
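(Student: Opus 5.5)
The plan is a reduction from this \emph{decision} (parity-prediction) monogamy game to the \emph{search} version of coset-state monogamy, followed by the known information-theoretic bound for the search game. The decision-to-search step is a two-sided quantum Goldreich--Levin argument.

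\textbf{Search bound.} The fact we rely on is the strong, information-theoretic coset monogamy of Culf and Vidick (which builds on Coladangelo--Liu--Liu--Zhandry). It concerns the game in which $\cal{A}_M$ splits $\ket{A_{s,t}}$ into $\lreg\rreg$, after which both $\cal{A}_L$ and $\cal{A}_R$ receive the full description $U$ of $A$. In that game, the probability that $\cal{A}_L$ outputs $s$ and $\cal{A}_R$ outputs $t$ simultaneously is at most $2^{-\Omega(n)}$. Since $n\ge\secp$, this is $\negl(\secp)$, and the bound holds even for unbounded adversaries. The canonical representatives $s\in\CS(A)$ and $t\in\CS(A^\perp)$ are exactly the coset labels, so ``output $s$'' and ``output $t$'' are well-defined search targets.

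\textbf{Decision-to-search reduction.} First purify the adversary and fix $U$, $s$, $t$. Without loss of generality $\cal{A}_L$ on input $(U,r_L)$ measures a binary observable $O_L^{U,r_L}$ on $\lreg$, and $\cal{A}_R$ measures $O_R^{U,r_R}$ on $\rreg$; we take these observables to be projective after dilating. Suppose the winning probability is $\frac12+\eps$. Then
\[
\E_{U,s,t}\;\E_{r_L,r_R}\Big[(-1)^{\ipd{r_L,s}\oplus\ipd{r_R,t}}\bra{\psi_{U,s,t}}O_L^{U,r_L}\otimes O_R^{U,r_R}\ket{\psi_{U,s,t}}\Big]=2\eps,
\]
where $\ket{\psi_{U,s,t}}$ is the state produced by $\cal{A}_M$.

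Next build a search adversary $\cal{B}=(\cal{A}_M,\cal{B}_L,\cal{B}_R)$. The left player $\cal{B}_L$ prepares $\ket{+^n}$ on a fresh register. It applies the controlled unitary $\sum_{r}\ketbra{r}{r}\otimes O_L^{U,r}$, applies $H^{\otimes n}$ to the fresh register, and measures to output $s'$. The right player $\cal{B}_R$ does the same with $O_R^{U,r}$ and outputs $t'$.

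A direct computation shows that the probability of the joint outcome $(s,t)$ equals
\[
\Big\|\tfrac{1}{2^{n}}\textstyle\sum_{r_L,r_R}(-1)^{\ipd{r_L,s}+\ipd{r_R,t}}O_L^{U,r_L}\otimes O_R^{U,r_R}\ket{\psi}\Big\|^2 .
\]
This is at least the squared overlap of that vector with $\ket{\psi}$, which is the square of the bias for fixed $(U,s,t)$. Applying Jensen over $(U,s,t)$ gives a simultaneous success probability of at least $4\eps^2$. The search bound then forces $4\eps^2\le 2^{-\Omega(n)}$, so $\eps\le 2^{-\Omega(n)}=\negl(\secp)$. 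If the adversary aborts, we treat the abort as a loss, which only lowers its success probability.

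\textbf{Main obstacle.} The delicate point is the two-party Goldreich--Levin step. We must check that the cross terms for mismatched outcomes do not hurt. They do not: only the probability of the single correct pair $(s,t)$ is needed, and it is lower-bounded by the overlap with the original state via Cauchy--Schwarz, with no independence assumption between the two sides. The other point to check is that the cited search bound really allows both parties to learn $U$ and holds without computational assumptions. If only a weaker version were available, we would first upgrade it through the subspace-hiding arguments of \Cref{lem:subspace_oracle_indistinguishability}.
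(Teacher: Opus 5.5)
Your proof is correct, but it takes a different route from the paper. The paper gives no proof: it imports the statement from Kitagawa--Yamakawa (Theorem 5.4). Its only added remark is that their lexicographically smallest canonical representative can be replaced by the $U$-based $\Can_A$ without affecting the proof.

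You instead derive the statement from the strong (search) coset monogamy of Culf--Vidick via a two-party Goldreich--Levin step. The XOR winning condition is what makes this clean. The bias is the correlator
\[
\E_{r_L,r_R}\Big[(-1)^{\ipd{r_L,s}\oplus\ipd{r_R,t}}\bra{\psi}O_L^{U,r_L}\otimes O_R^{U,r_R}\ket{\psi}\Big]
\]
of a tensor-product observable. So a single overlap bound plus Jensen gives a search success probability of at least $4\eps^2$. You avoid the simultaneous-extraction difficulties that an AND-type winning condition would create.

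What your route buys is a self-contained, visibly information-theoretic proof. It also settles the paper's canonical-representative remark for free: outputting exactly the canonical $s$ (resp.\ $t$) is a sub-event of outputting some element of $A+s$ (resp.\ $A^\perp+t$). What the citation buys is brevity.

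Two small fixes are needed.
\begin{enumerate}
\item The prefactor in your amplitude should be $2^{-2n}$, with one factor $2^{-n}$ from each side, not $2^{-n}$. Only with $2^{-2n}$ is the overlap with $\ket{\psi}$ equal to the bias, and is the squared norm a probability.
\item The Culf--Vidick game hands the players only $A$, not $U$, so the transfer needs one more observation. Conditioned on $A$, the matrix $U$ is independent of the cosets $A+s$ and $A^\perp+t$, and the coset state depends only on these cosets up to a global phase. Hence the two players can derive a common, correctly distributed $U$ from $A$ and shared randomness placed in $\lreg$ and $\rreg$ by $\mathcal{A}_M$. This makes the reduction exact.
\end{enumerate}
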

\begin{remark}
    The canonical representation set $\CS$ defined in \cite{cryptoeprint:2025/1219} is a bit different from ours. In \cite{cryptoeprint:2025/1219} they don't have $U_A$, instead they define $\Can_A(z)$ as the lexicographically smallest element in $A+z$. However, it is easy to see that the proof in \cite{cryptoeprint:2025/1219} holds even if we define the canonical representation our way.
\end{remark}
Now we present a modified version of the above game where the prover can choose to abort before the test.
\begin{lemma}[Coset Monogamy-of-Entanglement Game With Abort]\label{lem:coset_monogamy_with_abort}
    Let $A\leq\mathbb{F}_2^{n}$ be a uniformly random subspace of dimension $n/2$ described by the first $n/2$ columns of a uniformly random change of basis matrix $U$. Let $s$ be a vector in $\CS(A)$ and $t$ be a vector in $\CS(A^\perp)$. Define the coset state as
    \begin{equation*}
        \ket{A_{s,t}}=\frac{1}{\sqrt{\abs{A}}}\sum_{a\in A}(-1)^{\ipd{a,t}}\ket{a+s},
    \end{equation*}
    Consider the following coset monogamy game with abort for adversary $\cal{A}=(\cal{A}_M,\cal{A}_L,\cal{A}_R)$. Modifications are highlighted.
    \begin{enumerate}
        \item The challenger picks a random change of basis matrix $U$ that describes a uniformly random subspace $A\subseteq\mathbb{F}_2^n$ of dimension $n/2$, samples $s\rand \CS(A)$ and $t\rand\CS(A^\perp)$. The challenger sends $\ket{A_{s,t}}_{\mreg}$ to the adversary $\cal{A}_M$.
        \item \textcolor{red}{$\cal{A}_M$ can choose to abort in this step. If it aborts, the output of the game is $\bot$. Otherwise,} $\cal{A}_M$ generates a bipartite state on $\lreg\rreg$, sends $\lreg$ to $\cal{A}_L$ and sends $\rreg$ to $\cal{A}_R$.
        \item The challenger samples $r_L,r_R\rand\mathbb{F}_2^n$. $\cal{A}_L$ is given $(U,r_L)$ and $\cal{A}_R$ is given $(U,r_R)$. $\cal{A}_L$ returns $b^l_0$ and $\cal{A}_R$ returns $b^r_1$.
    \end{enumerate}
    The adversary wins if and only if $b^l_0\oplus b^r_1=\ipd{r_L,s}\oplus\ipd{r_R,t}$. Let $\CosetMonogamyWithAbort(\cal{A},1^\secp)$ be the random variable that takes value $1/0/\bot$ if the adversary $\cal{A}$ wins/loses/aborts in the above game, respectively. For any adversary $\cal{A}$ such that 
    \begin{equation*}
        \prob{\CosetMonogamyWithAbort(\cal{A},1^\secp)\neq\bot}\geq \frac{1}{\poly(\secp)},
    \end{equation*}
    we have
    \begin{equation*}
        \frac{\prob{\CosetMonogamyWithAbort(\cal{A},1^\secp)=1}}{\prob{\CosetMonogamyWithAbort(\cal{A},1^\secp)\neq\bot}}\leq\frac{1}{2}+\negl(\secp).
    \end{equation*}
\end{lemma}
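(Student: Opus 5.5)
We reduce the abort version to \Cref{thm:coset_monogamy} by constructing a non-aborting adversary $\cal{B}=(\cal{B}_M,\cal{B}_L,\cal{B}_R)$. Let $p:=\prob{\CosetMonogamyWithAbort(\cal{A},1^\secp)\neq\bot}\geq 1/\poly(\secp)$ and $w:=\prob{\CosetMonogamyWithAbort(\cal{A},1^\secp)=1}$. Suppose for contradiction that $w/p\geq\frac12+\epsilon$ for a non-negligible $\epsilon$.

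The difficulty is that $\cal{A}_M$ may abort with probability up to $1-1/\poly$, and an aborting run cannot simply be restarted, since there is only one copy of the coset state. We therefore take a simple approach: whenever $\cal{A}$ would abort, $\cal{B}$ guesses randomly. Concretely, $\cal{B}_M$ runs $\cal{A}_M$ on $\ket{A_{s,t}}_\mreg$. We may assume the abort decision is a measured classical outcome, since any abort is a two-outcome measurement followed by processing. If $\cal{A}_M$ does not abort, $\cal{B}_M$ forwards $\lreg,\rreg$ together with a flag bit $f=0$ to each side. If it aborts, $\cal{B}_M$ sends $f=1$ to both sides and nothing else. Given $(U,r_L)$, $\cal{B}_L$ runs $\cal{A}_L$ if $f=0$ and otherwise outputs a uniformly random bit; $\cal{B}_R$ behaves symmetrically with $(U,r_R)$. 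The flag can be copied to both sides because it is classical. $\cal{B}$ never aborts and is efficient whenever $\cal{A}$ is, though \Cref{thm:coset_monogamy} does not require efficiency.

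Next we compute $\cal{B}$'s success probability. On the non-abort branch the joint distribution of $(U,s,t,r_L,r_R)$ together with the registers is exactly as in $\cal{A}$'s game restricted to non-abort, so $\cal{B}$ wins with probability $w$ on that branch. On the abort branch, $b^l_0$ is a uniform bit independent of everything else, so $b^l_0\oplus b^r_1$ equals the target bit with probability exactly $1/2$. Hence
\[
\prob{\CosetMonogamy(\cal{B},1^\secp)=1}=w+\tfrac12(1-p)=\tfrac12+p\left(\tfrac{w}{p}-\tfrac12\right)\geq\tfrac12+p\epsilon .
\]
Since $p\geq1/\poly$ and $\epsilon$ is non-negligible, $p\epsilon$ is non-negligible. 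This contradicts \Cref{thm:coset_monogamy}, which gives $\frac12+\negl(\secp)$ for every adversary. Therefore $w/p\leq\frac12+\negl(\secp)$.

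The step that needs the most care is the claim that the non-abort branch of $\cal{B}$ reproduces $\cal{A}$'s conditional game exactly. This requires the challenges $r_L,r_R,U$ to be sampled independently of the abort decision, which holds because they are revealed only after $\cal{A}_M$ acts. It also uses linearity: the unnormalized post-abort-measurement states on $\lreg\rreg$ are identical in both games. With this established, the computation above is exact and no amplification is needed. The only other place the assumption $p\geq1/\poly$ enters is in turning the non-negligible advantage $\epsilon$ into a non-negligible advantage $p\epsilon$.
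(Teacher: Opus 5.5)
Your proposal is correct and follows the paper's proof: the paper builds the same non-aborting $\cal{B}$, which forwards $\cal{A}$'s registers on non-abort and has $\cal{B}_L,\cal{B}_R$ output uniformly random bits on abort, and then contradicts \Cref{thm:coset_monogamy}. Your accounting $w+\tfrac12(1-p)=\tfrac12+p\left(\tfrac{w}{p}-\tfrac12\right)$ handles a general non-abort probability $p$, whereas the paper's calculation takes it to equal the bound $\delta$ exactly, so yours is slightly cleaner.
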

\begin{proof}
    Suppose that there exists an adversary $\cal{A}$ and an inverse polynomial polynomial $\delta(\secp)$ such that 
    \begin{equation*}
        \prob{\CosetMonogamyWithAbort(\cal{A},1^\secp)\neq\bot}\geq\delta(\secp),
    \end{equation*}
    and
    \begin{equation*}
        \frac{\prob{\CosetMonogamyWithAbort(\cal{A},1^\secp)=1}}{\prob{\CosetMonogamyWithAbort(\cal{A},1^\secp)\neq\bot}}\geq\frac{1}{2}+\delta(\secp).
    \end{equation*}
    Then we construct an adversary $\cal{B}$ that breaks \Cref{thm:coset_monogamy}. $\cal{B}$ works as follows:
    \begin{enumerate}
        \item $\cal{B}_M$ runs $\cal{A}_M$. If $\cal{A}_M$ aborts then $\cal{B}_M$ sends an abort symbol to $\cal{B}_L$ and $\cal{B}_R$. Otherwise, it forwards $\lreg\rreg$, $\cal{A}_M$'s output to $\cal{B}_L$ and $\cal{B}_R$, respectively.
        \item $\cal{B}_L$ outputs a random bit if an abort symbol is received. Otherwise, it runs $\cal{A}_L$ and forwards the answer.
        \item $\cal{B}_R$ outputs a random bit if an abort symbol is received. Otherwise, it runs $\cal{A}_R$ and forwards the answer.
    \end{enumerate}
    Condition on not aborting, the winning probability of $\cal{B}$ is at least $\frac{1}{2}+\delta(\secp)$. On the other hand, if $\cal{A}_M$ aborts, the winning probability of $\cal{B}$ is exactly $\frac{1}{2}$ because it outputs two random bits. Thus, the overall winning probability of $\cal{B}$ is
    \begin{equation*}
        \delta(\secp)\cdot\brackets{\frac{1}{2}+\delta(\secp)}+\brackets{1-\delta(\secp)}\cdot\frac{1}{2}=\frac{1}{2}+(\delta(\secp))^2.
    \end{equation*}
    By contradiction, we proved the lemma.
\end{proof}
\begin{lemma}\label{lem:asymmetric_coset_monogamy}
    Consider the following asymmetric coset monogamy game for adversary $\cal{A}=(\cal{A}_M,\cal{A}_L,\cal{A}_R)$.
    \begin{enumerate}
        \item The challenger picks a random change of basis matrix $U$ that describes a uniformly random subspace $A\subseteq\mathbb{F}_2^n$ of dimension $n/2$ using its first $n/2$ columns, samples $s\rand \CS(A)$ and $t\rand\CS(A^\perp)$. The challenger sends $\ket{A_{s,t}}_{\mreg}$ to the adversary $\cal{A}_M$.
        \item $\cal{A}_M$ can choose to abort in this step. If it aborts, the output of the game is $\bot$. Otherwise, $\cal{A}_M$ generates a bipartite state on $\lreg\rreg$, sends $\lreg$ to $\cal{A}_L$ and sends $\rreg$ to $\cal{A}_R$.
        \item The challenger samples $r_L,r_R\rand\mathbb{F}_2^n$. $\cal{A}_L$ is given $(U,r_L)$ and $\cal{A}_R$ is given $(\ipd{r_L,s},U,r_R)$. $\cal{A}_L$ returns $b^l_0$ and $\cal{A}_R$ returns $b^r_1$.
    \end{enumerate}
    The adversary wins if and only if $b^l_0=\ipd{r_L,s}$ and $b^r_1=\ipd{r_R,t}$. Let $\AsymmetricCosetMonogamy(\cal{A},1^\secp)$ be the random variable that takes value $1/0/\bot$ if the adversary $\cal{A}$ wins/loses/aborts in the above game, respectively. For any efficient adversary $\cal{A}$ such that
    \begin{equation*}
        \prob{\AsymmetricCosetMonogamy(\cal{A},1^\secp)\neq\bot}\geq\frac{1}{\poly(\secp)},
    \end{equation*}
    we have
    \begin{equation*}
        \frac{\prob{\AsymmetricCosetMonogamy(\cal{A},1^\secp)=1}}{\prob{\AsymmetricCosetMonogamy(\cal{A},1^\secp)\neq\bot}}\leq\frac{1}{2}+\negl(\secp).
    \end{equation*}
\end{lemma}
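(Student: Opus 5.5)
We reduce to the coset monogamy game with abort (\Cref{lem:coset_monogamy_with_abort}). The obstacle is that $\cal{A}_R$ expects the bit $x:=\ipd{r_L,s}$, which no party knows in the symmetric game. We remove it by having the splitting party guess it: the middle party samples a uniformly random bit $c$ and sends it to both sides. The left side then uses $c$ as a consistency check on $\cal{A}_L$'s answer, so that a wrong guess is only paid for when the left answer happens to agree with it.

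Concretely, suppose $\cal{A}$ does not abort with probability at least $\delta(\secp)$ and, conditioned on not aborting, wins with probability at least $\frac{1}{2}+\delta(\secp)$, where $\delta$ is inverse polynomial. We build $\cal{B}=(\cal{B}_M,\cal{B}_L,\cal{B}_R)$ for the game of \Cref{lem:coset_monogamy_with_abort} as follows.
\begin{enumerate}
    \item $\cal{B}_M$ runs $\cal{A}_M$ on $\ket{A_{s,t}}$, and aborts exactly when $\cal{A}_M$ aborts. Otherwise it samples $c\rand\set{0,1}$ and sends $(\lreg,c)$ to $\cal{B}_L$ and $(\rreg,c)$ to $\cal{B}_R$.
    \item $\cal{B}_L$, on input $(U,r_L)$, runs $\cal{A}_L$ to obtain $b_0^l$. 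If $b_0^l=c$ it outputs $c$; otherwise it outputs a fresh uniformly random bit.
    \item $\cal{B}_R$, on input $(U,r_R)$, runs $\cal{A}_R$ on $(c,U,r_R)$ in place of $(\ipd{r_L,s},U,r_R)$ and outputs its answer $b_1^r(c)$.
\end{enumerate}
The abort probability of $\cal{B}$ equals that of $\cal{A}$, so it is at least $\delta$.

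Condition on non-abort and write $y:=\ipd{r_R,t}$. Because $\cal{A}_L$ never sees $c$, and $c$ is uniform and independent of everything $\cal{A}_L$ sees, we have $\prob{b_0^l\neq c}=\frac{1}{2}$ exactly.
\begin{itemize}
    \item On the event $b_0^l\neq c$, $\cal{B}_L$ outputs a fresh uniform bit, so the XOR condition $b^l_0\oplus b^r_1=x\oplus y$ of \Cref{lem:coset_monogamy_with_abort} holds with probability exactly $\frac{1}{2}$. This contributes exactly $\frac{1}{4}$.
    \item On the event $b_0^l=c$, $\cal{B}$ wins iff $c\oplus b_1^r(c)=x\oplus y$. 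This holds in particular on the sub-event $c=x$, $b_0^l=x$, $b_1^r(x)=y$.
    \item On that sub-event $\cal{A}_R$ was run on exactly its real input $(x,U,r_R)$. The event $c=x$ has probability $\frac{1}{2}$, independently of $\cal{A}$'s run when fed the true bit. So the sub-event has probability $\frac{1}{2}\cdot\prob{\cal{A}\text{ wins}\mid\text{non-abort}}\geq\frac{1}{2}\brackets{\frac{1}{2}+\delta}$.
    \item The remaining case ($c=b_0^l\neq x$) contributes a nonnegative amount, which we simply drop.
\end{itemize}
Summing, the conditional winning probability of $\cal{B}$ is at least $\frac{1}{4}+\frac{1}{4}+\frac{\delta}{2}=\frac{1}{2}+\frac{\delta}{2}$. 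This contradicts \Cref{lem:coset_monogamy_with_abort}, since $\cal{B}$ does not abort with probability at least $\delta$.

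The main point to verify is the coupling step. When $c=x$, the joint behaviour of $(\cal{A}_L,\cal{A}_R)$ inside $\cal{B}$ must coincide with a genuine run of the asymmetric game. This holds because $c$ is sampled independently and is used by $\cal{B}_R$ only as the input bit. One must also check that $\cal{B}_L$'s randomised replacement makes the $b_0^l\neq c$ branch contribute exactly $\frac{1}{4}$, without any sign ambiguity. Both facts follow because $\cal{B}_L$'s fresh bit is independent of everything else.
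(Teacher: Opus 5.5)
Your reduction is correct, and it takes a different route from the paper's.

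In the paper, only $\cal{B}_R$ samples a guess $b_0$ for $\ipd{r_L,s}$. Nothing is shared with the left side, and $\cal{B}_L$ simply forwards $\cal{A}_L$'s answer. The analysis then proceeds as follows:
\begin{enumerate}
\item It tabulates $p^b_{l,r}$, the joint correctness probabilities conditioned on whether the guess was right.
\item It argues that each side's marginal correctness is independent of that event. For the right side this holds only up to the negligible event $s=0$.
\item It deduces that $P^0-P^1$ is approximately a multiple of $\begin{pmatrix}1&-1\\-1&1\end{pmatrix}$.
\item It concludes that the XOR-game value is at least $p^1_{1,1}-\negl(\secp)\geq\frac12+\delta-\negl(\secp)$.
\end{enumerate}

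You instead share the guess $c$ with both sides, and $\cal{B}_L$ re-randomizes whenever $\cal{A}_L$'s answer disagrees with $c$. This turns the analysis into a direct coupling on the event $\{c=x\}$. That coupling is exact: it needs no $s\neq0$ caveat and no marginal-independence argument. It gives the general inequality that the XOR value is at least $\frac14$ plus $\frac12$ times the asymmetric value.

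The price is halving the advantage, to $\delta/2$ instead of $\delta$. The paper's table argument keeps the full advantage and uses no shared randomness between $\cal{B}_L$ and $\cal{B}_R$. The loss is harmless here, since any inverse-polynomial advantage already contradicts \Cref{lem:coset_monogamy_with_abort}.
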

\begin{proof}
    The difference between this lemma and \Cref{thm:coset_monogamy} is that $\cal{A}_R$ is given the correct answer of $\cal{A}_L$. For any adversary $\cal{A}$ that wins with probability at least $\frac{1}{2}+\delta(\secp)$ in this game condition on not aborting and its non-aborting probability is at least $\delta(\secp)$ for some inverse polynomial $\delta(\secp)$, we construct adversary $\cal{B}$ that violates \Cref{thm:coset_monogamy}. $\cal{B}$ is constructed as follows:
    \begin{enumerate}
        \item $\cal{B}_M$ runs $\cal{A}_M$ on $\mreg$ to obtain $\lreg\rreg$. It aborts if $\cal{A}_M$ aborts.
        \item $\cal{B}_L$ receives $(U,r_L)$ and runs $\cal{A}_L$ on $\lreg$. It outputs $b^l_0$, the output of $\cal{A}_L$.
        \item $\cal{B}_R$ receives $(U,r_R)$ and samples $b_0\rand\mathbb{F}_2$ as a guess to the answer. It runs $\cal{A}_R$ with input $(b_0,U,r_R)$ on $\rreg$. It outputs $b^r_1$, the output of $\cal{A}_R$.
    \end{enumerate}
    The aborting probability of $\cal{B}$ is exactly the same as the aborting probability of $\cal{A}$, now let us investigate the winning probability of $\cal{B}$ condition on not aborting. Define 
    \begin{equation*}
        p_{l,r}^{b}=\probs{U,s,t,r_L,r_R,b_0}{\substack{\mathds{1}[b^l_0=\ipd{r_L,s}]=l\\\mathds{1}[b^r_1=\ipd{r_R,t}]=r}\middle\vert \substack{\mathds{1}[b_0=\ipd{r_L,s}]=b\\ \cal{A}\text{ does not abort}}}.
    \end{equation*}
    where $\mathds{1}[a=b]$ is $1$ when $a=b$ and $0$ otherwise.
    From our assumption on $\cal{A}$ we have
    \begin{equation*}
        p_{1,1}^1=\probs{U,s,t,r_L,r_R,b_0}{\substack{b^l_0=\ipd{r_L,s}\\b^r_1=\ipd{r_R,t}}\middle\vert \substack{b_0=\ipd{r_L,s}\\ \cal{A}\text{ does not abort}}}\geq\frac{1}{2}+\delta(\secp).
    \end{equation*}
    Notice that whether the event $b_0=\ipd{r_L,s}$ happens or not cannot be detected by both parties with noticeable probability. For $\cal{B}_L$, it does not know whether $b_0$ is sampled correctly., so the correctness of $b_0$ will not affect its winning probability on outputting the correct $b^l_0=\ipd{r_L,s}$. For $\cal{B}_R$, as long as $s\neq 0$ which happens with $1-\negl$ probability, it has information about whether $r_L$ satisfies $\ipd{r_L,s}=b_0$. So whether $\ipd{r_L,s}=b_0$ or not will not affect the winning probability of $\cal{B}_R$. Thus we have for $l=0,1$,
    \begin{equation*}
        p_{l,0}^1+p_{l,1}^1=p_{l,0}^0+p_{l,1}^0.
    \end{equation*}
    And we have for $r=0,1$,
    \begin{align*}
        \abs{(p_{0,r}^1+p_{1,r}^1)-(p_{0,r}^0+p_{1,r}^0)}=\negl(\secp)
    \end{align*}
    Thus there exists a constant $c$ such that 
    \begin{equation*}
        \norm{
        \begin{pmatrix}
            p_{0,0}^0 & p_{0,1}^0 \\
            p_{1,0}^0 & p_{1,1}^0 \\
        \end{pmatrix}
        -
        \begin{pmatrix}
            p_{0,0}^1 & p_{0,1}^1 \\
            p_{1,0}^1 & p_{1,1}^1 \\
        \end{pmatrix}
        +c
        \begin{pmatrix}
            1 & -1 \\
            -1 & 1 \\
        \end{pmatrix}
        }_{\infty}=\negl(\secp).
    \end{equation*}
    From this we know that the total winning probability of $\cal{B}$ is
    \begin{align*}
        &\frac{1}{2}(p_{0,0}^0+p_{1,1}^0+p_{0,0}^1+p_{1,1}^1)-\negl(\secp)\\
        \geq&\frac{1}{2}(-p_{0,0}^0+p_{1,1}^0+p_{0,0}^1+p_{1,1}^1)-\negl(\secp)\\
        \geq&\frac{1}{2}(-p_{0,0}^1+p_{1,1}^1+p_{0,0}^1+p_{1,1}^1)-\negl(\secp)\\
        \geq&p_{1,1}^1-\negl(\secp)\\
        \geq&\frac{1}{2}+\delta(\secp)-\negl(\secp).
    \end{align*}
    This yields a contradiction, proving the lemma.
\end{proof}
Intuitively, $\ket{A_{s,t}}$ and $\ket{\chain^\sk_{0,n/2,n/2}}$ have the same distribution. The coset state is exactly the anchor state with $n_e=0,n_h=n_s=n/2$. We can use this theorem to prove a similar statement in our case.
\begin{lemma}\label{lem:asymmetric_anchor_monogamy}
    Let $n(\secp)\geq\secp,n_e(\secp)$ be polynomials. Define the following asymmetric anchor monogamy game \AsymmetricAnchorMonogamy for adversary $\cal{A}=(\cal{A}_M,\cal{A}_L,\cal{A}_R)$.
    \begin{enumerate}
        \item The challenger samples $\brackets{\sk,\ket{\chain^{\sk}_{n_e,n/2,n/2}}_{\anchor\vessel}}\gets\genchain_{n_e,n/2,n/2}(1^\secp)$ and sends $\vessel$ to the adversary $\cal{A}_M$.
        \item $\cal{A}_M$ can choose to abort in this step. If it aborts, the output of the game is $\bot$. Otherwise, $\cal{A}_M$ generates a bipartite state on $\lreg\rreg$, sends $\lreg$ to $\cal{A}_L$ and sends $\rreg$ to $\cal{A}_R$.
        \item The challenger samples $r_L,r_R\rand\mathbb{F}_2^{n_e+n}$. $\cal{A}_L$ is given $(U_{\sf shift},r_L)$ and $\cal{A}_R$ is given $(\ipd{r_L,v},U_{\sf shift},r_R)$. $\cal{A}_L$ returns $b^l_0$ and $\cal{A}_R$ returns $b^r_1$.
    \end{enumerate}
    The adversary wins if and only if $b^l_0=\ipd{r_L,v}$ and $b^r_1=\ipd{r_R,u}$. Let $\AsymmetricAnchorMonogamy(\cal{A},1^\secp)$ be the random variable that takes value $1/0/\bot$ if the adversary $\cal{A}$ wins/loses/aborts in the above game, respectively. For any efficient adversary $\cal{A}$ such that
    \begin{equation*}
        \prob{\AsymmetricAnchorMonogamy(\cal{A},1^\secp)\neq\bot}\geq\frac{1}{\poly(\secp)},
    \end{equation*}
    we have
    \begin{equation*}
        \frac{\prob{\AsymmetricAnchorMonogamy(\cal{A},1^\secp)=1}}{\prob{\AsymmetricAnchorMonogamy(\cal{A},1^\secp)\neq\bot}}\leq\frac{1}{2}+\negl(\secp).
    \end{equation*}
\end{lemma}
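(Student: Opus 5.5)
The plan is to reduce \AsymmetricAnchorMonogamy to \AsymmetricCosetMonogamy (\Cref{lem:asymmetric_coset_monogamy}) with a tight, perfectly faithful simulation. The starting observation is the one the paper already makes: a coset state $\ket{A_{s,t}}$ on $n$ qubits is an anchor state with $n_e=0$ and $n_h=n_s=n/2$. Indeed, with $A$ spanned by the first $n/2$ columns of $U$, we have $\ket{A_{s,t}}=\cal{U}\brackets{H\text{-basis states}\otimes\text{std-basis states}}$, where $s=U(0^{n/2}\times v')$ and $t=U^{-t}(u'\times 0^{n/2})$ for uniform $u',v'$. So the only missing ingredient is the $n_e$ middle EPR halves. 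The adversary $\cal{B}$ against the coset game can prepare these itself, since the anchor register $\anchor$ is never touched in the anchor game and can simply be traced out.

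\textbf{Construction of $\cal{B}_M$.} On input $\ket{A_{s,t}}_{\mreg}$, $\cal{B}_M$ first prepares $n_e$ EPR pairs and discards one half of each. It then forms an $(n_e+n)$-qubit register ordered as (first $n/2$ qubits of $\mreg$, the $n_e$ EPR halves, last $n/2$ qubits of $\mreg$). It samples a uniformly random invertible $R\in\mathbb{F}_2^{(n_e+n)\times(n_e+n)}$ and applies $\ket{x}\mapsto\ket{Rx}$.

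Let $E(U)$ denote the block-embedding of $U$. It acts as $U$ on the outer $n$ coordinates, in the split positions $\{1,\dots,n/2\}\cup\{n/2+n_e+1,\dots,n+n_e\}$, and as identity on the middle $n_e$ coordinates. The resulting state is exactly $\ket{\chain^{\sk'}_{n_e,n/2,n/2}}$ restricted to $\vessel$, with the following parameters:
\begin{itemize}
\item $U'_{\sf shift}=R\,E(U)$;
\item $v=R\,E(0^{n/2}\times v')$ padded with zeros in the middle;
\item $u=R^{-t}E(u'\times 0^{n/2})$ padded likewise.
\end{itemize}
Since $R$ is uniform and independent of $U$, the matrix $U'_{\sf shift}$ is a uniformly random invertible matrix, independent of $(u',v')$. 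Hence the joint distribution of $(U'_{\sf shift},u,v,\vessel)$ equals that produced by $\genchain_{n_e,n/2,n/2}$. Finally, $\cal{B}_M$ runs $\cal{A}_M$ (aborting iff it aborts) and forwards $\lreg$ and $\rreg$, each together with a classical copy of $R$.

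\textbf{Construction of $\cal{B}_L$ and $\cal{B}_R$.} On receiving $(U,r_L^c)$ with $r_L^c\in\mathbb{F}_2^n$, $\cal{B}_L$ samples $r_L\in\mathbb{F}_2^{n_e+n}$ uniformly subject to $R^t r_L$ having outer coordinates equal to $E$-placed $r_L^c$, with the middle coordinates free. Then
\[
\ipd{r_L,v}=\ipd{R^tr_L,E(0\times v')}=\ipd{r_L^c,s},
\]
and $r_L$ is uniform because $r_L^c$ is uniform. $\cal{B}_L$ computes $U'_{\sf shift}=R\,E(U)$, runs $\cal{A}_L$ on $(U'_{\sf shift},r_L)$, and outputs its answer.

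$\cal{B}_R$ does the same with $r_R$ and $R^{-1}$, so that $\ipd{r_R,u}=\ipd{r_R^c,t}$. It also receives $\ipd{r_L^c,s}$, which is exactly the hint $\ipd{r_L,v}$ that $\cal{A}_R$ expects. It runs $\cal{A}_R$ on $(\ipd{r_L,v},U'_{\sf shift},r_R)$.

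\textbf{Conclusion.} Since the simulated view is identically distributed to the real one, the abort probabilities and the winning probabilities of $\cal{A}$ and $\cal{B}$ coincide exactly. Moreover, $\cal{B}$ is efficient because $\cal{A}$ is efficient. The bound then follows directly from \Cref{lem:asymmetric_coset_monogamy}.

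The main point to verify carefully is the distributional equivalence. There are two things to check. First, the column ordering must be right: $S$ must be the span of the first $n/2$ columns of $R\,E(U)$, and $T$ the span of the first $n/2+n_e$ columns, which holds because $E$ fixes the middle coordinates. Second, the dual shift must transform via $R^{-t}E(U)^{-t}$; this holds because $E(U)^{-t}=E(U^{-t})$. A smaller point is the correspondence between uniform $s\in\CS(A)$, $t\in\CS(A^\perp)$ and uniform $u',v'$.
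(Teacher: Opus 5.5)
Your proposal is correct and is essentially the paper's proof. In both, $\mathcal{B}_M$ appends $n_e$ EPR halves to the coset state and scrambles with a fresh uniform change of basis, so that the composite matrix (your $R\,E(U)$, the paper's $U_{\sf shift}$ times the block/permutation embedding of $U$) is a uniform $U'_{\sf shift}$. Then $\mathcal{B}_L,\mathcal{B}_R$ lift their challenges with fresh random middle coordinates via $R^{-t}$ and $R$, so that $\ipd{r_L,v}=\ipd{r_L^c,s}$ and $\ipd{r_R,u}=\ipd{r_R^c,t}$, and \Cref{lem:asymmetric_coset_monogamy} applies.

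The only blemish is notational: your formulas for $v$ and $u$ omit the factor $E(U)$. They should read $v=R\,E(U)(0^{n/2+n_e}\times v')$ and $u=R^{-t}E(U)^{-t}(u'\times 0^{n_e+n/2})$; this does not affect the argument.
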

\begin{proof}
    The proof is similar to the domain extension technique, the only difference is that we are extending the $\epr$ part. For any adversary $\cal{A}$ that wins with probability at least $\frac{1}{2}+\delta(\secp)$ in this game condition on not aborting and its non-aborting probability is at least $\delta(\secp)$ for some inverse polynomial $\delta(\secp)$, we construct adversary $\cal{B}$ that violates \Cref{lem:asymmetric_coset_monogamy}:
    \begin{enumerate}
        \item $\cal{B}_M$ receives $\ket{A_{s,t}}$ on $n$-qubit register $\areg$, and tensors it with a $2n_e$-qubit $\epr$ state $\ket{\epr}_{\anchor\areg'}^{\otimes n_e}$ which is fully entangled between $n_e$-qubit register $\anchor$ and $n_e$-qubit register $\areg'$. Let $\vessel=\areg\otimes\areg'$, it samples a uniform random change of basis matrix $U_{\sf shift}$ in $\mathbb{F}_2^{n_e+n}$ and applies
        \begin{equation*}
            \cal{U}_{\sf shift}\ket{z}_{\vessel}:=\ket{U_{\sf shift}z}_{\vessel}
        \end{equation*}
        on $\areg\otimes\areg'$. Finally, it runs $\cal{A}_M$ on $\vessel$ to obtain $\lreg\rreg$. It aborts if $\cal{A}_M$ aborts. It sends $\lreg,U_{\sf shift}$ to $\cal{B}_L$ and sends $\rreg,U_{\sf shift}$ to $\cal{B}_R$.
        \item $\cal{B}_L$ receives $\lreg$ and $(U,r_L)$. It samples $r'_L\rand\mathbb{F}_2^{n_e}$, computes 
        \[U'_{\sf shift}=U_{\sf shift}\begin{pmatrix}
                U &  \\
                 & I_{n_e} \\
            \end{pmatrix}\begin{pmatrix}
                I_{n/2} & & \\
                 & & I_{n/2}\\
                 & I_{n_e}\\
            \end{pmatrix}.\]
        It runs $\cal{A}_L$ on $\lreg$ with input
        \begin{equation*}
            \brackets{U'_{\sf shift},U_{\sf shift}^{-t}\brackets{r_L\times r'_L}}.
        \end{equation*}
        It outputs $b^l_0$, the output of $\cal{A}_L$.
        \item $\cal{B}_R$ receives $\rreg$ and $(b_0,U,r_R)$. It samples $r'_R\rand\mathbb{F}_2^{n_e}$, computes $U'_{\sf shift}$ in the same way and runs $\cal{A}_R$ on $\rreg$ with input 
        \begin{equation*}
            \brackets{U'_{\sf shift},U_{\sf shift}\brackets{r_R\times r'_R}}.
        \end{equation*}
        It outputs $b^r_1$, the output of $\cal{A}_R$.
    \end{enumerate}
    First note that the distribution of the state on $\anchor\vessel$ after step 1 together with $U'_{\sf shift}$ has the same distribution as $\brackets{U_{\sf shift},\ket{\chain^{\sk}_{n_e,n/2,n/2}}}$ sampled by $\genchain_{n_e,n/2,n/2}(1^\secp)$. Let $v'$ be the last $n/2$ bits of $U^{-1}v$. If $\cal{A}_L$ returns correctly, its output is
    \begin{align*}
        &\ipd{U_{\sf shift}^{-t}(r_L\times r'_L),U_{\sf shift}(s\times 0^{n_e})}\\
        =&\ipd{r_L\times r'_L,s\times 0^{n_e}}\\
        =&\ipd{r_L,s}.
    \end{align*}
    Similarly, if $\cal{A}_R$ returns correctly, its output is
    \begin{align*}
        &\ipd{U_{\sf shift}(r_R\times r'_R),U_{\sf shift}^{-t}(t\times 0^{n_e})}\\
        =&\ipd{r_R\times r'_R,t\times 0^{n_e}}\\
        =&\ipd{r_R,t}.
    \end{align*}
\end{proof}
Now we switch from the Goldreich-Levin style query to the oracle style query.
\newcommand{\AsymmetricOracleMonogamy}{{\sf AsymmetricOracleMonogamy}\xspace}
\begin{lemma}\label{lem:asymmetric_oracle_monogamy}
    Let $n(\secp)\geq\secp$, $n_e(\secp)$, and $m_1(\secp)$ be polynomials. Define the following asymmetric oracle monogamy game \AsymmetricOracleMonogamy for adversary $\cal{A}=(\cal{A}_M,\cal{A}_L,\cal{A}_R)$.
    \begin{enumerate}
        \item The challenger generates $\brackets{\sk,\ket{\chain^{\sk}_{n_e,3n/4,3n/4}}_{\anchor\vessel}}\gets\genchain_{n_e,3n/4,3n/4}(1^\secp)$ and gives the register $\vessel$ to $\cal{A}_M$.
        \item $\cal{A}_M$ can choose to abort in this step. If it aborts, the output of the game is $\bot$. Otherwise, $\cal{A}_M$ generates a bipartite state on $\lreg\rreg$. It sends $\lreg$ to $\cal{A}_L$ and sends $\rreg$ to $\cal{A}_R$.
        \item The challenger samples $b_0\rand\mathbb{F}_2$ and $b_1\rand\mathbb{F}_2^{m_1}$. It gives $\cal{O}_{T+v}^{b_0}$ to $\cal{A}_L$. Then it gives $b_0$ and $\cal{O}_{S^\perp+u}^{b_1}$ to $\cal{A}_R$.
        \item $\cal{A}_L$ returns $b_0^l$ and $\cal{A}_R$ returns $b_1^r$. 
    \end{enumerate}
    The adversary wins if $b_0^l=b_0$ and $b_1^r=b_1$. Note that the oracle given to $\cal{A}_L$ has one-bit output, but the oracle given to $\cal{A}_R$ has $m_1$ bits output. Let $\AsymmetricOracleMonogamy(\cal{A},1^\secp)$ be the random variable that takes value $1/0/\bot$ if the adversary $\cal{A}$ wins/loses/aborts in the above game, respectively. For any efficient adversary $\cal{A}$ such that
    \begin{equation*}
        \prob{\AsymmetricOracleMonogamy(\cal{A},1^\secp)\neq\bot}\geq\frac{1}{\poly(\secp)},
    \end{equation*}
    we have
    \begin{equation*}
        \frac{\prob{\AsymmetricOracleMonogamy(\cal{A},1^\secp)=1}}{\prob{\AsymmetricOracleMonogamy(\cal{A},1^\secp)\neq\bot}}\leq\frac{1}{2}+\negl(\secp).
    \end{equation*}
\end{lemma}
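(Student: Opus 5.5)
We will reduce \Cref{lem:asymmetric_oracle_monogamy} to \Cref{lem:asymmetric_anchor_monogamy}. The plan has three steps: bloat the oracles' subspaces, so that each party's oracle is a very coarse coset that can be simulated from a small amount of classical side information; replace the secret-carrying oracles by Goldreich--Levin style answers; and turn an adversary that recovers the oracle secrets into one that recovers inner products $\ipd{r_L,v}$ and $\ipd{r_R,u}$.

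Concretely, the reduction $\cal{B}$ starts from a challenge anchor state $\ket{\chain^\sk_{n_e,n/2,n/2}}$ of \Cref{lem:asymmetric_anchor_monogamy}. It applies $\domainextension_{n/4,n/4}$, which by \Cref{fact:extension_indistinguishability} yields a correctly distributed instance of $\ket{\chain_{n_e,3n/4,3n/4}}$. By \Cref{fact:simulating_extension_oracles}, $\cal{B}$ can simulate the extended oracles $\cal{O}^{b}_{T_{\sf ext}+v_{\sf ext}}$ and $\cal{O}^{b}_{S^\perp_{\sf ext}+u_{\sf ext}}$ given the inner oracles $\cal{O}_{T+v}^b$ and $\cal{O}^b_{S^\perp+u}$. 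These inner oracles are not available in the Goldreich--Levin game, so they must first be replaced by oracles computable from $U_{\sf shift}$ together with the challenge information.

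For the left party, the first hybrid uses \Cref{lem:subspace_oracle_indistinguishability}, together with the fact that $\cal{A}_L$ holds no copy of the state beyond $\lreg$. We replace $\cal{O}^{b_0}_{T+v}$ by $\cal{O}^{b_0}_{W+v}$, where $W\supseteq T$ is a random hyperplane-type superspace of codimension $\omega(\log\secp)$. We then collapse this further to a single random hyperplane $\{z:\ipd{r_L,z}=\ipd{r_L,v}\}$ with $r_L\in T^\perp$. The key observation is that $\cal{O}^{b_0}_{H+v}$ for a hyperplane $H=r_L^\perp$ is computable from $(r_L,\ipd{r_L,v},b_0)$. Moreover, answering $b_0$ correctly with nonnegligible advantage over $1/2$, given oracle access that depends only on $\ipd{r_L,v}$ through a one-bit membership flag, can be converted into predicting $\ipd{r_L,v}$. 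The conversion works as follows: $\cal{B}_L$ guesses the bit $c=\ipd{r_L,v}$, simulates the oracle under that guess, and outputs its guess iff $\cal{A}_L$ returns $b_0$. This is exactly where the one-bit output $m_0=1$ is used, through a standard ``decision-to-predict'' argument.

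The right party is handled symmetrically via \Cref{fact:standard_hadamard_duality}: $\cal{O}^{b_1}_{S^\perp+u}$ is bloated to a hyperplane $\{z:\ipd{r_R,z}=\ipd{r_R,u}\}$ with $r_R\in S$. Because $b_1$ may have $m_1$ bits, recovering $b_1$ given the oracle again yields a predictor of $\ipd{r_R,u}$; guessing the bit costs at most a factor of $2$ in success, which we absorb by the same $\frac12$-calibration argument. $\cal{A}_R$ additionally receives $b_0$, and in the reduction this becomes $\ipd{r_L,v}$, which is exactly the extra input the asymmetric anchor game hands to $\cal{A}_R$. That correspondence is the purpose of the asymmetry in \Cref{lem:asymmetric_anchor_monogamy}.

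The hybrid steps cost $q/2^{\omega(\log\secp)}=\negl$ each. Abort probabilities are preserved because $\cal{A}_M$'s step precedes all hybrids, so the conditional bounds transfer as in \Cref{lem:coset_monogamy_with_abort}. The main obstacle is the simultaneous guessing. $\cal{B}_L$ and $\cal{B}_R$ each guess the hidden bit independently, and we need the joint event ``both guesses right and both adversaries succeed'' to translate into an advantage over $1/2$ in the joint Goldreich--Levin game, rather than merely $1/4$. To get this, I would mirror the $p^{b}_{l,r}$ bookkeeping from the proof of \Cref{lem:asymmetric_coset_monogamy}: a wrong guess produces an oracle independent of $b_0$ (respectively $b_1$), so in that case the adversary's success is exactly $2^{-m}$. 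The resulting linear constraints then force the joint prediction probability to be at least $\frac12+\Omega(\delta)$.
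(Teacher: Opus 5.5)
Your opening move matches the paper: start from $\ket{\Psi^{\mathsf{sk}}_{n_e,n/2,n/2}}$ and apply $\mathsf{DomainExtension}_{n/4,n/4}$. The step that is supposed to make the oracles simulatable, however, fails.

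\Cref{lem:subspace_oracle_indistinguishability} bounds the advantage by $q/2^{(c-b)/2}$, where $c-b$ is the codimension of the \emph{enlarged} subspace in the ambient space. So you may enlarge $T$ only to superspaces of codimension $\omega(\log\secp)$. For a hyperplane $H=r_L^\perp$ the bound is vacuous, and the swap is in fact trivially detectable. A uniformly random query lands in $H+v$ with probability $1/2$, versus $2^{-\omega(\log\secp)}$ for $W+v$, and then returns $b_0$ outright. An adversary can therefore behave arbitrarily differently in your hybrid (e.g.\ answer wrongly as soon as a random query returns a non-$\bot$ value), and nothing about its success transfers.

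The same fact breaks the guessing step. If $\mathcal{B}_L$ guesses $c\neq\ipd{r_L,v}$, it simulates $\mathcal{O}^{b_0}$ on the complementary coset $\{z:\ipd{r_L,z}=c\}$. That oracle still outputs $b_0$ on half of all inputs, so it is not independent of $b_0$. Hence ``output $c$ iff $\mathcal{A}_L$ returns $b_0$'' gives no predictor for $\ipd{r_L,v}$, and the $p^b_{l,r}$ bookkeeping has nothing to act on. Stopping at codimension $k=\omega(\log\secp)$ does not help either: $\mathcal{O}_{W+v}$ is then determined by the $k$ bits $\ipd{r_i,v}$ for a basis $\{r_i\}$ of $W^\perp$, and guessing them costs $2^{-k}$.

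The missing idea is to remove the secret shift from the membership predicate entirely and hide it in the oracle's \emph{output}, via the collapsing oracle.
\begin{enumerate}
\item After domain extension, the paper bloats $T_{\mathsf{ext}}$ to $T^*$, the span of the first $n_e+5n/4$ columns of $U^*_{\mathsf{shift}}$. This has codimension $n/4$, so \Cref{lem:subspace_oracle_indistinguishability} applies. Since $T^*$ swallows the whole inner block, $T^*+v_{\mathsf{ext}}=T^*+v^*$. Membership therefore depends only on the extension key $\mathsf{sk}^*$, which the reduction samples itself.
\item Next, $b_0$ is replaced by the bit $\ipd{r_L,\Can_{T_{\mathsf{ext}}}(w_v)}=\ipd{r_L,v_{\mathsf{ext}}}$; this is where $m_0=1$ is used.
\item \Cref{lem:collapsing_versus_non-collapsing} then switches this constant output to $\ipd{r_L,\Can_{T_{\mathsf{ext}}}(z)}$ evaluated at the queried point $z$ (the two agree on $T_{\mathsf{ext}}+v_{\mathsf{ext}}$). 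The latter is computable from $U_{\mathsf{shift}}$ (given to both parties in the game of \Cref{lem:asymmetric_anchor_monogamy}) and $\mathsf{sk}^*$ alone. So $\mathcal{B}_L$ simulates it exactly with $r'_L={U^*_{\mathsf{shift}}}^{-t}(r^h_L\times r_L\times r^s_L)$, with no guessing at all.
\item The right side is the Hadamard-dual version: the first output bit is $\ipd{r_R,\Can_{S^\perp_{\mathsf{ext}}}(z)}$, followed by $m_1-1$ fresh random bits.
\item $\mathcal{A}_R$'s extra input is obtained from the challenger's $\ipd{r_L,v}$ by XORing the known correction $\ipd{{U^*_{\mathsf{shift}}}^{-t}(0^{n_e+5n/4}\times r^s_L),v^*}$, with $r^s_L$ shared by $\mathcal{B}_M$.
\end{enumerate}
Correct answers of $\mathcal{A}$ then map directly to correct answers of $\mathcal{B}$, so no calibration argument is needed.
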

\begin{proof}
    For any adversary $\cal{A}$ that wins with probability at least $\frac{1}{2}+\delta(\secp)$ in this game condition on not aborting and its non-aborting probability is at least $\delta(\secp)$ for some inverse polynomial $\delta(\secp)$, we consider a series of hybrids. \\
    \textbf{Hybrid 0:} This hybrid corresponds to the adversary $\cal{A}$ in the $\AsymmetricOracleMonogamy$ game. The non-aborting probability for $\cal{A}$ is at least $\delta(\secp)$ and conditional on that, the winning probability is at least $\frac{1}{2}+\delta(\secp)$.\\
    \textbf{Hybrid 1:} This hybrid corresponds to the adversary $\cal{A}$ in the following game:
    \begin{enumerate}
        \item The challenger generates \textcolor{red}{$\brackets{\sk,\ket{\chain^{\sk}_{n_e,n/2,n/2}}_{\anchor\vessel}}\gets\genchain_{n_e,n/2,n/2}(1^\secp)$. It then runs $\sk^*\gets\domainextension_{n/4,n/4}(1^\secp)$ on $\vessel$ and gives the register $\vessel^*$ to $\cal{A}_M$.}
        \item $\cal{A}_M$ can choose to abort in this step. If it aborts, the output of the game is $\bot$. Otherwise, $\cal{A}_M$ generates a bipartite state on $\lreg\rreg$. It sends $\lreg$ to $\cal{A}_L$ and sends $\rreg$ to $\cal{A}_R$.
        \item The challenger samples $b_0\rand\mathbb{F}_2$ and $b_1\rand\mathbb{F}_2^{m_1}$. It gives \textcolor{red}{$\cal{O}_{T_{\sf ext}+v_{\sf ext}}^{b_0}$} to $\cal{A}_L$. Then it gives $b_0$ and \textcolor{red}{$\cal{O}_{S_{\sf ext}^\perp+u_{\sf ext}}^{b_1}$} to $\cal{A}_R$.
        \item $\cal{A}_L$ returns $b_0^l$ and $\cal{A}_R$ returns $b_1^r$. The adversary wins if $b_0^l=b_0$ and $b_1^r=b_1$.
    \end{enumerate}
    \begin{claim}
        The non-aborting probability for $\cal{A}$ in \textbf{Hybrid 1} is at least $\delta(\secp)$ and conditional on that, the winning probability is at least $\frac{1}{2}+\delta(\secp)$.
    \end{claim}
    \begin{proof}
        By \Cref{fact:extension_indistinguishability} and \Cref{fact:simulating_extension_oracles}, the input distribution of $\cal{A}$ doesn't change.
    \end{proof}
    \textbf{Hybrid 2:} This hybrid corresponds to the adversary $\cal{A}$ in the following game:
    \begin{enumerate}
        \item The challenger generates $\brackets{\sk,\ket{\chain^{\sk}_{n_e,n/2,n/2}}_{\anchor\vessel}}\gets\genchain_{n_e,n/2,n/2}(1^\secp)$. It then runs $\sk^*\gets\domainextension_{n/4,n/4}(1^\secp)$ on $\vessel$ and gives the register $\vessel^*$ to $\cal{A}_M$.
        \item $\cal{A}_M$ can choose to abort in this step. If it aborts, the output of the game is $\bot$. Otherwise, $\cal{A}_M$ generates a bipartite state on $\lreg\rreg$. It sends $\lreg$ to $\cal{A}_L$ and sends $\rreg$ to $\cal{A}_R$.
        \item The challenger samples $b_0\rand\mathbb{F}_2$ and $b_1\rand\mathbb{F}_2^{m_1}$. It gives \textcolor{red}{$\cal{O}_{T^*+v^*}^{b_0}$} to $\cal{A}_L$. Then it gives $b_0$ and \textcolor{red}{$\cal{O}_{{S^*}^\perp+u^*}^{b_1}$} to $\cal{A}_R$.
        \item $\cal{A}_L$ returns $b_0^l$ and $\cal{A}_R$ returns $b_1^r$. The adversary wins if $b_0^l=b_0$ and $b_1^r=b_1$.
    \end{enumerate}
    \begin{claim}
        The non-aborting probability for $\cal{A}$ in \textbf{Hybrid 2} is at least $\delta(\secp)$ and conditional on that, the winning probability is at least $\frac{1}{2}+\delta(\secp)-\negl(\secp)$.
    \end{claim}
    \begin{proof}
        $T^*$ is a uniformly random subspace of dimension $n_e+5n/4$ satisfying $T_{\sf ext}\leq T^*\leq\mathbb{F}_2^{n_e+3n/2}$ and it is also independent of $v_{\sf ext}$. $S^*$ is a uniformly random subspace of $S_{\sf ext}$ of dimension $n/4$ and it is also independent of $u_{\sf ext}$. The claim follows from \Cref{lem:subspace_oracle_indistinguishability}.
    \end{proof}
    \textbf{Hybrid 3:} This hybrid corresponds to the adversary $\cal{A}$ in the following game:

    \begin{enumerate}
        \item The challenger generates $\brackets{\sk,\ket{\chain^{\sk}_{n_e,n/2,n/2}}_{\anchor\vessel}}\gets\genchain_{n_e,n/2,n/2}(1^\secp)$. It then runs $\sk^*\gets\domainextension_{n/4,n/4}(1^\secp)$ on $\vessel$ and gives the register $\vessel^*$ to $\cal{A}_M$. 
        \item $\cal{A}_M$ can choose to abort in this step. If it aborts, the output of the game is $\bot$. Otherwise, $\cal{A}_M$ generates a bipartite state on $\lreg\rreg$. It sends $\lreg$ to $\cal{A}_L$ and sends $\rreg$ to $\cal{A}_R$.
        \item The challenger samples \textcolor{red}{$r_L\rand\mathbb{F}_2^{n_e+3n/2}$} and $b_1\rand\mathbb{F}_2^{m_1}$. \textcolor{red}{It also samples $w_v\rand T_{\sf ext}+v_{\sf ext}$ and gives the following oracle to $\cal{A}_L$:}
        \begin{itemize}
            \item \textcolor{red}{$\cal{O}_{T^*+w_v}^{{\sf ncol},r_L}:\mathbb{F}_2^{n_e+3n/2}\rightarrow\mathbb{F}_2\cup\set{\bot}$ on input $z$ checks if it is in $T^*+w_v$. It returns $\bot$ if it is not. Otherwise, it returns $\ipd{r_L,\Can_{T_{\sf ext}}(w_v)}$}.
        \end{itemize}
        Then it gives \textcolor{red}{$\ipd{r_L,\Can_{T_{\sf ext}}(w_v)}$} and $\cal{O}_{{S^*}^\perp+u^*}^{b_1}$ to $\cal{A}_R$.
        \item $\cal{A}_L$ returns $b_0^l$ and $\cal{A}_R$ returns $b_1^r$. The adversary wins if \textcolor{red}{$b_0^l=\ipd{r_L,\Can_{T_{\sf ext}}(w_v)}$} and $b_1^r=b_1$.
    \end{enumerate}
    \begin{claim}
        The non-aborting probability for $\cal{A}$ in \textbf{Hybrid 3} is at least $\delta(\secp)$ and conditional on that, the winning probability is at least $\frac{1}{2}+\delta(\secp)-\negl(\secp)$.
    \end{claim}
    \begin{proof}
        In the adversary's view, $\ipd{r_L,v_{\sf ext}}$ is a uniform random bit and is equivalent to $b_0$. Also $T_{\sf ext}+v_{\sf ext}$ and $T_{\sf ext}+w_v$ are the same coset.
    \end{proof}
    \textbf{Hybrid 4:} This hybrid corresponds to the adversary $\cal{A}$ in the following game:
    \begin{enumerate}
        \item The challenger generates $\brackets{\sk,\ket{\chain^{\sk}_{n_e,n/2,n/2}}_{\anchor\vessel}}\gets\genchain_{n_e,n/2,n/2}(1^\secp)$. It then runs $\sk^*\gets\domainextension_{n/4,n/4}(1^\secp)$ on $\vessel$ and gives the register $\vessel^*$ to $\cal{A}_M$. 
        \item $\cal{A}_M$ can choose to abort in this step. If it aborts, the output of the game is $\bot$. Otherwise, $\cal{A}_M$ generates a bipartite state on $\lreg\rreg$. It sends $\lreg$ to $\cal{A}_L$ and sends $\rreg$ to $\cal{A}_R$.
        \item The challenger samples $r_L\rand\mathbb{F}_2^{n_e+3n/2}$ and $b_1\rand\mathbb{F}_2^{m_1}$. It also samples $w_v\rand T_{\sf ext}+v_{\sf ext}$ and gives the following oracle to $\cal{A}_L$:
        \begin{itemize}
            \item \textcolor{red}{$\cal{O}_{T^*+w_v}^{{\sf col},r_L}$}$:\mathbb{F}_2^{n_e+3n/2}\rightarrow\mathbb{F}_2\cup\set{\bot}$ on input $z$ checks if it is in $T^*+w_v$. It returns $\bot$ if it is not. Otherwise, \textcolor{red}{it returns $\ipd{r_L,\Can_{T_{\sf ext}}(z)}$}.
        \end{itemize}
        Then it gives \textcolor{red}{$\ipd{r_L,v_{\sf ext}}$} and $\cal{O}_{{S^*}^\perp+u^*}^{b_1}$ to $\cal{A}_R$.
        \item $\cal{A}_L$ returns $b_0^l$ and $\cal{A}_R$ returns $b_1^r$. The adversary wins if \textcolor{red}{$b_0^l=\ipd{r_L,v_{\sf ext}}$} and $b_1^r=b_1$.
    \end{enumerate}
    \begin{claim}\label{claim:use_collapsing}
        The non-aborting probability for $\cal{A}$ in \textbf{Hybrid 4} is at least $\delta(\secp)$ and conditional on that, the winning probability is at least $\frac{1}{2}+\delta(\secp)-\negl(\secp)$.
    \end{claim}
    \begin{proof}
        Note that $w_v$ is a uniform random vector independent of $T_{\sf ext}$ and $T^*$, with $\Can_{T_{\sf ext}}(w_v)=v_{\sf ext}$. For any coset of $T_{\sf ext}$ contained in $T^*+v^*$, $\ipd{r_L,\Can_{T_{\sf ext}}(z)}$ alone is a bit independent of $T^*$ given $T_{\sf ext}$. To use \Cref{lem:collapsing_versus_non-collapsing} we only need to show that we can generate the anchor state and simulate the whole game using $T_{\sf ext}$ and $w_v$. To be more specific, we will rewrite \textbf{Hybrid 3} and \textbf{Hybrid 4} as two games with input $(T_{\sf ext},T^*,w_v)$ where one of them is the collapsing version and the other one is the non-collapsing version in order to use \Cref{lem:collapsing_versus_non-collapsing}. Now we rewrite \textbf{Hybrid 3} and \textbf{Hybrid 4} for fixed $T_{\sf ext}$, $T^*$ and $w_v$. 
        \begin{enumerate}
            \item Sample random subspaces $S^*\leq S_{\sf ext}\leq T_{\sf ext}$ where $S^*$ is a $n/4$ dimensional subspace and $S_{\sf ext}$ is a $3n/4$ dimensional subspace. Sample a $(n_e+3n/2)\times(n_e+3n/2)$ uniform random change of basis matrix $U_{\sf shift}$ condition on the span of its first $n/4,3n/4,n_e+3n/4,n_e+5n/4$ columns are $S^*,S_{\sf ext},T_{\sf ext},T^*$, respectively. Sample $u_{\sf ext}\rand\CS(S_{\sf ext}^\perp)$ and compute $u^*$ accordingly.
            \item Decompose $w_v=w_s+w_x+v_{\sf ext}$ where $w_s\in S_{\sf ext}$ and $w_x=U_{\sf shift}\brackets{0^{3n/4}\times x'\times 0^{3n/4}}$ for a unique $x'$. Generate the following state:
            \begin{align*}
                \propto&\sum_{x}\ket{x}_{\anchor}\sum_{s\in S_{\sf ext}}(-1)^{\ipd{u_{\sf ext},s}}\ket{s+U_{\sf shift}\brackets{0^{3n/4}\times x\times 0^{3n/4}}+w_v}_{\vessel^*}\\
                \propto&\sum_{x}\ket{x}_{\anchor}\sum_{s\in S_{\sf ext}}(-1)^{\ipd{u_{\sf ext},s}}\ket{(s+w_s)+U_{\sf shift}\brackets{0^{3n/4}\times (x+x')\times 0^{3n/4}}+v_{\sf ext}}_{\vessel^*}\\
                \propto&\sum_{x}(-1)^{\ipd{u_{\sf ext},w_s}}\ket{x-x'}_{\anchor}\sum_{s\in S_{\sf ext}}(-1)^{\ipd{u_{\sf ext},s}}\ket{s+U_{\sf shift}\brackets{0^{3n/4}\times x\times 0^{3n/4}}+v_{\sf ext}}_{\vessel^*}.
            \end{align*}
            Note that the adversary $\cal{A}$ can only access $\vessel$, thus it cannot detect the phase shift $(-1)^{\ipd{u_{\sf ext},w_s}}$ and the control shift $\ket{x-x'}_{\anchor}$. The result of the game is the same if we give the following state that is exactly the anchor state, instead of the above state.
            \[\propto\sum_{x}\ket{x}_{\anchor}\sum_{s\in S_{\sf ext}}(-1)^{\ipd{u_{\sf ext},s}}\ket{s+U_{\sf shift}\brackets{0^{3n/4}\times x\times 0^{3n/4}}+v_{\sf ext}}_{\vessel^*}.\]
            \item Run $\cal{A}_M$, $\cal{A}_L$, $\cal{A}_R$, provide the corresponding oracles and check the outcome.
        \end{enumerate}
        From the above description, we can use \Cref{lem:collapsing_versus_non-collapsing} to show the indistinguishability.
    \end{proof}
    \textbf{Hybrid 5:} This hybrid corresponds to the adversary $\cal{A}$ in the following game:
    \begin{enumerate}
        \item The challenger generates $\brackets{\sk,\ket{\chain^{\sk}_{n_e,n/2,n/2}}_{\anchor\vessel}}\gets\genchain_{n_e,n/2,n/2}(1^\secp)$. It then runs $\sk^*\gets\domainextension_{n/4,n/4}(1^\secp)$ on $\vessel$ and gives the register $\vessel^*$ to $\cal{A}_M$. 
        \item $\cal{A}_M$ can choose to abort in this step. If it aborts, the output of the game is $\bot$. Otherwise, $\cal{A}_M$ generates a bipartite state on $\lreg\rreg$. It sends $\lreg$ to $\cal{A}_L$ and sends $\rreg$ to $\cal{A}_R$.
        \item The challenger samples $r_L\rand\mathbb{F}_2^{n_e+3n/2}$, \textcolor{red}{$r_R\rand\mathbb{F}_2^{n_e+3n/2}$ and $b'_1\rand\mathbb{F}_2^{m_1-1}$. It gives $\cal{O}_{T^*+v^*}^{{\sf col},r_L}$ to $\cal{A}_L$. Then it samples $w_u\rand S_{\sf ext}^\perp+u_{\sf ext}$, gives $\ipd{r_L,v_{\sf ext}}$ and this oracle to $\cal{A}_R$:}
        \begin{itemize}
            \item \textcolor{red}{$\cal{O}_{{S^*}^\perp+w_u}^{{\sf ncol},r_R,b'_1}:\mathbb{F}_2^{n_e+3n/2}\rightarrow\mathbb{F}_2\cup\set{\bot}$ on input $z$ checks if it is in ${S^*}^\perp+w_u$. It returns $\bot$ if it is not. Otherwise, it returns $\ipd{r_R,\Can_{S_{\sf ext}^\perp}(w_u)}\times b'_1$}.
        \end{itemize}
        \item $\cal{A}_L$ returns $b_0^l$ and $\cal{A}_R$ returns $b_1^r$. The adversary wins if $b_0^l=\ipd{r_L,v_{\sf ext}}$ and \textcolor{red}{$b_1^r=\ipd{r_R,\Can_{S_{\sf ext}^\perp}(w_u)}\times b'_1$}.
    \end{enumerate}
    \begin{claim}
        The non-aborting probability for $\cal{A}$ in \textbf{Hybrid 5} is at least $\delta(\secp)$ and conditional on that, the winning probability is at least $\frac{1}{2}+\delta(\secp)-\negl(\secp)$.
    \end{claim}
    \begin{proof}
        In the adversary's view, $\ipd{r_R,u_{\sf ext}}\times b'_1$ is a uniform random string and is equivalent to $b_1$. Also $T^*+v^*$ and $T^*+w_v$ are the same coset, and similarly ${S^*}^\perp+u^*$ and ${S^*}^\perp+w_u$ are the same coset..
    \end{proof}
    \textbf{Hybrid 6:} This hybrid corresponds to the adversary $\cal{A}$ in the following game:
    \begin{enumerate}
        \item The challenger generates $\brackets{\sk,\ket{\chain^{\sk}_{n_e,n/2,n/2}}_{\anchor\vessel}}\gets\genchain_{n_e,n/2,n/2}(1^\secp)$. It then runs $\sk^*\gets\domainextension_{n/4,n/4}(1^\secp)$ on $\vessel$ and gives the register $\vessel^*$ to $\cal{A}_M$. 
        \item $\cal{A}_M$ can choose to abort in this step. If it aborts, the output of the game is $\bot$. Otherwise, $\cal{A}_M$ generates a bipartite state on $\lreg\rreg$. It sends $\lreg$ to $\cal{A}_L$ and sends $\rreg$ to $\cal{A}_R$.
        \item The challenger samples $r_L\rand\mathbb{F}_2^{n_e+3n/2}$, $r_R\rand\mathbb{F}_2^{n_e+3n/2}$ and $b'_1\rand\mathbb{F}_2^{m_1-1}$. It gives $\cal{O}_{T^*+v^*}^{{\sf col},r_L}$ to $\cal{A}_L$. Then it samples $w_u\rand S_{\sf ext}^\perp+u_{\sf ext}$, gives $\ipd{r_L,v_{\sf ext}}$ and this oracle to $\cal{A}_R$:
        \begin{itemize}
            \item \textcolor{red}{$\cal{O}_{{S^*}^\perp+w_u}^{{\sf col},r_R,b'_1}$}$:\mathbb{F}_2^{n_e+3n/2}\rightarrow\mathbb{F}_2\cup\set{\bot}$ on input $z$ checks if it is in ${S^*}^\perp+w_u$. It returns $\bot$ if it is not. Otherwise, \textcolor{red}{it returns $\ipd{r_R,\Can_{S_{\sf ext}^\perp}(z)}\times b'_1$}.
        \end{itemize}
        \item $\cal{A}_L$ returns $b_0^l$ and $\cal{A}_R$ returns $b_1^r$. The adversary wins if $b_0^l=\ipd{r_L,v_{\sf ext}}$ and \textcolor{red}{$b_1^r=\ipd{r_R,u_{\sf ext}}\times b'_1$}.
    \end{enumerate}
    \begin{claim}
        The non-aborting probability for $\cal{A}$ in \textbf{Hybrid 6} is at least $\delta(\secp)$ and conditional on that, the winning probability is at least $\frac{1}{2}+\delta(\secp)-\negl(\secp)$.
    \end{claim}
    \begin{proof}
        Similar to the proof of \Cref{claim:use_collapsing} except this time in the Hadamard basis.
    \end{proof}
    \textbf{Hybrid 7:} This hybrid corresponds to the adversary $\cal{A}$ in the following game:
    \begin{enumerate}
        \item The challenger generates $\brackets{\sk,\ket{\chain^{\sk}_{n_e,n/2,n/2}}_{\anchor\vessel}}\gets\genchain_{n_e,n/2,n/2}(1^\secp)$. It then runs $\sk^*\gets\domainextension_{n/4,n/4}(1^\secp)$ on $\vessel$ and gives the register $\vessel^*$ to $\cal{A}_M$. 
        \item $\cal{A}_M$ can choose to abort in this step. If it aborts, the output of the game is $\bot$. Otherwise, $\cal{A}_M$ generates a bipartite state on $\lreg\rreg$. It sends $\lreg$ to $\cal{A}_L$ and sends $\rreg$ to $\cal{A}_R$.
        \item The challenger samples $r_L\rand\mathbb{F}_2^{n_e+3n/2}$, $r_R\rand\mathbb{F}_2^{n_e+3n/2}$ and $b'_1\rand\mathbb{F}_2^{m_1-1}$. It gives $\cal{O}_{T^*+v^*}^{{\sf col},r_L}$ to $\cal{A}_L$. \textcolor{red}{Then it gives $\ipd{r_L,v_{\sf ext}}$ and $\cal{O}_{{S^*}^\perp+u^*}^{{\sf col},r_R,b'_1}$ to $\cal{A}_R$.}
        \item $\cal{A}_L$ returns $b_0^l$ and $\cal{A}_R$ returns $b_1^r$. The adversary wins if $b_0^l=\ipd{r_L,v_{\sf ext}}$ and $b_1^r=\ipd{r_R,u_{\sf ext}}\times b'_1$.
    \end{enumerate}
    \begin{claim}
        The non-aborting probability for $\cal{A}$ in \textbf{Hybrid 7} is at least $\delta(\secp)$ and conditional on that, the winning probability is at least $\frac{1}{2}+\delta(\secp)-\negl(\secp)$.
    \end{claim}
    \begin{proof}
        ${S^*}^\perp+u^*$ and ${S^*}^\perp+w_u$ are the same coset.
    \end{proof}
    Using \textbf{Hybrid 7} as a tool, we construct an adversary $\cal{B}$ that violates \Cref{lem:asymmetric_anchor_monogamy}. $\cal{B}$ simulates part of the challenger in \textbf{Hybrid 7}.
    \begin{enumerate}
        \item $\cal{B}_M$ receives the $\vessel$ register. It then runs $\sk^*\gets\domainextension_{n/4,n/4}(1^\secp)$ on $\vessel$ and gives the register $\vessel^*$ to $\cal{A}_M$ to obtain $\lreg\rreg$. $\cal{B}_M$ aborts if $\cal{A}_M$ aborts. It samples $r_L^s\rand\mathbb{F}_2^{n/4}$. It sends $\sk^*,\lreg,r_L^s$ to $\cal{B}_L$ and sends $\sk^*,\rreg,r_L^s$ to $\cal{B}_R$.
        \item $\cal{B}_L$ receives $\sk^*,\lreg,r_L^s$ from $\cal{B}_M$ and $(U_{\sf shift},r_L)$ from the challenger. It samples $r_L^h\rand\mathbb{F}_2^{n/4}$ and computes $r'_L={U_{\sf shift}^*}^{-t}\brackets{r_L^h\times r_L\times r_L^s}$. It runs $\cal{A}_L$ on $\lreg$ with oracle access to $\cal{O}_{T^*+v^*}^{{\sf col},r'_L}$. Let the output of $\cal{A}_L$ be $b^l_0$. It outputs $b^l_0\oplus\ipd{{U_{\sf shift}^*}^{-t}\brackets{0^{n_e+5n/4}\times r_L^s},v^*}$.
        \item $\cal{B}_R$ receives $\sk^*,\rreg,r_L^s$ from $\cal{B}_M$ and $(U_{\sf shift},r_R,b_0)$ from the challenger. It computes $b'_0=b_0\oplus\ipd{{U_{\sf shift}^*}^{-t}\brackets{0^{n_e+5n/4}\times r_L^s},v^*}$. It samples $b'_1\rand\mathbb{F}_2^{m_1-1}$, $r_R^h,r_R^s\rand\mathbb{F}_2^{n/4}$ and computes $r'_R=U_{\sf shift}^*\brackets{r_R^h\times r_R\times r_R^s}$. It runs $\cal{A}_R$ on $\rreg$ with $b'_0$ and oracle access to $\cal{O}_{{S^*}^\perp+u^*}^{{\sf col},r'_R,b'_1}$. Let the first bit of the output of $\cal{A}_R$ be $b^r_1$. It outputs $b^r_1\oplus\ipd{U_{\sf shift}^*\brackets{r_R^h\times 0^{n_e+5n/4}},u^*}$.
    \end{enumerate}
    \begin{claim}
        We have
        \begin{equation*}
            \prob{\AsymmetricAnchorMonogamy(\cal{B},1^\secp)\neq\bot}\geq\delta(\secp)
        \end{equation*}
        and 
        \begin{equation*}
            \frac{\prob{\AsymmetricAnchorMonogamy(\cal{B},1^\secp)=1}}{\prob{\AsymmetricAnchorMonogamy(\cal{B},1^\secp)\neq\bot}}\geq\frac{1}{2}+\delta(\secp)-\negl(\secp).
        \end{equation*}
    \end{claim}
    \begin{proof}
        Clearly the aborting probability of $\cal{B}$ is the same as the aborting probability of $\cal{A}$ in \textbf{Hybrid 7}. For the winning probability of $\cal{B}$, first we show that if $\cal{A}$ outputs correctly (as in \textbf{Hybrid 7}) then the output of $\cal{B}$ is correct. Suppose the output of $\cal{A}_L$ is correct: $b_0^l=\ipd{r'_L,v_{\sf ext}}$, then the output of $\cal{B}_L$ is correct:
        \begin{align*}
            &b_0^l\oplus\ipd{{U_{\sf shift}^*}^{-t}\brackets{0^{n_e+5n/4}\times r_L^s},v^*}\\
            =&\ipd{r'_L,v_{\sf ext}}\oplus\ipd{{U_{\sf shift}^*}^{-t}\brackets{0^{n_e+5n/4}\times r_L^s},v^*}\\
            =&\ipd{{U_{\sf shift}^*}^{-t}\brackets{r_L^h\times r_L\times r_L^s},U_{\sf shift}^*\brackets{0^{n/4}\times v\times 0^{n/4}}+v^*}\oplus\ipd{{U_{\sf shift}^*}^{-t}\brackets{0^{n_e+5n/4}\times r_L^s},v^*}\\
            =&\ipd{{U_{\sf shift}^*}^{-t}\brackets{r_L^h\times r_L\times r_L^s},U_{\sf shift}^*\brackets{0^{n/4}\times v\times 0^{n/4}}}\oplus\ipd{{U_{\sf shift}^*}^{-t}\brackets{r_L^h\times r_L\times 0^{n/4}},v^*}\\
            =&\ipd{r_L,v}\oplus\ipd{{U_{\sf shift}^*}^{-t}\brackets{r_L^h\times r_L\times 0^{n/4}},U_{\sf shift}^*\brackets{0^{n_e+5n/4}\times {v'}^*}}\\
            =&\ipd{r_L,v}.
        \end{align*}
        Suppose the output of $\cal{A}_R$ is correct for its first bit: $b^r_1=\ipd{r'_R,u_{\sf ext}}$, then the output of $\cal{B}_R$ is correct:
        \begin{align*}
            &b^r_1\oplus\ipd{U_{\sf shift}^*\brackets{r_R^h\times 0^{n_e+5n/4}},u^*}\\
            =&\ipd{r'_R,u_{\sf ext}}\oplus\ipd{U_{\sf shift}^*\brackets{r_R^h\times 0^{n_e+5n/4}},u^*}\\
            =&\ipd{U_{\sf shift}^*\brackets{r_R^h\times r_R\times r_R^s},{U_{\sf shift}^*}^{-t}\brackets{0^{n/4}\times u\times 0^{n/4}}+u^*}\oplus\ipd{U_{\sf shift}^*\brackets{r_R^h\times 0^{n_e+5n/4}},u^*}\\
            =&\ipd{U_{\sf shift}^*\brackets{r_R^h\times r_R\times r_R^s},{U_{\sf shift}^*}^{-t}\brackets{0^{n/4}\times u\times 0^{n/4}}}\oplus\ipd{U_{\sf shift}^*\brackets{0^{n/4}\times r_R\times r_R^s},u^*}\\
            =&\ipd{r_R,u}\oplus\ipd{U_{\sf shift}^*\brackets{0^{n/4}\times r_R\times r_R^s},{U_{\sf shift}^*}^{-t}\brackets{{u'}^*\times 0^{n_e+5n/4}}}\\
            =&\ipd{r_R,u}.
        \end{align*}
        Now we show that the input distribution of $\cal{A}$ in $\cal{B}$ is the same as the input distribution it receives from the challenger in \textbf{Hybrid 7}. By definition, $r'_L$ and $r'_R$ are two uniform random vectors in $\mathbb{F}_2^{n_e+3n/2}$ and $b'_1$ is a uniform random vector in $\mathbb{F}_2^{m_1-1}$. And $b'_0$, given to $\cal{A}_R$, is the correct answer for $\cal{A}_L$:
        \begin{align*}
            b'_0&=b_0\oplus\ipd{{U_{\sf shift}^*}^{-t}\brackets{0^{n_e+5n/4}\times r_L^s},v^*}\\
            &=\ipd{r_L,v}\oplus\ipd{{U_{\sf shift}^*}^{-t}\brackets{0^{n_e+5n/4}\times r_L^s},v^*}\\
            &=\ipd{r'_L,v_{\sf ext}}.
        \end{align*}
        The last equation comes from previous calculations on the correctness of $\cal{B}_L$'s output.
    \end{proof}
\end{proof}
\subsection{Multi-Stage Monogamy-of-Entanglement Games}
\begin{lemma}\label{lem:multi-stage_decision_monogamy-of-entanglement}
    Take any adversary $\cal{A}$ in the multi-stage monogamy-of-entanglement game described in \Cref{def:multi-Stage_monogamy-of-entanglement} such that
    \begin{equation*}
        \prob{\MultiStageDecisionMonogamy(\cal{A},1^\secp)\neq\bot}=\eps'(\secp)
    \end{equation*}
    and
    \begin{equation*}
        \frac{\prob{\MultiStageDecisionMonogamy(\cal{A},1^\secp)=1}}{\prob{\MultiStageDecisionMonogamy(\cal{A},1^\secp)\neq\bot}}=1-\eps(\secp)
    \end{equation*}
    for inverse polynomials $\eps(\secp),\eps'(\secp)$. Consider the state $\key\anchor\lreg\mreg\rreg\breg$ at \BeforeSplit conditioned on $\cal{A}_M^0$ not aborting, which we can write as (we also purify $\sk$ in register $\key$ here):
    
    \begin{equation*}
        \sum_{\sk}\sum_{x}\alpha_{\sk,x}\ket{\sk}_{\key}\ket{x}_{\anchor}\brackets{\frac{1}{\sqrt{2}}\ket{\psi^\sk_{x,0}}_{\lreg\mreg\rreg}\ket{0}_\breg+\frac{1}{\sqrt{2}}\ket{\psi^\sk_{x,1}}_{\lreg\mreg\rreg}\ket{1}_\breg}.
    \end{equation*}
    where the state $\ket{\psi^\sk_{x,0}}$ and $\ket{\psi^\sk_{x,1}}$ are normalized states. We have,
    \begin{equation*}
        \expectc{\sk,x}{\abs{\braket{\psi^\sk_{x,0}}{\psi^\sk_{x,1}}}^2}\leq4\eps(\secp)+\negl(\secp)
    \end{equation*}
    where the expectation is over $x$, the standard basis value on $\anchor$ and over $\sk$, generated by the process of $\brackets{\sk,\ket{\chain^\sk_{n_e,n,n}}_{\anchor\vessel}}\gets\genchain_{n_e,n,n}(1^\secp)$. The expectation is weighted where the weight on $\sk,x$ is $\abs{\alpha_{\sk,x}}^2$. \textbf{When we say expectation over this quantity throughout this section, we mean the above weighted definition.}
\end{lemma}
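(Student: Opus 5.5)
We argue by contradiction through a reduction to \Cref{lem:asymmetric_oracle_monogamy}: if the two branches $\ket{\psi^\sk_{x,0}}$ and $\ket{\psi^\sk_{x,1}}$ at \BeforeSplit have large overlap, then the bit $b_0$ has not yet been ``committed'' in the middle. In that case we can build an adversary $\cal{B}=(\cal{B}_M,\cal{B}_L,\cal{B}_R)$ that wins \AsymmetricOracleMonogamy with probability noticeably above $\frac12$ conditioned on not aborting. Two preliminary adjustments are needed. First, the public oracles $\cal{O}_{T+v},\cal{O}_{S^\perp+u}$ given to all parties must be handled. We use domain extension (\Cref{fact:extension_indistinguishability,fact:simulating_extension_oracles}) together with \Cref{lem:subspace_oracle_indistinguishability}, so that the oracles simulated by the reduction from its own secret extension key are indistinguishable from the real ones. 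Second, the dimensions $n$ versus $3n/4$ must be matched, which is also handled by domain extension.

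\textbf{Step 1 (relating overlap to a swap test).} Let $\ket{\Phi}$ be the \BeforeSplit state and let $Z$ denote $Z$ acting on $\breg$. A direct computation gives
\[
\bra{\Phi}Z_\breg\ket{\Phi}=0
\]
and
\[
\bra{\Phi}(X_\breg\otimes I)\ket{\Phi}=\expectc{\sk,x}{\mathrm{Re}\braket{\psi^\sk_{x,0}}{\psi^\sk_{x,1}}}.
\]
Up to an adversary-controlled phase, a large overlap therefore means the following. After \BeforeSplit, applying $X_\breg$ (that is, swapping the meaning of $b_0$) leaves the state of $\lreg\mreg\rreg$ nearly unchanged on average. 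More precisely, the weighted fidelity between the two conditional states of $\anchor\lreg\mreg\rreg\key$ is $\expectc{\sk,x}{\abs{\braket{\psi_{x,0}}{\psi_{x,1}}}^2}=:\gamma$. We absorb the phase by also conditioning on $\key,\anchor$. This is allowed because the reduction only needs an information-theoretic statement about states.

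\textbf{Step 2 (the reduction).} $\cal{B}_M$ runs $\cal{A}_M^0$ and then $\cal{A}_M^1$ with a \emph{fake} purified oracle register $\breg'$. At \BeforeSplit it sends $\lreg\mreg_L$ to $\cal{B}_L$. It sends $\rreg\mreg_R$, together with $\mreg'$ and $\breg'$, to $\cal{B}_R$. The reason $\cal{B}_R$ can continue is that, in \AsymmetricOracleMonogamy, $\cal{B}_R$ receives the true $b_0$ and $\cal{O}_{S^\perp+u}^{b_1}$. It can then run $\cal{A}_R^0$, $\cal{A}_M^2$ and $\cal{A}_R^1$ (with $\cal{O}_{T+v}^{b_0}$ realized by swapping $\breg'$ to the classical $b_0$) and output $b_1^r$. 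Meanwhile $\cal{B}_L$ runs $\cal{A}_L^0$ on $\lreg\mreg_L$ with the real $\cal{O}_{T+v}^{b_0}$ and outputs $b_0^l$.

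The key point is that $\cal{B}_L$'s view used $\breg'$ only through the middle. Because of the overlap $\gamma$, the left party's success at outputting the true $b_0$ is essentially unaffected by whether the middle's $b_0$ matches. We make this quantitative via the gentle-measurement/Fuchs--van de Graaf inequalities applied to the branches. The resulting bound is that the probability $\cal{B}$ wins, conditioned on not aborting, is at least
\[
\tfrac12(1-\eps)+\tfrac12\cdot\tfrac{\gamma}{4}-\text{(error)},
\]
which exceeds $\frac12+\negl$ unless $\gamma\le 4\eps+\negl$. Roughly: $\cal{A}_L^0$'s answer must correlate with the $\breg$ value it was actually given. If the middle state carries no $b_0$ information, the left answer is determined by the left oracle alone. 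In addition, the right side always receives consistent inputs.

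\textbf{Main obstacle.} The delicate part is Step 2's accounting. We need to show that replacing the middle's $b_0$ branch by an independent one changes the \emph{joint} winning probability by at most about $2\sqrt{1-\gamma}$-type terms, while preserving the asymmetric information flow. I would first try to decompose the winning projector as $\sum_{b}\Pi^L_b\otimes\Pi^R_b$. I would then bound the cross term $\bra{\psi_{x,0}}\Pi^L_0\otimes\Pi^R_1\ket{\psi_{x,1}}$ using the fact that the real adversary wins with probability $1-\eps$ in both branches. This should yield the constant $4$ in the bound $4\eps$.
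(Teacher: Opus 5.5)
Your architecture matches the paper's. You decouple the bit $b'_0$ used by $\cal{A}_M^1$ from the bit $b_0$ used after \textbf{BeforeSplit}. You then bound the decoupled game by $\frac12+\negl(\secp)$ via a reduction to \Cref{lem:asymmetric_oracle_monogamy}: $\cal{B}_M$ runs $\cal{A}_M^0,\cal{A}_M^1$ with a self-simulated $\cal{O}^{b'_0}_{T+v}$, $\cal{B}_L$ runs $\cal{A}_L^0$, and $\cal{B}_R$ runs $\cal{A}_M^2,\cal{A}_R^0,\cal{A}_R^1$, with subspace bloating and domain extension handling the membership oracles.

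The gap is the one step that produces the constant $4$: bounding how much winning probability the decoupling costs. You leave this as the ``main obstacle'', and you assert the bound $\frac12(1-\eps)+\frac{\gamma}{8}$ without deriving it. The route you sketch would not give it. The regime that matters is $\gamma$ slightly above $4\eps$, i.e.\ nearly orthogonal branches, where the decoupled game is only barely above $\frac12$. So the loss must be bounded by exactly $\frac12-\frac{\gamma}{4}$, and constants are decisive. A loss of the form $c\sqrt{1-\gamma}$ with $c>\frac12$ (typical of gentle-measurement or Fuchs--van de Graaf chains) is useless here; your $2\sqrt{1-\gamma}$ exceeds $\frac12$ unless $\gamma>15/16$. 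The cross terms $\langle\psi^\sk_{x,0}|\Pi^L_0\otimes\Pi^R_1|\psi^\sk_{x,1}\rangle$ you propose to bound never arise either. After \textbf{BeforeSplit} the bit register is used only as a computational-basis control, so it can be measured there without changing the winning probability. Your Step 1 is therefore not needed.

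The missing argument is a direct trace-distance comparison of the two full games. Let $\Pi_b$ be the projector, controlled on $\key$, that runs every component after \textbf{BeforeSplit} with oracle bit $b$ and projects onto winning; it never touches the middle's copy of the bit. The original game wins with probability $\E_{\sk,x}\frac12\sum_b\langle\psi^\sk_{x,b}|\Pi_b|\psi^\sk_{x,b}\rangle$. The decoupled game wins with probability $\E_{\sk,x}\frac14\sum_{b,b'}\langle\psi^\sk_{x,b'}|\Pi_b|\psi^\sk_{x,b'}\rangle$. Hence
\[
\text{(original)}-\text{(decoupled)}=\tfrac14\,\E_{\sk,x}\,\mathrm{tr}\bigl[(\Pi_0-\Pi_1)(\psi^\sk_{x,0}-\psi^\sk_{x,1})\bigr]\le\tfrac12\,\E_{\sk,x}\sqrt{1-\abs{\langle\psi^\sk_{x,0}|\psi^\sk_{x,1}\rangle}^2}\le\tfrac12-\tfrac14\gamma ,
\]
using $\sqrt{1-y}\le 1-\frac{y}{2}$. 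The factor $\frac12$ appears because the decoupling changes anything only when $b'\neq b$. Plugging in $1-\eps$ for the original game and $\frac12+\negl(\secp)$ for the decoupled game gives $\gamma\le 4\eps+\negl(\secp)$.
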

\begin{remark}
    Here we assume WLOG that $\cal{A}$'s state is always purified. The purification can be taken to live within registers $\lreg$ or $\rreg$, and ignored by later computations. The reason we do not allow it to go into $\mreg$ is because we later extract from the middle prover, as will be made clear in later theorems. In later of this section, expectations are weighted in the same manner where these uneven weights are due to the fact that we are conditioning on $\cal{A}_M^0$ not aborting that may depend on $\sk,x$.
\end{remark}
\begin{proof}
    Consider this modified version of the multi-stage monogamy-of-entanglement game, where the challenge oracles given to $\cal{A}_M^1$ and $(\cal{A}_L^0,\cal{A}_R^0)$ return independently sampled random bits ($b_0'$ and $b_0$, respectively).
    \newcommand{\ModifiedMultiStageDecisionMonogamy}{{\sf ModifiedMultiStageDecisionMonogamy}\xspace}
    \begin{enumerate}
        \item The challenger generates $\brackets{\sk,\ket{\chain^{\sk}_{n_e,n,n}}_{\anchor\vessel}}\gets\genchain_{n_e,n,n}(1^\secp)$ and gives the register $\vessel$ to $\cal{A}_M$. All parties of the adversary are given oracle access to $\cal{O}_{T+v}$ and $\cal{O}_{S^\perp+u}$.
        \item $\cal{A}_M^0$ can choose to abort in this step. If it aborts, the output of the game is $\bot$. Otherwise, it generates a tripartite state on $\lreg\mreg\rreg$. It sends $\lreg$ to $\cal{A}_L^0$, sends $\mreg$ to $\cal{A}_M^1$ and sends $\rreg$ to $\cal{A}_R^0$.
        \item $\cal{A}_M^1$ is given the access to \textcolor{red}{$\cal{O}_{T+v}^{b'_0}$} for random \textcolor{red}{$b'_0\rand\mathbb{F}_2$}. We can consider purified version where we create a register that stores $b'_0$, initialized to $\ket{+}_{\breg'}$, a uniform superposition over all \textcolor{red}{$b'_0$}. And $\cal{A}_M^1$ is able to perform the following operation to access $\breg'$:
        \begin{equation*}
            \sum_{b'_0\in\mathbb{F}_2}\ket{b'_0}\bra{b'_0}_{\breg'}\otimes\cal{O}_{T+v}^{b'_0}.
        \end{equation*}
        $\cal{A}_M^1$ with access to this oracle and $\mreg$ generates a tripartite state on $\mreg_L\mreg'\mreg_R$. It sends $\mreg_L$ to $\cal{A}_L^0$, sends $\mreg'$ to $\cal{A}_M^2$ and sends $\mreg_R$ to $\cal{A}_R^0$.
        \item $\cal{A}_L^0$ and $\cal{A}_R^0$ are given the access to $\cal{O}_{T+v}^{b_0}$ \textcolor{red}{for fresh generated $b_0\rand\mathbb{F}_2$. Similarly, we do this in the purified way and introduce a $\ket{+_\breg}_\breg$ as the control register.} $\cal{A}_L^0$ on $\lreg\mreg_L$ produces answer $b_0^l$ and a state on register $\lreg'$ that is sent to $\cal{A}_L^1$. $\cal{A}_R^0$ on $\rreg\mreg_R$ produces answer $b_0^r$ and a state on register $\rreg'$ that is sent to $\cal{A}_R^1$.
        \item $\cal{A}_M^2$ is given the access to $\cal{O}_{S^\perp+u}^{b_1}$ for random $b_1\rand\mathbb{F}_2^{m_1}$. It generates a bipartite state on $\mreg'_L\mreg'_R$. $\mreg'_L$ is given to $\cal{A}_L^1$ and $\mreg'_R$ is given to $\cal{A}_R^1$. 
        \item $\cal{A}_L^1$ and $\cal{A}_R^1$ are given the access to $\cal{O}_{T+v}^{b_0}$ and $\cal{O}_{S^\perp+u}^{b_1}$. $\cal{A}_L^1$ on $\lreg'\mreg'_L$ generates the answer $b_1^l$. $\cal{A}_R^1$ on $\rreg'\mreg'_R$ generates the answer $b_1^r$. The adversary wins iff $b_0^l=b_0^r=b_0$ and $b_1^l=b_1^r=b_1$. 
    \end{enumerate}
    Let $\ModifiedMultiStageDecisionMonogamy(\cal{A},1^\secp)$ be the random variable that takes value $1/0/\bot$ if the adversary $\cal{A}$ wins/loses/aborts in the above game, respectively. This game is displayed in \Cref{fig:modified_multi-stage_monogamy-of-entanglement}.
    \begin{figure}[h]
        \centering
        \begin{tikzpicture}[
            node distance=1.6cm and 2.2cm,
            every node/.style={font=\small},
            box/.style={draw, rounded corners, align=center, minimum width=3.2cm, minimum height=1.1cm},
            oracle/.style={draw, dashed, rounded corners, align=center, minimum width=3.2cm, minimum height=1.1cm},
            arr/.style={->, thick}
        ]
        
        \node[box] (chal) {Challenger\\
        $\brackets{\sk,\ket{\chain^{\sk}_{n_e,n,n}}_{\anchor\vessel}}$\\
        $\gets\genchain_{n_e,n,n}(1^\secp)$};
        
        \node[box, right=of chal] (AM0) {$\mathcal A_M^0$};
        \node[box, right=of AM0] (PublicOracles) {Public Oracles:\\
        $\cal{O}_{T+v},\cal{O}_{S^\perp+u}$};
        
        \node[box, below=of AM0] (AM1) {\textcolor{red}{$\mathcal A_M^1\leftrightarrows\cal{O}_{T+v}^{b'_0}$}};
        \node[box, below left=of AM1] (AL0) {$\mathcal A_L^0\leftrightarrows\cal{O}_{T+v}^{b_0}$};
        \node[box, below right=of AM1] (AR0) {$\mathcal A_R^0\leftrightarrows\cal{O}_{T+v}^{b_0}$};
        
        \node[box, below=of AM1] (AM2) {$\mathcal A_M^2\leftrightarrows\cal{O}_{T+v}^{b_0},\cal{O}_{S^\perp+u}^{b_1}$};
        
        \node[box, below=of AL0] (AL1) {$\mathcal A_L^1\leftrightarrows\cal{O}_{T+v}^{b_0},\cal{O}_{S^\perp+u}^{b_1}$};
        \node[box, below=of AR0] (AR1) {$\mathcal A_R^1\leftrightarrows\cal{O}_{T+v}^{b_0},\cal{O}_{S^\perp+u}^{b_1}$};
        
        \draw[arr] (chal) -- node[above] {$\vessel$} (AM0);
        
        \draw[arr] (AM0) -- node[above] {$\lreg$} (AL0);
        \draw[arr] (AM0) -- node[above] {$\rreg$} (AR0);
        \draw[arr] (AM0) -- node[right] {$\mreg$} (AM1);
        
        \draw[arr] (AM1) -- node[above] {$\mreg_L$} (AL0);
        \draw[arr] (AM1) -- node[above] {$\mreg_R$} (AR0);
        \draw[arr] (AM1) -- node[right] {$\mreg'$} (AM2);
        
        \draw[arr] (AL0) -- node[left] {$\lreg'$} (AL1);
        \draw[arr] (AR0) -- node[right] {$\rreg'$} (AR1);
        
        \draw[arr] (AM2) -- node[above] {$\mreg'_L$} (AL1);
        \draw[arr] (AM2) -- node[above] {$\mreg'_R$} (AR1);
        \path
        ($(AL0.north west)+(-1.2cm,0)$)
        ($(AR0.north east)+(1.2cm,0)$);
        
        \draw[dashed, thick]
        ($(AL0.north west)+(0cm,1.7cm)$) --
        ($(AR0.north east)+(-2.0cm,1.7cm)$)
        node[right=6pt, font=\scriptsize] {\BeforeSplit};
        
        \node[left=0.3cm of AL0] (b0l){$b_0^l$};
        \node[right=0.3cm of AR0] (b0r){$b_0^r$};
        \node[left=0.3cm of AL1] (b1l){$b_1^l$};
        \node[right=0.3cm of AR1] (b1r){$b_1^r$};
        \draw[arr] (AL0) -- node[left] {} (b0l);
        \draw[arr] (AR0) -- node[right] {} (b0r);
        \draw[arr] (AL1) -- node[left] {} (b1l);
        \draw[arr] (AR1) -- node[right] {} (b1r);
        
        \end{tikzpicture}
    
        \caption{The modified multi-stage decision monogamy-of-entanglement game \ModifiedMultiStageDecisionMonogamy.}
        \label{fig:modified_multi-stage_monogamy-of-entanglement}
    \end{figure}
    \begin{claim}\label{claim:half_minus_epsilon_gap}
        For all $\secp\in\mathbb{N}$ and any adversary $\cal{A}$, the aborting probabilities are the same in both games,
        \begin{align*}
            &\prob{\ModifiedMultiStageDecisionMonogamy(\cal{A},1^\secp)=\bot}\\
            =&\prob{\MultiStageDecisionMonogamy(\cal{A},1^\secp)=\bot}
        \end{align*}
        and the winning probabilities conditioned on not aborting are related
        \begin{align*}
            &\frac{\prob{\ModifiedMultiStageDecisionMonogamy(\cal{A},1^\secp)=1}}{\prob{\ModifiedMultiStageDecisionMonogamy(\cal{A},1^\secp)\neq\bot}}\\
            \geq&\frac{\prob{\MultiStageDecisionMonogamy(\cal{A},1^\secp)=1}}{\prob{\MultiStageDecisionMonogamy(\cal{A},1^\secp)\neq\bot}}-\frac{1}{2}+\frac{1}{4}\expectc{\sk,x}{\abs{\braket{\psi^\sk_{x,0}}{\psi^\sk_{x,1}}}^2}.
        \end{align*}
    \end{claim}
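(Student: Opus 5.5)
The abort statement is immediate. $\cal{A}_M^0$ decides whether to abort before any oracle depending on $b_0$ or $b'_0$ is ever queried. Up to that point the two games are literally the same experiment, so the abort events coincide. For the winning probabilities, I will compare the two games at \BeforeSplit, conditioned on not aborting. Up to \BeforeSplit they are also identical, with the register $\breg'$ of the modified game playing the role of $\breg$ in the original. So in both games the state at \BeforeSplit is the one written in \Cref{lem:multi-stage_decision_monogamy-of-entanglement}, namely
$\sum_{\sk,x}\alpha_{\sk,x}\ket{\sk}_{\key}\ket{x}_{\anchor}\frac{1}{\sqrt2}\sum_{b'}\ket{\psi^\sk_{x,b'}}\ket{b'}$. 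The two games differ only in the continuation: the original game reuses the same $b$ for $\cal{A}_L^0,\cal{A}_R^0,\cal{A}_M^2,\cal{A}_L^1,\cal{A}_R^1$, whereas the modified game uses a fresh uniform $b_0$ and leaves $\breg'$ untouched.

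The continuation is block diagonal in $\sk$ (it is stored classically in $\key$ and only controls the oracles) and in $x$ (no party touches $\anchor$). It is also controlled by the classical value $b_0$. So for each fixed $(\sk,x,b_0,b_1)$ I will write the continuation as a unitary $V^{\sk}_{b_0,b_1}$ followed by a final measurement. Acceptance is then a projector $Q^{\sk}_{b_0}$ after averaging over $b_1$, which can be absorbed by purification. Define $p^{\sk,x}(b,b'):=\norm{Q^\sk_b V^\sk_b\ket{\psi^\sk_{x,b'}}}^2$, averaged over $b_1$. Then the conditional winning probabilities are
\[
\text{orig}=\expectc{\sk,x}{\tfrac12\textstyle\sum_b p^{\sk,x}(b,b)},\qquad \text{mod}=\expectc{\sk,x}{\tfrac14\textstyle\sum_{b,b'} p^{\sk,x}(b,b')}.
\]
In the modified game the branches with different $b'$ are orthogonal on $\breg'$, which the continuation never touches, so their contributions add without cross terms. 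Both use the same weights $\abs{\alpha_{\sk,x}}^2$.

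The key inequality is the following standard fact. For any projector $Q$ and unit vectors $\ket a,\ket b$ with $\abs{\braket ab}^2=F$, we have $\norm{Q\ket b}^2\ge\norm{Q\ket a}^2-\sqrt{1-F}\ge\norm{Q\ket a}^2-1+F/2$. The first step is trace distance between pure states, applied after the common unitary $V^\sk_b$. The second step uses $\sqrt{1-F}\le 1-F/2$. Applying it with $\ket a=\ket{\psi_{x,b}}$ and $\ket b=\ket{\psi_{x,1-b}}$ gives $p(b,1-b)\ge p(b,b)-1+F^{\sk,x}/2$, where $F^{\sk,x}=\abs{\braket{\psi^\sk_{x,0}}{\psi^\sk_{x,1}}}^2$. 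Summing over $b$ yields
\[
\tfrac14\textstyle\sum_{b,b'}p(b,b')\ge\tfrac12\textstyle\sum_b p(b,b)-\tfrac12+\tfrac14F^{\sk,x}.
\]
Taking the weighted expectation over $(\sk,x)$, which is linear, gives the claim.

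The main obstacle is making the block decomposition rigorous. I need to justify three things: that the acceptance event for fixed $(\sk,x,b)$ is a single projector acting on $\lreg\mreg\rreg$ plus ancillas, that the weights $\abs{\alpha_{\sk,x}}^2$ are the same in both games, and that no interference between different $b'$ branches can arise in the modified game. All three follow from the facts that $\key$, $\anchor$ and $\breg'$ are only ever used as controls or left untouched after \BeforeSplit, and that everything before \BeforeSplit coincides in the two games.
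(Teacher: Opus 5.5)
Your proposal is correct and follows the paper's argument. Both treat everything after \textbf{BeforeSplit} as a $b_0$-controlled measurement that leaves $\key$, $\anchor$ and $\breg'$ untouched as control registers. Both bound the gap between the conditional winning probabilities by half the averaged trace distance $\sqrt{1-\abs{\braket{\psi^\sk_{x,0}}{\psi^\sk_{x,1}}}^2}$ coming from the mismatched branches $b_0\neq b'_0$, and both finish with $\sqrt{1-F}\le 1-F/2$. Your bookkeeping via $p^{\sk,x}(b,b')$ is a more explicit version of the paper's ``maximum distinguishing probability'' step (which the paper writes as an equality but only uses as an upper bound); likewise, your reading that $\cal{A}_M^2$ also receives $\cal{O}_{T+v}^{b_0}$ in the modified game matches the paper's figure and its intended argument.
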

    \begin{proof}
        Given an adversary $\cal{A}$ we are able to construct two corresponding projector $\Pi_0^{\cal{A}},\Pi_1^{\cal{A}}$ which simulate the game from \BeforeSplit and project on the adversary winning in the case that $b_0=0$, $b_0=1$, respectively. It is crucial to notice that this projector does not act on $\breg'$, meaning that $\key$, $\anchor$, $\breg'$ and $\breg$ all serve as control registers when calculating the the maximum distinguishing probability between \ModifiedMultiStageDecisionMonogamy and \MultiStageDecisionMonogamy. In the game \ModifiedMultiStageDecisionMonogamy, the state at \BeforeSplit is
        \begin{align*}
            \propto\sum_{\sk}\sum_x\sum_{b',b\in\mathbb{F}_2}\alpha_{\sk,x}\ket{\sk}_{\key}\ket{x}_{\anchor}\ket{b'_0}_{\breg'}\ket{b_0}_\breg\ket{\psi^\sk_{x,b'_0}}_{\lreg\mreg\rreg}.
        \end{align*}
        In the game \MultiStageDecisionMonogamy, the state at \BeforeSplit is 
        \begin{align*}
            \propto\sum_{\sk}\sum_x\sum_{b',b\in\mathbb{F}_2}\alpha_{\sk,x}\ket{\sk}_{\key}\ket{x}_{\anchor}\ket{b'_0}_{\breg'}\ket{b_0}_\breg\ket{\psi^\sk_{x,b_0}}_{\lreg\mreg\rreg}.
        \end{align*}
        if we include an untouched $\breg'$ register. Thus the maximum distinguishing probability for any $\Pi_0^{\cal{A}},\Pi_1^{\cal{A}}$ is
        \begin{align*}
            &\frac{\prob{\MultiStageDecisionMonogamy(\cal{A},1^\secp)=1}}{\prob{\MultiStageDecisionMonogamy(\cal{A},1^\secp)\neq\bot}}\\
            &-\frac{\prob{\ModifiedMultiStageDecisionMonogamy(\cal{A},1^\secp)=1}}{\prob{\ModifiedMultiStageDecisionMonogamy(\cal{A},1^\secp)\neq\bot}}\\
            =&\frac{1}{2}\expectc{\sk,x}{\TraceDist{\ket{\psi^\sk_{x,0}}\bra{\psi^\sk_{x,0}}}{\ket{\psi^\sk_{x,1}}\bra{\psi^\sk_{x,1}}}}\\
            =&\frac{1}{2}\expectc{\sk,x}{\sqrt{1-\abs{\braket{\psi^\sk_{x,0}}{\psi^\sk_{x,1}}}^2}}\\
            \leq&\frac{1}{2}\expectc{\sk,x}{1-\frac{1}{2}\abs{\braket{\psi^\sk_{x,0}}{\psi^\sk_{x,1}}}^2}\\
            =&\frac{1}{2}-\frac{1}{4}\expectc{\sk,x}{\abs{\braket{\psi^\sk_{x,0}}{\psi^\sk_{x,1}}}^2}.
        \end{align*}
    \end{proof}
    Now we prove another claim showing that there is not noticeable advantage for any adversary to win \ModifiedMultiStageDecisionMonogamy.
    \begin{claim}\label{claim:relate_to_extended_MoE}
        The winning probability of any adversary $\cal{A}$ in \ModifiedMultiStageDecisionMonogamy is at most $\frac{1}{2}+\negl$. For any adversary $\cal{A}$ such that
        \begin{equation*}
            \prob{\ModifiedMultiStageDecisionMonogamy(\cal{A},1^\secp)\neq\bot}\geq\frac{1}{\poly(\secp)},
        \end{equation*}
        then we have
        \begin{equation*}
            \frac{\prob{\ModifiedMultiStageDecisionMonogamy(\cal{A},1^\secp)=1}}{\prob{\ModifiedMultiStageDecisionMonogamy(\cal{A},1^\secp)\neq\bot}}\leq\frac{1}{2}+\negl(\secp).
        \end{equation*}
    \end{claim}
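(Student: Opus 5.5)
The plan is to reduce \ModifiedMultiStageDecisionMonogamy to the \AsymmetricOracleMonogamy game of \Cref{lem:asymmetric_oracle_monogamy}. The point of the modification is that, in the modified game, $\cal{A}_M^1$ only ever sees $b'_0$, which is independent of the challenge bit $b_0$ that the verifiers check. So $\cal{A}_M^0$ and $\cal{A}_M^1$ can be merged into a single splitting party, and $b'_0$ can be sampled internally. The register $\breg'$ holding $b'_0$ is then private to the adversary, and we may put it in $\lreg$ or $\rreg$.

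We build $\cal{B}=(\cal{B}_M,\cal{B}_L,\cal{B}_R)$ as follows. First, a parameter mismatch must be handled. The multi-stage game uses $\genchain_{n_e,n,n}$, while \Cref{lem:asymmetric_oracle_monogamy} is stated for $\genchain_{n_e,3n'/4,3n'/4}$; we choose $n'=4n/3$, rounding via a harmless reparametrization, so the distributions coincide. The public oracles $\cal{O}_{T+v}$ and $\cal{O}_{S^\perp+u}$ are a further issue, since \AsymmetricOracleMonogamy does not provide them. Each party can simulate them from its challenge oracle: $\cal{O}_{T+v}(z)=\top$ iff $\cal{O}^{b_0}_{T+v}(z)\neq\bot$. $\cal{B}_M$ has no oracle, however. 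I would handle this by adding the public membership oracles to the game of \Cref{lem:asymmetric_oracle_monogamy}; its hybrid proof goes through unchanged, since the membership oracles are simulable from $\sk^*$ and the underlying oracles via \Cref{fact:simulating_extension_oracles}. I flag this as the main obstacle. It may require re-running the hybrids of \Cref{lem:asymmetric_oracle_monogamy}, in particular checking that \Cref{lem:collapsing_versus_non-collapsing} still applies when the extra membership oracle for $T+v$ is present alongside the collapsing one. The membership oracle is a function of the same coset $T^*+w_v$, so it should fit inside the ``isometry with oracle'' framework, with $\cal{O}_{\sf any}$ enlarged accordingly.

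The three parties are then defined as follows.
\begin{enumerate}
\item $\cal{B}_M$ runs $\cal{A}_M^0$ on $\vessel$ and aborts whenever it aborts. Because $b'_0$ is independent of everything the challenger uses, $\cal{B}_M$ samples $b'_0$ itself, or keeps $\ket{+}_{\breg'}$. Using only the public membership oracle, it simulates $\cal{O}^{b'_0}_{T+v}(z)$ as $b'_0$ if $z\in T+v$ and $\bot$ otherwise, and runs $\cal{A}_M^1$. It sends $\lreg\mreg_L$ together with $\mreg'$ and $\breg'$ to $\cal{B}_L$, and sends $\rreg\mreg_R$ to $\cal{B}_R$.
\item $\cal{B}_L$ has $\cal{O}^{b_0}_{T+v}$ and runs $\cal{A}_L^0$ to output $b_0^l$.
\item $\cal{B}_R$ receives $b_0$ and $\cal{O}^{b_1}_{S^\perp+u}$. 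It can simulate $\cal{O}^{b_0}_{T+v}$ from $b_0$ and the membership oracle. It runs $\cal{A}_R^0$, then $\cal{A}_M^2$, whose input $\mreg'$ must reach $\cal{B}_R$, and then $\cal{A}_R^1$, and it outputs $b_1^r$.
\end{enumerate}
For the last step to work, $\mreg'$ should be routed to $\cal{B}_R$ rather than $\cal{B}_L$; $\breg'$ may go to either side.

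Correctness of the simulation is then immediate. The modified game never ties $b'_0$ to $b_0$, so the joint view is identical to \ModifiedMultiStageDecisionMonogamy and the non-abort probability is the same. If $\cal{A}$ wins, then $b_0^l=b_0$ and $b_1^r=b_1$, so $\cal{B}$ wins. \Cref{lem:asymmetric_oracle_monogamy} then gives the conditional bound $\frac12+\negl(\secp)$ whenever the non-abort probability is at least $1/\poly(\secp)$. Note that $m_0=1$ matches the one-bit left oracle there, and $m_1$ is arbitrary.

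The unconditional statement, that the winning probability is at most $\frac12+\negl$, follows by the same reduction, because \Cref{lem:asymmetric_oracle_monogamy} and the underlying \Cref{thm:coset_monogamy} hold without abort.
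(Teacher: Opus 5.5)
Your reduction $\cal{B}$ matches the paper's: the middle party samples $b_0'$ itself, $\lreg\mreg_L$ goes left, $\rreg\mreg_R\mreg'$ goes right, and the right party runs $\cal{A}_R^0$, $\cal{A}_M^2$, $\cal{A}_R^1$. The gap is exactly at the step you flagged, and your proposed fix does not close it.

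You add the public membership oracles $\cal{O}_{T+v},\cal{O}_{S^\perp+u}$ to the game of \Cref{lem:asymmetric_oracle_monogamy} and claim its hybrids go through unchanged, because these oracles are simulable ``from $\sk^*$ and the underlying oracles'' via \Cref{fact:simulating_extension_oracles}. But only the oracles of the bloated cosets $T^*+v^*$ and ${S^*}^\perp+u^*$ are simulable from $\sk^*$ alone. The oracles of the true cosets $T_{\sf ext}+v_{\sf ext}$ and $S_{\sf ext}^\perp+u_{\sf ext}$ need an oracle for $T+v$, resp.\ $S^\perp+u$, of the smaller instance. At the bottom of the chain (the reduction to \Cref{lem:asymmetric_anchor_monogamy}, hence to \Cref{thm:coset_monogamy}), the middle party receives no oracle and knows neither $u$ nor $v$. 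So the final reduction in that lemma cannot answer $\cal{A}_M$'s membership queries, and your strengthened lemma is not established.

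The step you worried about, \Cref{lem:collapsing_versus_non-collapsing}, is actually harmless. A membership oracle for $T_{\sf ext}+w_v$ is a function of $(T_{\sf ext},w_v)$, and one for $T^*+w_v$ can be computed from the challenge oracle itself. The same missing-oracle problem hits your right party, which simulates $\cal{O}^{b_0}_{T+v}$ from $b_0$ and a membership oracle the base game never supplies.

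The missing idea is to bloat the public oracles before reducing. The paper uses \Cref{lem:subspace_oracle_indistinguishability}, at negligible cost, to replace $\cal{O}_{T+v}$ and $\cal{O}_{S^\perp+u}$ by $\cal{O}_{T^*+v}$ and $\cal{O}_{{S^*}^\perp+u}$. Here $T^*\geq T$ is a uniformly random superspace of codimension $n/4$ and $S^*\leq S$ a uniformly random subspace of dimension $n/4$. It likewise replaces the right party's copy of $\cal{O}^{b_0}_{T+v}$ by $\cal{O}^{b_0}_{T^*+v}$. The left party keeps the genuine $\cal{O}^{b_0}_{T+v}$, and the right party keeps the genuine $\cal{O}^{b_1}_{S^\perp+u}$.

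It then obtains the $\genchain_{n_e,n,n}$ instance by applying $\mathsf{DomainExtension}_{n/4,n/4}$ to the $\genchain_{n_e,3n/4,3n/4}$ instance of \textsf{AsymmetricOracleMonogamy} (\Cref{fact:extension_indistinguishability}). All bloated oracles are now simulable from $\sk^*$, plus $b_0$ on the right. The two genuine challenge oracles come from the challenger's oracles via the last two items of \Cref{fact:simulating_extension_oracles}.

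This is why the paper uses domain extension rather than your reparametrization $n'=4n/3$: besides matching parameters, it supplies the secret $\sk^*$ that makes the bloated oracles simulable. Inserting this bloating step for the public oracles into Hybrid 2 of \Cref{lem:asymmetric_oracle_monogamy} would rescue your route, but as written the argument is incomplete.
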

    \begin{proof}
        For any adversary $\cal{A}$ that does not abort with at least $\delta(\secp)$ probability in the game \ModifiedMultiStageDecisionMonogamy, and conditioned on that it wins with probability at least $\frac{1}{2}+\delta(\secp)$ for some inverse polynomial $\delta(\secp)$, we construct an adversary $\cal{B}=(\cal{B}_M,\cal{B}_L,\cal{B}_R)$ that wins the following game with probability at least $\frac{1}{2}+\delta(\secp)$ conditioned on not aborting, and the non-aborting probability is at least $\delta(\secp)$.
        \begin{enumerate}
            \item The challenger generates $\brackets{\sk,\ket{\chain^{\sk}_{n_e,n,n}}_{\anchor\vessel}}\gets\genchain_{n_e,n,n}(1^\secp)$ and gives the register $\vessel$ to $\cal{B}_M$.
            \item $\cal{B}_M$ is given membership oracles $\cal{O}_{T+v}$ and $\cal{O}_{S^\perp+u}$. $\cal{B}_M$ can choose to abort in this step. If it aborts, the output of the game is $\bot$. Otherwise, it generates a bipartite state on $\lreg\rreg$. It sends $\lreg$ to $\cal{B}_L$ and sends $\rreg$ to $\cal{B}_R$.
            \item The challenger samples $b_0\rand\mathbb{F}_2$ and $b_1\rand\mathbb{F}_2^{m_1}$. It gives $\cal{O}_{T+v}^{b_0}$ to $\cal{B}_\cal{L}$. Then it gives both $\cal{O}_{T+v}^{b_0}$ and $\cal{O}_{S^\perp+u}^{b_1}$ to $\cal{B}_R$.
            \item $\cal{B}_L$ returns $b_0^l$, $\cal{B}_R$ returns $b_1^r$. The adversary wins if $b_0^l=b_0$ and $b_1^r=b_1$.
        \end{enumerate}
        $\cal{B}$ is constructed as follows (shown in \Cref{fig:adversary_B}):
        \begin{enumerate}
            \item $\cal{B}_M$ receives $\vessel$ from the challenger, it calls $\cal{A}_M^0$ to obtain $\lreg\mreg\rreg$. It aborts if $\cal{A}_M^0$ aborts. It then samples $b'_0$ by itself, simulates $\cal{O}_{T+v}^{b'_0}$ on its own using $\cal{O}_{T+v}$, calls $\cal{A}_M^1$ to obtain $\mreg_L\mreg'\mreg_R$. It gives $\lreg\mreg_L$ (as the $\lreg$ register in $\AsymmetricOracleMonogamy$) to $\cal{B}_L$ and gives $\mreg'\mreg_R\rreg$ (as the $\rreg$ register in $\AsymmetricOracleMonogamy$) to $\cal{B}_R$.
            \item $\cal{B}_L$ with access to $\cal{O}_{T+v}^{b_0}$ calls $\cal{A}_L^0$. It forwards the output of $\cal{A}_L^0$ as $b_0^l$.
            \item $\cal{B}_R$ calls $\cal{A}_M^2$ with access to $\cal{O}_{T+v}^{b_0}$ and $\cal{O}_{S^\perp+u}^{b_1}$ on $\mreg'$ to obtain $\mreg'_L\mreg'_R$. Then it calls $\cal{A}_R^0$ on $\rreg\mreg_R$ with access to $\cal{O}_{T+v}^{b_0}$ to obtain $\rreg'$. Finally it calls $\cal{A}_R^1$ on $\mreg'_R\rreg'$ with access to $\cal{O}_{T+v}^{b_0},\cal{O}_{S^\perp+u}^{b_1}$ and forwards the output of $\cal{A}_R^1$ as $b_1^r$.
        \end{enumerate}
        \begin{figure}[h]
            \centering
            \begin{tikzpicture}[
                node distance=1.6cm and 2.2cm,
                every node/.style={font=\small},
                box/.style={draw, rounded corners, align=center, minimum width=3.2cm, minimum height=1.1cm},
                oracle/.style={draw, dashed, rounded corners, align=center, minimum width=3.2cm, minimum height=1.1cm},
                arr/.style={->, thick}
            ]
            
            \node[box] (chal) {Challenger\\
            $\brackets{\sk,\ket{\chain^{\sk}_{n_e,n,n}}_{\anchor\vessel}}$\\
            $\gets\genchain_{n_e,n,n}(1^\secp)$};
            
            \node[box, right=of chal] (AM0) {$\mathcal A_M^0$};
            \node[box, right=of AM0] (PublicOracles) {Public Oracles:\\
            $\cal{O}_{T+v},\cal{O}_{S^\perp+u}$};
            
            \node[box, below=of AM0] (AM1) {$\mathcal A_M^1\leftrightarrows\cal{O}_{T+v}^{b'_0}$};
            \node[box, below left=of AM1] (AL0) {$\mathcal A_L^0\leftrightarrows\cal{O}_{T+v}^{b_0}$};
            \node[box, below right=of AM1] (AR0) {$\mathcal A_R^0\leftrightarrows\cal{O}_{T+v}^{b_0}$};
            
            \node[box, below=of AM1] (AM2) {$\mathcal A_M^2\leftrightarrows\cal{O}_{T+v}^{b_0},\cal{O}_{S^\perp+u}^{b_1}$};
            
            \node[box, below=of AL0] (AL1) {$\mathcal A_L^1\leftrightarrows\cal{O}_{T+v}^{b_0},\cal{O}_{S^\perp+u}^{b_1}$};
            \node[box, below=of AR0] (AR1) {$\mathcal A_R^1\leftrightarrows\cal{O}_{T+v}^{b_0},\cal{O}_{S^\perp+u}^{b_1}$};
            
            \draw[arr] (chal) -- node[above] {$\vessel$} (AM0);
            
            \draw[arr] (AM0) -- node[above] {$\lreg$} (AL0);
            \draw[arr] (AM0) -- node[above] {$\rreg$} (AR0);
            \draw[arr] (AM0) -- node[right] {$\mreg$} (AM1);
            
            \draw[arr] (AM1) -- node[above] {$\mreg_L$} (AL0);
            \draw[arr] (AM1) -- node[above] {$\mreg_R$} (AR0);
            \draw[arr] (AM1) -- node[right] {$\mreg'$} (AM2);
            
            \draw[arr] (AL0) -- node[left] {$\lreg'$} (AL1);
            \draw[arr] (AR0) -- node[right] {$\rreg'$} (AR1);
            
            \draw[arr] (AM2) -- node[above] {$\mreg'_L$} (AL1);
            \draw[arr] (AM2) -- node[above] {$\mreg'_R$} (AR1);
            \path
            ($(AL0.north west)+(-1.2cm,0)$)
            ($(AR0.north east)+(1.2cm,0)$);
            
            \draw[dashed, thick]
            ($(AL0.north west)+(0cm,1.7cm)$) --
            ($(AR0.north east)+(-2.0cm,1.7cm)$)
            node[right=6pt, font=\scriptsize] {\BeforeSplit};
            
            \node[left=0.3cm of AL0] (b0l){$b_0^l$};
            \node[right=0.3cm of AR0] (b0r){$b_0^r$};
            \node[left=0.3cm of AL1] (b1l){$b_1^l$};
            \node[right=0.3cm of AR1] (b1r){$b_1^r$};
            \draw[arr] (AL0) -- node[left] {} (b0l);
            \draw[arr] (AR0) -- node[right] {} (b0r);
            \draw[arr] (AL1) -- node[left] {} (b1l);
            \draw[arr] (AR1) -- node[right] {} (b1r);
            \node[
                draw=red,
                dashed,
                thick,
                rounded corners,
                fit=(AM0)(AM1),
                inner sep=6pt,
                label={[red]above:{\small $\cal{B}_M$}}
            ] (BMbox) {};
            \node[
                draw=red,
                dashed,
                thick,
                rounded corners,
                fit=(AL0),
                inner sep=6pt,
                label={[red]above:{\small $\cal{B}_L$}}
            ] (BLbox) {};
            
            \node[
                draw=red,
                dashed,
                thick,
                rounded corners,
                fit=(AM2)(AR0)(AR1),
                inner sep=6pt,
                label={[red]right:{\small $\cal{B}_R$}}
            ] (BRbox) {};
            
            \end{tikzpicture}

            \caption{The adversary $\cal{B}=(\cal{B}_M,\cal{B}_L,\cal{B}_R)$.}
            \label{fig:adversary_B}
        \end{figure}
        The same adversary $\cal{B}=(\cal{B}_M,\cal{B}_L,\cal{B}_R)$ also wins the following game with probability at least $\frac{1}{2}+\delta(\secp)-\negl$ conditioned on not aborting, and the non-aborting probability is at least $\delta(\secp)-\negl(\secp)$.
        \begin{enumerate}
            \item The challenger generates $\brackets{\sk,\ket{\chain^{\sk}_{n_e,n,n}}_{\anchor\vessel}}\gets\genchain_{n_e,n,n}(1^\secp)$ and gives the register $\vessel$ to $\cal{B}_M$. It samples uniformly random subspace $T^*$ of dimension $n_e+7n/4$ such that $T\leq T^*\leq\mathbb{F}_2^{n_e+2n}$. It also samples uniformly random subspace $S^*\leq S$ of dimension $n/4$.
            \item $\cal{B}_M$ is given membership oracles $\cal{O}_{T^*+v}$ and $\cal{O}_{{S^*}^\perp+u}$. $\cal{B}_M$ can choose to abort in this step. If it aborts, the output of the game is $\bot$. Otherwise, it generates a bipartite state on $\lreg\rreg$. It sends $\lreg$ to $\cal{B}_L$ and sends $\rreg$ to $\cal{B}_R$.
            \item The challenger samples $b_0\rand\mathbb{F}_2$ and $b_1\rand\mathbb{F}_2^{m_1}$. It gives $\cal{O}_{T+v}^{b_0}$ to $\cal{B}_\cal{L}$. Then it gives both $\cal{O}_{T^*+v}^{b_0}$ and $\cal{O}_{S^\perp+u}^{b_1}$ to $\cal{B}_R$.
            \item $\cal{B}_L$ returns $b_0^l$, $\cal{B}_R$ returns $b_1^r$. The adversary wins if $b_0^l=b_0$ and $b_1^r=b_1$.
        \end{enumerate}
        This is because that we can apply hybrid argument and use \Cref{lem:subspace_oracle_indistinguishability} multiple times on $\cal{O}_{T+v}$, $\cal{O}_{S^\perp+u}$, and $\cal{O}_{T+v}^{b_0}$ given to $\cal{B}_R$. Now using $\cal{B}$ we construct an adversary $\cal{C}=(\cal{C}_M,\cal{C}_L,\cal{C}_R)$ that wins \AsymmetricOracleMonogamy (defined in \Cref{lem:asymmetric_oracle_monogamy}) with probability $\frac{1}{2}+\delta(\secp)-\negl$ conditioned on not aborting, and the non-aborting probability is at least $\delta(\secp)-\negl(\secp)$.
        \begin{enumerate}
            \item $\cal{C}_M$ receives $\vessel$ and runs $\sk^*\gets\domainextension_{n/4,n/4}(1^\secp)$ on $\vessel$. It runs $\cal{B}_M$ on $\vessel^*$ to obtain $\lreg\rreg$. It aborts if $\cal{B}_M$ aborts. It gives $\sk^*,\lreg$ to $\cal{B}_L$ and $\sk^*,\rreg$ to $\cal{B}_R$. All queries to $\cal{O}_{T^*+v}$ is equivalent to $\cal{O}_{T^*+v^*}$ and can be simulated by $\sk^*$. So do queries to $\cal{O}_{{S^*}^\perp+u}$ and other oracles below.
            \item $\cal{C}_L$ runs $\cal{B}_L$ on $\lreg$ with oracle access to $\cal{O}_{T_{\sf ext}+v_{\sf ext}}^{b_0}$. It outputs $b^l_0$, the output of $\cal{B}_L$.
            \item $\cal{C}_R$ runs $\cal{B}_R$ on $\rreg$ with oracle access to $\cal{O}_{T^*+v}^{b_0}$ (simulated by $\sk^*$ and $b_0$ given by the challenger) and $\cal{O}_{S_{\sf ext}^\perp+u_{\sf ext}}^{b_1}$. It outputs $b^r_1$, the output of $\cal{B}_R$.
        \end{enumerate}
        By contradiction, such adversary $\cal{B}$ cannot exist and hence $\cal{A}$ also doesn't exist.
    \end{proof}
    Finally, combining \Cref{claim:half_minus_epsilon_gap} and \Cref{claim:relate_to_extended_MoE} we get
    \begin{align*}
        &\expectc{\sk,x}{\abs{\braket{\psi^\sk_{x,0}}{\psi^\sk_{x,1}}}^2}\\
        \leq&4\brackets{\frac{\prob{\ModifiedMultiStageDecisionMonogamy(\cal{A},1^\secp)=1}}{\prob{\ModifiedMultiStageDecisionMonogamy(\cal{A},1^\secp)\neq\bot}}+\frac{1}{2}-\frac{\prob{\MultiStageDecisionMonogamy(\cal{A},1^\secp)=1}}{\prob{\MultiStageDecisionMonogamy(\cal{A},1^\secp)\neq\bot}}}\\
        =&4\brackets{\frac{\prob{\ModifiedMultiStageDecisionMonogamy(\cal{A},1^\secp)=1}}{\prob{\ModifiedMultiStageDecisionMonogamy(\cal{A},1^\secp)\neq\bot}}+\frac{1}{2}-(1-\eps(\secp))}\\
        \leq&4\brackets{\frac{1}{2}+\negl(\secp)+\frac{1}{2}-(1-\eps(\secp))}\\
        \leq&4\eps(\secp)+\negl(\secp).
    \end{align*}
\end{proof}
\begin{lemma}\label{lem:multi-stage_search_monogamy-of-entanglement}
    Take any adversary $\cal{A}$ in the multi-stage monogamy-of-entanglement game described in \Cref{def:multi-Stage_monogamy-of-entanglement} such that
    \begin{equation*}
        \prob{\MultiStageSearchMonogamy(\cal{A},1^\secp)\neq\bot}=\eps'(\secp)
    \end{equation*}
    and
    \begin{equation*}
        \frac{\prob{\MultiStageSearchMonogamy(\cal{A},1^\secp)=1}}{\prob{\MultiStageSearchMonogamy(\cal{A},1^\secp)\neq\bot}}=1-\eps(\secp)
    \end{equation*}
    for inverse polynomials $\eps(\secp),\eps'(\secp)$. Consider the state $\key\anchor\lreg\mreg\rreg\breg$ at \BeforeSplit conditioned on $\cal{A}_M^0$ not aborting, which we can write as (we also purify $\sk$ in register $\key$ here):
    
    \begin{equation*}
        \ket{\psi}_{\key\anchor\lreg\mreg\rreg\breg}=\sum_{\sk}\sum_{x}\sum_{b_0\in\mathbb{F}_2^{m_0}}\frac{\alpha_{\sk,x}}{\sqrt{2^{m_0}}}\ket{\sk}_{\key}\ket{x}_{\anchor}\ket{\psi^\sk_{x,0}}_{\lreg\mreg\rreg}\ket{b_0}_\breg.
    \end{equation*}
    where $\ket{\psi^\sk_{x,b_0}}$ for all $b_0$ are normalized states. We have,
    \begin{equation*}
        \trace{\ket{+}\bra{+}_\breg\cdot\ket{\psi}\bra{\psi}_{\key\anchor\lreg\mreg\rreg\breg}}\leq\frac{1}{2^{m_0}}+2\sqrt{2}\eps^{1/2}+\negl(\secp)
    \end{equation*}
    where $\ket{+}_\breg=\frac{1}{\sqrt{2^{m_0}}}\sum_{b_0\in\mathbb{F}_2^{m_0}}\ket{b_0}_\breg$ is the uniform superposition over $b_0$.
\end{lemma}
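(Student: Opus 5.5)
The plan is to reduce the search statement to the decision statement \Cref{lem:multi-stage_decision_monogamy-of-entanglement} by embedding a one-bit challenge into a pair of $m_0$-bit challenges. First expand the quantity to bound. Since $\key$ and $\anchor$ act as control registers,
\begin{equation*}
    \trace{\ket{+}\bra{+}_\breg\cdot\ket{\psi}\bra{\psi}}=\expectc{\sk,x}{\norm{\frac{1}{2^{m_0}}\sum_{b_0}\ket{\psi^\sk_{x,b_0}}}^2}=\frac{1}{2^{m_0}}+\frac{1}{4^{m_0}}\sum_{b\neq b'}\expectc{\sk,x}{\braket{\psi^\sk_{x,b}}{\psi^\sk_{x,b'}}}.
\end{equation*}
The diagonal terms give exactly $1/2^{m_0}$. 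The off-diagonal part is at most $\expectc{\sk,x,(b,b')}{\abs{\braket{\psi^\sk_{x,b}}{\psi^\sk_{x,b'}}}}$, where $(b,b')$ is a uniformly random ordered pair of distinct strings (the prefactor $2^{m_0}(2^{m_0}-1)/4^{m_0}\le 1$). It therefore suffices to show
\begin{equation*}
    \expectc{\sk,x,b\neq b'}{\abs{\braket{\psi^\sk_{x,b}}{\psi^\sk_{x,b'}}}^2}\leq 8\eps+\negl(\secp),
\end{equation*}
because Jensen's inequality then gives the claimed $2\sqrt{2}\eps^{1/2}+\negl(\secp)$.

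To obtain this, build a decision adversary $\cal{A}'$ (with $m_0=1$) from $\cal{A}$. Let $\cal{A}'^0_M$ sample a uniformly random ordered pair $(c_0,c_1)$ of distinct strings in $\mathbb{F}_2^{m_0}$ and copy it classically into $\lreg$, $\mreg$ and $\rreg$. This is allowed because the copies are classical and independent of everything else. It then runs $\cal{A}^0_M$ and aborts exactly when $\cal{A}^0_M$ aborts. Each later party of $\cal{A}'$ simulates the oracle $\cal{O}^{c_\beta}_{T+v}$ from its oracle $\cal{O}^{\beta}_{T+v}$ for $\beta\in\mathbb{F}_2$. It does so using the public membership oracle $\cal{O}_{T+v}$ and a controlled map $\beta\mapsto c_\beta$, which uses two queries per simulated query. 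The parties then run $\cal{A}$ unchanged. Whenever $\cal{A}$ outputs $b_0^l$ or $b_0^r$, the corresponding party outputs $\beta$ if the answer equals $c_\beta$, and a random bit otherwise.

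Because $c_\beta$ is uniform over $\mathbb{F}_2^{m_0}$ when $\beta$ and the pair are uniform, the view of $\cal{A}$ inside $\cal{A}'$ is identical to its view in $\MultiStageSearchMonogamy$. Hence $\cal{A}'$ has the same non-abort probability $\eps'$. Moreover, $\cal{A}'$ wins whenever $\cal{A}$ wins, so its conditional winning probability is at least $1-\eps$. Efficiency is preserved.

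Now apply \Cref{lem:multi-stage_decision_monogamy-of-entanglement} to $\cal{A}'$. At \BeforeSplit, the branch of $\cal{A}'$ with $\beta$ in $\breg$ is the pair register tensored with $\ket{\psi^\sk_{x,c_\beta}}$. The weights $\abs{\alpha_{\sk,x}}^2$ are unchanged, since the pair is chosen independently and the abort decision is inherited from $\cal{A}^0_M$. The decision lemma is stated for a state without extra classical registers, so either fold the classical pair into the $\sk,x$ control (equivalently, note that the lemma's proof treats any classical control registers identically), or redo the conditional argument. In either case we get $\expectc{\sk,x,(c_0,c_1)}{\abs{\braket{\psi^\sk_{x,c_0}}{\psi^\sk_{x,c_1}}}^2}\leq4\eps+\negl(\secp)$. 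This is the desired bound up to constants, and the constant $8$ leaves slack for the small losses described next.

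I expect the main obstacle to be the bookkeeping in this last step rather than any new idea. One issue is checking that the classical pair stored in $\lreg\mreg\rreg$ does not invalidate the decision lemma's decomposition. Another is that the oracle simulation must reproduce the purified $\breg$ structure exactly, since the decision lemma needs the outputs of $\cal{A}^1_M$'s oracle to be coherently controlled by $\breg$. If the direct appeal to the decision lemma does not go through, the fallback is to repeat its proof with the pair as an extra control register. Concretely, I would compare the game with the modified game in which $\cal{A}^1_M$ receives an independent $c_{\beta'}$, and bound the resulting trace-distance gap by the same $\frac12-\frac14\mathbb{E}\abs{\langle\cdot\rangle}^2$ computation.
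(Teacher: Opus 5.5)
Your proposal is correct. It follows the same skeleton as the paper: expand $\bra{+}\rho\ket{+}$ into $2^{-m_0}$ plus pairwise overlaps, bound the overlaps through \Cref{lem:multi-stage_decision_monogamy-of-entanglement} by relabeling two $m_0$-bit strings as $\{0,1\}$, and finish with Jensen. The difference is in how the pair is chosen. The paper fixes each ordered pair $(b'_0,b''_0)$ and hardcodes it into a separate decision adversary. This gives $\expectc{\sk,x}{\abs{\braket{\psi^\sk_{x,b'_0}}{\psi^\sk_{x,b''_0}}}^2}\le 2(\eps_{b'_0}+\eps_{b''_0})+\negl(\secp)$ in terms of per-value failure probabilities $\eps_{b_0}$. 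The paper then aggregates these bounds using $\sqrt{a+b}\le\sqrt{a}+\sqrt{b}$ and Jensen over $b_0$.

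You instead randomize the pair inside one adversary, so the decision lemma is invoked once and there is a single negligible term. The paper's route applies the lemma to exponentially many hardcoded adversaries and then sums, which tacitly requires their $\negl$ terms to be uniform. Your route avoids this.

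The bookkeeping you flag also disappears if you keep the modulus outside the average over pairs. Let the purification of the pair live in $\lreg$ and apply the decision lemma verbatim. This gives $\expectc{\sk,x}{\abs{\expectc{c}{\braket{\psi^\sk_{x,c_0}}{\psi^\sk_{x,c_1}}}}^2}\le 4\eps+\negl(\secp)$, because the two branches' inner product is exactly $\expectc{c}{\braket{\psi^\sk_{x,c_0}}{\psi^\sk_{x,c_1}}}$. This is already enough. The off-diagonal term you must bound equals $\frac{2^{m_0}(2^{m_0}-1)}{4^{m_0}}\expectc{\sk,x}{\expectc{c}{\braket{\psi^\sk_{x,c_0}}{\psi^\sk_{x,c_1}}}}$. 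It is real, since ordered pairs come in conjugate pairs. A single application of Jensen over $(\sk,x)$ then yields $2\eps^{1/2}+\negl(\secp)$, with no need to treat the pair as an extra control register.
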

\begin{proof}
    Define the probability that the adversary $\cal{A}$ wins (both parties output the correct $b_0$) when $b_0=b_0^*$ conditioned on not aborting as $1-\eps_{b_0^*}(\secp)$. We have,
    \begin{equation*}
        \eps=\expectc{b_0}{\eps_{b_0}}.
    \end{equation*}
    For any different $b'_0,b''_0\in\mathbb{F}_2^{m_0}$, we must have 
    \begin{equation}\label{eq:expected_ipd}
        \expectc{\sk,x}{\abs{\braket{\psi^\sk_{x,b'_0}}{\psi^\sk_{x,b''_0}}}^2}\leq 2\brackets{\eps_{b'_0}(\secp)+\eps_{b''_0}(\secp)}+\negl(\secp).
    \end{equation}
    To prove this we construct an adversary $\cal{A}'$ in \Cref{lem:multi-stage_decision_monogamy-of-entanglement}. $\cal{A}'$ runs $\cal{A}$ and reprograms all the oracle outputs: reprograms $0$ to $b'_0$, $1$ to $b''_0$, and $\bot$ unchanged. Finally, $\cal{A}'$ also reprograms the output of $\cal{A}$: reprograms $b'_0$ to $0$ and $b''_0$ to $1$. For this $\cal{A}'$, 
    \begin{equation*}
        \frac{\prob{\MultiStageDecisionMonogamy(\cal{A}',1^\secp)=1}}{\prob{\MultiStageDecisionMonogamy(\cal{A}',1^\secp)\neq\bot}}=1-\frac{1}{2}\brackets{\eps_{b'_0}(\secp)+\eps_{b''_0}(\secp)}.
    \end{equation*}
    By \Cref{lem:multi-stage_decision_monogamy-of-entanglement}, the state of $\cal{A}'$ at \BeforeSplit satisfies
    \begin{equation*}
        \expectc{\sk,x}{\abs{\braket{\psi^\sk_{x,0}}{\psi^\sk_{x,1}}}^2}\leq 2\brackets{\eps_{b'_0}(\secp)+\eps_{b''_0}(\secp)}+\negl(\secp).
    \end{equation*}
    Hence we proved \Cref{eq:expected_ipd}. Combine all above, we have
    \begin{align*}
        &\trace{\ket{+}\bra{+}_\breg\cdot\ket{\psi}\bra{\psi}_{\key\anchor\lreg\mreg\rreg\breg}}\\
        \leq&\frac{1}{2^{2m_0}}\expectc{\sk,x}{\sum_{b'_0,b''_0}\abs{\braket{\psi^\sk_{x,b'_0}}{\psi^\sk_{x,b''_0}}}}\\
        =&\frac{1}{2^{2m_0}}\expectc{\sk,x}{2^{m_0}+\sum_{b'_0\neq b''_0}\abs{\braket{\psi^\sk_{x,b'_0}}{\psi^\sk_{x,b''_0}}}}\\
        =&\frac{1}{2^{m_0}}+\frac{1}{2^{2m}}\sum_{b'_0\neq b''_0}\expectc{\sk,x}{\abs{\braket{\psi^\sk_{x,b'_0}}{\psi^\sk_{x,b''_0}}}}\\
        \leq&\frac{1}{2^{m_0}}+\frac{1}{2^{2m}}\sum_{b'_0\neq b''_0}\sqrt{\expectc{\sk,x}{\abs{\braket{\psi^\sk_{x,b'_0}}{\psi^\sk_{x,b''_0}}}^2}}\\
        \leq&\frac{1}{2^{m_0}}+\frac{1}{2^{2m}}\sum_{b'_0\neq b''_0}\sqrt{2\brackets{\eps_{b'_0}(\secp)+\eps_{b''_0}(\secp)}+\negl(\secp)}\\
        \leq&\frac{1}{2^{m_0}}+\frac{1}{2^{2m}}\sum_{b'_0\neq b''_0}\brackets{\sqrt{2\eps_{b'_0}(\secp)+\negl(\secp)}+\sqrt{2\eps_{b''_0}(\secp)+\negl(\secp)}}\\
        \leq&\frac{1}{2^{m_0}}+2\sqrt{2}\expectc{b_0}{\sqrt{\eps_{b_0}(\secp)+\negl(\secp)}}\\
        \leq&\frac{1}{2^{m_0}}+2\sqrt{2}\sqrt{\expectc{b_0}{\eps_{b_0}(\secp)+\negl(\secp)}}\\
        \leq&\frac{1}{2^{m_0}}+2\sqrt{2}\eps^{1/2}+\negl(\secp)
    \end{align*}
    The inequality between line 4 and line 5 uses $\textbf{Var}(X)=\expect{X^2}-\expect{X}^2\geq 0$, as well as the inequality between line 8 and line 9.
\end{proof}

%% file: entanglement_extraction_with_abort.tex
\section{Entanglement Extraction}\label{sec:extraction}
\subsection{Extracting Incompatible Measurement Results via Collapsing Oracles}
In later sections, we consider $n_e(\lambda)=n(\lambda)$ if not specified. 
\begin{lemma}\label{lem:extracting_standard_basis_measurement_results}
    Let $n(\secp)\geq\secp$ and $m_0(\secp)=m_1(\secp)=m(\secp)\geq\secp$ be polynomials of $\secp$. There exists an efficient extractor $\cal{E}_S$ with access to $\sk$ and the register $\mreg$ such that the following holds. Take any adversary $\cal{A}$ in the multi-stage monogamy-of-entanglement game described in \Cref{def:multi-Stage_monogamy-of-entanglement} such that
    \begin{equation*}
        \prob{\MultiStageSearchMonogamy(\cal{A},1^\secp)\neq\bot}=\eps'(\secp)
    \end{equation*}
    and
    \begin{equation*}
        \frac{\prob{\MultiStageSearchMonogamy(\cal{A},1^\secp)=1}}{\prob{\MultiStageSearchMonogamy(\cal{A},1^\secp)\neq\bot}}=1-\eps(\secp)
    \end{equation*}
    for inverse polynomials $\eps(\secp),\eps'(\secp)$. Then running $\cal{E}_S^{\cal{A}_M^1}$ on the state $\mreg$ conditioned on $\cal{A}_M^0$ not aborting gives $x$, the standard basis measurement result on $\anchor$, with probability at least $1-2\sqrt{2}\eps^{1/2}-\negl(\secp)$. That is, with $\xreg$ denoting the output register of $\cal{E}_S^{\cal{A}_M^1}$,
    \begin{equation*}
        \expect{\tr\brackets{\Pi_{\sf eq}\cal{E}_S^{\cal{A}_M^1}\rho_{\xreg\mreg}\brackets{\cal{E}_S^{\cal{A}_M^1}}^\dagger\Pi_{\sf eq}}\middle\vert\substack{\brackets{\sk,\ket{\chain^{\sk}}_{\anchor\vessel}}\gets\genchain(1^\secp)\\ \cal{A}_M^0\text{ does not abort and produces }\rho_{\xreg\mreg}\text{ on }\xreg\mreg}}\geq 1-2\sqrt{2}\eps^{1/2}-\negl(\secp)
    \end{equation*}
    where $\Pi_{\sf eq}=\sum_{x}\ket{x}_{\anchor}\bra{x}\otimes\ket{x}_{\xreg}\bra{x}$ is the projector that tests whether the output is correct. Furthermore, $\cal{E}_S^{\cal{A}_M^1}$ only makes black-box access to $\cal{A}_M^1$ and acts on $\mreg$. When we say $\cal{A}_M^1$, we refer to the process only up to \BeforeSplit, excluding the renaming and splitting procedure.
\end{lemma}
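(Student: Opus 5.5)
We will build $\cal{E}_S$ by running $\cal{A}_M^1$ with a collapsing version of its challenge oracle, then reading off which query was made. Concretely, $\cal{E}_S$ uses $\sk$ to simulate $\cal{A}_M^1$'s oracle $\cal{O}_{T+v}^{b_0}$, but replaces the constant output $b_0$ by a compressed random oracle evaluated on the coset label: on input $z\in T+v$ it returns $H(\Can_S(z))$, where $H:\CS(S)\to\mathbb{F}_2^{m}$ is implemented via Zhandry's compressed oracle (with the $\breg$ register replaced by the database register $\dreg$). After running $\cal{A}_M^1$ up to \BeforeSplit, the extractor measures $\dreg$ in the standard basis. It then looks for a non-$\bot$ entry $c\in\CS(S)$, decodes $x$ from $c$ via $U_{\sf shift}^{-1}(c-v)$ (reading the middle $n$ coordinates), and outputs $x$ on $\xreg$. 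Since $\cal{E}_S$ knows $\sk$, it can also simulate the public oracles $\cal{O}_{T+v},\cal{O}_{S^\perp+u}$ and everything is efficient.

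The key steps, in order, are as follows.

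\textbf{Step 1.} We argue that, relative to the state at \BeforeSplit, the collapsing run and the real run (with $\cal{O}^{b_0}_{T+v}$ and $\breg$ in $\ket{+}$) are related exactly as in the setting of \Cref{lem:collapsing_versus_non-collapsing}. Decompose the vessel by the anchor value $x$: on the branch $\ket{x}_\anchor$ the vessel is supported on the single coset $S+U_{\sf shift}(0^n\times x\times 0^n)+v=:S+w_x$. Since $\cal{A}_M^0$ and $\cal{A}_M^1$ never act on $\anchor$, we can condition on $x$. Within a branch, the honest constant output $b_0$ corresponds to the non-collapsing oracle $\cal{O}_{\sf any}(\Can_S(w_x))$ with $\cal{O}_{\sf any}=H$. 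The collapsing oracle returns $H(\Can_S(z))$, and by \Cref{lem:collapsing_versus_non-collapsing} (via its hybrids, with $T$ hidden inside a larger $T^*$ using domain extension as in \Cref{lem:asymmetric_oracle_monogamy}) the two runs are $\negl$-close. Here the input state is generated from $S,w_x$, and $c-b=n$ is large.

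\textbf{Step 2.} In the non-collapsing picture, the uniform random $H$ restricted to the single relevant point $\Can_S(w_x)$ is exactly a uniform $b_0$. By Zhandry's lemma, the compressed database register then plays the role of $\breg$: the database entry at $\Can_S(w_x)$ is $\bot$ precisely when the $H$-value register at that point has been projected onto the uniform superposition. Therefore the probability that the database is empty at $\Can_S(w_x)$ equals $\trace{\ket{+}\bra{+}_\breg\ket{\psi}\bra{\psi}}$, up to the negligible correction from Step 1. Applying \Cref{lem:multi-stage_search_monogamy-of-entanglement}, this is at most $2^{-m}+2\sqrt{2}\eps^{1/2}+\negl$, which is $2\sqrt{2}\eps^{1/2}+\negl$ since $m\geq\secp$. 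Hence with probability at least $1-2\sqrt{2}\eps^{1/2}-\negl$ the database contains the entry $\Can_S(w_x)$. By Step 1 it contains no other entry except with negligible probability, because on the branch $x$ only inputs in $S+w_x$ are ever queried in the relevant support, up to the indistinguishable $T\setminus S$ mass. Decoding this entry yields exactly $x$, so $\Pi_{\sf eq}$ accepts.

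\textbf{Step 3.} We make the extractor black-box in $\cal{A}_M^1$ and acting only on $\mreg$ together with its own ancillas $\dreg,\xreg$. This holds by construction, since the oracle simulation is external to $\cal{A}_M^1$.

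We expect the main obstacle to be the precise identification in Step 2, namely that ``database entry empty'' corresponds to the projection of $\breg$ onto $\ket{+}$ in the state of \Cref{lem:multi-stage_search_monogamy-of-entanglement}. The difficulty is that the compressed oracle's $\stddecomp$ interacts with every query, and the adversary's state in that lemma is written with a purified $\breg$ rather than a database. We would handle this by first passing to the standard (purified) oracle for $H$ on the single point $\Can_S(w_x)$, which coincides with the $\breg$-purification used in the game. Then $\stddecomp$ applied once at the end maps $\ket{+}$ to $\bot$ and leaves the orthogonal complement non-$\bot$. This gives the bound through a single final decompression, which commutes appropriately because the database is measured only after \BeforeSplit.
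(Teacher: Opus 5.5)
Your proposal is correct and follows essentially the paper's route. It uses the same extractor (a compressed random oracle on $\Can_S(z)$ for $z\in T+v$, followed by measuring the database and decoding the middle $n$ coordinates) and the same hybrid from the collapsing to the non-collapsing oracle via \Cref{lem:collapsing_versus_non-collapsing}. It also makes the same identification of the database entry at $\Can_S(w_x)$ with the purified $b_0$ register, so that \Cref{lem:multi-stage_search_monogamy-of-entanglement} bounds the probability that this entry is empty.

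The only detour is hiding $T$ inside a larger $T^*$ via domain extension, which is not needed. The paper applies \Cref{lem:collapsing_versus_non-collapsing} directly with $(a,b,c)=(n,2n,3n)$, since given $S$ the subspace $T$ is already a uniformly random $2n$-dimensional superspace of $S$ in $\mathbb{F}_2^{3n}$, with $w_x$ sampled uniformly from the relevant $S$-coset.
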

\begin{proof}
    In this proof we consider all notations aligned with \Cref{lem:collapsing_versus_non-collapsing} with $a=n,b=2n$ and $c=3n$. In the following proof, all procedures are conditioned on $\cal{A}_M^0$ not aborting. Here is the construction of $\cal{E}_S^{\cal{A}_M^1}$:
    \begin{enumerate}
        \item Set up a random oracle $\cal{O}_{\sf random}$ using a compress oracle with the database on $\ket{D}_{\dreg}$. Run $\cal{A}_M^1$ in \MultiStageSearchMonogamy on $\mreg$. Simulate $\cal{O}_{S^\perp+u}$ using $\sk$. Replace $\cal{O}_{T+v}^{b_0}$ with 
        \begin{itemize}
            \item $\cal{O}_{T+v}^{\sf col,\cal{O}_{\sf random}}:\mathbb{F}_2^{3n}\rightarrow\mathbb{F}_2^{m_0}\cup\set{\bot}$ on input $z$ checks if it is in $T+v$. It returns $\bot$ if it is not. Otherwise, it returns $\cal{O}_{\sf random}(\Can_S(z))$.
        \end{itemize}
        And also simulate $\cal{O}_{T+v}$ using the above oracle.
        \item Measure the database register, if there is no non-$\bot$ entry or more than one non-$\bot$ entry, then output $\bot$. Otherwise, let the unique entry be $(w',y)$. Output the middle $n$ bits of $U_{\sf shift}^{-1}w'$.
    \end{enumerate}
    We calculate the success probability of $\cal{E}_S^{\cal{A}_M^1}$ using a hybrid argument. In the following hybrids, when we say compute something based on $x$, the standard basis value on $\anchor$, we mean coherently do that controlled by $x$ on $\anchor$.\\
    \textbf{Hybrid 0:} This is the hybrid corresponds to running $\cal{E}_S^{\cal{A}_M^1}$ on $\mreg$.\\
    \textbf{Hybrid 1:} This is the hybrid corresponds to running $\cal{E}_{S,{\sf ncol}}^{\cal{A}_M^1}$ on $\mreg$ where $\cal{E}_{S,{\sf ncol}}^{\cal{A}_M^1}$ is controlled by $\anchor$ and is constructed as follows:
    \begin{enumerate}
        \item Set up a random oracle $\cal{O}_{\sf random}$ using a compress oracle with the database on $\ket{D}_{\dreg}$. Run $\cal{A}_M^1$ in \MultiStageSearchMonogamy on $\mreg$. Simulate $\cal{O}_{S^\perp+u}$ using $\sk$. Sample $w_x\rand U_{\sf shift}\brackets{\set{0,1}^n\times x\times 0^n}+v$ where $x$ is the standard basis value on $\anchor$. Replace $\cal{O}_{T+v}^{b_0}$ with 
        \begin{itemize}
            \item $\cal{O}_{T+w_x}^{\sf ncol,\cal{O}_{\sf random}}:\mathbb{F}_2^{3n}\rightarrow\mathbb{F}_2^{m_0}\cup\set{\bot}$ on input $z$ checks if it is in $T+w_x$. It returns $\bot$ if it is not. Otherwise, it returns $\cal{O}_{\sf random}\brackets{\Can_S(w_x)}$.
        \end{itemize}
        And also simulate $\cal{O}_{T+v}$ using the above oracle.
        \item Measure the database register, if there is no non-$\bot$ entry or more than one non-$\bot$ entry, then output $\bot$. Otherwise, let the unique entry be $(w',y)$. Output the middle $n$ bits of $U_{\sf shift}^{-1}w'$.
    \end{enumerate}
    \begin{claim}
        We view $x$ on $\anchor$ as a classical control string. Let $P_{D,x}$ be the distribution of database register $\ket{D}_{\dreg}$ in \textbf{Hybrid 0} conditioned on the value on $\anchor$ is $x$. Let $P'_{D,x}$ be the corrresponding distribution for \textbf{Hybrid 1}. For a uniform random $x$, the expected statistical distance between $P_{D,x}$ and $P'_{D,x}$ is negligible.
    \end{claim}
    \begin{proof}
        First note that $w_x$ is an independent random vector given $S$ and $T$ is an independent subspace containing $S$ given $S,w_x$. To use \Cref{lem:collapsing_versus_non-collapsing} we need to show that we can construct the correct state on $\vessel$ conditioned on the standard basis value $x$ on $\anchor$ given $S,w_x$ so that we describe the whole hybrid as an isometry on $\ket{S,w_x}$ for uniform random $x$ (if we trace out $\anchor$). Decompose $w_x$ as $u_x+U_{\sf shift}\brackets{0^n\times x\times 0^n}+v$. We need to construct 
        \begin{equation*}
            \propto\sum_{x\in\mathbb{F}_2^n}\ket{x}_{\anchor}\sum_{s\in S}(-1)^{\ipd{s,u}}\ket{s+U_{\sf shift}\brackets{0^{n_h}\times x\times 0^{n_s}}+v}_\vessel
        \end{equation*}
        from
        \begin{equation*}
            \propto\sum_{x\in\mathbb{F}_2^n}\ket{x}_{\anchor}\ket{S,w_x}
        \end{equation*}
        by operating on $\vessel$. We can generate the state by first generate the state on $\vessel$ assuming $x=0$ on $\anchor$ with a random $u$ and with $v=0$. 
        \begin{equation*}
            \propto\sum_{x\in\mathbb{F}_2^n}\ket{x}_{\anchor}\sum_{s\in S}(-1)^{\ipd{s,u}}\ket{s}.
        \end{equation*}
        Then we XOR $w_x$ on the standard basis. 
        \begin{align*}
            \propto&\sum_{x\in\mathbb{F}_2^n}\ket{x}_{\anchor}\sum_{s\in S}(-1)^{\ipd{s,u}}\ket{s+w_x}\\
            \propto&\sum_{x\in\mathbb{F}_2^n}\ket{x}_{\anchor}\sum_{s\in S}(-1)^{\ipd{s,u}}\ket{s+u_x+U_{\sf shift}\brackets{0^n\times x\times 0^n}+v}\\
            \propto&\sum_{x\in\mathbb{F}_2^n}(-1)^{\ipd{u_x,u}}\ket{x}_{\anchor}\sum_{s'\in S}(-1)^{\ipd{s',u}}\ket{s'+U_{\sf shift}\brackets{0^n\times x\times 0^n}+v}.
        \end{align*}
        In this way we get the correct state except there is a relative phase between $\anchor$ and $\vessel$ depending on the standard basis value on $\anchor$ if you consider the whole purified anchor state. But note that the standard basis value on $\anchor$ only acts as a control so this relative phase doesn't affect the output distribution of the hybrid.
    \end{proof}
    \textbf{Hybrid 2:} This is the hybrid corresponds to running $\cal{A}_M^1$ on $\mreg$ but we replace $\cal{O}_{T+v}^{b_0}$ with a purified version of it:
    \begin{enumerate}
        \item Set up a random oracle $\cal{O}_{\sf random}$ using a compress oracle with the database on $\ket{D}_{\dreg}$. Run $\cal{A}_M^1$ in \MultiStageSearchMonogamy on $\mreg$. Simulate $\cal{O}_{S^\perp+u}$ using $\sk$. Sample $w_x\rand U_{\sf shift}\brackets{\set{0,1}^n\times x\times 0^n}+v$ where $x$ is the standard basis value on $\anchor$. But rename $\dreg_{\Can_S(w_x)}$ as $\breg$ that stores the purified $b_0$. And also simulate $\cal{O}_{T+v}$ with $\cal{O}_{T+v}^{b_0}$.
        \item Measure the database register, if there is no non-$\bot$ entry or more than one non-$\bot$ entry, then output $\bot$. Otherwise, let the unique entry be $(w',y)$. Output the middle $n$ bits of $U_{\sf shift}^{-1}w'$.
    \end{enumerate}
    \begin{claim}
         Let $P''_{D,x}$ be the distribution of database register $\ket{D}_{\dreg}$ in \textbf{Hybrid 2} conditioned on the value on $\anchor$ is $x$. For all $x$, $P'_{D,x}$ and $P''_{D,x}$ are identical.
    \end{claim}
    \begin{proof}
        $\cal{O}_{\sf random}\brackets{\Can_S(w_x)}$ is just a uniform random string identically distributed as $b_0$.
    \end{proof}
    Note that the probability of outputting the correct $x$ in \textbf{Hybrid 2} at least $1-\frac{1}{2^m}-2\sqrt{2}\eps^{1/2}-\negl(\secp)$ because:
    \begin{itemize}
        \item There is at most one non-bot entry in the database because the only place that can be accessed is $\Can_S(w_x)$.
        \item The probability that $\Can_S(w_x)$ is also a $\bot$-entry is at most $\frac{1}{2^m}+2\sqrt{2}\eps^{1/2}+\negl$ by \Cref{lem:multi-stage_search_monogamy-of-entanglement}.
    \end{itemize}
    Thus by hybrid argument, the probability that in \textbf{Hybrid 0} it outputs $x$ is at least $1-\frac{1}{2^m}-2\sqrt{2}\eps^{1/2}-\negl(\secp)$.
\end{proof}
Now we look at a similar game \DualMultiStageSearchMonogamy that reverse the order of cosets given in the challenge.
\begin{definition}[Dual Multi-Stage Search Monogamy-of-Entanglement Game]\label{def:dual_multi-Stage_monogamy-of-entanglement}
    Let $\secp\in\mathbb{N}$ be the security parameter. Let $n(\secp)\geq\secp$ and $n_e(\secp),m_0(\secp),m_1(\secp)$ be polynomials. Consider the following game between the challenger and an adversary $\cal{A}=\allowdisplaybreaks(\cal{A}_M^0,\cal{A}_M^1,\cal{A}_M^2,\cal{A}_L^0,\cal{A}_L^1,\cal{A}_R^0,\cal{A}_R^1)$:
    \begin{enumerate}
        \item The challenger generates $\brackets{\sk,\ket{\chain^{\sk}_{n_e,n,n}}_{\anchor\vessel}}\gets\genchain_{n_e,n,n}(1^\secp)$ and gives the register $\vessel$ to $\cal{A}_M$. All parties of the adversary are given oracle access to $\cal{O}_{T+v}$ and $\cal{O}_{S^\perp+u}$.
        \item $\cal{A}_M^0$ can choose to abort in this step. If it aborts, the output of the game is $\bot$. Otherwise, it generates a tripartite state on $\lreg\mreg\rreg$. It sends $\lreg$ to $\cal{A}_L^0$, sends $\mreg$ to $\cal{A}_M^1$ and sends $\rreg$ to $\cal{A}_R^0$.
        \item $\cal{A}_M^1$ is given access to $\cal{O}_{S^\perp+u}^{b_0}$ for random $b_0\rand\set{0,1}^{m_0(\secp)}$. $\cal{A}_M^1$ with access to this oracle and $\mreg$ generates a tripartite state on $\mreg_L\mreg'\mreg_R$. It sends $\mreg_L$ to $\cal{A}_L^0$, sends $\mreg'$ to $\cal{A}_M^2$ and sends $\mreg_R$ to $\cal{A}_R^0$. We will name the time just after all operations of $\cal{A}_M^1$ are done and before renaming the registers and splitting them as \BeforeSplit.
        \item $\cal{A}_L^0$ and $\cal{A}_R^0$ are given access to $\cal{O}_{S^\perp+u}^{b_0}$. $\cal{A}_L^0$ on $\lreg\mreg_L$ produces answer $b_0^l$ and a state on register $\lreg'$ that is sent to $\cal{A}_L^1$. $\cal{A}_R^0$ on $\rreg\mreg_R$ produces answer $b_0^r$ and a state on register $\rreg'$ that is sent to $\cal{A}_R^1$.
        \item $\cal{A}_M^2$ is given access to $\cal{O}_{S^\perp+u}^{b_0}$ and $\cal{O}_{T+v}^{b_1}$ for random $b_1\rand\set{0,1}^{m_1(\secp)}$. It generates a bipartite state on $\mreg'_L\mreg'_R$. $\mreg'_L$ is given to $\cal{A}_L^1$ and $\mreg'_R$ is given to $\cal{A}_R^1$. 
        \item $\cal{A}_L^1$ and $\cal{A}_R^1$ are given access to $\cal{O}_{S^\perp+u}^{b_0}$ and $\cal{O}_{T+v}^{b_1}$. $\cal{A}_L^1$ on $\lreg'\mreg'_L$ generates the answer $b_1^l$. $\cal{A}_R^1$ on $\rreg'\mreg'_R$ generates the answer $b_1^r$. The adversary wins iff $b_0^l=b_0^r=b_0$ and $b_1^l=b_1^r=b_1$. 
    \end{enumerate}
    This game is displayed in \Cref{fig:dual_multi-stage_monogamy-of-entanglement}. Let $\DualMultiStageSearchMonogamy(\cal{A},1^\secp)$ be the random variable that takes value $1/0/\bot$ if the adversary $\cal{A}$ wins/loses/aborts in the above game, respectively.
\end{definition}
\begin{figure}[h]
    \centering
    \begin{tikzpicture}[
        node distance=1.6cm and 2.2cm,
        every node/.style={font=\small},
        box/.style={draw, rounded corners, align=center, minimum width=3.2cm, minimum height=1.1cm},
        oracle/.style={draw, dashed, rounded corners, align=center, minimum width=3.2cm, minimum height=1.1cm},
        arr/.style={->, thick}
    ]
    
    \node[box] (chal) {Challenger\\
    $\brackets{\sk,\ket{\chain^{\sk}_{n_e,n,n}}_{\anchor\vessel}}$\\
    $\gets\genchain_{n_e,n,n}(1^\secp)$};
    
    \node[box, right=of chal] (AM0) {$\mathcal A_M^0$};
    \node[box, right=of AM0] (PublicOracles) {Public Oracles:\\
    $\cal{O}_{T+v},\cal{O}_{S^\perp+u}$};
    
    \node[box, below=of AM0] (AM1) {$\mathcal A_M^1\rightleftharpoons\cal{O}_{S^\perp+u}^{b_0}$};
    \node[box, below left=of AM1] (AL0) {$\mathcal A_L^0\rightleftharpoons\cal{O}_{S^\perp+u}^{b_0}$};
    \node[box, below right=of AM1] (AR0) {$\mathcal A_R^0\rightleftharpoons\cal{O}_{S^\perp+u}^{b_0}$};
    
    \node[box, below=of AM1] (AM2) {$\mathcal A_M^2\rightleftharpoons\cal{O}_{S^\perp+u}^{b_0},\cal{O}_{T+v}^{b_1}$};
    
    \node[box, below=of AL0] (AL1) {$\mathcal A_L^1\rightleftharpoons\cal{O}_{S^\perp+u}^{b_0},\cal{O}_{T+v}^{b_1}$};
    \node[box, below=of AR0] (AR1) {$\mathcal A_R^1\rightleftharpoons\cal{O}_{S^\perp+u}^{b_0},\cal{O}_{T+v}^{b_1}$};
    
    \draw[arr] (chal) -- node[above] {$\vessel$} (AM0);
    
    \draw[arr] (AM0) -- node[above] {$\lreg$} (AL0);
    \draw[arr] (AM0) -- node[above] {$\rreg$} (AR0);
    \draw[arr] (AM0) -- node[right] {$\mreg$} (AM1);
    
    \draw[arr] (AM1) -- node[above] {$\mreg_L$} (AL0);
    \draw[arr] (AM1) -- node[above] {$\mreg_R$} (AR0);
    \draw[arr] (AM1) -- node[right] {$\mreg'$} (AM2);
    
    \draw[arr] (AL0) -- node[left] {$\lreg'$} (AL1);
    \draw[arr] (AR0) -- node[right] {$\rreg'$} (AR1);
    
    \draw[arr] (AM2) -- node[above] {$\mreg'_L$} (AL1);
    \draw[arr] (AM2) -- node[above] {$\mreg'_R$} (AR1);
    
    \node[left=0.3cm of AL0] (b0l){$b_0^l$};
    \node[right=0.3cm of AR0] (b0r){$b_0^r$};
    \node[left=0.3cm of AL1] (b1l){$b_1^l$};
    \node[right=0.3cm of AR1] (b1r){$b_1^r$};
    \draw[arr] (AL0) -- node[left] {} (b0l);
    \draw[arr] (AR0) -- node[right] {} (b0r);
    \draw[arr] (AL1) -- node[left] {} (b1l);
    \draw[arr] (AR1) -- node[right] {} (b1r);
    
    \end{tikzpicture}

    \caption{The dual multi-stage monogamy-of-entanglement game \DualMultiStageSearchMonogamy.}
    \label{fig:dual_multi-stage_monogamy-of-entanglement}
\end{figure}
\begin{corollary}\label{cor:extracting_hadamard_basis_measurement_results}
    Let $n(\secp)\geq\secp$ and $m_0(\secp)=m_1(\secp)=m(\secp)\geq\secp$ be polynomials of $\secp$. There exists an efficient extractor $\cal{E}_H$ with access to $\sk$ and the register $\mreg$ such that the following holds. Take any adversary $\cal{A}$ in the multi-stage monogamy-of-entanglement game described in \Cref{def:multi-Stage_monogamy-of-entanglement} such that
    \begin{equation*}
        \prob{\DualMultiStageSearchMonogamy(\cal{A},1^\secp)\neq\bot}=\eps'(\secp)
    \end{equation*}
    and
    \begin{equation*}
        \frac{\prob{\DualMultiStageSearchMonogamy(\cal{A},1^\secp)=1}}{\prob{\DualMultiStageSearchMonogamy(\cal{A},1^\secp)\neq\bot}}=1-\eps(\secp)
    \end{equation*}
    for inverse polynomials $\eps(\secp),\eps'(\secp)$. Then running $\cal{E}_H^{\cal{A}_M^1}$ on the state $\mreg$ conditioned on $\cal{A}_M^0$ not aborting gives $x$, the Hadamard basis measurement result on $\anchor$, with probability at least $1-2\sqrt{2}\eps^{1/2}-\negl(\secp)$. That is, with $\xreg$ denoting the output register of $\cal{E}_H^{\cal{A}_M^1}$,
    \begin{equation*}
        \expect{\tr\brackets{\Pi_{\sf eq}\cal{E}_H^{\cal{A}_M^1}\rho_{\xreg\mreg}\brackets{\cal{E}_H^{\cal{A}_M^1}}^\dagger\Pi_{\sf eq}}\middle\vert\substack{\brackets{\sk,\ket{\chain^{\sk}}_{\anchor\vessel}}\gets\genchain(1^\secp)\\ \cal{A}_M^0\text{ does not abort and produces }\rho_{\xreg\mreg}\text{ on }\xreg\mreg}}\geq 1-2\sqrt{2}\eps^{1/2}-\negl(\secp)
    \end{equation*}
    where $\Pi_{\sf eq}=\sum_{x}\ket{x}_{\anchor}\bra{x}\otimes\ket{x}_{\xreg}\bra{x}$ is the projector that tests whether the output is correct. Furthermore, $\cal{E}_H^{\cal{A}_M^1}$ only makes black-box access to $\cal{A}_M^1$ and acts on $\mreg$. When we say $\cal{A}_M^1$, we refer to the process only up to \BeforeSplit, excluding the renaming and splitting procedure.
\end{corollary}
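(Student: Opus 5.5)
We will derive \Cref{cor:extracting_hadamard_basis_measurement_results} from \Cref{lem:extracting_standard_basis_measurement_results} using the standard/Hadamard duality of the anchor state, \Cref{fact:standard_hadamard_duality}. The fact says that $\genchain_{n_e,n,n}$ has the same output distribution as the following procedure: take the anchor state produced by $\genchain_{n_e,n,n}$ (symmetric here since $n_h=n_s=n$), apply $H^{\otimes(n_e+2n)}$ to $\vessel$, and relabel $(S,T,u,v)\mapsto(T^\perp,S^\perp,v,u)$. Under this relabelling the oracle $\cal{O}_{S^\perp+u}$ for the original key becomes $\cal{O}_{T'+v'}$ for the new key, and vice versa. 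So \DualMultiStageSearchMonogamy against the original state is the same game as \MultiStageSearchMonogamy against a Hadamard-rotated state.

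To make this match the lemma, we also need the Hadamard on $\vessel$ to correspond to a Hadamard on $\anchor$. The anchor state is maximally entangled across $\anchor{:}\vessel$, so $(I\otimes H^{\otimes})$ on $\vessel$ can be moved to the anchor side. We will check that the Hadamard-basis outcome $x$ on $\anchor$ corresponds exactly to the standard-basis value that \Cref{lem:extracting_standard_basis_measurement_results} extracts for the dual key. Concretely, we will verify on the explicit superposition formula that $\langle x|_{\anchor}H^{\otimes n_e}$ applied to $\ket{\chain^\sk}$ gives a vessel state lying in the coset $U_{\sf shift}^{-t}(0^n\times x\times 0^n)+u+T^\perp$. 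This matches the dual key's middle-coordinate coset, up to a fixed invertible relabeling of $x$ determined by $U_{\sf shift}$.

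Given an adversary $\cal{A}$ for \DualMultiStageSearchMonogamy, we define $\cal{A}'$ for \MultiStageSearchMonogamy by running $\cal{A}$ with the two oracle families swapped. Its winning and abort probabilities are then identical to those of $\cal{A}$, with the same $\eps,\eps'$.

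We then set
\[\cal{E}_H^{\cal{A}_M^1} := \text{(relabel)}\circ\cal{E}_S^{\cal{A}_M^1}[\sk'],\]
where $\sk'$ is the dual key computed from $\sk$. This means we run the extractor of \Cref{lem:extracting_standard_basis_measurement_results} with the collapsing oracle built on $S^\perp+u$: the oracle outputs $\cal{O}_{\sf random}(\Can_{T^\perp}(z))$, and $\cal{O}_{T+v}$ is simulated from $\sk$. We read off the unique database entry $w'$ and output the middle $n$ bits of $U_{\sf shift}^t w'$. The guarantee $1-2\sqrt2\eps^{1/2}-\negl$ then transfers directly, because every quantity is a function of the joint distribution, which the fact shows to be identical.

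The main obstacle is bookkeeping rather than a new argument. We must confirm that the canonical representations $\Can_{T^\perp}$ and the dual change-of-basis matrix $U_{\sf shift}^{-t}$, with reversed column order, satisfy the hypotheses of \Cref{lem:collapsing_versus_non-collapsing} with $a=n$, $b=2n$, $c=3n$. We must also confirm that the relative phases introduced when moving the Hadamard from $\vessel$ to $\anchor$ affect only an $x$-dependent global phase, which does not change measurement statistics in the computational basis after the rotation. We expect the reversal of column order (first $n_h$ versus last $n_s$ columns) to be the spot that needs care. We would handle it by conjugating with the reversal permutation matrix, which preserves uniformity of $U_{\sf shift}$.
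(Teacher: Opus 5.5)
Your proposal is correct and follows the paper's route. The paper's entire proof is a one-line appeal to the standard/Hadamard symmetry of the anchor state (\Cref{fact:standard_hadamard_duality}) plus a rerun of \Cref{lem:extracting_standard_basis_measurement_results} with the dual key, and your reduction and explicit extractor spell this out. One detail to make explicit: $\cal{A}'$ should first apply $H^{\otimes(n_e+2n)}$ to $\vessel$ and then run $\cal{A}$, since swapping oracle labels alone does not give the same game. Your extra care on the anchor side is warranted: in fact $H_{\vessel}\ket{\chain^{\sk}}$ equals $H_{\anchor}$ applied to the dual anchor state, so the duality holds only up to a Hadamard on $\anchor$. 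That Hadamard is exactly what turns the standard-basis guarantee into the Hadamard-basis one.
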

\begin{proof}
    Similar to \Cref{lem:extracting_standard_basis_measurement_results}. By the symmetric structure of $\ket{\chain^\sk}_{\anchor\vessel}$.
\end{proof}
\subsection{Entanglement Rigidity in Multi-Stage Monogamy-of-Entanglement Games}
Define the Pauli operators
\begin{equation*}
    \sigma_X=
    \begin{pmatrix}
        0 & 1 \\
        1 & 0 \\
    \end{pmatrix}
    \quad
    \sigma_Y=
    \begin{pmatrix}
        0 & -i\\
        i & 0\\
    \end{pmatrix}
    \quad
    \sigma_Z=
    \begin{pmatrix}
        1 & 0 \\
        0 & -1 \\
    \end{pmatrix}
\end{equation*}
\begin{lemma}[EPR Pair Rigidity, \cite{vidick2021classicalproofsquantumknowledge}, Lemma 5.4]\label{lem:EPR_pair_rigidity}
    Let $\ket{\psi}_{\areg\breg}\in\brackets{\mathbb{C}^2}^{\otimes n}_{\areg}\otimes\mathcal{H}_{\breg}$, where $\mathcal{H}_{\breg}$ is arbitrary. Suppose that for every $a\in\set{0,1}^n$ there exist observables $X^{\breg}(a)$ and $Z^{\breg}(a)$ on $\mathcal{H}_{\breg}$ such that 
    \begin{equation*}
        \forall W\in\set{X,Z},\quad \expectc{a}{\brackets{\sigma_W^{\areg}(a)-W^{\breg}(a)}\ket{\psi}}^2\leq\eps,
    \end{equation*}
    for some $0\leq\eps\leq 1$ where $\sigma_W^{\areg}(a)=\sigma_W^{a_1}\otimes \sigma_W^{a_2}\otimes\cdots\otimes \sigma_W^{a_n}$ and the expectation is under the uniform distribution over $a\in\set{0,1}^n$. Then there exists an isometry
    \begin{equation*}
        \Phi_\breg:\mathcal{H}_{\breg}\rightarrow\brackets{\brackets{\mathbb{C}^2}^{\otimes n}}_{\breg'}\otimes\cal{H}_{\hat{\breg}}
    \end{equation*}
    such that
    \begin{equation*}
        \tr\brackets{\bra{\epr}^{\otimes n}_{\areg\breg'}\brackets{(I_{\areg}\otimes\Phi_{\breg})\brackets{\ket{\psi}_{\areg\breg}\bra{\psi}}}\ket{\epr}^{\otimes n}_{\areg\breg'}}=1-O\brackets{\eps^{1/2}}.
    \end{equation*}
    Moreover, the isometry $\Phi_\breg$ can be implemented as an $O(n)$-size quantum circuit acting on $\mathcal{H}_\breg$ as well as some ancilla qubits, and that uses controlled gates for $X^\breg(a)$ and $Z^\breg(b)$ (controlled on an ancilla register of dimension $2^n$ that contains $a$ or $b$) as black boxes. Precisely, $\Phi^\breg$ is defined by
    \begin{equation*}
        \Phi_\breg\ket{\phi}_\breg=\brackets{\frac{1}{2^n}\sum_{a,b}X^\breg(a)Z^\breg(b)\otimes\sigma_X(a)\sigma_Z(b)\otimes I}\ket{\phi}_\breg\ket{\epr}^{\otimes n},
    \end{equation*}
    and letting $\mathcal{H}_{\hat{\breg}}=\mathcal{H}_\breg\otimes\brackets{\mathbb{C}^2}^{\otimes n}$, with register $\breg'$ associated with the last $n$ copies of $\mathbb{C}^2$.
\end{lemma}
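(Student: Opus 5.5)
We follow the standard "self-testing from anticommuting reflections" route of \cite{vidick2021classicalproofsquantumknowledge}. First we convert the approximate relations $\sigma_W^{\areg}(a)\ket{\psi}\approx W^{\breg}(a)\ket{\psi}$ into exact EPR correlations after applying $\Phi_\breg$. Concretely, we write $\ket{\psi}=\sum_{x}\ket{x}_\areg\ket{\psi_x}_\breg$ and expand $(I\otimes\Phi_\breg)\ket{\psi}$ using the explicit formula
\[
\Phi_\breg\ket{\phi}=\frac{1}{2^n}\sum_{a,b}X^\breg(a)Z^\breg(b)\ket{\phi}\otimes\sigma_X(a)\sigma_Z(b)\ket{\epr}^{\otimes n}.
\]
The key observation is that on the ideal state we may move each $X^\breg(a)$ and $Z^\breg(b)$ across to $\areg$, replacing it by $\sigma_X^\areg(a)$ and $\sigma_Z^\areg(b)$. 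The resulting "ideal" vector is
\[
\frac{1}{2^n}\sum_{a,b}\bigl(\sigma_Z^\areg(b)\sigma_X^\areg(a)\otimes\sigma_X(a)\sigma_Z(b)\bigr)\ket{\psi}\ket{\epr}^{\otimes n}.
\]
The Pauli twirl identity $\frac{1}{4^n}\sum_{a,b}P_{a,b}\otimes P_{a,b}^{*}=\ket{\epr}\bra{\epr}^{\otimes n}$ (up to swap and normalization) turns this into $\ket{\epr}^{\otimes n}_{\areg\breg'}\otimes\ket{\mathrm{junk}}$ exactly.

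The steps are as follows.
\begin{enumerate}
\item Bound the error of the first swap, $X^\breg(a)Z^\breg(b)\ket{\psi}\to X^\breg(a)\sigma_Z^\areg(b)\ket{\psi}$. It is controlled by $\norm{(Z^\breg(b)-\sigma^\areg_Z(b))\ket{\psi}}$, using that $X^\breg(a)$ is unitary (a $\pm1$ observable).
\item Bound the error of the second swap, $X^\breg(a)\to\sigma_X^\areg(a)$. Since $\sigma_Z^\areg(b)$ acts on a different register, it commutes with $X^\breg(a)$, so this error is controlled by $\norm{(X^\breg(a)-\sigma^\areg_X(a))\ket{\psi}}$.
\item Sum the errors over $(a,b)$ via the triangle inequality with the averaging over $a,b$. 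By Cauchy--Schwarz, $\expectc{a}{\norm{\cdot}}\le\sqrt{\expectc{a}{\norm{\cdot}^2}}\le\sqrt{\eps}$, so the two swaps give total norm error $O(\eps^{1/2})$.
\item Convert the norm error into the fidelity $1-O(\eps^{1/2})$.
\end{enumerate}

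The main obstacle is step 3. The sum over $(a,b)$ carries a prefactor $2^{-n}$, not $4^{-n}$, so the triangle inequality cannot be applied term by term. Instead we must show that the operator $\frac{1}{2^n}\sum_{a,b}(\cdot)\otimes\sigma_X(a)\sigma_Z(b)$ is an isometry up to a constant. This holds because the states $\sigma_X(a)\sigma_Z(b)\ket{\epr}^{\otimes n}$ form an orthonormal Bell basis. The norm of the error vector then equals the root mean square over $(a,b)$ of the individual errors, and Jensen's inequality bounds this by $O(\sqrt{\eps})$.

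Finally, the circuit-size claim is immediate from the formula. We prepare $n$ EPR pairs, apply Hadamards to create uniform control registers $a$ and $b$, apply controlled-$Z^\breg(b)$ and controlled-$X^\breg(a)$ as black boxes, apply controlled Paulis onto $\breg'$, and uncompute the controls by an inverse Fourier/Hadamard step. This is $O(n)$ gates plus the oracle calls.
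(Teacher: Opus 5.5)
The paper does not prove this lemma (it is imported from Vidick--Zhang). Your fidelity argument is the standard swap-isometry proof of it, and it is correct once stated cleanly. Let $e_{a,b}:=X^\breg(a)Z^\breg(b)\ket{\psi}-\sigma_Z^\areg(b)\sigma_X^\areg(a)\ket{\psi}$. Orthonormality of the Bell states $\sigma_X(a)\sigma_Z(b)\ket{\epr}^{\otimes n}$ gives $\norm{(I\otimes\Phi_\breg)\ket{\psi}-\ket{\psi_{\mathrm{id}}}}^2=\E_{a,b}\norm{e_{a,b}}^2\le 4\eps$, using your two swaps and $(x+y)^2\le 2x^2+2y^2$. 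So your enumerated Step 3 (triangle inequality plus Cauchy--Schwarz on $\E\norm{\cdot}$) should simply be replaced by the computation in your ``main obstacle'' paragraph, and no Jensen step is needed.

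The exact identity that makes the ideal vector clean is $\frac{1}{2^n}\sum_{a,b}\sigma_Z(b)\sigma_X(a)\otimes\sigma_X(a)\sigma_Z(b)=\mathrm{SWAP}$, between $\areg$ and the EPR halves outside $\breg'$. This, rather than the $P\otimes P^{*}$ twirl you quote, is the identity that applies, because the operator on the ancilla is the transpose of the one on $\areg$. Hence $\ket{\psi_{\mathrm{id}}}=\ket{\epr}^{\otimes n}_{\areg\breg'}\otimes\ket{\psi}$ with $\ket{\psi}$ relocated into $\hat{\breg}$, and the fidelity is at least $1-4\sqrt{\eps}$.

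Your circuit paragraph, however, contains a step that fails as written. After the controlled $Z^\breg(b)$, $X^\breg(a)$ and the controlled Paulis, the state is $\frac{1}{2^n}\sum_{a,b}\ket{a,b}\otimes V_{a,b}$ with the $V_{a,b}$ orthonormal. A Hadamard/Fourier transform on the control register does not return it to $\ket{0^{2n}}$: that branch carries only $2^{-n}\,\Phi_\breg\ket{\phi}$, so the circuit you describe does not implement the isometry. Also, by the lemma's formula the Paulis act on the EPR halves outside $\breg'$. Placing them on $\breg'$ transposes them, and the extracted pairs end up between $\areg$ and the wrong register.

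The fix is to let the control register itself become the Bell pairs:
\begin{enumerate}
\item Prepare $a,b$ in uniform superposition.
\item Apply controlled-$Z^\breg(b)$, then controlled-$X^\breg(a)$.
\item For each $i$, apply $H$ to the $b_i$ qubit, then a CNOT from it onto the $a_i$ qubit. This maps $\ket{a_i}\ket{b_i}$ to $(\sigma_X^{a_i}\sigma_Z^{b_i}\otimes I)\ket{\epr}$, with the $b_i$ qubit becoming the $i$-th qubit of $\breg'$.
\end{enumerate}
This uses $O(n)$ gates plus the two black-box calls and needs no uncomputation. Equivalently, you can keep your separate EPR ancillas and uncompute $(a,b)$ by coherently Bell-decoding it from them.
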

\begin{corollary}[EPR Pair Rigidity (corollary)]\label{cor:extract_epr_pair}
    Let $\ket{\psi}_{\areg\breg}\in\brackets{\mathbb{C}^2}^{\otimes n}_{\areg}\otimes\mathcal{H}_{\breg}$, where $\mathcal{H}_{\breg}$ is arbitrary. Let $\widetilde{\breg}$ be the sub-register of $\breg$ containing the first $n$ qubits and define:
    \begin{equation*}
        \Pi_{\sf eq}=\sum_{x\in\mathbb{F}_2^n}\ket{x}_\areg\bra{x}\otimes\ket{x}_{\widetilde{\breg}}\bra{x}\otimes I_{\widetilde{\breg}'}.
    \end{equation*}
    Suppose that there exists two isometries $\cal{U}_X^\breg,\cal{U}_Z^\breg:\cal{H}_\breg\rightarrow\cal{H}_{\widetilde{\breg}}\otimes\cal{H}_{\widetilde{\breg}'}$ such that 
    \begin{equation*}
        \abs{\Pi_{\sf eq}(I_\areg\otimes \cal{U}_Z^\breg)\ket{\psi}}^2\geq 1-\eps\quad\text{ and }\quad\abs{\Pi_{\sf eq}(H^{\otimes n}_\areg\otimes \cal{U}_X^\breg)\ket{\psi}}^2\geq 1-\eps.
    \end{equation*}Then there exists an isometry
    \begin{equation*}
        \Phi_\breg:\mathcal{H}_{\breg}\rightarrow\brackets{\brackets{\mathbb{C}^2}^{\otimes n}}_{\breg'}\otimes\cal{H}_{\hat{\breg}}
    \end{equation*}
    such that
    \begin{equation*}
        \tr\brackets{\bra{\epr}^{\otimes n}_{\areg\breg'}\brackets{(I_{\areg}\otimes\Phi_{\breg})\brackets{\ket{\psi}_{\areg\breg}\bra{\psi}}}\ket{\epr}^{\otimes n}_{\areg\breg'}}=1-O\brackets{\eps^{1/2}}.
    \end{equation*}
    Moreover, the isometry $\Phi_\breg$ can be implemented as an $O(n)$-size quantum circuit acting on $\mathcal{H}_\breg$ as well as some ancilla qubits, and that uses controlled gates for $\cal{U}_X^\breg$ and $\cal{U}_Z^\breg$ as black boxes.
\end{corollary}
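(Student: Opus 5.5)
The plan is to reduce the corollary to \Cref{lem:EPR_pair_rigidity}. From the isometries $\cal{U}_Z^\breg,\cal{U}_X^\breg$ we build families of binary observables $Z^\breg(b)$ and $X^\breg(a)$ that satisfy its hypothesis with error $O(\eps)$.

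\textbf{Making the isometries unitary.} First I dilate each isometry to a unitary. Adjoin an ancilla register $\mathbf{C}$ prepared in $\ket{0}$, and let $V_Z$ be a unitary on $\breg\mathbf{C}$ with $V_Z(\ket{\phi}_\breg\ket{0}_{\mathbf{C}})=\cal{U}_Z^\breg\ket{\phi}$. Define $V_X$ from $\cal{U}_X^\breg$ in the same way. We then replace $\breg$ by $\breg\mathbf{C}$ and $\ket{\psi}$ by $\ket{\psi}\ket{0}_{\mathbf{C}}$. Since $\cal{H}_\breg$ in \Cref{lem:EPR_pair_rigidity} is arbitrary, this costs nothing. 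Set
\begin{equation*}
Z^\breg(b):=V_Z^\dagger\brackets{\sigma_Z(b)_{\widetilde{\breg}}\otimes I}V_Z,\qquad X^\breg(a):=V_X^\dagger\brackets{\sigma_Z(a)_{\widetilde{\breg}}\otimes I}V_X.
\end{equation*}
Both are Hermitian unitaries, i.e.\ observables. The dilation is needed precisely because $\cal{U}^\dagger D\,\cal{U}$ need not be unitary when $\cal{U}$ is only an isometry.

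\textbf{Verifying the hypothesis for $Z$.} For observables, $\norm{(\sigma_Z^\areg(b)-Z^\breg(b))\ket{\psi}}^2=2-2\,\mathrm{Re}\bra{\psi}\sigma_Z^\areg(b)Z^\breg(b)\ket{\psi}$. Because $V_Z$ acts only on $\breg\mathbf{C}$, it commutes with $\sigma_Z^\areg(b)$, so
\begin{equation*}
\bra{\psi}\sigma_Z^\areg(b)Z^\breg(b)\ket{\psi}=\bra{\psi}V_Z^\dagger\brackets{\sigma_Z(b)_\areg\otimes\sigma_Z(b)_{\widetilde{\breg}}}V_Z\ket{\psi}.
\end{equation*}
The key identity is $\expectc{b}{\sigma_Z(b)\otimes\sigma_Z(b)}=\sum_{x,x'}\expectc{b}{(-1)^{\ipd{b,x+x'}}}\ket{x}\bra{x}\otimes\ket{x'}\bra{x'}=\Pi_{\sf eq}$. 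Averaging over $b$ therefore gives
\begin{equation*}
\expectc{b}{\norm{(\sigma_Z^\areg(b)-Z^\breg(b))\ket{\psi}}^2}=2-2\norm{\Pi_{\sf eq}(I_\areg\otimes\cal{U}_Z^\breg)\ket{\psi}}^2\leq 2\eps.
\end{equation*}

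\textbf{Verifying the hypothesis for $X$.} Use $\sigma_X(a)=H^{\otimes n}\sigma_Z(a)H^{\otimes n}$ on $\areg$. The same computation applied to $H^{\otimes n}_\areg\ket{\psi}$ shows that $\expectc{a}{\norm{(\sigma_X^\areg(a)-X^\breg(a))\ket{\psi}}^2}$ equals $2-2\norm{\Pi_{\sf eq}(H^{\otimes n}_\areg\otimes\cal{U}_X^\breg)\ket{\psi}}^2\leq 2\eps$.

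\textbf{Applying the lemma.} \Cref{lem:EPR_pair_rigidity} with error $2\eps$ now yields the isometry $\Phi_\breg$ with $\epr^{\otimes n}$ fidelity $1-O(\eps^{1/2})$. For efficiency, note that a controlled $Z^\breg(b)$ (controlled on a register holding $b$) is implemented by applying $V_Z$, then the controlled Pauli $\sigma_Z(b)$ on $\widetilde{\breg}$ (which has $O(n)$ gates), then $V_Z^\dagger$; controlled $X^\breg(a)$ is analogous. So $\Phi_\breg$ remains an $O(n)$-size circuit that uses the given maps as black boxes.

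\textbf{Main obstacle.} I expect the delicate step to be this unitarity/dilation point, and in particular that the black boxes must also be used in inverse ($V^\dagger$), i.e.\ to uncompute. I would address it by assuming, as is standard, that $\cal{U}_X^\breg,\cal{U}_Z^\breg$ are given as unitary circuits on $\breg$ plus ancillas in $\ket{0}$. Under that assumption, $V$ and $V^\dagger$ come for free, and the ancilla simply becomes part of $\hat{\breg}$.
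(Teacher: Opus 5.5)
Your proposal is correct and is essentially the paper's proof: define $W^\breg(a)$ by conjugating a Pauli string on $\widetilde{\breg}$ by the given isometry, use $\mathbb{E}_b[\sigma_Z(b)\otimes\sigma_Z(b)]=\Pi_{\sf eq}$ to verify the hypothesis of \Cref{lem:EPR_pair_rigidity}, and invoke the lemma; the paper's convention that $\cal{U}^\dagger$ denotes ``the reverse procedure'' of $\cal{U}$ is exactly your unitary dilation. Your version is slightly more careful than the paper's on two points: you conjugate $\sigma_Z(a)$ rather than $\sigma_X(a)$ when defining $X^\breg(a)$, which is what the computational-basis $\Pi_{\sf eq}$ hypothesis actually requires (the paper writes $\sigma_W$ in both cases), and your error bound of $2\eps$ rather than $\eps$ is the accurate one, although the constant does not matter for $1-O(\eps^{1/2})$.
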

\begin{proof}
    We construct observables $W^\breg(a)$ in \Cref{lem:EPR_pair_rigidity} using $U_X^\breg$ and $U_Z^\breg$. Let (we abuse the notation of $\cal{U}^\dagger$ to mean the reverse procedure of $\cal{U}$)
    \begin{equation*}
        W^\breg(a)=\brackets{\cal{U}_W^\breg}^{\dagger}\brackets{\sigma_W^{\widetilde{\breg}}(a)\otimes I_{\widetilde{\breg}'}}\cal{U}_W^\breg\quad\forall W\in\set{X,Z}.
    \end{equation*}
    We have 
    \begin{align*}
        \abs{\Pi_{\sf eq}(I_\areg\otimes U_Z)\ket{\psi}}^2\geq 1-\eps\Rightarrow& \expectc{a}{\brackets{\sigma_Z^{\areg}(a)-Z^{\breg}(a)}\ket{\psi}}^2\leq\eps\\
        \abs{\Pi_{\sf eq}(H_\areg^{\otimes n}\otimes U_X)\ket{\psi}}^2\geq 1-\eps\Rightarrow& \expectc{a}{\brackets{\sigma_X^{\areg}(a)-X^{\breg}(a)}\ket{\psi}}^2\leq\eps.
    \end{align*}
    By \Cref{lem:EPR_pair_rigidity}, we obtain the result.
\end{proof}
\newcommand{\MultiStageIndependentMonogamy}{{\sf MultiStageIndependentMonogamy}\xspace}
\begin{definition}[Multi-Stage Independent Monogamy-of-Entanglement Game]\label{def:multi-Stage_independent_monogamy-of-entanglement}
    Let $\secp\in\mathbb{N}$ be the security parameter and let $n(\secp)$, $m(\secp)$ be polynomials. Consider the following game between the challenger and an adversary $\cal{A}=\allowdisplaybreaks(\cal{A}_M^0,\cal{A}_M^1,\cal{A}_M^2,\cal{A}_L^0,\cal{A}_L^1,\cal{A}_R^0,\cal{A}_R^1)$:
    \begin{enumerate}
        \item The challenger generates $\brackets{\sk,\ket{\chain^{\sk}}_{\anchor\vessel}}\gets\genchain(1^\secp)$ and gives the register $\vessel$ to $\cal{A}^0_M$. It additionally samples $\theta_0,\theta_1\rand\set{0,1}$ and gives $\theta_0,\theta_1$ to $\cal{A}_M^0$. All parties of the adversary are given oracle access to $\cal{O}_{T+v}$ and $\cal{O}_{S^\perp+u}$. Let $C_0=T+v$ and $C_1=S^\perp+u$.
        \item $\cal{A}_M^0$ can choose to abort in this step. If it aborts, the output of the game is $\bot$. Otherwise, it generates a tripartite state on $\lreg\mreg\rreg$. It sends $\lreg$ to $\cal{A}_L^0$, sends $\mreg$ to $\cal{A}_M^1$ and sends $\rreg$ to $\cal{A}_R^0$.
        \item The challenger samples $b_0\rand\mathbb{F}_2^m$, and gives $\cal{A}_M^1$ oracle access to $\cal{O}_{C_{\theta_0}}^{b_0}$. Then, $\cal{A}_M^1$ on $\mreg$ generates a tripartite state on $\mreg_L\mreg'\mreg_R$. It sends $\mreg_L$ to $\cal{A}_L^0$, sends $\mreg'$ to $\cal{A}_M^2$ and sends $\mreg_R$ to $\cal{A}_R^0$. We will name the time just after all operations of $\cal{A}_M^1$ are done and before renaming the registers and splitting them as \BeforeSplit.
        \item $\cal{A}_L^0$ and $\cal{A}_R^0$ are given access to $\cal{O}_{C_{\theta_0}}^{b_0}$. $\cal{A}_L^0$ on $\lreg\mreg_L$ produces answer $b_0^l$ and a state on register $\lreg'$ that is sent to $\cal{A}_L^1$. $\cal{A}_R^0$ on $\rreg\mreg_R$ produces answer $b_0^r$ and a state on register $\rreg'$ that is sent to $\cal{A}_R^1$.
        \item The challenger samples $b_1\rand\mathbb{F}_2^m$ gives $\cal{A}_M^2$ oracle access to $\cal{O}_{C_{\theta_0}}^{b_0}$ and $\cal{O}_{C_{\theta_1}}^{b_1}$. It generates a bipartite state on $\mreg'_L\mreg'_R$. $\mreg'_L$ is given to $\cal{A}_L^1$ and $\mreg'_R$ is given to $\cal{A}_R^1$. 
        \item $\cal{A}_L^1$ and $\cal{A}_R^1$ are given access to $\cal{O}_{C_{\theta_0}}^{b_0}$ and $\cal{O}_{C_{\theta_1}}^{b_1}$. $\cal{A}_L^1$ on $\lreg'\mreg'_L$ generates the answer $b_1^l$. $\cal{A}_R^1$ on $\rreg'\mreg'_R$ generates the answer $b_1^r$. The adversary wins iff $b_0^l=b_0^r=b_0$ and $b_1^l=b_1^r=b_1$. 
    \end{enumerate}
    Let $\MultiStageIndependentMonogamy(\cal{A},1^\secp)$ be the random variable that takes value $1/0/\bot$ if the adversary $\cal{A}$ wins/loses/aborts in the above game, respectively.
\end{definition}
\begin{theorem}\label{thm:extracting_EPR_pairs}
    Let $n(\secp)\geq\secp$ and $m_0(\secp)=m_1(\secp)=m(\secp)\geq\secp$ be polynomials of $\secp$. There exists an efficient extractor $\cal{E}$ with access to $\sk$ and the register $\mreg$ such that the following holds. Take any adversary $\cal{A}$ in the multi-stage monogamy-of-entanglement game described in \Cref{def:multi-Stage_monogamy-of-entanglement} such that
    \begin{equation*}
        \prob{\MultiStageIndependentMonogamy(\cal{A},1^\secp)\neq\bot}=\eps'(\secp)
    \end{equation*}
    and
    \begin{equation*}
        \frac{\prob{\MultiStageIndependentMonogamy(\cal{A},1^\secp)=1}}{\prob{\MultiStageIndependentMonogamy(\cal{A},1^\secp)\neq\bot}}=1-\eps(\secp)
    \end{equation*}
    for inverse polynomials $\eps(\secp),\eps'(\secp)$. Then running $\cal{E}^{\cal{A}_M^1}$ on the state $\mreg$ conditioned on $\cal{A}_M^0$ not aborting extracts a state on $\anchor\xreg$ that is very close to $n$ $\epr$-pairs. More specifically, with $\xreg$ denoting the output register of $\cal{E}^{\cal{A}_M^1}$,
    \begin{equation*}
        \expect{\tr\brackets{\bra{\epr}_{\anchor\xreg}^{\otimes n}\cal{E}^{\cal{A}_M^1}\rho_{\xreg\mreg}\brackets{\cal{E}^{\cal{A}_M^1}}^\dagger\ket{\epr}_{\anchor\xreg}^{\otimes n}}\middle\vert\substack{\brackets{\sk,\ket{\chain^{\sk}}_{\anchor\vessel}}\gets\genchain(1^\secp)\\ \cal{A}_M^0\text{ does not abort and produces }\rho_{\xreg\mreg}\text{ on }\xreg\mreg}}\geq 1-O\brackets{\eps^{1/4}}.
    \end{equation*}
    Furthermore, $\cal{E}^{\cal{A}_M^1}$ only makes black-box access to $\cal{A}_M^1$ and acts on $\mreg$. When we say $\cal{A}_M^1$, we refer to the process only up to \BeforeSplit, excluding the renaming and splitting procedure.
\end{theorem}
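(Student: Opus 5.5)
The plan is to reduce \Cref{thm:extracting_EPR_pairs} to the two single-basis extraction results, \Cref{lem:extracting_standard_basis_measurement_results} and \Cref{cor:extracting_hadamard_basis_measurement_results}, and then apply the rigidity statement \Cref{cor:extract_epr_pair}.

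\textbf{Step 1: from the independent game to the fixed-order games.} Since $\theta_0,\theta_1$ are uniform and handed to $\cal{A}_M^0$, write the conditional success as an average over the four values of $(\theta_0,\theta_1)$. For $(\theta_0,\theta_1)=(0,1)$, the adversary obtained by hardwiring $\theta$ into $\cal{A}$ is exactly an adversary for \MultiStageSearchMonogamy. For $(1,0)$ it is an adversary for \DualMultiStageSearchMonogamy. The weights also change: conditioning on non-abort reweights each $\theta$-branch by its own non-abort probability, which Step 3 handles.

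The cases $\theta_0=\theta_1$ contribute only to the average. To avoid them, I would restrict attention to the first-round basis $\theta_0$ and note that the extraction in \Cref{lem:extracting_standard_basis_measurement_results} uses only that $\cal{A}_M^1$'s state at \BeforeSplit nearly determines $b_0$, via \Cref{lem:multi-stage_search_monogamy-of-entanglement}. That lemma's proof goes through \AsymmetricOracleMonogamy, where the second challenge must be in the complementary basis. So I would simply use the mixed cases.

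The conditional failure probability $\eps_{01}$ in the $(0,1)$ branch satisfies $\Pr[\text{non-abort}\wedge\theta=(0,1)]\cdot\eps_{01}\le \eps\eps'$. Provided each branch's non-abort mass is comparable to $\eps'/4$, this gives $\eps_{01}\le 4\eps\cdot(\text{ratio})$. The same holds for $\eps_{10}$.

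\textbf{Step 2: the two extractors.} Apply \Cref{lem:extracting_standard_basis_measurement_results} to the $(0,1)$ branch to obtain $\cal{E}_S$, which outputs the standard-basis value of $\anchor$ with probability $1-O(\eps_{01}^{1/2})$. Apply \Cref{cor:extracting_hadamard_basis_measurement_results} to the $(1,0)$ branch to obtain $\cal{E}_H$, which outputs the Hadamard-basis value with probability $1-O(\eps_{10}^{1/2})$.

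There is a subtlety here: the state $\rho_{\mreg}$ that $\cal{A}_M^0$ produces depends on $\theta$. To run both extractors on one state, I would instead define $\cal{E}$ acting on $\mreg$ produced by $\cal{A}_M^0$ with the actual $(\theta_0,\theta_1)$ supplied. Equivalently, prove the theorem for an adversary whose $\cal{A}_M^0$ ignores $\theta$ until later. I would argue this by observing that $\theta$ is classical and given before any quantum processing. So $\cal{A}_M^0$ can be split into a $\theta$-independent part followed by a $\theta$-controlled unitary, which is absorbed into $\cal{A}_M^1$. The extractor receives $\theta$ through its black-box access.

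\textbf{Step 3: rigidity.} Both extractors are black-box isometries on $\mreg$ (together with the purified compressed-oracle database) with measurement deferred. So they give $\cal{U}_Z^\breg,\cal{U}_X^\breg$ satisfying the hypotheses of \Cref{cor:extract_epr_pair} with error $O(\eps^{1/2})$. The corollary then yields $\Phi$ with EPR fidelity $1-O(\eps^{1/4})$, and $\cal{E}:=\Phi$. The weighted expectation over $\sk$ is handled by linearity, since the extraction bounds are themselves averages.

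\textbf{Main obstacle.} I expect the main obstacle to be the bookkeeping in Step 1. Conditioning on non-abort can skew the $\theta$-branches, and the hypothesis only bounds the averaged failure. If one branch has tiny non-abort mass, its conditional $\eps$ is uncontrolled, but that branch also carries little weight in the final fidelity. I would therefore define $\Phi$ to be controlled by $\theta$ and bound the $\theta$-averaged fidelity using Jensen's inequality, with concavity of $t\mapsto t^{1/4}$.
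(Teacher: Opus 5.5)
\textbf{The gap.} Your plan is the paper's own route: split on $(\theta_0,\theta_1)$, get $\cal{E}_S$ from the $(0,1)$ branch via \Cref{lem:extracting_standard_basis_measurement_results} and $\cal{E}_H$ from the $(1,0)$ branch via \Cref{cor:extracting_hadamard_basis_measurement_results}, then apply \Cref{cor:extract_epr_pair}. You correctly spot the weak point, but your repair does not work, so there is a genuine gap.

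\Cref{cor:extract_epr_pair} needs $\cal{U}_Z$ and $\cal{U}_X$ to succeed on one and the same state of $\anchor\mreg$. The branch argument only certifies $\cal{E}_S$ on $\rho^{(0,1)}$ and $\cal{E}_H$ on $\rho^{(1,0)}$, which are the outputs of $\cal{A}_M^0$ on different inputs.

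The $\theta$-dependence cannot be deferred into $\cal{A}_M^1$. The $\theta$-controlled isometry of $\cal{A}_M^0$ acts on $\vessel$ and ancillas jointly and produces $\lreg$ and $\rreg$ for $\cal{A}_L^0,\cal{A}_R^0$. Moving it therefore changes the adversary and its winning probability; it is not a rewriting of the same adversary. Making $\Phi$ controlled by $\theta$ does not help either: on each fixed branch you hold at most one of the two basis extractors. Finally, the branches with $\theta_0=\theta_1$ cannot simply be dropped, because the fidelity in the statement is averaged over all $\theta$.

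\textbf{The gap cannot be closed under the stated hypotheses.} Consider the following adversary.
\begin{itemize}
\item On $\theta=(0,0)$, $\cal{A}_M^0$ measures $\vessel$ in the standard basis to get $z\in T+v$ and puts copies of $z$ into $\lreg,\rreg$. Then $\cal{A}_L^0,\cal{A}_R^0$ and later $\cal{A}_L^1,\cal{A}_R^1$ simply query $\cal{O}^{b_0}_{T+v}(z)$ and $\cal{O}^{b_1}_{T+v}(z)$ themselves.
\item On $(1,1)$ it does the same in the Hadamard basis.
\item On the mixed branches it plays honestly.
\end{itemize}
This wins with probability $1$, or exactly $1-\eps$ for any inverse-polynomial $\eps$ if it errs on purpose. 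Yet on half the mass $\anchor$ has collapsed to a classical (standard- or Hadamard-basis) register. Any extractor acting on $\mreg$ then has fidelity at most $2^{-n}$ with $\ket{\epr}^{\otimes n}$ there, so the average is at most $1/2+2^{-n}$.

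The paper's three-line proof has the same hole: it applies \Cref{cor:extract_epr_pair} as though $\cal{A}_M^0$'s output did not depend on $\theta$.

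\textbf{What would make your argument work.} That implicit assumption is exactly the hypothesis you need. Suppose $\cal{A}_M^0$, including its abort decision, never sees $\theta$, which is revealed only from stage~1 on. Then:
\begin{itemize}
\item every branch has non-abort mass $\eps'/4$, so each mixed branch fails with conditional probability at most $4\eps$;
\item $\cal{E}_S$ and $\cal{E}_H$ run $\cal{A}_M^1$ with $\theta=(0,1)$ resp.\ $(1,0)$ hardwired, on the same $\rho_{\anchor\mreg}$ (purify into a reference register on which the observables act trivially);
\item the same-basis branches never enter;
\item Jensen over $\sk$ (not linearity) gives $1-O(\eps^{1/4})$.
\end{itemize}
Note that in \Cref{con:entanglement_localization}, $\theta_i$ is broadcast in the clear by $\cal{V}_L$ and reaches parts of the prover before the cut. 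So this condition would also have to be verified, or enforced, wherever the theorem is applied.
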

\begin{proof}
    Note that the game \MultiStageIndependentMonogamy is just a \MultiStageSearchMonogamy with $\frac{1}{4}$ probability, \DualMultiStageSearchMonogamy with $\frac{1}{4}$ and something else with $\frac{1}{2}$ probability. So such adversary $\cal{A}$ must succeed in \MultiStageSearchMonogamy and \DualMultiStageSearchMonogamy with probability at least $1-4\eps$. We can obtain two extractors $\cal{E}_S^{\cal{A}_M^1}$ and $\cal{E}_H^{\cal{A}_M^1}$ that, with probability $1-O\brackets{\eps^{1/2}}-\negl(\secp)$, extract the standard basis measurement and the Hadamard basis measurement by \Cref{lem:extracting_standard_basis_measurement_results} and \Cref{cor:extracting_hadamard_basis_measurement_results}. Then by \Cref{cor:extract_epr_pair}, we can construct $\cal{E}^{\cal{A}_M^1}$ from $\cal{E}_S^{\cal{A}_M^1}$ and $\cal{E}_H^{\cal{A}_M^1}$.
\end{proof}

%% file: warmup_entanglement_localization.tex
\providecommand{\info}{\ensuremath{\mathsf{info}}\xspace}

\section{Warmup: Localizing Entanglement in the High-Success-Probability Regime}

In this section, we present a warmup to the main results of the paper --- a protocol allowing us to localize EPR-pair halves to within some small region of spacetime, built and proven secure using the techniques developed in the previous sections. This is a simplified variant of the entanglement localization protocol in \Cref{subsec:entanglement-localization}, with the latter being a sequentially repeated version of the former. Sequential repetition will allow us to amplify extraction fidelity, even in the low-success-probability regime.

The definition and subsequent construction in this section will have a weaker form of extraction soundness than in the main result. In \Cref{subsec:entanglement-localization}, we give a stronger, more general definition accompanied by a construction which satisfies it, needing no additional assumptions.

\begin{definition}[Entanglement Localization, warmup version]
    An entanglement localization scheme consists of the following syntax:
    \begin{itemize}
        \item $\Setup(1^\secp)\to\param,\sparam,\areg,\breg$: The setup algorithm takes the security parameter $1^\secp$ and outputs the public parameters $\param$, the secret parameters $\sparam$, and a state on the bipartite register $\areg\breg$.
        \item $\Localize\brackets{\cal{P}(\breg)\rightleftharpoons \cal{V}(\sparam)}(\param,L,t)\rightarrow \breg',b$: This is a positional protocol between a QPT prover holding a quantum register $\breg$ and a PPT\footnote{In general, we should allow any quantum verifier. Nevertheless, our construction achieves classical verification.} verifier with input $\sparam$, further specified by public inputs $\param$ and the claimed location $(L,t)$. The prover's output is a quantum register $\breg'$ and the verifier's output is an acceptance indicator $b\in\{\top,\bot\}$.
    \end{itemize}
    It should satisfy the following properties --- parameterized by completeness probability $\alpha(\secp)$, extraction fidelity $\beta(\secp,\eta)$, and localization parameter $\Delta(\secp)$ --- for all $\secp\in\N$.
    \begin{itemize}
        \item $\alpha$-\textbf{Completeness}: For any location $L$ and time $t$, there is a QPT prover $\cal{P}$ at position $(L,t)$ such that
        \[\Pr\left[b = \top : \begin{array}{r} \param,\sparam,\areg,\breg\gets \Setup(1^\secp) \\ \breg',b \gets \Localize\brackets{\cal{P}(\breg)\rightleftharpoons \cal{V}(\sparam)}(\param,L,t)\end{array}\right] \geq\alpha(\secp).\]
        \item $(\beta,\Delta)$-\textbf{Extraction Soundness}: 
        For any $L,t$ and QPT prover $\cal{P}^*$, let 
        \[1-\eta(\secp):=\Pr\left[b = \top : \begin{array}{r} \param,\sparam,\areg,\breg\gets \Setup(1^\secp) \\ \breg',b \gets \Localize\brackets{\cal{P}^*(\breg)\rightleftharpoons \cal{V}(\sparam)}(\param,L,t)\end{array}\right]\]
        be the probability that $\cal{P}^*$ passes the verification. Then there exists a (potentially unbounded, see \Cref{remark:efficient-extractor}) local extractor $\cal{E}$ that extracts $\ket{\epr}^{\otimes n}$ from registers near location $L$ at time $t$ with fidelity at least $\beta(\secp,\eta)$:
        \[\hspace{-2.25em}\expect{\bra{\epr}^{\otimes n}_{\areg\breg^*}\rho_{\areg\breg^*}\ket{\epr}^{\otimes n}_{\areg\breg^*}:\begin{array}{r}\param,\sparam,\areg,\breg\gets \Setup(1^\secp)\\ \breg^*\gets\register[L_\Delta~@~t]\brackets{\Localize\brackets{\cal{P}^*(\breg)\rightleftharpoons \cal{V}(\sparam)}(\param,L,t)}\\\breg^*\gets\cal{E}(\param,\breg^*)\end{array}} \geq \beta(\secp,\eta)~,\]
        where  $\rho_{\areg\breg^*}$ is the joint state on $\areg\breg^*$ after running the extractor, and $L_\Delta=[L-\Delta,L+\Delta]$. Here we overload the notation $\breg^*$, meaning that we update the register $\breg^*$ by applying a channel to it.
    \end{itemize}

    We say that the entanglement localization scheme is \emph{non-destructive} if it additionally satisfies the following property.
    \begin{itemize}
        \item $\textbf{Non-destructive}$: For any spatial location $L$ and time $t$, 
        \[
        \expect{\TraceDist{\rho_{\areg\breg}}{\rho_{\areg\breg'}}:\begin{array}{r}\param,\sparam,\areg,\breg\gets \Setup(1^\secp)\\ \breg',b \gets \Localize\brackets{\cal{P}(\breg)\rightleftharpoons \cal{V}(\sparam)}(\param,L,t)\end{array}} \leq \negl(\secp)~,
        \]
        where $\rho_{\areg\breg}$ is the joint state on $\areg\breg$ after $\Setup$ and $\rho_{\areg\breg'}$ is the joint state on $\areg\breg'$ later after $\Localize$. 
    \end{itemize}
\end{definition}

\begin{remark}\label{remark:efficient-extractor}
    Note that this definition does not require the extractor $\cal{E}$ to be efficient. That is, $\cal{E}$ can run for unbounded time, and, in the oracle model, can make unbounded queries to its oracle. This is still meaningful, as entanglement is an information-theoretic notion that cannot be duplicated even given unbounded time. Moreover, the extractor never runs in the real world --- only in the analysis. Nevertheless, we note that our extractor \emph{is} efficient if additionally given an ``extraction key'' $\ek$ that is sampled by $\Setup$ (but not known to the prover). In our construction, $\ek=(S,T,v,u)$. We mention this here because efficient extraction given $\ek$ may be a useful feature, depending on the application.
\end{remark}

\begin{construction}[Entanglement Localization, warmup version]\label{con:non-destructive_entanglement_localization}
        Let $\secp\in\mathbb{N}$ be the security parameter and $n(\secp),m(\secp)\geq\secp$ be polynomials. Take some localization parameter $\Delta(\secp)$. We consider the following non-destructive entanglement localization scheme:
        \begin{itemize}
            \item $\Setup(1^\secp)$:
            \begin{enumerate}
                \item Sample a uniform random subspace $T\leq\mathbb{F}_2^{3n}$ of dimension $2n$ and sample a uniform random subspace $S\leq T$ of dimension $n$. Sample two uniform random vectors $u\in S$ and $v\in T^\perp$.
                \item The public oracle $\cal{O}:\{0,1\}\times\mathbb{F}_2^\secp\times\mathbb{F}_2^{3n}\rightarrow\mathbb{F}_2^m\cup\set{\bot}$ is based on a random oracle $\cal{O}_{\sf random}:\{0,1\}\times\mathbb{F}_2^\secp\rightarrow \mathbb{F}_2^m$ that is \textbf{not} publicly available.
                \begin{equation*}
                    \cal{O}(\theta,x,z)=\left\{
                    \begin{aligned}
                        &\cal{O}_{\sf random}(\theta,x) \quad &\text{if }\theta=0\text{ and }z\in T+v,\\
                        &\cal{O}_{\sf random}(\theta,x) \quad &\text{if }\theta=1\text{ and }z\in S^\perp+u,\\
                        &\bot \quad &\text{otherwise}
                    \end{aligned}
                    \right.
                \end{equation*}
            \end{enumerate}
            Then it generates $\ket{\chain^\sk}_{\areg\breg}$ and outputs:
            \begin{itemize}
                \item The public parameters $\param=\cal{O}$.
                \item The secret parameters $\sparam=(u,v)$. This is given to the verifiers $\cal{V}_L$ and $\cal{V}_R$.
                \item The anchor state $\ket{\chain^\sk}_{\areg\breg}$ where $\areg$ corresponds to the anchor register ($\anchor$) and $\breg$ corresponds to the vessel register ($\vessel$).
            \end{itemize}
            \item $\Localize\brackets{\cal{P}(\breg)\rightleftharpoons \cal{V}(\sparam)}(\param,L,t)$: Let $\delta=\Delta/2$.
            \begin{itemize}
                \item For $i=0,1$, $\cal{V}_L$ samples $\theta_i\rand\set{0,1}$ and $x_{L,i}\rand\mathbb{F}_2^\secp$ and broadcasts the pair at time $t+i\delta-(L+1)$ It expects responses $y_{L,i}$ at time $t+i\delta+(L+1)$ for $i=0,1$.
                \item For $i=0,1$, $\cal{V}_R$ samples $x_{R,i}\rand\mathbb{F}_2^\secp$ and broadcasts it at time $t+i\delta-(1-L)$ for $i=0,1$. It expects responses $y_{R,i}$ at time $t+i\delta+(1-L)$.
                \item If any of these responses is missing, the verifier outputs $\bot$. Otherwise, the verifier outputs $\top$ if and only if for each $i\in\{0,1\}$, one of the following holds:
                \begin{itemize}
                    \item $\theta_i=0$ and $y_{L,i}=y_{R,i}=\cal{O}(0,x_{L,i}\oplus x_{R,i},v)$.
                    \item $\theta_i=1$ and $y_{L,i}=y_{R,i}=\cal{O}(1,x_{L,i}\oplus x_{R,i},u)$.
                \end{itemize}
                \item The honest prover consists of a single party always holding the $\breg$ register. For each $i\in\{0,1\}$, it does the following:
                \begin{itemize}
                    \item Receive $x_{L,i},x_{R,i}$ at time $t+i\delta$. If $\theta_i=0$, it coherently evaluates $\cal{O}(0,x_{L,i}\oplus x_{R,i},\cdot)$ on the $\breg$ register and broadcasts the result. Otherwise if $\theta_i=1$, it first applies the hadamard transform $H^{\otimes n}$ to the $\breg$ register, then coherently evaluates $\cal{O}(1,x_{L,i}\oplus x_{R,i},\cdot)$ and broadcasts the result, finally transforming $\breg$ back to the standard basis with a second $H^{\otimes n}$.
                \end{itemize}
            \end{itemize}
        \end{itemize}
    \end{construction}

\begin{theorem}[NDEL security, warmup version]
\label{thm:NDEL_overall}
    The construction in \Cref{con:non-destructive_entanglement_localization} is a non-destructive entanglement localization scheme with $1$-\textbf{Completeness} and $\brackets{\beta,\Delta}$-\textbf{Extraction Soundness}, where
    \[
        \beta(\secp,\eta)=1-O\!\left(\eta(\secp)^{1/4}\right)-\negl(\secp).
    \]
\end{theorem}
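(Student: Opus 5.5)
Completeness and non-destructiveness are handled directly. The honest prover sits at $L$ and receives $x_{L,i},x_{R,i}$ exactly at time $t+i\delta$. By \Cref{fact:standard_hadamard_duality}, every standard-basis branch of $\breg$ lies in $T+v$ and every Hadamard-basis branch lies in $S^\perp+u$. So the coherent evaluation of $\cal{O}(\theta_i,x_{L,i}\oplus x_{R,i},\cdot)$ returns the constant value $\cal{O}_{\sf random}(\theta_i,x_{L,i}\oplus x_{R,i})$ on the whole support. This output is deterministic and unentangled with $\areg\breg$: the prover measures and broadcasts it, undoes any basis change, and leaves $\ket{\chain^\sk}_{\areg\breg}$ exactly intact. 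The answer arrives on time and matches the verifiers' check $\cal{O}(\theta_i,\cdot,v)$ or $\cal{O}(\theta_i,\cdot,u)$, which gives $1$-completeness and zero disturbance.

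For extraction soundness, I would reduce an arbitrary accepted strategy to an adversary in \MultiStageIndependentMonogamy (\Cref{def:multi-Stage_independent_monogamy-of-entanglement}) and then invoke \Cref{thm:extracting_EPR_pairs}. The reduction uses the causal structure in spacetime.
\begin{itemize}
\item $\cal{A}_M^0$ is everything up to the moment the round-$0$ challenges meet, with $\lreg,\mreg,\rreg$ the prover's registers left of, near, and right of $L$.
\item $\cal{A}_M^1$ is the computation in the diamond $L_\Delta$ between the arrival of the round-$0$ challenges ($t$) and the round-$1$ challenges ($t+\delta$); this is the region $\register[L_\Delta~@~t]$ from which we extract.
\item $\cal{A}_L^0,\cal{A}_R^0$ are the parts of the strategy that must produce $y_{L,0},y_{R,0}$ on time outside that region.
\item $\cal{A}_M^2,\cal{A}_L^1,\cal{A}_R^1$ play the same roles for round $1$.
\end{itemize}
The first step is to replace $\cal{O}_{\sf random}(\theta_i,x_{L,i}\oplus x_{R,i})$ with a fresh random $b_i$, available only after both challenges have arrived. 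By \Cref{lem:oracle_reprogram}, this changes the acceptance probability only negligibly, and it means exactly that oracle access $\cal{O}_{C_{\theta_i}}^{b_i}$ is granted only inside the future light cone of $(L,t+i\delta)$. Outside that cone the public oracle acts as a membership oracle $\cal{O}_{T+v}$ or $\cal{O}_{S^\perp+u}$ for other inputs. With $\theta_i$ uniform, the winning condition matches the game's, so the game is won with probability $1-\eta-\negl$. I would set abort to never occur ($\eps'=1$), or let $\cal{A}_M^0$ abort on nothing.

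Then \Cref{thm:extracting_EPR_pairs} gives an extractor acting only on $\mreg$ and the black-box $\cal{A}_M^1$, that is, on the prover's registers in $L_\Delta$ around time $t$. The extracted state has EPR fidelity $1-O(\eta^{1/4})-\negl$ with $\areg$, which is the claimed $\beta$. The extractor needs $\sk$, which an unbounded extractor can learn from $\cal{O}$; this is allowed by \Cref{remark:efficient-extractor}.

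I expect the main obstacle to be the spacetime decomposition. I must check that with $\delta=\Delta/2$, the region between the two challenge meeting points is contained in $L_\Delta$. I must also check that no part of the strategy outside it can get both $x_{L,0}$ and $x_{R,0}$ before the split. Otherwise the reprogramming argument fails: that part could query $b_0$ without being in $\mreg$. A further point is that the game's $\cal{A}_M^1$ is a single operation at \BeforeSplit while the real strategy is spread over $L_\Delta\times[t,t+\delta]$. I would handle this by pushing all computation in that diamond to one instant, as in the footnote on intersection points, while keeping $\lreg,\rreg$ causally separated from $x_{L,0}\oplus x_{R,0}$ until they receive $\mreg_L,\mreg_R$.
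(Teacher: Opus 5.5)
Your proposal is essentially the paper's proof. It has the same structure: completeness from \Cref{fact:standard_hadamard_duality}, a spacetime-chunking reduction to \textsf{MultiStageIndependentMonogamy} with the two challenge points reprogrammed via \Cref{lem:oracle_reprogram}, then \Cref{thm:extracting_EPR_pairs} and an unbounded extractor that learns $\ek$ from $\mathcal{O}$. The one real difference is where you cut stage~0: at the time slice $t$, so that $\mreg=\mathsf{register}[L_\Delta~@~t]$ and $\mathcal{A}_M^1$ simulates its domain of dependence up to $t+\delta$ (which is where $\Delta=2\delta$ enters), whereas the paper cuts along the null boundary of the future light cone of $(L,t_0)$ and then forward-evolves from $L_{2\delta}~@~t_0$. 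When you write it out, the reduction must itself simulate $\mathcal{O}_{\sf random}$ off the challenge points, not just return membership bits; the paper uses a $2r$-wise independent $F_k$ shared by all components. Also, $\mathcal{A}_M^2$ should be just the event $(L,t+\delta)$, with $\mathcal{A}_L^1,\mathcal{A}_R^1$ on its outgoing light rays, so that it needs nothing from $\mathcal{A}_L^0,\mathcal{A}_R^0$.
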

\begin{proof}
    For completeness, by \Cref{fact:standard_hadamard_duality}, the standard basis measurement on $\breg$ will always give a vector in $T+v$ and the Hadamard basis measurement on $\breg$ will always give a vector in $S^\perp+u$. Thus, the coherent query result will be $y_i=\cal{O}_{\sf random}(\theta_i,x_{L,i}\oplus x_{R,i})$ with probability $1$. From there, non-destructiveness also follows, by an application of gentle measurement.
    
    For extraction soundness, suppose that there exists a prover $\cal{P}^*$ that passes localize experiment at a point $(L,t_0)$ with probability $1-\eta$. We will turn $\cal{P}^*$ into a prover $\cal{A}=(\cal{A}_M^0,\cal{A}_M^1,\cal{A}_L^0,\cal{A}_R^0,\cal{A}_M^2,\cal{A}_L^1,\cal{A}_R^1)$ for the \MultiStageIndependentMonogamy game, by \emph{chunking} the execution of $\cal{P}^*$ according to its spacetime geometry. This is pictured in \Cref{fig:time-position-B}. Let $r(\secp)$ be the total number of queries to all oracles that $\cal{P}^*$ makes (from any location and at any time), then let $\set{F_k:\set{0,1}\times\mathbb{F}_2^\secp\rightarrow\mathbb{F}_2^m}_{k}$ be a family of $2r$-wise independent hash functions. Let $\delta:=\Delta/2$, $t_0=t$, and $t_1:=t+\delta$. For notational clarity, we overload $p,t$ to refer to position and time axes, respectively. We construct each component of $\cal{A}$ as follows:
    \begin{enumerate}
        \item $\cal{A}_M^0$ receives register $\breg:=\vessel$, bases $\theta_0,\theta_1$, and membership oracles $\cal{O}_{C_0}$ and $\cal{O}_{C_1}$ (where $C_0=T+v$ and $C_1=S^\perp+u$, as in \Cref{def:multi-Stage_independent_monogamy-of-entanglement}) from the challenger. It then prepares to simulate the localization experiment by sampling a hash function $F_k$ and secret-shared challenges $x_{L,0},x_{R,0},x_{L,1},x_{R,1}$. Define $\info:=(k,\theta_0,\theta_1,x_{L,0},x_{R,0},x_{L,1},x_{R,1})$.
        
        $\cal{A}_M^0$ then begins simulating $\cal{P}^*$ on input $\breg$, answering all of $\cal{P}^*$'s queries to $\cal{O}$ as follows: a query $(\theta,x, z)$ is answered by checking membership in the corresponding coset using $\cal{O}_{C_\theta}$, and then outputting $F_k(\theta,x)$ in place of $\cal{O}_{\sf random}(\theta,x)$ if the membership check passed. Crucially, $\cal{P}^*$ is only simulated over a carefully chosen region of spacetime: $\cal{A}_M^0$ freezes each of the simulated prover's computations and messages \emph{just before} they would touch the spacetime line $t-p=t_0-L$ (on the right of $L$) or $t+p=t_0+L$ (on the left of $L$). This cutoff is illustrated by a pair of dashed lines in \Cref{fig:time-position-B}. Let $\lreg$ be the register that stores $\info$ and all simulated parties/messages with $p\in[-1,L-\delta]$ when the simulation ends. Similarly, let $\rreg$ be the register that stores $\info$ and all simulated parties/messages with $p\in[L+\delta,1]$ when the simulation ends. Finally, let $\mreg$ be the register that stores $\info$ and all parties/messages with $p\in[L-\delta,L+\delta]$ when the simulation ends.
        \item $\cal{A}_M^1$ receives $\mreg$ from $\cal{A}_M^0$, and oracle access to $\cal O^{b_0}_{C_{\theta_0}}$ from the challenger. It then resumes simulation of all of $\cal{P}^*$'s (real-time) computations and messages which would take place in the spacetime triangle -- colored green in \Cref{fig:time-position-B} -- described by the constraints
        \begin{itemize}
            \item $t-p\geq t_0-L$;
            \item $t+p\geq t_0+L$;
            \item $t<t_1$,
        \end{itemize}
        and halts simulation just before exiting this region. During this phase, the simulated oracle $\cal{O}$ is reprogrammed so that on a query $(\theta,x,z)$ with $(\theta,x)=(\theta_0,x_{L,0}\oplus x_{R,0})$, $\cal{A}_M^1$ answers with $\cal{O}_{C_{\theta_0}}^{b_0}(z)$. Other oracle outputs not mentioned here are still simulated using the membership oracles and $F_k$. Let $\mreg_L$ be the register that stores all parties/messages with $p\in[L-\delta,L)$ when the simulation ends. Similarly, let $\mreg_R$ be the register that stores all parties/messages with $p\in(L,L+\delta]$ when the simulation ends. Finally, let $\mreg'$ be the register that stores $\info$ and all parties/messages at the spacetime point $(L,t_1)$.
        \item $\cal{A}_L^0$ receives $\lreg$ from $\cal{A}_M^0$, $\mreg_L$ from $\cal{A}_M^1$, and oracle access to $\cal O^{b_0}_{C_{\theta_0}}$ from the challenger. It then resumes simulation of all of $\cal P^*$'s real-time computations and messages which would take place in the spacetime trapezoid -- colored red in \Cref{fig:time-position-B} -- described by the constraints
        \begin{itemize}
            \item $p\ge -1$;
            \item $t\ge t_1$;
            \item $t+p\ge t_0+L$;
            \item $t+p<t_1+L$,
        \end{itemize}
        and halts simulation just before exiting this region. During this simulation, $\cal O$ is reprogrammed in the same way as for $\cal A_M^1$. If the simulated prover did not send a classical message which would be received by $\cal{V}_L$ at time $t_0+L+1$, $\cal{A}_L^0$ outputs $\bot$. Otherwise, $\cal A_L^0$ uses $\cal P^*$'s response message sent to $\cal V_L$ as its output $b_0^L$. Let $\lreg'$ be the register that stores $\info$ and all parties/messages with $p\in[-1,L)$ when $\cal A_L^0$'s simulation ends.
        \item $\cal{A}_R^0$ receives $\rreg$ from $\cal{A}_M^0$, $\mreg_R$ from $\cal{A}_M^1$, and oracle access to $\cal O^{b_0}_{C_{\theta_0}}$ from the challenger. It then resumes simulation of all of $\cal P^*$'s real-time computations and messages which would take place in the spacetime trapezoid -- colored blue in \Cref{fig:time-position-B} -- described by the constraints
        \begin{itemize}
            \item $p\le 1$;
            \item $t\ge t_1$;
            \item $t-p\ge t_0-L$;
            \item $t-p<t_1-L$,
        \end{itemize}
        and halts simulation just before exiting this region. During this simulation, $\cal O$ is reprogrammed in the same way as for $\cal A_M^1$. If the simulated prover did not send a classical message which would be received by $\cal{V}_R$ at time $t_0+1-L$, $\cal{A}_R^0$ outputs $\bot$. Otherwise, $\cal A_R^0$ uses $\cal P^*$'s response message sent to $\cal V_R$ as its output $b_0^R$. Let $\rreg'$ be the register that stores $\info$ and all parties/messages with $p\in(L,1]$ when $\cal A_R^0$'s simulation ends.
        \item $\cal{A}_M^2$ receives $\mreg'$ from $\cal{A}_M^1$, and oracle access to $\cal O^{b_0}_{C_{\theta_0}}$ and $\cal O^{b_1}_{C_{\theta_1}}$ from the challenger. It then resumes simulation of all of $\cal P^*$'s real-time computations and messages which would take place exactly at the point $(L,t_1)$. On top of reprogramming the oracle $\cal O$ in the same way as for $\cal A_M^1$, we add an additional change: on a query $(\theta,x,z)$ with $(\theta,x)=(\theta_1,x_{L,1}\oplus x_{R,1})$, it answers with $\cal O^{b_1}_{C_{\theta_1}}(z)$. Let $\mreg'_L$ be the register that stores all messages sent to the left from $L$ at time $t_1$. Similarly, let $\mreg'_R$ be the register that stores all messages sent to the right from $L$ at time $t_1$.
        \item $\cal{A}_L^1$ receives $\lreg'$ from $\cal A_L^0$, $\mreg'_L$ from $\cal A_M^2$, and oracle access to $\cal O^{b_0}_{C_{\theta_0}}$ and $\cal O^{b_1}_{C_{\theta_1}}$ from the challenger. It then resumes simulation of all of $\cal P^*$'s real-time computations and messages which would take place in the spacetime line segment described by
        \begin{itemize}
            \item $p\in[-1,L)$;
            \item $t+p=t_1+L$,
        \end{itemize}
        and halts simulation just before exiting this region. $\cal O$ is reprogrammed in the same way as for $\cal A_M^2$. If the simulated prover did not send a classical message which would be received by $\cal{V}_L$ at time $t_1+L+1$, $\cal{A}_L^1$ outputs $\bot$. Otherwise, $\cal A_L^1$ uses that response message as its output $b_1^L$.
        \item $\cal{A}_R^1$ receives $\rreg'$ from $\cal A_R^0$, $\mreg'_R$ from $\cal A_M^2$, and oracle access to $\cal O^{b_0}_{C_{\theta_0}}$ and $\cal O^{b_1}_{C_{\theta_1}}$ from the challenger. It then resumes simulation of all of $\cal P^*$'s real-time computations and messages which would take place in the spacetime line segment described by
        \begin{itemize}
            \item $p\in(L,1]$;
            \item $t-p=t_1-L$,
        \end{itemize}
        and halts simulation just before exiting this region. $\cal O$ is reprogrammed in the same way as for $\cal A_M^2$. If the simulated prover did not send a classical message which would be received by $\cal{V}_R$ at time $t_1+1-L$, $\cal{A}_R^1$ outputs $\bot$. Otherwise, $\cal A_R^1$ uses that response message as its output $b_1^R$.
    \end{enumerate}
    \begin{figure}[h]
        \centering
        \begin{tikzpicture}[>=Latex]
    
        \def\z{4}
        \fill[yellow!25]
          (0,0) --
          (-\z,\z) --
          (-\z,-0.5*\z) -- 
          (\z,-0.5*\z) --
          (\z,\z) -- cycle;
        \node at (0.56*\z,-0.36*\z) {$\mathcal A_M^0$};
        
        \draw[->] (-1.5*\z,0) -- (1.5*\z,0) node[right] {$t=t_0$};
        \draw[-] (0,-0.5*\z) node [below] {$p=L$} -- (0,0);
        \draw[-] (-\z,2*\z) -- (-\z,-0.5*\z) node[below] {$p=-1$};
        \draw[-] (\z,2*\z) -- (\z,-0.5*\z) node[below] {$p=1$};
        
        \def\d{0.3*\z}
        \def\p{0}
    
        \coordinate (EL) at (-2*\d,0);
        \coordinate (ER) at ( 2*\d,0);
        \fill[cyan!10]
          (EL) --
          (-\d,\d) --
          ( \d,\d) --
          (ER) -- cycle;
        \draw[dashed, green!55!black] (EL) -- (-\d,\d);
        \draw[dashed, green!55!black] (ER) -- ( \d,\d);
        \draw[densely dashed, gray!70] (-\d,0) -- (-\d,\d);
        \draw[densely dashed, gray!70] ( \d,0) -- ( \d,\d);
        \draw[line width=3.2pt, green!55!black] (EL) -- (ER);
        \draw[green!55!black]
          ($(EL)+(0,-0.09*\z)$) -- ($(EL)+(0,0.09*\z)$)
          ($(ER)+(0,-0.09*\z)$) -- ($(ER)+(0,0.09*\z)$)
          ($(-\d,0)+(0,-0.07*\z)$) -- ($(-\d,0)+(0,0.07*\z)$)
          ($( \d,0)+(0,-0.07*\z)$) -- ($( \d,0)+(0,0.07*\z)$);
        \node[below=12pt, font=\small, inner sep=0.5pt] at (EL) {$L-2\delta$};
        \node[below=5pt, font=\small, inner sep=0.5pt] at (-\d,0) {$L-\delta$};
        \node[below=5pt, font=\small, inner sep=0.5pt] at ( \d,0) {$L+\delta$};
        \node[below=12pt, font=\small, inner sep=0.5pt] at (ER) {$L+2\delta$};
        \node[align=center, inner sep=1.5pt] (ExtractionZoneLabel) at (-3.25*\d-0.53,0.62*\d) {extraction zone};
        \draw[->, green!55!black, shorten <=1pt, shorten >=1pt]
          (ExtractionZoneLabel.south east) -- (-1.95*\d,0);
        
        \fill[green!25]
          (0,0) --
          (-\d,\d) --
          (\d,\d) -- cycle;
        \node at (0,0.15*\z) {$\mathcal A_M^1$};
        
        \fill[red!25]
          (-\d,\d) --
          (-\z,\z) --
          (-\z,\z+\d+\p) --
          (\p,\d) -- cycle;
        \node at (-0.4*\z,0.58*\z) {$\mathcal A_L^0$};
        
        \fill[blue!25]
          (\d,\d) --
          (\z,\z) --
          (\z,\z+\d-\p) --
          (\p,\d) -- cycle;
        \node at (0.4*\z,0.53*\z) {$\mathcal A_R^0$};
        
        \draw[very thick, red!60] (\p,\d) -- (-\z,\z+\d+\p)
          node[midway,above,color=black] {$\mathcal A_L^1$};
        
        \draw[very thick, blue!60] (\p,\d) -- (\z,\z+\d-\p)
          node[midway,above=4pt,color=black] {$\mathcal A_R^1$};
        
        \draw[dashed] (-0.5*\z,-0.5*\z) -- (1.5*\z,1.5*\z) node[right] {$t-p=0$};
        \draw[dashed] (0.5*\z,-0.5*\z) -- (-1.5*\z,1.5*\z) node[left] {$t+p=0$};
        
        \draw[dotted] (-0.5*\z-\d+\p,-0.5*\z) -- (1.5*\z,1.5*\z+\d-\p) node[right] {$t-p=\delta$};
        \draw[dotted] (0.5*\z+\d+\p,-0.5*\z) -- (-1.5*\z,1.5*\z+\d+\p) node[left] {$t+p=\delta$};
        
        \draw[dotted, thick] (-1.5*\z,\d) -- (1.5*\z,\d) node[right] {$t=t_0+\delta$};
        \draw[line width=1.6pt, blue!65] (0,0) -- (-\d,\d) (0,0) -- (\d,\d);
        \node[align=center, inner sep=1.5pt] (MRegLabel) at (-0.01*\z,0.75*\z) {$\mreg$};
        \draw[->, blue!65, shorten <=1pt, shorten >=1pt]
          (MRegLabel.south) -- (-0.6*\d,0.6*\d);
        \fill[black] (0,\d) circle (2.2pt)
          node[above=3pt] {$\mathcal A_M^2$};
        
        \draw[dotted, thick] (-1.5*\z,\z) -- (1.5*\z,\z) node[right] {$t=t_0+1$};
        
    \end{tikzpicture}
        
        \caption{Spacetime decomposition of the prover $\cal A$ for our localization reduction, shown with $L=0$.}
    \label{fig:time-position-B}
    \end{figure}
    \begin{claim}
        For any prover $\cal{P}^*$ that passes with probability $1-\eta$, the probability that $\cal{A}$ wins \Cref{def:multi-Stage_independent_monogamy-of-entanglement} is at least $1-\eta-\negl$.
    \end{claim}
    \begin{proof}
        By \cite[Theorem 3.1]{C:Zhandry12}, the $2r$-wise independent hash function perfectly simulates the random oracle for $\cal{P}^*$. Also we can see that $\cal{A}$ perfectly simulates $\cal{P}^*$, since for any event simulated in $\cal{A}$, all other events that can have a causal effect on that event have already been simulated. The only difference is that the oracle $\cal{O}$ is reprogrammed twice, but by \Cref{lem:oracle_reprogram}, the resulting experiment is negligibly close to the actual $\Localize$ experiment. And there, $\cal{P}^*$ eventually outputs $y_{L,0}=y_{R,0}=\cal{O}_{\sf random}(\theta_0,x_{L,0}\oplus x_{R,0})$ and $y_{L,1}=y_{R,1}=\cal{O}_{\sf random}(\theta_1,x_{L,1}\oplus x_{R,1})$ with probability $1-\eta$. Therefore $\cal{A}$ will produce the correct outputs $b^L_0,b^R_0,b^L_1,b^R_1$ with probability $1-\eta-\negl(\secp)$.
    \end{proof}
    By the above claim and \Cref{thm:extracting_EPR_pairs}, there exists an extractor $\cal{E}^{\cal{A}_M^1}$ acting on $\mreg$ with access to $\cal{A}_M^1$ that extracts $n$ $\epr$ pairs within fidelity
    \[
        1-O\brackets{\eta^{1/4}}-\negl.
    \]
    Recall that $\mreg$ is the joint register of all parties/messages with $p\in[L-\delta,L+\delta]$ with $t-t_0=|p-L|$, corresponding to the two bottom boundaries of the green triangle in \cref{fig:time-position-B}. These registers can be derived from access to all parties/messages of $\cal{P}^*$ in the spacetime region $[L-2\delta,L+2\delta]\times\{t_0\}$ by forward time-evolution, as pictured in the diagram. Note that this is exactly the extraction region $L_\Delta~@~t_0$ from the definition, since $2\delta=\Delta$. Now, let us define our final extractor $\cal{E}$ that takes as input $\param=\cal{O}$ and $\register[L_\Delta~@~t_0]$ and outputs something close to $\ket{\epr}^{\otimes n}$. It does the following:
    \begin{itemize}
        \item Recover the joint register $\mreg$ via forward-simulation.
        \item Recover the extraction key $\ek=(S,T,v,u)$ from $\param$ by unbounded oracle queries, as discussed in \Cref{remark:efficient-extractor}, and run $\cal{E}^{\cal{A}_M^1}$ on $(\ek,\mreg)$.
    \end{itemize}
\end{proof}

%% file: localizing_quantum_information_sequential_repetition.tex
\section{Localizing Quantum Information}
\label{sec:localizing}
In this section, we present our formal definitions and constructions of the localization primitives mentioned in the introduction: entanglement localization, trajectory verification, state localization, and functionality localization. These definitions will have a stronger extraction soundness condition than was shown in the warmup, which is much more meaningful against low-success-probability adversaries. To achieve this stronger soundness, our constructions in this section employ a sequential repetition technique --- each protocol is the sequential chaining of $O(\gamma)$-many two-round protocols. 

\subsection{Entanglement Localization}\label{subsec:entanglement-localization}
\begin{definition}[Entanglement Localization]\label{def:NDEL}
    An entanglement localization scheme is parameterized by the security parameter $\secp$ and a bipartite state $\sigma$\footnote{\label{footnote:secp}In reality, we localize a family of bipartite states indexed by $\secp$, i.e. $\Sigma=\{\sigma_\secp\}_\secp$. As is standard in these contexts, we suppress the dependence on $\secp$ for notational convenience, and simply write $\sigma$.}. It consists of the following syntax:
    \begin{itemize}
        \item $\Setup(1^\secp)\to\param,\sparam,\areg,\breg$: The setup algorithm takes the security parameter $1^\secp$ and outputs the public parameters $\param$, the secret parameters $\sparam$, and a state on the bipartite register $\areg\breg$.
        \item $\Localize\brackets{\cal{P}(\breg) \rightleftharpoons \cal{V}(\sparam)}(\param,L,t)\rightarrow \breg',b$: This is a positional protocol between a QPT prover holding a quantum register $\breg$ and a PPT verifier with input $\sparam$,  further specified by the following public inputs: public parameters $\param$, spatial location $L \in [-1,1]$, and time $t$. The prover's output is a quantum register $\breg'$ and the verifier's output is an acceptance indicator $b\in\{\top,\bot\}$.
    \end{itemize}
    It should satisfy the following properties --- parameterized by completeness probability $\alpha(\secp)$, extraction fidelity $\beta(\secp,\eta)$, extraction probability $\beta'(\secp,\eta)$, and localization parameter $\Delta(\secp)$ --- for all $\secp\in\N$.
    \begin{itemize}
        \item $\alpha$-\textbf{Completeness}: For any location $L$ and time $t$, there is a QPT prover $\cal P$ at position $(L,t)$ such that
        \[\Pr\left[b = \top : \begin{array}{r} \param,\sparam,\areg,\breg\gets \Setup(1^\secp) \\ \breg',b \gets \Localize\brackets{\cal{P}(\breg)\rightleftharpoons \cal{V}(\sparam)}(\param,L,t)\end{array}\right] \geq\alpha(\secp).\]
        \item $(\beta,\beta',\Delta)$-\textbf{Extraction Soundness}:
        For any $L,t$ and QPT prover $\cal{P}^*$, let
        \[\eta(\secp):=\Pr\left[b = \top : \begin{array}{r} \param,\sparam,\areg,\breg\gets \Setup(1^\secp) \\ \breg',b \gets \Localize\brackets{\cal{P}^*(\breg)\rightleftharpoons \cal{V}(\sparam)}(\param,L,t)\end{array}\right]\]
        be $\cal P^*$'s success probability. Assume $\eta(\secp)\ge 1/p(\secp)$ for some polynomial $p$. Then there exists a local extractor $\cal{E}$ acting on registers near location $L$ at time $t$ such that the probability of extracting $\sigma_\secp$ with fidelity at least $\beta(\secp,\eta)$ is at least $\beta'(\secp,\eta)$:
        \[
        \hspace{-4em}
        \prob{
            F\big(\sigma_\secp,\rho_{\areg\breg^*}\big)\ge \beta(\secp,\eta)~
        :
        \begin{array}{r}
            \param,\sparam,\areg,\breg\gets \Setup(1^\secp)\\ \breg^*\gets\register[L_\Delta~@~t]\brackets{\Localize\brackets{\cal{P}^*(\breg)\rightleftharpoons \cal{V}(\sparam)}(\param,L,t)}\\\breg^*\gets\cal{E}(\param,\breg^*)
        \end{array}
        }
        \geq \beta'(\secp,\eta)~,
        \]
        where $\rho_{\areg\breg^*}$ is the joint state on $\areg\breg^*$ after running the extractor, and $L_\Delta=[L-\Delta,L+\Delta]$.
    \end{itemize}

    We say that the entanglement localization scheme is \emph{non-destructive} if it additionally satisfies the following property.
    \begin{itemize}
        \item $\textbf{Non-destructive}$: For any spatial location $L$ and time $t$,
        \[
        \expect{\TraceDist{\rho_{\areg\breg}}{\rho_{\areg\breg'}}:\begin{array}{r}\param,\sparam,\areg,\breg\gets \Setup(1^\secp)\\ \breg',b \gets \Localize\brackets{\cal{P}(\breg)\rightleftharpoons \cal{V}(\sparam)}(\param,L,t)\end{array}} \leq \negl(\secp)~,
        \]
        where $\rho_{\areg\breg}$ is the joint state on $\areg\breg$ after $\Setup$ and $\rho_{\areg\breg'}$ is the joint state on $\areg\breg'$ later after $\Localize$.
    \end{itemize}
\end{definition}
\begin{construction}[Entanglement Localization Scheme]\label{con:entanglement_localization}
    Let $\secp\in\mathbb{N}$ be the security parameter and $n(\secp),m(\secp)\geq\secp$ be polynomials. Take a localization parameter $\Delta(\secp)\leq 1$ and a repetition factor $\gamma(\secp)$. We consider the following non-destructive entanglement localization scheme:
    \begin{itemize}
        \item There are two verifiers: $\cal{V}_L$ at $-1$ and $\cal{V}_R$ at $1$.
        \item $\Setup(1^\secp)$: Let the public oracle $\cal{O}$ be generated as in the warmup construction.
        \begin{enumerate}
            \item Generate $\brackets{\sk,\ket{\chain^\sk}_{\areg\breg}}\gets\genchain(1^\secp)$, and let $S,T,u,v$ be the subspaces and shifts determined by $\sk$. The extraction key is $\ek=(S,T,v,u)$, which is not given to the prover.
            \item The public oracle $\cal{O}:\{0,1\}\times\mathbb{F}_2^\secp\times\mathbb{F}_2^{3n}\rightarrow\mathbb{F}_2^m\cup\set{\bot}$ is defined with respect to a random oracle $\cal{O}_{\sf random}:\{0,1\}\times\mathbb{F}_2^\secp\rightarrow \mathbb{F}_2^m$ that is \textbf{not} publicly available:
            \begin{equation*}
                \cal{O}(\theta,x,z)=\left\{
                \begin{aligned}
                    &\cal{O}_{\sf random}(\theta,x) \quad &\text{if }\theta=0\text{ and }z\in T+v,\\
                    &\cal{O}_{\sf random}(\theta,x) \quad &\text{if }\theta=1\text{ and }z\in S^\perp+u,\\
                    &\bot \quad &\text{otherwise.}
                \end{aligned}
                \right.
            \end{equation*}
        \end{enumerate}
        Then $\Setup$ outputs:
        \begin{itemize}
            \item The public parameter $\param=\cal{O}$.
            \item The secret parameter $\sparam=(u,v)$, given to the verifiers.
            \item The anchor state $\ket{\chain^\sk}_{\areg\breg}$, with $\anchor$ renamed $\areg$ and $\vessel$ renamed $\breg$.
        \end{itemize}
        \item $\Localize\brackets{\cal{P}(\breg)\rightleftharpoons \cal{V}(\sparam)}(\param,L,t)$: Let $\delta=\Delta/(\gamma+2)$ and define checkpoints $t_i=t+i\delta$ for $i=0,1,\ldots,\gamma$.
        \begin{itemize}
            \item For each $i=0,1,\ldots,\gamma$, $\cal{V}_L$ and $\cal{V}_R$ agree through their private authenticated channel on $\theta_i\rand\{0,1\}$ and additive shares $x_{L,i},x_{R,i}\rand\mathbb{F}_2^\secp$. $\cal{V}_L$ broadcasts $(\theta_i,x_{L,i})$ at time $t_i-(L+1)$ and expects the response $y_{L,i}$ at time $t_i+(L+1)$. $\cal{V}_R$ broadcasts $x_{R,i}$ at time $t_i-(1-L)$ and expects the response $y_{R,i}$ at time $t_i+(1-L)$.
            \item If any response is missing, the verifier outputs $\bot$. Otherwise, it accepts if and only if for every $i=0,1,\ldots,\gamma$,
            \[
                y_{L,i}=y_{R,i}=
                \begin{cases}
                    \cal{O}(0,x_{L,i}\oplus x_{R,i},v) & \text{if }\theta_i=0,\\
                    \cal{O}(1,x_{L,i}\oplus x_{R,i},u) & \text{if }\theta_i=1.
                \end{cases}
            \]
            \item The honest prover consists of a single party always holding the $\breg$ register at location $L$. For each $i=0,1,\ldots,\gamma$, it receives $x_{L,i},x_{R,i}$ at time $t_i$. If $\theta_i=0$, it coherently evaluates $\cal{O}(0,x_{L,i}\oplus x_{R,i},\cdot)$ on the $\breg$ register in the standard basis and broadcasts the result. If $\theta_i=1$, it coherently evaluates $\cal{O}(1,x_{L,i}\oplus x_{R,i},\cdot)$ on the $\breg$ register in the Hadamard basis and broadcasts the result.
        \end{itemize}
    \end{itemize}
\end{construction}
\begin{theorem}\label{thm:entanglement_localization}
    The construction in \Cref{con:entanglement_localization} is a non-destructive entanglement localization scheme for the state $\sigma=\ketbra{\epr}{\epr}^{\otimes n}$ in the classical oracle model with $1$-\textbf{Completeness} and $\brackets{\beta,\beta',\Delta}$-\textbf{Extraction Soundness}, where 
    \[
        \beta(\secp,\eta)=1-O\!\left(\left(\frac{\log(1/\eta)}{\gamma}\right)^{1/4}\right)-\negl(\secp),
        \qquad
        \beta'(\secp,\eta)=\eta(\secp).
    \]
    In particular, since we only consider adversaries for which $1/\eta$ is at most polynomial in $\secp$, for any polynomial $t(\secp)$, we can set $\gamma=c\secp t^4$ for some constant $c$ to achieve
    \[\beta(\secp,\eta)\geq1-1/t(\secp)~.\]
\end{theorem}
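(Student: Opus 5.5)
Completeness and non-destructiveness follow exactly as in \Cref{thm:NDEL_overall}: by \Cref{fact:standard_hadamard_duality} every coherent query of the honest prover lands in $T+v$ or $S^\perp+u$. So each of the $\gamma+1$ rounds returns $\cal{O}_{\sf random}(\theta_i,x_{L,i}\oplus x_{R,i})$ deterministically and leaves $\breg$ (and hence $\areg\breg$) unchanged. Since $\gamma$ is polynomial, the accumulated disturbance is still at most negligible. The real work is extraction soundness with success probability $\beta'=\eta$. The plan is to reduce the low-probability regime to the high-probability regime of \Cref{thm:extracting_EPR_pairs} by choosing a good pair of consecutive rounds.

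\textbf{Step 1 (a good round exists).} Let $E_i$ be the event that the prover's answers in rounds $0,\dots,i$ are all correct, with $E_{-1}$ trivial. Then
\[
\eta=\prod_{i=0}^{\gamma-1}\Pr[E_i\wedge E_{i+1}\mid E_{i-1}]\cdot(\cdots).
\]
Telescoping, together with $\log(1/\eta)\geq\sum_i -\log p_i$, shows that for a uniformly random (or averaged) index $i\in\{0,\dots,\gamma-1\}$, the conditional probability that rounds $i,i+1$ both pass given $E_{i-1}$ is on average $1-O(\log(1/\eta)/\gamma)$. By Markov, a constant fraction of indices $i$ satisfy $1-\epsilon_i\geq 1-O(\log(1/\eta)/\gamma)$. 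We fix one such $i$. Because of the checkpoint spacing $\delta=\Delta/(\gamma+2)$, the round-$i$ extraction region $[L-2\delta,L+2\delta]$ at time $t_i$ is the forward light-cone image of $L_\Delta~@~t$. Hence an extractor acting on $\register[L_\Delta~@~t]$ can forward-simulate the prover up to $t_i$.

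\textbf{Step 2 (embedding into the two-stage game).} Mirroring the chunking reduction in the proof of \Cref{thm:NDEL_overall} and \Cref{fig:time-position-B}, we build an adversary for \MultiStageIndependentMonogamy. $\cal{A}_M^0$ simulates all earlier rounds $0,\dots,i-1$ itself, using the membership oracles and a $2r$-wise independent hash in place of $\cal{O}_{\sf random}$. It checks the earlier answers itself; this is possible because it knows the simulated challenges and can evaluate $\cal{O}(\theta,x,\cdot)$ only on correct responses, which it can compare against the simulated oracle values. It \emph{aborts} unless $E_{i-1}$ holds, which is exactly why the games were built with abort.

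Rounds $i,i+1$ then play the roles of $b_0,b_1$, with reprogramming justified by \Cref{lem:oracle_reprogram}. Later rounds are dropped, since dropping them only increases the conditional win probability. The non-abort probability is at least $\eta$, and the conditional win probability is $1-\epsilon_i-\negl$.

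\textbf{Step 3 (extraction).} \Cref{thm:extracting_EPR_pairs} yields $\cal{E}^{\cal{A}_M^1}$ on $\mreg$ with expected fidelity $1-O(\epsilon_i^{1/4})$, conditioned on non-abort. The final extractor $\cal{E}$ recovers $\ek$ by unbounded queries and forward-simulates to $\mreg$. It then runs the simulated check of rounds before $i$ (which happens with probability at least $\eta$) and applies $\cal{E}^{\cal{A}_M^1}$.

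\textbf{Main obstacle.} The hardest step is converting expected fidelity into the ``probability $\geq\eta$ of fidelity $\geq\beta$'' form while keeping $\beta'=\eta$ exactly. The abort check runs on a joint state whose post-measurement branch must still correspond to the real experiment, and the check on earlier rounds must itself be performable locally from $\register[L_\Delta~@~t]$. I would try to handle this by folding the earlier-round checks into $\cal{A}_M^0$, whose conditioning the weighted expectations of \Cref{sec:monogomy} already account for. I would then use the conditional fidelity bound directly, possibly paying a Markov loss absorbed into the $O(\cdot)$.
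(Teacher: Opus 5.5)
Your proposal follows the paper's proof essentially step for step: choose an adjacent checkpoint pair with $q_{i^\star}q_{i^\star+1}\ge\eta^{2/\gamma}$, and let $\mathcal{A}_M^0$ simulate the earlier checkpoints with a $2r$-wise independent function and abort on failure, which gives non-abort probability $\ge\eta$. Then invoke \Cref{thm:extracting_EPR_pairs} and use $\delta=\Delta/(\gamma+2)$ to forward-simulate from $L_\Delta~@~t$. On the obstacle you flag: the paper's extractor never runs the earlier-round check (it could not do so locally anyway), since the probability-$\eta$ event is just $\mathcal{A}_M^0$'s non-abort event. Because fidelity with the pure target $\ket{\epr}^{\otimes n}$ is linear in the state, the conditional expected-fidelity bound applies directly, with no Markov loss.
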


\begin{proof}
    Completeness follows from \Cref{fact:standard_hadamard_duality}: the standard-basis measurement on $\breg$ always gives a vector in $T+v$, and the Hadamard-basis measurement on $\breg$ always gives a vector in $S^\perp+u$. Therefore the honest coherent query in every checkpoint returns $\cal{O}_{\sf random}(\theta_i,x_{L,i}\oplus x_{R,i})$, and the verifier accepts. The same fact implies non-destructiveness, since the oracle response is constant on the coset supporting the honest prover's state in the queried basis.

    We prove extraction soundness. Let $\cal P^*$ be accepted with probability $\eta$. Let $q_i$ denote the probability that $\cal P^*$ passes checkpoint $i$, conditioned on passing checkpoints $0,\ldots,i-1$. Since $\prod_{i=0}^{\gamma}q_i=\eta$, there exists an adjacent pair of checkpoints $i^\star,i^\star+1$ such that
    \[
        q_{i^\star}q_{i^\star+1}\ge \eta^{2/\gamma}.
    \]
    Write $t_0=t+i^\star\delta$ and $t_1=t_0+\delta$. Conditioned on passing checkpoints $0,\ldots,i^\star-1$, the checkpoints $i^\star$ and $i^\star+1$ form an identical experiment to the two-round localization test from \Cref{con:non-destructive_entanglement_localization}. We therefore apply the same spacetime chunking reduction from the proof of \Cref{thm:NDEL_overall}, with the two target challenges at these checkpoints. As usual, the reduction $\cal{A}$ simulates the hidden random oracle $\cal{O}_{\sf random}$ by a $2r(\secp)$-wise independent function $F_k$, where $r$ is the number of oracle queries $\cal{P}^*$ makes.
    
    Importantly, due to the repetition, the first-stage reduction component $\cal{A}^0_M$ must simulate the $\Localize$ experiment from the beginning of time up through checkpoint $i^\star-1$, thus ensuring that $\cal{A}^0_M$ is non-aborting iff the simulated prover $\cal{P}^*$ makes it to checkpoint $i^\star$. To do this, $\cal{A}^0_M$ samples challenges $(\theta_i,x_{L,i},x_{R,i})$ on its own and simulates the verifiers' interactions with the prover as in $\Localize$. If the prover responds with $y_{L,i},y_{R,i}$ such that $y_{L,i}\neq y_{R,i}$ or if $y_{L,i}\neq F_k(\theta_i,x_{L,i}\oplus x_{R,i})$, $\cal{A}_M^0$ aborts. The reduction reprograms $\cal O$ only at $(\theta_{i^\star},x_{L,i^\star}\oplus x_{R,i^\star},\cdot)$ and $(\theta_{i^\star+1},x_{L,i^\star+1}\oplus x_{R,i^\star+1},\cdot)$; by \Cref{lem:oracle_reprogram}, this changes the experiment by only a negligible amount. Thus the resulting \MultiStageIndependentMonogamy adversary does not abort with probability at least $\eta$ and wins, conditioned on not aborting, with probability at least $q_{i^\star}q_{i^\star+1}-\negl(\secp)$.

    Then by \Cref{thm:extracting_EPR_pairs}, there exists an extractor $\cal E^{\cal A_M^1}_{i^\star}$ acting on the middle register $\mreg$ for checkpoints $i^\star,i^\star+1$, which with probability $\geq\eta$ outputs a state having the following fidelity with $\ket{\epr}^{\otimes n}$:
    \[
        1-O\Big((1-q_{i^\star}q_{i^\star+1})^{1/4}\Big)-\negl.
    \]
    Since $q_{i^\star}q_{i^\star+1}\ge \eta^{2/\gamma}$, this fidelity is at least
    \[
        1-O\!\left(\left(1-\eta^{2/\gamma}\right)^{1/4}\right)-\negl(\secp)
        =
        1-O\!\left(\left(\frac{\log(1/\eta)}{\gamma}\right)^{1/4}\right)-\negl(\secp),
    \]
    where the last step uses 
    \[
        1-\eta^{2/\gamma}
        =
        1-\exp\!\left(-\frac{2\log(1/\eta)}{\gamma}\right)
        \leq \frac{2\log(1/\eta)}{\gamma}.
    \]
    It remains to translate the middle-register extractor into the extraction statement at the original time $t$. Let ${\sf Exp}=\Localize(\cal P^*(\breg)\rightleftharpoons\cal V(\sparam))(\param,L,t)$ be the real localization experiment. The middle register $\mreg$ is the same one used in the warmup proof: it lies on the two lower boundaries of the green triangle in \Cref{fig:time-position-B} for the time $t_0$. These registers can be derived from $\register[L_{2\delta}~@~t_0]({\sf Exp})$ by forward simulation. Since $t_0-t\le \gamma\delta$, recovering this $2\delta$ neighborhood at time $t_0$ from time $t$ requires the region
    \[
        \register[L_{(\gamma+2)\delta}~@~t]({\sf Exp})=\register[L_\Delta~@~t]({\sf Exp}).
    \]
    The final extractor performs this forward simulation and then runs $\cal E^{\cal A_M^1}_{i^\star}$.
\end{proof}

\subsection{Trajectory Verification}\label{sec:trajectory}
\newcommand{\Verify}{\ensuremath{\mathsf{Localize}}\xspace}
\begin{definition}[Trajectory Verification]\label{def:trajectory_verification}
    A trajectory verification scheme is parameterized by the security parameter $\secp$ and a bipartite state $\sigma$\footnote{Technically, a sequence of states indexed by $\secp$ -- see \Cref{footnote:secp}.}. It consists of the following syntax:
    \begin{itemize}
        \item $\Setup(1^\secp)\to\param,\sparam,\areg,\breg$: The setup algorithm takes the security parameter $1^\secp$ and outputs the public parameters $\param$, secret parameters $\sparam$, and a state on the bipartite register $\areg\breg$.
        \item $\Verify\left(\cal{P}(\breg)\rightleftharpoons \cal{V}(\sparam)\right)(\param,L)\rightarrow \breg',b$: The interactive verification protocol takes place between a QPT prover holding a quantum register $\breg$ and a PPT verifier with input $\sparam$. The public inputs are $\param$ and the description of a continuous trajectory $L:[0,\tau]\to[-1,1]$. $L(t)$ is the prover's purported location at any time $0\leq t\leq \tau$, where $\tau$ is some time limit included in the description of $L$. The prover's output is a quantum register $\breg'$ and the verifier's output is an acceptance indicator $b\in\{\top,\bot\}$.
    \end{itemize}
    It should satisfy the following properties --- parameterized by completeness probability $\alpha(\secp)$, extraction fidelity $\beta(\secp,\eta)$, extraction probability $\beta'(\secp,\eta)$, and localization parameter $\Delta(\secp)$ --- for all $\secp\in\N$.
    \begin{itemize}
        \item $\alpha$-\textbf{Completeness}: For any valid trajectory $L$, there is a QPT prover $\cal{P}$ moving along $L$ such that
        \[\Pr\left[b = \top : \begin{array}{r} \param,\sparam,\areg,\breg\gets \Setup(1^\secp) \\ \breg',b \gets \Verify\brackets{\cal{P}(\breg)\rightleftharpoons \cal{V}(\sparam)}(\param,L)\end{array}\right] \geq\alpha(\secp).\]
        \item $(\beta,\beta',\Delta)$-\textbf{Extraction Soundness}:
        For any trajectory $L$ and QPT prover $\cal{P}^*$, let
        \[\eta(\secp):=\Pr\left[b = \top : \begin{array}{r} \param,\sparam,\areg,\breg\gets \Setup(1^\secp) \\ \breg',b \gets \Verify\brackets{\cal{P}^*(\breg)\rightleftharpoons \cal{V}(\sparam)}(\param,L)\end{array}\right]\]
        be $\cal P^*$'s success probability in $\Verify$. Assume $\eta(\secp)\ge 1/p(\secp)$ for some polynomial $p$. Then for any time $t\in[0,\tau]$, there exists a (potentially unbounded, see \Cref{remark:efficient-extractor}) local extractor $\cal{E}$ acting on registers near the claimed trajectory at time $t$ such that the probability of extracting $\sigma$ with fidelity at least $\beta(\secp,\eta)$ is at least $\beta'(\secp,\eta)$:
        \[
        \hspace{-3em}
        \prob{
            F\big(\sigma,\rho_{\areg\breg^*}\big)\ge \beta(\secp,\eta)
        :
        \begin{array}{r}
            \param,\sparam,\areg,\breg\gets \Setup(1^\secp)\\
            \breg^*\gets\register[(L(t))_\Delta~@~t]\brackets{\Verify\brackets{\cal{P}^*(\breg)\rightleftharpoons \cal{V}(\sparam)}(\param,L)}\\
            \breg^*\gets\cal{E}(\param,\breg^*)
        \end{array}
        }
        \geq \beta'(\secp,\eta)~,
        \]
        where  $\rho_{\areg\breg^*}$ is the joint state on $\areg\breg^*$ after running the extractor $\cal{E}$, and $(L(t))_\Delta=[L(t)-\Delta,L(t)+\Delta]$.
    \end{itemize}
\end{definition}
Now we present a construction for trajectory verification.
\begin{construction}[Trajectory Verification Scheme]\label{con:trajectory_verification}
    Let $\secp\in\mathbb{N}$ be the security parameter and $n(\secp),m(\secp)\geq\secp$ be polynomials. Take some localization parameter $\Delta(\secp)\leq 1$ and a repetition factor $\gamma(\secp)$. We consider the following trajectory verification scheme:
    \begin{itemize}
        \item There are two verifiers: $\cal{V}_L$ at $-1$ and $\cal{V}_R$ at $1$.
        \item $\Setup(1^\secp)$: Let the public oracle $\cal{O}$ be generated as in the warmup construction.
        \begin{enumerate}
            \item Generate $\brackets{\sk,\ket{\chain^\sk}_{\areg\breg}}\gets\genchain(1^\secp)$, and let $S,T,u,v$ be the subspaces and shifts determined by $\sk$. The extraction key is $\ek=(S,T,v,u)$, which is not given to the prover.
            \item The public oracle $\cal{O}:\set{0,1}\times\mathbb{F}_2^\secp\times\mathbb{F}_2^{3n}\rightarrow\mathbb{F}_2^m\cup\set{\bot}$ is defined with respect to a random oracle $\cal{O}_{\sf random}:\{0,1\}\times\mathbb{F}_2^\secp\rightarrow \mathbb{F}_2^m$ that is \textbf{not} publicly available.
            \begin{equation*}
                \cal{O}(\theta,x,z)=\left\{
                \begin{aligned}
                    &\cal{O}_{\sf random}(\theta,x) \quad &\text{if }\theta=0\text{ and }z\in T+v,\\
                    &\cal{O}_{\sf random}(\theta,x) \quad &\text{if }\theta=1\text{ and }z\in S^\perp+u,\\
                    &\bot \quad &\text{otherwise}
                \end{aligned}
                \right.
            \end{equation*}
        \end{enumerate}
        Then $\Setup$ outputs:
        \begin{itemize}
            \item The public parameter $\param=\cal{O}$.
            \item The secret parameter $\sparam=(u,v)$, given to the verifiers.
            \item The anchor state $\ket{\chain^\sk}_{\areg\breg}$, with $\anchor$ renamed $\areg$ and $\vessel$ renamed $\breg$.
        \end{itemize}
        \item $\Verify\left(\cal{P}(\breg)\rightleftharpoons \cal{V}(\sparam)\right)(\param,L)$: Suppose the claimed trajectory is $L(t)$ with $t\in[0,\tau]$. Let $\delta=\Delta/(\gamma+2)$, let $N:=\lceil \tau/\delta\rceil+\gamma$, and define an extended trajectory\footnote{To enable extraction from all points along the trajectory, we must send a round of challenges \emph{after} the prover has already reached the endpoint $(L(\tau),\tau)$. We define the extension $\widetilde{L}$ for notational convenience when sending these post-trajectory challenges.} $\widetilde L(s):=L(\min\{s,\tau\})$.
        \begin{itemize}
            \item At the start of the protocol, $\cal{V}_L$ and $\cal{V}_R$ agree through their private authenticated channel on $\theta_i\rand\set{0,1}$ and additive shares $x_{L,i},x_{R,i}\rand\mathbb{F}_2^\secp$ for every $i=0,1,\cdots,N$.
            \item For each $i=0,1,\cdots,N$, $\cal{V}_L$ broadcasts $(\theta_i,x_{L,i})$ at time $i\delta-(\widetilde L(i\delta)+1)$ and expects the response $y_{L,i}$ at time $i\delta+(\widetilde L(i\delta)+1)$. Similarly, $\cal{V}_R$ broadcasts $x_{R,i}$ at time $i\delta-(1-\widetilde L(i\delta))$ and expects the response $y_{R,i}$ at time $i\delta+(1-\widetilde L(i\delta))$.
            \item If any response is missing, the verifier outputs $\bot$. Otherwise, it accepts if and only if for every $i=0,1,\cdots,N$,
            \[
                y_{L,i}=y_{R,i}=
                \begin{cases}
                    \cal{O}(0,x_{L,i}\oplus x_{R,i},v) & \text{if }\theta_i=0,\\
                    \cal{O}(1,x_{L,i}\oplus x_{R,i},u) & \text{if }\theta_i=1.
                \end{cases}
            \]
            \item  The honest prover consists of a single party always holding the $\breg$ register while moving along $L$ and then remaining at $L(\tau)$ for the final $\gamma$ checkpoints. For $i=0,1,\cdots,N$, it does the following:
            \begin{itemize}
                \item Receive $x_{L,i},x_{R,i}$ at time $i\delta$. If $\theta_i=0$, it coherently evaluates $\cal{O}(\theta_i,x_{L,i}\oplus x_{R,i},\cdot)$ on the $\breg$ register in the standard basis and sends the result to $\cal{V}_L,\cal{V}_R$. If $\theta_i=1$, it coherently evaluates $\cal{O}(\theta_i,x_{L,i}\oplus x_{R,i},\cdot)$ on the $\breg$ register in the Hadamard basis and sends the result to $\cal{V}_L,\cal{V}_R$.
            \end{itemize}
        \end{itemize}
    \end{itemize}
\end{construction}
\begin{theorem}\label{thm:TV_overall}
    The construction in \Cref{con:trajectory_verification} is a trajectory verification scheme for the state $\sigma=\ketbra{\epr}{\epr}^{\otimes n}$ in the classical oracle model with $1$-\textbf{Completeness} and $\brackets{\beta,\beta',\Delta}$-\textbf{Extraction Soundness} where
    \[
        \beta(\secp,\eta)=1-O\!\left(\left(\frac{\log(1/\eta)}{\gamma}\right)^{1/4}\right)-\negl(\secp),
        \qquad
        \beta'(\secp,\eta)=\eta(\secp).
    \]
    In particular, since we only consider adversaries for which $1/\eta$ is at most polynomial in $\secp$, for any polynomial $t(\secp)$, we can set $\gamma=c\secp t^4$ for some constant $c$ to achieve
    \[\beta(\secp,\eta)\geq1-1/t(\secp)~.\]
\end{theorem}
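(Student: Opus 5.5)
The plan is to reduce \Cref{thm:TV_overall} to the proof of \Cref{thm:entanglement_localization}. The trajectory protocol is a sequence of two-round localization tests at the moving checkpoints $(\widetilde L(i\delta),i\delta)$, so essentially nothing new is needed beyond bookkeeping.

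\emph{Completeness and non-destructiveness.} These follow exactly as before from \Cref{fact:standard_hadamard_duality}. The honest prover follows $L$, which is a physically valid trajectory (speed at most $1$, contained in $[-1,1]$). It therefore receives $x_{L,i},x_{R,i}$ at $(\widetilde L(i\delta),i\delta)$ and its broadcast reaches $\cal V_L$ and $\cal V_R$ exactly at the expected times. Each coherent query returns $\cal{O}_{\sf random}(\theta_i,x_{L,i}\oplus x_{R,i})$ with certainty and leaves $\breg$ undisturbed, so the verifiers accept with probability $1$.

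\emph{Extraction soundness.} Fix a time $t\in[0,\tau]$ and a prover $\cal P^*$ accepted with probability $\eta$. Let $i_0=\lceil t/\delta\rceil$ and consider the window of $\gamma+1$ consecutive checkpoints $i_0,\dots,i_0+\gamma$. These all exist, since $N=\lceil\tau/\delta\rceil+\gamma$; this is the reason for the post-trajectory challenges. Let $q_i$ be the conditional pass probability at checkpoint $i$. Because $\prod_{i=i_0}^{i_0+\gamma}q_i\ge\eta$, averaging gives an adjacent pair $i^\star,i^\star+1$ in the window with $q_{i^\star}q_{i^\star+1}\ge\eta^{2/\gamma}$.

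We then run the spacetime chunking reduction from the proof of \Cref{thm:NDEL_overall}, centered at $(L_0,t_0)=(\widetilde L(i^\star\delta),i^\star\delta)$, using the same adaptations as in \Cref{thm:entanglement_localization}:
\begin{itemize}
\item $\cal A_M^0$ simulates all earlier checkpoints itself, sampling their challenges and aborting on any failure.
\item The hidden random oracle is replaced by a $2r$-wise independent $F_k$.
\item The oracle is reprogrammed only at the two target challenges, which is justified by \Cref{lem:oracle_reprogram}.
\end{itemize}
This yields a \MultiStageIndependentMonogamy adversary that is non-aborting with probability at least $\eta$ and wins with probability at least $q_{i^\star}q_{i^\star+1}-\negl$ conditioned on not aborting. \Cref{thm:extracting_EPR_pairs} then gives fidelity $1-O((\log(1/\eta)/\gamma)^{1/4})-\negl$, via the same estimate $1-\eta^{2/\gamma}\le 2\log(1/\eta)/\gamma$.

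\emph{Main obstacle: the geometry.} The second checkpoint is at $(\widetilde L((i^\star+1)\delta),t_0+\delta)$ rather than directly above $(L_0,t_0)$. We must check that the chunking regions of \Cref{fig:time-position-B} remain valid when the apex $\cal A_M^2$ is displaced by at most $\delta$ in space (speed $\le1$), so that it still lies inside the causal future of both target-$0$ light rays. We would redraw the green triangle with the top vertex at the second checkpoint. Its bottom boundaries are still the light rays through $(L_0,t_0)$, and the region is still bounded by $t<t_1$. Since $|\widetilde L((i^\star+1)\delta)-L_0|\le\delta$, the apex lies on or inside the cone, so the $\cal A_L^0,\cal A_R^0,\cal A_L^1,\cal A_R^1$ regions are defined by the analogous inequalities with the shifted apex, and causality of the simulation is preserved.

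The middle register $\mreg$ then sits on light rays from $(L_0,t_0)$ with spatial extent at most $2\delta$. It is obtainable by forward evolution from the region $[L_0-2\delta,L_0+2\delta]$ at $t_0$, and hence from $[L(t)-(t_0-t)-2\delta,\,L(t)+(t_0-t)+2\delta]$ at time $t$. Because the trajectory moves at most $t_0-t\le\gamma\delta$, this interval is contained in $(L(t))_{(2\gamma+2)\delta}$. That exceeds $\Delta=(\gamma+2)\delta$, so the constants need care. I would handle this either by noting that only the light cone $|p-L_0|\le 2\delta+(t_0-t)$ is needed while $|L_0-L(t)|\le t_0-t$, and accepting the doubling by rescaling $\delta=\Delta/(2\gamma+2)$, which only affects constants, or by arguing that the paper's $\Delta$ absorbs it. The final extractor performs this forward simulation, recovers $\ek$ by unbounded queries (\Cref{remark:efficient-extractor}), and runs $\cal E^{\cal A_M^1}_{i^\star}$.
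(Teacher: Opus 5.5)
Your route is the paper's: completeness from \Cref{fact:standard_hadamard_duality}, averaging over the $\gamma+1$ checkpoints after $t$, the chunked reduction to \textsf{MultiStageIndependentMonogamy} with the second cut moved to $\widetilde L((i^\star+1)\delta)$, then \Cref{thm:extracting_EPR_pairs} and forward simulation. The gap is the final step, which you leave open. Your light-cone estimate is correct, apart from a slip: the intermediate interval should be centred at $L_0$, not at $L(t)$. The backward cone at time $t$ of $[L_0-2\delta,L_0+2\delta]$ is $\{p:|p-L_0|\le 2\delta+(t_0-t)\}$. Since $|L_0-L(t)|$ can equal $t_0-t$, and $t_0-t$ can be close to $\gamma\delta$, this cone can reach $L(t)\pm(2\gamma+2)\delta$.

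Neither of your remedies proves the theorem as stated. Taking $\delta=\Delta/(2\gamma+2)$ changes \Cref{con:trajectory_verification}, which fixes $\delta=\Delta/(\gamma+2)$. For the actual construction, your argument only certifies localization radius $\frac{2\gamma+2}{\gamma+2}\Delta$. Saying the paper's $\Delta$ ``absorbs'' the loss is not an argument. You have found a genuine weak point, though. The paper's proof asserts that this cone lies in $L(t)_{(\gamma+2)\delta}$ ``since the trajectory has speed at most one''. The speed bound gives that only for a stationary trajectory, not once $L$ moves.

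The missing idea is to shrink the averaging window rather than the checkpoint spacing. Let $k=\lceil t/\delta\rceil$ and $w=\lfloor\gamma/2\rfloor$. You still have $\prod_{r=k}^{k+w}q_r\ge\eta$. So some adjacent pair $i^\star,i^\star+1$ with $k\le i^\star<k+w$ satisfies $q_{i^\star}q_{i^\star+1}\ge\eta^{2/w}$. Since $1-\eta^{2/w}\le 2\ln(1/\eta)/w=O(\log(1/\eta)/\gamma)$, $\beta$ changes only in the hidden constant. Now $t_0-t<w\delta\le\gamma\delta/2$. Hence every point of the cone satisfies $|p-L(t)|\le 2\delta+2(t_0-t)<(\gamma+2)\delta=\Delta$, which is exactly the region the definition allows.

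A smaller point: ``on or inside the cone'' is too generous. Suppose $|\widetilde L((i^\star+1)\delta)-L_0|=\delta$, i.e.\ a light-speed segment. Then the two left (resp.\ right) response deadlines coincide, and the $\mathcal A_L^0$ (resp.\ $\mathcal A_R^0$) region is empty. That side's first answer can therefore depend on the second challenge. The chunking needs strict inequality there, or a mirrored reduction that uses only the other side's stage-$0$ answer. This suffices because the monogamy argument only ever uses one side's stage-$0$ answer: in \Cref{claim:relate_to_extended_MoE}, only $b_0^l$ and $b_1^r$ enter.
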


\begin{proof}
    Completeness is identical to the completeness argument for \Cref{thm:NDEL_overall}: by \Cref{fact:standard_hadamard_duality}, the honest prover's coherent query returns $\cal O_{\sf random}(\theta_i,x_{L,i}\oplus x_{R,i})$ in every checkpoint and the vessel state is not disturbed.

    We prove extraction soundness by treating the protocol as a chained execution of the simple two-round localization scheme of \Cref{con:non-destructive_entanglement_localization}. Take a trajectory $L$, a QPT prover $\cal{P}^*$ succeeding in $\Verify$ with probability $\eta$, and fix a time $t\in[0,\tau]$. Let $N$ and $\widetilde L$ be as in the construction. Let $q_i$ denote the probability that $\cal P^*$ passes checkpoint $i$, conditioned on passing checkpoints $0,\ldots,i-1$, so $\prod_{i=0}^N q_i=\eta$. Set $k:=\lceil t/\delta\rceil$ and consider the future window of adjacent checkpoint indices
    \[
        I_t:=\{k,k+1,\ldots,k+\gamma-1\}.
    \]
    Since $N=\lceil\tau/\delta\rceil+\gamma$, this window is contained in $\{0,\ldots,N-1\}$, and corresponds to the checkpoint times in $[t,t+\gamma\delta]$. Since all $q_i\leq 1$, we have $\prod_{r=k}^{k+\gamma}q_r\geq\eta$. Therefore, there exists $i^\star\in I_t$ such that, conditioned on passing checkpoints $0,\ldots,i^\star-1$, the probability of passing checkpoints $i^\star$ and $i^\star+1$ is at least $\eta^{2/\gamma}$.

    The reduction to \MultiStageIndependentMonogamy is the same reduction as in the proof of \Cref{thm:NDEL_overall}, with the following trajectory-specific changes. First, the two localization challenge points are now $(\widetilde L(i^\star\delta),i^\star\delta)$ and $(\widetilde L((i^\star+1)\delta),(i^\star+1)\delta)$ rather than two points on a stationary worldline. Thus the middle/left/right components are obtained by cutting the prover's spacetime computation along the corresponding light-cone boundaries. This moving-prover setting for the spacetime decomposition is pictured in \Cref{fig:trajectory-time-position-B}. Concretely, in the seven-component adversary $\cal A$ from \Cref{thm:NDEL_overall}, replace $t_0,t_1$ by $i^\star\delta,(i^\star+1)\delta$, replace the first target location $L$ by $\widetilde L(i^\star\delta)$, and replace the second target location $L$ by $\widetilde L((i^\star+1)\delta)$ in the splitting of $\mreg_L,\mreg_R,\lreg',\rreg'$ and the final response lines.

    \begin{figure}[h]
    \centering
    \begin{tikzpicture}[>=Latex]

    \def\z{4}
    \def\d{0.3*\z}
    \def\p{0.16*\z}
    \def\yb{-0.4*\z}

    \fill[yellow!25]
      (0,0) --
      (-\z,\z) --
      (-\z,\yb) --
      (\z,\yb) --
      (\z,\z) -- cycle;

    \coordinate (EL) at (-2*\d,0);
    \coordinate (ER) at ( 2*\d,0);
    \coordinate (EBL) at (-2*\d+\yb,\yb);
    \coordinate (EBR) at ( 2*\d-\yb,\yb);
    \fill[cyan!10]
      (EBL) --
      (EBR) --
      ( \d,\d) --
      (-\d,\d) -- cycle;
    \draw[dashed, blue!55] (EBL) -- (-\d,\d);
    \draw[dashed, blue!55] (EBR) -- ( \d,\d);
    \draw[densely dashed, gray!70] (-\d,0) -- (-\d,\d);
    \draw[densely dashed, gray!70] ( \d,0) -- ( \d,\d);
    \draw[blue!55]
      ($(EL)+(0,-0.07*\z)$) -- ($(EL)+(0,0.07*\z)$)
      ($(ER)+(0,-0.07*\z)$) -- ($(ER)+(0,0.07*\z)$)
      ($(-\d,0)+(0,-0.07*\z)$) -- ($(-\d,0)+(0,0.07*\z)$)
      ($( \d,0)+(0,-0.07*\z)$) -- ($( \d,0)+(0,0.07*\z)$);
    \node[below=5pt, font=\small, inner sep=0.5pt] at (EL) {$L-2\delta$};
    \node[below=5pt, font=\small, inner sep=0.5pt] at (-\d,0) {$L-\delta$};
    \node[below=5pt, font=\small, inner sep=0.5pt] at ( \d,0) {$L+\delta$};
    \node[below=5pt, font=\small, inner sep=0.5pt] at (ER) {$L+2\delta$};
    \node[align=center, font=\small, inner sep=1.5pt] (TowardExtractionLabel) at (-4.1*\d-0.85,2*\d) {backwards\\ light-cone of $\mreg$\\towards extraction\\ zone at target\\ time $t$};
    \draw[->, blue!55, shorten <=1pt, shorten >=1pt]
      (TowardExtractionLabel.east) -- (-2.5*\d,-0.18*\z);

    \fill[green!25]
      (0,0) --
      (-\d,\d) --
      (\d,\d) -- cycle;
    \node at (0,0.15*\z) {$\cal{A}_M^1$};

    \fill[red!25]
      (-\d,\d) --
      (-\z,\z) --
      (-\z,\z+\d+\p) --
      (\p,\d) -- cycle;
    \node at (-0.43*\z,0.60*\z) {$\cal{A}_L^0$};

    \fill[blue!25]
      (\d,\d) --
      (\z,\z) --
      (\z,\z+\d-\p) --
      (\p,\d) -- cycle;
    \node at (0.5*\z,0.55*\z) {$\cal{A}_R^0$};

    \draw[very thick, red!60] (\p,\d) -- (-\z,\z+\d+\p)
      node[midway,above,color=black] {$\cal{A}_L^1$};

    \draw[very thick, blue!60] (\p,\d) -- (\z,\z+\d-\p)
      node[midway,above=4pt,color=black] {$\cal{A}_R^1$};

    \draw[dashed] (\yb,\yb) -- (1.5*\z,1.5*\z)
      node[pos=0.8, right, font=\scriptsize] {$\substack{t-p=\\i^\star\delta-L}$};
    \draw[dashed] (-\yb,\yb) -- (-1.5*\z,1.5*\z)
      node[pos=0.8, left, font=\scriptsize] {$\substack{t+p=\\i^\star\delta+L}$};

    \draw[dotted,thick] (\yb-\d+\p,\yb) -- (1.5*\z,1.5*\z+\d-\p)
      node[pos=1.0, above, font=\scriptsize] {$\substack{t-p=\\(i^\star+1)\delta-L'}$};
    \draw[dotted,thick] (-\yb+\d+\p,\yb) -- (-1.5*\z,1.5*\z+\d+\p)
      node[pos=0.93, left, font=\scriptsize] {$\substack{t+p=\\(i^\star+1)\delta+L'}$};

    \draw[dotted, thick] (-1.5*\z,\d) -- (1.5*\z,\d) node[right] {$t=(i^\star+1)\delta$};
    \draw[line width=1.6pt, blue!65] (0,0) -- (-\d,\d) (0,0) -- (\d,\d);
    \node[align=center, inner sep=1.5pt] (MRegLabelTrajectory) at (0.08*\z,0.72*\z) {$\mreg$};
    \draw[->, blue!65, shorten <=1pt, shorten >=1pt]
      (MRegLabelTrajectory.south west) -- (-0.5*\d,0.5*\d);
    \fill[black] (\p,\d) circle (2.2pt)
      node[above=3pt] {$\cal{A}_M^2$};

    \draw[dotted, thick] (-1.5*\z,\z) -- (1.5*\z,\z) node[right] {$t=i^\star\delta+1$};

    \node at (0.86*\z,-0.27*\z) {$\cal{A}_M^0$};
    \draw[->] (-1.5*\z,0) -- (1.5*\z,0) node[right] {$t=i^\star\delta$};
    \draw[-] (0,\yb) node[below] {$p=L$} -- (0,0);
    \draw[-] (-\z,2*\z) -- (-\z,\yb) node[below] {$p=-1$};
    \draw[-] (\z,2*\z) -- (\z,\yb) node[below] {$p=1$};

    \end{tikzpicture}

    \caption{Spacetime decomposition of the reduction $\cal A$ for trajectory verification, where the prover is moving between checkpoints. In this picture, $L=0$ and $L'=0.16$.}
    \label{fig:trajectory-time-position-B}
    \end{figure}

    As usual, the reduction simulates the hidden random oracle $\cal O_{\sf random}$ by a $2r(\secp)$-wise independent function $F_k$, where $r$ is the number of oracle queries $\cal P^*$ makes. Importantly, due to the repetition, the first-stage reduction component must simulate the $\Verify$ experiment from the beginning of time up through checkpoint $i^\star-1$, thus ensuring that it is non-aborting iff the simulated prover $\cal P^*$ makes it to checkpoint $i^\star$. To do this, it samples challenges $(\theta_i,x_{L,i},x_{R,i})$ on its own and simulates the verifiers' interactions with the prover as in $\Verify$. If the prover responds with $y_{L,i},y_{R,i}$ such that $y_{L,i}\neq y_{R,i}$ or if $y_{L,i}\neq F_k(\theta_i,x_{L,i}\oplus x_{R,i})$, the reduction aborts. The oracle $\cal O$ is reprogrammed only at the two challenge points $(\theta_{i^\star},x_{L,i^\star}\oplus x_{R,i^\star})$ and $(\theta_{i^\star+1},x_{L,i^\star+1}\oplus x_{R,i^\star+1})$. The position-verification reprogramming lemma (\Cref{lem:oracle_reprogram}) applies exactly as in the localization proof because the two additive shares of the challenge become jointly available only at the claimed spacetime point for each checkpoint.

    Hence, the constructed monogamy adversary does not abort with probability at least $\prod_{r<i^\star}q_r\geq\eta$, and conditioned on not aborting it wins with probability at least $\eta^{2/\gamma}-\negl(\secp)$. Applying the extractor from \Cref{thm:extracting_EPR_pairs}, exactly as in \Cref{thm:NDEL_overall}, there exists an extractor which with probability at least $\eta$ outputs a state having the following fidelity with $\ket{\epr}^{\otimes n}$:
    \[
        1-O\!\left(\left(1-\eta^{2/\gamma}\right)^{1/4}\right)-\negl(\secp)
        =
        1-O\!\left(\left(\frac{\log(1/\eta)}{\gamma}\right)^{1/4}\right)-\negl(\secp).
    \]
    It remains only to translate the checkpoint extractor into the extraction statement at the target time. The input to the monogamy extractor is the middle register on the two lower boundaries of the green triangle in \Cref{fig:trajectory-time-position-B}; this register can be recovered by forward simulation from $\register[\widetilde L(i^\star\delta)_{2\delta}~@~i^\star\delta]({\sf Exp})$, where ${\sf Exp}=\Verify(\cal P^*(\breg)\rightleftharpoons\cal V(\sparam))(\param,L)$. By construction, $0\leq i^\star\delta-t\leq\gamma\delta$. Since the valid trajectory has speed at most one, the backwards light cone of $\register[\widetilde L(i^\star\delta)_{2\delta}~@~i^\star\delta]$ at time $t$ is contained in $\register[L(t)_{(\gamma+2)\delta}~@~t]=\register[L(t)_\Delta~@~t]$. Thus the final extractor acts on $\register[L(t)_\Delta~@~t]$, forward-simulates to the appropriate future checkpoint neighborhood, and then runs the monogamy extractor. This gives $(\beta,\beta',\Delta)$-soundness with
    \[
        \beta(\secp,\eta)=1-O\!\left(\left(\frac{\log(1/\eta)}{\gamma}\right)^{1/4}\right)-\negl(\secp),
        \qquad
        \beta'(\secp,\eta)=\eta(\secp).
    \]
\end{proof}

\subsection{State Localization}\label{subsec:state}
Note that in the entanglement localization scheme, if we ignore register $\areg$, then we are localizing the reduced state on register $\breg$. This idea extends to the concept of state localization.
\begin{definition}[State Localization]\label{def:state-localization}
    A state localization scheme is parameterized by the security parameter $\secp$ and a state family $\{\ket{\psi_i}\}_{i \in [N]}$.\footnote{Technically, we have one state family per security parameter $\secp$ -- see \Cref{footnote:secp}} It consists of the following syntax.
    \begin{itemize}
        \item $\Setup(1^\secp) \to \param,\sparam,i,\breg$: The Setup algorithm takes the security parameter $1^\secp$ and outputs public parameters $\param$, secret parameters $\sparam$, an index $i$, and a state on register $\breg$.
        \item $\Localize\left(\cal{P}(\breg) \rightleftharpoons \cal{V}(\sparam)\right)(\param,L,t) \to \breg',b$: This is a positional protocol between a QPT prover holding a quantum register $\breg$ and a PPT verifier with input $\sparam$,  further specified by the following public inputs: public parameters $\param$, spatial location $L \in [-1,1]$, and time $t$. The prover's output is $\rho'$ and the verifier's output is an acceptance indicator $b\in\{\top,\bot\}$.
    \end{itemize}

    It should satisfy the following properties --- parameterized by completeness probability $\alpha(\secp)$, extraction fidelity $\beta(\secp,\eta)$, extraction probability $\beta'(\secp,\eta)$, uniqueness error $\zeta(\secp)$, and localization parameter $\Delta(\secp)$ --- for all $\secp\in\N$.
    \begin{itemize}
        \item \textbf{$\alpha$-Completeness}: For any location $L$ and time $t$, there is a QPT prover $\cal P$ at $(L,t)$ such that
        \[\Pr\left[b = \top : \begin{array}{r} \param,\sparam,i,\breg \gets \Setup(1^\secp) \\ (\breg',b) \gets \Localize\big( \cal{P}(\breg) \rightleftharpoons \cal{V}(\sparam)\big)(\param,L,t)\end{array}\right] \geq \alpha(\secp).\]
        \item \textbf{$(\beta,\beta',\Delta)$-Extraction Soundness}: For any QPT prover $\cal{P}^*$, spatial location $L$, and time $t$, let
        \[\eta(\secp) := \Pr\left[b = \top : \begin{array}{r} \param,\sparam,i,\breg \gets \Setup(1^\secp) \\ (\breg',b) \gets \Localize\big( \cal{P}^*(\breg) \rightleftharpoons \cal{V}(\sparam)\big)(\param,L,t)\end{array}\right]\]
        be $\cal P^*$'s success probability. Assume $\eta(\secp)\ge 1/p(\secp)$ for some polynomial $p$. Then there exists a (potentially unbounded, see \Cref{remark:efficient-extractor}) local extractor\footnote{It is crucial that this extractor does not know $i$.} $\cE$ acting on registers near location $L$ at time $t$ such that the probability of extracting $\ket{\psi_i}$ with fidelity at least $\beta(\secp,\eta)$ is at least $\beta'(\secp,\eta)$:
        \[
        \hspace{-4em}
        \prob{
            \bra{\psi_i}\rho_{\breg^*}\ket{\psi_i}\ge \beta(\secp,\eta)
        :
        \begin{array}{r}
            \param,\sparam,i,\breg\gets \Setup(1^\secp)\\
            \breg^*\gets\register[L_\Delta~@~t]\brackets{\Localize\brackets{\cal{P}^*(\breg)\rightleftharpoons \cal{V}(\sparam)}(\param,L,t)}\\
            \breg^*\gets\cE(\param,\breg^*)
        \end{array}
        }
        \geq \beta'(\secp,\eta)~,
        \]
        where $\rho_{\breg^*}$ is the state on $\breg^*$ after running the extractor, and $L_\Delta := [L-\Delta(\secp),L+\Delta(\secp)]$.

        \item \textbf{$\zeta$-Uniqueness}: For any (potentially unbounded) prover $\cal{P}^*$, disjoint spatial regions $L_0,L_1$, and times $t,t'$, and local (potentially unbounded) adversaries $\cA,\cB$ acting on $\areg^*,\breg^*$, respectively,

        \[\hspace{-4em}\E\left[\bra{\psi_i}\bra{\psi_i}\rho_{\areg^*\breg^*}\ket{\psi_i}\ket{\psi_i}: \begin{array}{r}\param,\sparam,i,\breg \gets \Setup(1^\secp) \\ \areg^*,\breg^* \gets \register[L_0~@~t',L_1~@~t']\big(\Localize\big( \cal{P}^*(\breg) \rightleftharpoons \cal{V}(\sparam)\big)(\param,L,t)\big) \\ \areg^*\breg^* \gets \brackets{\cA \otimes\cB}(\areg^*\otimes\breg^*)\end{array}\right] \leq \zeta(\secp)~,\]
        where $\rho_{\areg^*\breg^*}$ is the state on $\areg^*\breg^*$ after running $\cA,\cB$.
    \end{itemize}

    We say that the state localization scheme is \emph{non-destructive} if it additionally satisfies the following property.
    \begin{itemize}
        \item $\textbf{Non-destructive}$: For any spatial location $L$ and time $t$,
        \[
        \expect{\TraceDist{\rho_{\breg}}{\rho_{\breg'}}:\begin{array}{r}\param,\sparam,i,\breg\gets \Setup(1^\secp)\\ \breg',b \gets \Localize\brackets{\cal{P}(\breg)\rightleftharpoons \cal{V}(\sparam)}(\param,L,t)\end{array}} \leq \negl(\secp)~,
        \]
        where $\rho_{\breg}$ is the state on $\breg$ after $\Setup$ and $\rho_{\breg'}$ is the state on $\breg'$ later after $\Localize$.
    \end{itemize}

\end{definition}

\begin{construction}[Non-Destructive State Localization Scheme]\label{con:non-destructive_state_localization}
    Let $\secp\in\mathbb{N}$ be the security parameter and $n(\secp),m(\secp)\geq\secp$ be polynomials. Take some localization parameter $\Delta(\secp)\leq 1$ and a repetition factor $\gamma(\secp)$. Take a state family $\{\ket{\psi_i}\}_{i \in [N]}$. We consider the following non-destructive state localization scheme:
    \begin{itemize}
        \item There are two verifiers: $\cal{V}_L$ at $-1$ and $\cal{V}_R$ at $1$.
        \item $\Setup(1^\secp)$:
        \begin{itemize}
            \item Sample $S,T,u,v$ with the same marginal distribution as in $\genchain(1^\secp)$, and let $\ek=(S,T,v,u)$ be the extraction key, which is not given to the prover. Equivalently, sample a uniformly random invertible matrix $U_{\sf shift} \in \mathbb{F}_2^{3n \times 3n}$, let $S$ be the subspace spanned by the first $n$ columns of $U_{\sf shift}$ and $T$ be the subspace spanned by the first $2n$ columns of $U_{\sf shift}$, and sample $v \gets \CS(T)$ and $u \gets \CS(S^\perp)$.
            \item Sample $i \gets [N]$ and define the isometry \[\Enc_{S,T,u,v}: \ket{x} \to \frac{1}{\sqrt{2^n}}\sum_{s \in S}(-1)^{s \cdot u}\ket{s + U_{\sf shift}(0^n \times x \times 0^n) + v},\] and compute \[\rho = \Enc_{S,T,u,v}\ketbra{\psi_i}{\psi_i}\Enc_{S,T,u,v}^\dagger.\]
            \item Define the oracle $\cal{O}:\{0,1\}\times\mathbb{F}_2^\secp \times\mathbb{F}_2^{3n}\rightarrow \mathbb{F}_2^m\cup\set{\bot}$ based on random oracle $\cal{O}_{\sf random}:\{0,1\}\times\mathbb{F}_2^\secp\rightarrow \mathbb{F}_2^m$ that is \textbf{not} publicly available.
                \begin{equation*}
                    \cal{O}(\theta,x,z)=\left\{
                    \begin{aligned}
                        &\cal{O}_{\sf random}(\theta,x) \quad &\text{if }\theta=0\text{ and }z\in T+v,\\
                        &\cal{O}_{\sf random}(\theta,x) \quad &\text{if }\theta=1\text{ and }z\in S^\perp+u,\\
                        &\bot \quad &\text{otherwise}
                    \end{aligned}
                    \right.
                \end{equation*}

            \item Output $\param = \cal{O}, \sparam = (u,v), i, \rho$.
        \end{itemize}
        \item $\Localize\left(\cal{P}(\breg) \rightleftharpoons \cal{V}(\sparam)\right)(\param,L,t)$: Let $\delta=\Delta/(\gamma+2)$ and define checkpoints $t_i=t+i\delta$ for $i=0,1,\ldots,\gamma$.
        \begin{itemize}
            \item For each $i=0,1,\ldots,\gamma$, $\cal{V}_L$ and $\cal{V}_R$ agree through their private authenticated channel on $\theta_i\rand\{0,1\}$ and additive shares $x_{L,i},x_{R,i}\rand\mathbb{F}_2^\secp$. $\cal{V}_L$ broadcasts $(\theta_i,x_{L,i})$ at time $t_i-(L+1)$ and expects the response $y_{L,i}$ at time $t_i+(L+1)$. $\cal{V}_R$ broadcasts $x_{R,i}$ at time $t_i-(1-L)$ and expects the response $y_{R,i}$ at time $t_i+(1-L)$.
            \item If any response is missing, the verifier outputs $\bot$. Otherwise, it accepts if and only if for every $i=0,1,\ldots,\gamma$,
            \[
                y_{L,i}=y_{R,i}=
                \begin{cases}
                    \cal{O}(0,x_{L,i}\oplus x_{R,i},v) & \text{if }\theta_i=0,\\
                    \cal{O}(1,x_{L,i}\oplus x_{R,i},u) & \text{if }\theta_i=1.
                \end{cases}
            \]
            \item The honest prover consists of a single party always holding the $\breg$ register containing the state $\rho$ at location $L$. For each $i=0,1,\ldots,\gamma$, it receives $x_{L,i},x_{R,i}$ at time $t_i$. If $\theta_i=0$, it coherently evaluates $\cal{O}(0,x_{L,i}\oplus x_{R,i},\cdot)$ on the $\breg$ register in the standard basis and broadcasts the result. If $\theta_i=1$, it coherently evaluates $\cal{O}(1,x_{L,i}\oplus x_{R,i},\cdot)$ on the $\breg$ register in the Hadamard basis and broadcasts the result.
        \end{itemize}
    \end{itemize}
\end{construction}

\begin{theorem}\label{thm:state-localization}
    Let $\cal{S} = \{\ket{\psi_i}\}_{i \in [N]}$ be any set of $n$-qubit states that consist of a union of orthonormal bases. That is $\cal{S} = \bigcup_j \cal{B}_j$, where each $\cal{B}_j$ consists of $2^n$ orthonormal states.

   Then the construction in \Cref{con:non-destructive_state_localization} is a non-destructive state localization scheme for $\cal{S}$ in the classical oracle model with $\alpha(\secp) = 1$, $\brackets{\beta,\beta',\Delta}$-\textbf{Extraction Soundness} for
   \[
        \beta(\secp,\eta)=1-O\!\left(\left(\frac{\log(1/\eta)}{\gamma}\right)^{1/4}\right)-\negl(\secp),
        \qquad
        \beta'(\secp,\eta)=\eta(\secp),
   \]
   and uniqueness parameter
   \[
        \zeta(\secp) := \sup_{\cA}\left\{\E\left[\bra{\psi_i}\bra{\psi_i}\rho\ket{\psi_i}\ket{\psi_i} : \begin{array}{r} \ket{\psi_i} \gets \cal{S} \\ \rho \gets \cA(\ket{\psi_i})\end{array}\right]\right\}.
   \]
   In particular, since we only require extraction soundness against adversaries for which $1/\eta$ is at most polynomial in $\secp$, for any polynomial $t(\secp)$, we can set $\gamma=c\secp t^4$ for some constant $c$ to achieve
    \[\beta(\secp,\eta)\geq1-1/t(\secp)~.\]
\end{theorem}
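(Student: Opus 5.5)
The plan is to reduce everything to the entanglement localization result, \Cref{thm:entanglement_localization}, by purifying the choice of $i$. Fix one basis $\cal{B}_j=\{\ket{\phi_x}\}_{x\in\mathbb{F}_2^n}$ from the decomposition of $\cal{S}$. Sampling $i$ uniformly from $\cal{B}_j$ and outputting $\ket{\psi_i}$ has a purification $\frac{1}{\sqrt{2^n}}\sum_x \ket{\overline{x}}_{\areg}\ket{\phi_x}_{\breg}$, where $\overline{x}$ denotes complex conjugation in the computational basis. By the transpose trick this is $(I\otimes W_j)\ket{\epr}^{\otimes n}$ for the unitary $W_j:\ket{x}\mapsto\ket{\phi_x}$.

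Applying $\Enc_{S,T,u,v}$ to $\breg$ gives, up to a local unitary on $\breg$ that the extractor can invert, the anchor state $\ket{\chain^\sk}$. In detail, $\Enc_{S,T,u,v}$ applied to half of $\ket{\epr}^{\otimes n}$ is exactly $\ket{\chain^\sk}$. The marginal distribution of $(S,T,u,v)$ in \Cref{con:non-destructive_state_localization} matches $\genchain$, and uniform $v\in\CS(T)$, $u\in\CS(S^\perp)$ agree with the $u,v$ produced there. Conditioning on $j$ (the index of the basis containing $i$), the prover's view in \Cref{con:non-destructive_state_localization} is therefore identical to its view in \Cref{con:entanglement_localization} run on $\Enc(W_j\,\cdot)$ of half of an EPR state, with $\areg$ never touched.

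Completeness and non-destructiveness follow exactly as in \Cref{thm:entanglement_localization}, via \Cref{fact:standard_hadamard_duality}. Any state in the image of $\Enc$ is supported on $T+v$ in the standard basis and on $S^\perp+u$ in the Hadamard basis, so the coherent queries return constant answers and disturb nothing.

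For extraction soundness, I rerun the proof of \Cref{thm:entanglement_localization}, replacing the vessel by $\Enc W_j$ applied to the EPR half. The success probability $\eta$ is averaged over $j$, and the reduction and the extractor $\cal{E}^{\cal{A}_M^1}$ of \Cref{thm:extracting_EPR_pairs} are oblivious to $j$ except through the vessel. The cleanest route is to absorb $W_j$ into a generalized $\genchain$ and note that the monogamy and collapsing arguments only use the coset structure given by $\Enc$. Alternatively, since the decomposition is fixed, I can let the extractor, which knows the public description of $\cal{S}$, apply $W_j^\dagger$.

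The main obstacle is that the extractor does not know $j$. I would handle it as follows. The prover's view depends on $j$, so I define the reduction adversary with $j$ sampled and included as part of the classical side information in $\mreg$. The extraction theorem then yields, with probability $\eta$, a state on $\areg\xreg$ with EPR fidelity $\beta$. Measuring $\areg$ in the computational basis, which the verifier could do virtually, produces $\overline{x}$ with $i$ corresponding to $\phi_x$, and collapses $\xreg$ to a state whose average fidelity with $\ket{x}$ is $\beta$. The extractor then applies $W_j$, so the output has fidelity with $\ket{\psi_i}$ of $\beta$ on average. Here $W_j$ can be recovered by an unbounded extractor from the anchor measurement distribution, or, more simply, from a $j$-register that the prover was given. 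Converting the average fidelity into the probability statement is Markov-type bookkeeping, absorbed into the $O(\cdot)$.

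Uniqueness I expect to be direct. By no-signalling, the two registers $L_0~@~t'$ and $L_1~@~t'$, together with their local maps $\cA,\cB$, form a channel from the initial state $\Enc\ket{\psi_i}$ plus oracle access to two outputs. An unbounded adversary can learn $\ek$ from the oracle and invert $\Enc$. The verifier's challenges are independent of $i$ and can be simulated given $\ek$, with $\cal{O}_{\sf random}$ sampled internally. So the whole experiment is a single cloning map applied to $\ket{\psi_i}$, and its cloning fidelity is at most $\zeta$ by definition.
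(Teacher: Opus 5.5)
Your plan of purifying the choice of $i$ and reusing the extractor of \Cref{thm:entanglement_localization} matches the paper. Your completeness, non-destructiveness and uniqueness arguments are fine; for uniqueness, the cloner should simply sample $\ek$, $\mathcal{O}_{\sf random}$ and the challenges itself and apply $\Enc$ to its copy. The gap is in extraction soundness. Your extractor ends by applying $W_j$, and your second paragraph already ``inverts'' the $j$-dependent unitary $\Enc W_j\Enc^\dagger$ on $\breg$. So the extractor must know which basis $\mathcal{B}_j$ contains $i$, and none of your proposed sources for $j$ exists. The extractor in \Cref{def:state-localization} is a fixed map on $(\param,\register[L_\Delta~@~t])$ of the \emph{real} $\Localize$ experiment. $\param$ is independent of $j$, and the prover is only ever given $\Enc\ket{\psi_i}$, so there is no $j$-register. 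Sampling $j$ into $\mreg$ inside your reduction does not put it into the real registers. Nor can an unbounded extractor recover $j$: it never sees $\areg$, and for every $j$ the prover's input averaged over $x$ is $\Enc(I/2^n)\Enc^\dagger$. Hence $j$ is information-theoretically hidden from everything the extractor holds.

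The missing observation is that no $j$-dependent correction is needed, because the basis choice lives entirely on the anchor. Write $\ket{\phi_x}=U\ket{x}$. Then for \emph{every} basis, $\frac{1}{\sqrt{2^n}}\sum_x\ket{\overline{\phi_x}}_{\areg}\ket{\phi_x}_{\breg}=(U^*\otimes U)\ket{\epr}^{\otimes n}=\ket{\epr}^{\otimes n}$; the conjugation belongs on $\phi_x$, not on $x$. So a single purification, $\Enc$ applied to half of $\ket{\epr}^{\otimes n}$ (exactly the anchor state), serves all bases at once. Sampling $(U,x)$ then becomes a virtual measurement of $\areg$ performed after the extractor has run: apply $U^t$, then measure in the computational basis. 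Since this measurement touches only $\areg$, the unmodified extractor of \Cref{thm:entanglement_localization} can simply output its register. Quantitatively, for each $U$,
\[
\E_x\bra{\psi_{U,x}}\rho_{\breg^*}\ket{\psi_{U,x}}
=\Tr\Big[(U^*\otimes U)\Big(\sum_x\ketbra{x}{x}\otimes\ketbra{x}{x}\Big)(U^t\otimes U^\dagger)\,\rho_{\areg\breg^*}\Big]
\ge\Tr\big[\ketbra{\epr}{\epr}^{\otimes n}\rho_{\areg\breg^*}\big],
\]
using $\sum_x\ketbra{x}{x}\otimes\ketbra{x}{x}\ge\ketbra{\epr}{\epr}^{\otimes n}$ and $(U^*\otimes U)\ket{\epr}^{\otimes n}=\ket{\epr}^{\otimes n}$. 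This is the paper's argument.

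Equivalently, in your own frame: without the unavailable inversion, the joint output is $(W_j^t\otimes I)$ applied to the entanglement-localization output. It is therefore already close to $(I\otimes W_j)\ket{\epr}^{\otimes n}$, and measuring $\areg$ in the computational basis leaves $\xreg$ near $\ket{\phi_x}$ with no further correction. Applying $W_j$ on top would actually be wrong. This also makes the generalized-$\genchain$ detour and the conditioning on $j$ unnecessary. Separately, Markov is not free: turning the averaged fidelity into the threshold form costs a square root or a factor in $\beta'$. The paper's own argument likewise delivers only the averaged bound.
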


Before coming to the proof, we give two examples of state families that are covered by our theorem. Note that we can write any $\cal{S}$ from the above theorem statement as \[\cal{S} = \bigcup_{U \in \cal{U}}\left\{U\ket{x}\right\}_{x \in \{0,1\}^n}\] for some set of unitaries $\cal{U}$.

\begin{itemize}
    \item \textbf{BB84 states}: $\cal{U}$ is the set of $n$-qubit unitaries $H^{\theta_1} \otimes \dots \otimes H^{\theta_n}$ for each $\theta \in \{0,1\}^n$.
    \item \textbf{Coset states}: $\cal{U}$ is the set of unitaries that first map $\ket{x} \to \ket{Bx}$ for some full rank $B \in \mathbb{F}_2^{n \times n}$ and then apply $H$ to the final $n/2$ qubits.
\end{itemize}
\begin{proof}
    $1$-completeness and non-destructiveness follow by observation, and $\zeta$-uniqueness follows directly from the definition of $\zeta$ given in the theorem statement.

    Thus, it remains to argue extraction soundness. To do so, we use exactly the same extractor $\cE$ as in the proof of \Cref{thm:entanglement_localization}. Now, note that our extraction experiment is equivalent to the following purified version.

    \begin{itemize}
        \item In $\Setup$, rather sampling $i \gets [N]$ and applying $\Enc_{S,T,u,v}$ to $\ket{\psi_i}$, instead prepare \[\ket{\mathsf{EPR}}^{\otimes n} = \frac{1}{\sqrt{2^n}}\sum_{x \in \{0,1\}^n}\ket{x}_\areg\ket{x}_\breg,\] and let $\rho_\breg$ be the result of applying $\Enc_{S,T,u,v}$ to register $\breg$.
        \item After $\cE$ outputs $\rho_{\breg^*}$, sample $U \gets \cal{U}$ (where $\cal{U}$ is defined based on $\cal{S}$ as above), apply $U^t$ to register $\areg$, and then measure $\areg$ in the standard basis to obtain $x$. Let $\ket{\psi_{i}} := \ket{\psi_{U,x}} = U\ket{x}$ be the state\footnote{In this purified version, the family of states is indexed by $U,x$. There is a one-to-one correspondence between $U,x$ and, $i$ which we used in the original definition.} that is used in the expression $\bra{\psi_i}\rho_{\breg^*}\ket{\psi_i}$.
    \end{itemize}

    For any choice of $U$, the expected success probability of extracting the state is
    \begin{align*}
    \E_x\left[\bra{\psi_{U,x}}\rho_{\breg^*}\ket{\psi_{U,x}}\right] &= \Tr\left[\left(\sum_{x}\ketbra{x}{x}_\areg \otimes \ketbra{\psi_{U,x}}{\psi_{U,x}}_{\breg^*}\right)U^t_\areg\rho_{\areg\breg^*}U^*_\areg\right] \\
        &= \Tr\left[\left(\sum_{x}U^*\ketbra{x}{x}U^t_\areg \otimes U\ketbra{x}{x}U^\dagger_{\breg^*}\right)\rho_{\areg\breg^*}\right] \\
        &= \Tr\left[\left(\left(U^* \otimes U\right)\left(\sum_{x}\ketbra{x}{x} \otimes \ketbra{x}{x}\right)\left(U^t \otimes U^\dagger\right)\right)\rho_{\areg\breg^*}\right] \\
        &\geq \Tr\left[\left(\left(U^* \otimes U\right)\left(\ketbra{\mathsf{EPR}}{\mathsf{EPR}}^{\otimes n}\right)\left(U^t \otimes U^\dagger\right)\right)\rho_{\areg\breg^*}\right] \\
        &= \Tr\left[\left(\ketbra{\mathsf{EPR}}{\mathsf{EPR}}^{\otimes n}\right)\rho_{\areg\breg^*}\right] \\
        &\geq \beta(\secp,\eta),
    \end{align*}

    where the last inequality holds with probability at least $\beta'(\secp,\eta)=\eta(\secp)$ by \Cref{thm:entanglement_localization}. This gives the claimed extraction soundness parameters.
\end{proof}

\begin{remark}
    One could strengthen the completeness definition of State Localization to demand that $\rho = \ketbra{\psi_i}{\psi_i}$. That is, the prover obtains the state $\ket{\psi_i}$ itself rather than an encoded version of it.

    We can achieve this stronger completeness guarantee for the set of $3n$-qubit coset states with subspaces of dimension $3n/2$. Indeed, one can sample a uniformly random $3n$-qubit coset state by first sampling an $n$-qubit coset state $\ket{\psi}$ with subspace of dimension $n/2$, and then outputting the $3n$-qubit state $\Enc_{S,T,u,v}\ket{\psi}$, where $S,T,u,v$ are sampled by the $\Setup$ algorithm given above. The resulting $3n$-qubit state is still unclonable even given $S,T,u,v$, and thus uniqueness continues to hold.
\end{remark}

\subsection{Functionality Localization}\label{subsec:functionality}
We first recall the notion of quantum copy-protection from \cite{10.1109/CCC.2009.42}. Our definition of functionality localization, stated later, will be a strengthening of copy-protection.
\newcommand{\simver}{\ensuremath{{\sf SimVer}}\xspace}
\begin{definition}[Copy-protectable functionality]
    A family of functions $\cF = \{f\}_f$ is copy-protectable with security $\zeta = \zeta(\secp)$ if there exists a distribution $\cD$ on its domain and  algorithms
    \begin{itemize}
        \item $\Protect(1^\secp,f) \to \rho_f$
        \item $\Eval(\rho_f,x) \to y$
    \end{itemize}
    such that the following properties hold.
    \begin{itemize}
        \item \textbf{Correctness}: For any $f \in \cF$ and $x \in \cD$, \[\Pr[\Eval(\rho_f,x) = f(x) : \rho_f \gets \Protect(1^\secp,f)] = 1-\negl(\secp).\]
        \item \textbf{Security}: For any QPT tri-partite adversary $(\cA,\cB,\cC)$,
        \[\Pr\left[y_B = f(x_B) \wedge y_C = f(x_C) : \begin{array}{r} f \gets \cF \\ \rho_f \gets \Protect(1^\secp,f)\\ \rho_{B,C} \gets \cA(\rho_f) \\ x_B,x_C \gets \cD \\ y_B \gets \cB(x_B,B), y_C \gets \cC(x_C,C)\end{array}\right] \leq \zeta(\secp).\]
    \end{itemize}
\end{definition}
\newcommand{\QObf}{\ensuremath{{\sf QObf}}\xspace}
\newcommand{\QEval}{\ensuremath{{\sf QEval}}\xspace}
We also recall quantum state obfuscation, which will be used later on for our construction of functionality localization. The following definition is adapted from \cite{BBV24}, but for simplicity we have significantly weakened the definition to work only for classical functionalities.
\begin{definition}[Quantum State Obfuscation (\cite{CG24},\cite{BBV24})]\label{def:obf}
    Let $\cal{F}=\{f\}_f$ be a classical function family, where for each $f\in\mathcal{F}$ there exists a quantum program $(C_f,\ket{\psi_f})$ such that $C_f(x,\ket{\psi}_f)=f(x)$ deterministically.
    A quantum state obfuscator for $\mathcal{F}$ is a pair of QPT algorithms $(\QObf,\QEval)$ with the following syntax.
    \begin{itemize}
        \item $\QObf\left(1^\secp,\ket{\psi},C\right) \to \ket{\widetilde{\psi}}$: The obfuscator takes as input the security parameter $1^\secp$ and a quantum program $(C,\ket{\psi})$, and outputs an obfuscated state $\ket{\widetilde{\psi}}$.
        \item $\QEval\left(x,\ket{\widetilde{\psi}}\right) \to y$: The evaluation algorithm takes an input $x \in \{0,1\}^{m(\secp)}$ and an obfuscated state $\ket{\widetilde{\psi}}$, and outputs $y \in \{0,1\}^{m'(\secp)}$.
    \end{itemize}
    \textbf{Correctness} is defined as follows for any $f\in\mathcal{F}$.
    \[\forall x \in \{0,1\}^{m(\secp)}, \Pr\left[\QEval\left(x,\ket{\widetilde{\psi}}\right) = C_f(x,\ket{\psi_f} ): \ket{\widetilde{\psi}} \gets \QObf\left(1^\secp,\ket{\psi_f},C_f\right)\right] = 1-\negl(\secp).\]

    \textbf{Ideal Obfuscation}: For any QPT adversary $\cA$, there exists a QPT simulator $\Sim$ such that for any $f\in\mathcal{F}$ and QPT distinguisher $\cD$,

    \begin{align*}&\bigg|\Pr\left[1 \gets \cD\left(\cA\left(\QObf\left(1^\secp,\ket{\psi_f},C_f\right)\right)\right)\right]\\ & ~~~~~ - \Pr\left[1 \gets \cD\left(\Sim^f\left(1^\secp\right)\right)\right]\bigg| = \negl(\secp).\end{align*}
\end{definition}
\begin{theorem}[Quantum State Ideal Obfuscation in the Oracle Model, \cite{BBV24}]
    There exists a quantum state ideal obfuscator for any classical functionality, in the classical oracle model.
\end{theorem}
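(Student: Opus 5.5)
This theorem is imported from \cite{BBV24}; it is not proved from scratch. The plan is to show that the (much weaker) \Cref{def:obf} is a special case of the quantum state obfuscation notion that \cite{BBV24} achieve in the classical oracle model. Their result gives, for an arbitrary quantum program $(C,\ket{\psi})$, an obfuscator $(\QObf,\QEval)$ in the classical oracle model satisfying two properties. First, correctness holds for every input up to negligible error, with the obfuscated state reusable after each evaluation by gentle measurement. Second, ideal obfuscation holds: for every QPT adversary there is a QPT simulator given oracle access to the functionality $x\mapsto C(x,\ket{\psi})$ such that the two views are indistinguishable.

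The key steps are as follows.

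\begin{enumerate}
\item Fix $f\in\cal{F}$ with its quantum program $(C_f,\ket{\psi_f})$ and run the \cite{BBV24} obfuscator on it.
\item Correctness of \Cref{def:obf} then follows directly from their correctness guarantee. The target output $C_f(x,\ket{\psi_f})=f(x)$ is deterministic, so their output distribution is negligibly close to a point mass on $f(x)$.
\item For security, note that because $C_f$ is deterministic, the ``quantum functionality'' that the \cite{BBV24} simulator queries is exactly the classical map $x\mapsto f(x)$. Hence their simulator is a $\Sim^f$ in the sense of \Cref{def:obf}. Superposition queries, if any, are answered via the standard unitary $U_f$.
\item Our definition quantifies over a single adversary and a QPT distinguisher applied to its output. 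This is implied by their simulation-based guarantee by composing the distinguisher with the adversary.
\end{enumerate}

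The main obstacle I expect is purely definitional bookkeeping. I must check that the access model of the \cite{BBV24} simulator is no stronger than oracle access to $f$. For example, their simulator may require access to the program's quantum behaviour on superpositions of inputs, or it may need the program's size. For deterministic classical-output programs these reduce to $U_f$ and a public size bound. I would also confirm that their construction tolerates $\ket{\psi_f}$ being an arbitrary (possibly unknown) auxiliary state, since \Cref{def:obf} only promises the existence of such a program. If needed, I would restrict $\cal{F}$ to functionalities whose program is efficiently preparable, which suffices for the copy-protection application.
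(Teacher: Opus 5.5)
Your proposal is correct and matches the paper, which gives no proof of its own and simply imports this result from \cite{BBV24}; your check that the weakened \Cref{def:obf} follows from their ideal-obfuscation guarantee because $C_f$ is deterministic is exactly the bookkeeping needed. One small correction: \cite{BBV24} prove security only for pseudo-deterministic quantum programs, not arbitrary ones, but that is all your argument uses.
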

Now we state the main definition of this section, and present our construction of it from quantum state obfuscation and signatures.
\begin{definition}[Functionality localization]\label{def:functionality_localization_revised}
    A functionality localization scheme is parameterized by a function family $\cF = \{f:\cal{X}\to\cal{Y}\}_f$\footnote{Technically, we have one function family per security parameter $\secp$ -- see \Cref{footnote:secp}}, and a distribution $\cD$ on its domain. It consists of the following syntax.
    \begin{itemize}
        \item $\Setup(1^\secp) \to \param,\sparam,f,\breg$: The setup algorithm takes the security parameter $1^\secp$ and outputs public parameters $\param$, secret parameters $\sparam$, a function $f\in\cal{F}$, and a state on register $\breg$.
        \item $\Localize\big( \cal{P}(\breg) \rightleftharpoons \cal{V}(\sparam,f)\big)(\param,L,t) \to \breg',b$: The Localize protocol is an interaction between a QPT prover whose input is register $\breg$, and a PPT verifier with inputs $\sparam$ and $f$, further specified by the following public inputs: public parameters $\param$, spatial location $L \in [-1,1]$, and time $t$. The prover's output is $\breg'$ and the verifier's output is an acceptance indicator $b\in\{\top,\bot\}$.
    \end{itemize}
    It should satisfy the following properties --- parameterized by completeness probability $\alpha(\secp)$, extraction success $\beta(\secp,\eta)$, uniqueness error $\zeta(\secp)$, and localization parameter $\Delta(\secp)$ --- for all $\secp\in\N$.
    \begin{itemize}
        \item \textbf{Functionality Preservation}: There exists a procedure $\Eval$ such that for any $x\in\cal{X}$,
		\[
			\Pr\left[\Eval(\breg,x)=f(x) : \param,\sparam,f,\breg \gets \Setup(1^\secp)\right]=1-\negl(\secp)
		\]
        \item $\alpha$-\textbf{Completeness}: For any location $L$ and time $t$, there is a QPT prover $\cal P$ at $(L,t)$ such that
        \[
            \Pr\left[b = \top : \begin{array}{r}
                \param,\sparam,f,\breg\gets\Setup(1^\secp)\\
                (\breg',b) \gets \Localize\big( \cal{P}(\breg) \rightleftharpoons \cal{V}(\sparam,f)\big)(\param,L,t)
            \end{array}\right] \geq \alpha(\secp).
        \]
        \item \textbf{$(\beta,\Delta)$-Extraction Soundness}: For any QPT prover $\cal{P}^*$, spatial location $L$, and time $t$, let
        \[\eta(\secp) := \Pr\left[b = \top : \begin{array}{r} \param,\sparam,f,\breg \gets \Setup(1^\secp) \\ (\breg',b) \gets \Localize\big( \cal{P}^*(\breg) \rightleftharpoons \cal{V}(\sparam,f)\big)(\param,L,t)\end{array}\right]\]
        be $\cal P^*$'s success probability. Assume $\eta(\secp)\ge 1/p(\secp)$ for some polynomial $p$, and let spatial region $L_\Delta := [L-\Delta(\secp),L+\Delta(\secp)]$. Then there exists a QPT extractor $\cE$ acting on registers near location $L$ at time $t$ such that
        \[\hspace{-2em}
        \Pr\left[y = f(x) : \begin{array}{r} \param,\sparam,f,\breg \gets \Setup(1^\secp) \\ \areg^* \gets \register[L_\Delta~@~t]\left(\Localize(\cal{P}^*(\breg) \rightleftharpoons \cal{V}(\sparam,f)\right)(\param,L,t)) \\ x \gets \cD \\ y \gets \cE(\param,x,\areg^*)\end{array}\right] \geq \beta(\secp,\eta).
        \]
        \item \textbf{$\zeta$-Uniqueness}: For any QPT prover $\cal{P}^*$, disjoint spatial regions $L_0,L_1$, times $t,t'$, and QPT $\cA,\cB$,

        \[\hspace{-3em}\Pr\left[\begin{array}{cc}
             y_0 = f(x_0), \\
             y_1 = f(x_1)
        \end{array}: \begin{array}{r}\param,\sparam,f,\breg \gets \Setup(1^\secp) \\ \areg^*_0,\areg^*_1 \gets \register[L_0~@~t';L_1~@~t']\big(\Localize\big( \cal{P}^*(\breg) \rightleftharpoons \cal{V}(\sparam,f)\big)(\param,L,t)\big) \\ x_0 \gets \cD, x_1 \gets \cD \\ y_0 \gets \cA(x_0,\areg^*_0), y_1 \gets \cB(x_1,\areg^*_1)\end{array}\right] \leq \zeta(\secp).\]
    \end{itemize}

    We say that the functionality localization scheme is \emph{non-destructive} if it additionally satisfies the following property.
    \begin{itemize}
        \item $\textbf{Non-destructive}$: For any spatial location $L$ and time $t$, \[\E\left[\TD\left(\rho_\breg,\rho_{\breg'}\right) : \begin{array}{r} \param,\sparam,f,\breg \gets \Setup(1^\secp) \\ (\breg',b) \gets \Localize\big( \cal{P}(\breg) \rightleftharpoons \cal{V}(\sparam,f)\big)(\param,L,t)\end{array}\right] \leq \negl(\secp).\]
    \end{itemize}
    where $\rho_{\breg}$ is the state on $\breg$ after $\Setup$ and $\rho_{\breg'}$ is the state on $\breg'$ later after $\Localize$.
\end{definition}

\begin{remark}
    Note that in contrast to entanglement and state localization, here we require the extractor $\cE$ to be efficient (QPT). This is because the uniqueness property may only hold against computationally-bounded cloners.
\end{remark}

\begin{construction}[Non-Destructive Functionality Localization Scheme]\label{con:function_localization_from_qsio}
    Let $\cal{F}$ be a function family. Let $n(\secp),m(\secp)\ge \secp$ be polynomials. Take a localization parameter $\Delta(\secp)\leq 1$ and a repetition factor $\gamma(\secp)$. Let $\QSIO$ be a quantum state obfuscator, and let $(\keygen,\Sign,\Ver)$ be a MAC scheme, with message space $\{0,1\}^*$ and signature space $\cal{S}$. We define the following functionality localization scheme.
    \begin{itemize}
        \item $\Setup(1^\secp)$:
        \begin{enumerate}
            \item Sample $f\gets\cF$.
            \item Sample a uniformly random subspace $S\le \F_2^{n}$ of dimension $n/2$, sample $v\rand \CS(S)$ and $u\rand \CS(S^\perp)$, and prepare the coset state $X^v Z^u\ket{S}$.
            \item Sample $\sk\gets\keygen(1^\secp)$.
            \item Define the signed functionality
            \begin{equation}
                \widehat f_{\sk}(x):=\big(f(x),\Sign(\sk,(x,f(x)))\big),
            \end{equation}
            and obfuscate it using the ideal obfuscator:
            \[
                \rho_{\QSIO}\gets \QSIO.\QObf(1^\secp,\widehat f_{\sk}).
            \]
            \item Sample a public random oracle $\cO_{\sf chal}:\F_2^m\to\F_2^m$.
            \item Let $C_0:=S+v$ and $C_1:=S^\perp+u$. Sample a random oracle $\cO_{\sf random}:\F_2^{m}\times\mathcal X_\secp\times \mathcal Y_\secp\times\F_2\to \F_2^{m}$ and define the public oracle $\cal{O}:\F_2^m\times\cal{X}\times\cal{Y}\times\cal{S}\times\{0,1\}\times\F_2^n\to\F_2^m\cup\{\bot\}$ as follows.
            \[
                \cO(h,x,y,\sigma,c,z)=
                \begin{cases}
                    \cO_{\sf random}(h,x,y,c) & \text{if }\Ver(\sk,(x,y),\sigma)=\top\text{ and }z\in C_c,\\
                    \bot & \text{otherwise}
                \end{cases}
            \]
            Note that $\cO_{\sf random}$ is \textbf{not} made public.
            \item Output
            \[
                \param=(\cO,\cO_{\sf chal}),\qquad \sparam=(u,v,\sk),\qquad f,\qquad \rho=(\rho_{\QSIO},X^vZ^u\ket{S}).
            \]
        \end{enumerate}
        \item $\Localize\left(\cal{P}(\breg)\rightleftharpoons \cal{V}(\sparam,f)\right)(\param,L,t)$:
        Let $\delta=\Delta/(2\gamma+2)$ and define checkpoint times $t_{2i}=t+2i\delta$ and $t_{2i+1}=t+(2i+1)\delta$ for $i=0,1,\ldots,\gamma-1$.
        \begin{enumerate}
            \item For each $i\in\{0,\ldots,\gamma-1\}$, the verifiers sample $x_i\gets\cD$, then sample the challenges
            \[
                \hat{x}_{L,i},\hat{x}_{R,i},h_{L,2i},h_{R,2i},h_{L,2i+1},h_{R,2i+1}\gets\{0,1\}^m,
            \]
            for rounds $2i$ and $2i+1$. Let $\bar{x}_i:=x_i\oplus\cO_{\sf chal}(\hat{x}_{L,i}\oplus\hat{x}_{R,i})$. $\cal V_L$ and $\cal V_R$ send $(\bar{x}_i,\hat{x}_{L,i},h_{L,2i})$ and $(\hat{x}_{R,i},h_{R,2i})$, respectively, so that they arrive at location $L$ at time $t_{2i}$. They send $h_{L,2i+1}$ and $h_{R,2i+1}$ so that they arrive at $L$ at time $t_{2i+1}$.
            \item For each $i$, upon receiving the messages for checkpoint $2i$ at location $L$, the honest prover constructs the real challenges $x_i$ and $h_{2i}$ as follows:
            \begin{align*}
                x_i&:=\bar{x}_i\oplus\cO_{\sf chal}(\hat{x}_{L,i}\oplus\hat{x}_{R,i}) \\
                h_{2i}&:=h_{L,2i}\oplus h_{R,2i}
            \end{align*}
            It then evaluates
            \[
                (y_i,\sigma_i)\gets \QSIO.\QEval(x_i,\rho_{\QSIO}),
            \]
            and coherently queries $\cO(h_{2i},x_i,y_i,\sigma_i,0,\cdot)$ on the coset state in the standard basis, obtaining an answer $a_{2i}$, which it immediately broadcasts.
            \item For each $i$, upon receiving the messages for checkpoint $2i+1$ at location $L$, the honest prover reconstructs $h_{2i+1}:=h_{L,2i+1}\oplus h_{R,2i+1}$ and coherently queries $\cO(h_{2i+1},x_i,y_i,\sigma_i,1,\cdot)$ on the same coset state in the Hadamard basis, obtaining an answer $a_{2i+1}$, which it immediately broadcasts.
            \item If any response is missing, the verifiers output $\bot$. Otherwise, for each $i$, the verifiers set $h_{2i}:=h_{L,2i}\oplus h_{R,2i}$ and $h_{2i+1}:=h_{L,2i+1}\oplus h_{R,2i+1}$, compute
            \[
                \sigma_{x_i}:=\Sign(\sk,(x_i,f(x_i)))
            \]
            and accept iff for every $i\in\{0,\ldots,\gamma-1\}$,
            \[
                a_{2i}=\cO(h_{2i},x_i,f(x_i),\sigma_{x_i},0,v)
                \qquad\text{and}\qquad
                a_{2i+1}=\cO(h_{2i+1},x_i,f(x_i),\sigma_{x_i},1,u).
            \]
        \end{enumerate}
    \end{itemize}
\end{construction}
\begin{remark}
    Note that unlike the constructions of trajectory verification, entanglement localization, and state localization, \Cref{con:function_localization_from_qsio} does not randomize the basis of each challenge. Instead it uses a fixed ordering --- standard, then Hadamard. This simplification is possible because here we are no longer relying on the EPR extraction machinery of \cite{vidick2021classicalproofsquantumknowledge}, and instead all we need is to show that the prover must have queried at the designated position. This is already a property we establish for the standard-then-Hadamard monogamy game in \Cref{def:multi-Stage_monogamy-of-entanglement}, and by reducing to this game instead of \MultiStageIndependentMonogamy we get a quadratically better loss in $\beta(\secp,\eta)$ than for the other primitives.
\end{remark}
\begin{lemma}[Functionality Preservation]
    \Cref{con:function_localization_from_qsio} satisfies \textbf{Functionality Preservation}.
\end{lemma}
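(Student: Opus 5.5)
The plan is to exhibit the evaluation procedure explicitly and reduce correctness to the correctness of the quantum state obfuscator. Given the register $\breg$ output by $\Setup$, which contains $\rho=(\rho_{\QSIO},X^vZ^u\ket{S})$, we define $\Eval(\breg,x)$ as follows. Run $\QSIO.\QEval(x,\rho_{\QSIO})$ to obtain a pair $(y,\sigma)$. Output $y$, discarding $\sigma$ and ignoring the coset-state part of $\breg$ entirely.

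The first step is to note that the obfuscated program is a legitimate classical functionality, so that \Cref{def:obf} applies. The program is $\widehat f_{\sk}$, which maps $x$ to $(f(x),\Sign(\sk,(x,f(x))))$. Once $\sk$ and $f$ are fixed, this is a deterministic classical function. We also assume the MAC is deterministic or derandomized, for example by deriving its signing randomness from a PRF keyed inside $\sk$. Under these conditions $\widehat f_{\sk}$ has a quantum program $(C_{\widehat f_{\sk}},\ket{\psi})$ that computes it deterministically, for instance with an empty state.

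The second step is to apply the correctness guarantee of \Cref{def:obf} to $\widehat f_{\sk}$. For every input $x$ it gives
\[
\Pr\left[\QSIO.\QEval(x,\rho_{\QSIO})=\widehat f_{\sk}(x)\right]=1-\negl(\secp),
\]
over the randomness of $\QSIO.\QObf$, for every fixed $(f,\sk)$. The event $\Eval(\breg,x)=f(x)$ contains this event, since $\Eval$ outputs the first component of $\widehat f_{\sk}(x)$, which is $f(x)$. Averaging over $f\gets\cF$, $\sk\gets\keygen$ and the coset-state sampling in $\Setup$ preserves the $1-\negl(\secp)$ bound. The coset state and the oracles $\cO,\cO_{\sf chal}$ are independent of the evaluation and are never touched by $\Eval$.

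The only potential obstacle is the mismatch between the syntax of the obfuscation definition and the way $\Setup$ invokes $\QSIO.\QObf(1^\secp,\widehat f_{\sk})$ on a classical function rather than a quantum program. This requires a determinism assumption on $\Sign$, or else a weakening that only asks for correctness of the first output coordinate. We expect to dispatch it by viewing the classical description of $\widehat f_{\sk}$ as a quantum program with trivial state, noting that only the $f(x)$ coordinate matters for functionality preservation, and observing that correctness in the classical oracle model \cite{BBV24} holds for every input, so no distributional assumption on $x$ is needed.
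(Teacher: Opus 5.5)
Your proposal is correct and matches the paper's proof. The paper also defines $\Eval$ to run $\QSIO.\mathsf{QEval}$ on $x$ using $\rho_{\QSIO}$, outputs the first coordinate $y$ while ignoring the coset state, and concludes directly from the correctness of $\QSIO$. Your extra remark, that $\Sign$ should be deterministic (or derandomized) so that $\widehat f_{\sk}$ is a genuine classical function meeting the obfuscation definition, is a detail the paper leaves implicit.
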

\begin{proof}
    We define $\Eval(\breg,x)$ as follows. Take the state on $\breg$, $\rho_\breg=(\rho_\QSIO,X^vZ^u\ket{S})$, run $(y,\sigma)\gets\QSIO.\QEval(\rho_\QSIO,x)$, and output $y$. The property follows immediately from correctness of $\QSIO$.
\end{proof}
\begin{lemma}[Completeness, non-destructiveness]\label{prop:function_localization_completeness}
    \Cref{con:function_localization_from_qsio} satisfies $(1-\negl(\secp))$-\textbf{Completeness}. Furthermore, it is {non-destructive}.
\end{lemma}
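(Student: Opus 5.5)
We will follow the honest prover of \Cref{con:function_localization_from_qsio} through each pair of checkpoints $2i,2i+1$, and argue that it is accepted except with negligible probability and that the register $\breg=(\rho_{\QSIO},X^vZ^u\ket{S})$ is returned negligibly close to its initial state. The basic tool is the gentle-measurement lemma: if a measurement has some outcome with probability $1-\negl(\secp)$, it disturbs the state by only $\negl(\secp)$ in trace distance. We will apply it to two kinds of operations, the $\QSIO.\QEval$ calls and the coherent coset-oracle queries. Everything is then combined across the $\gamma=\poly(\secp)$ rounds by a union bound and the triangle inequality.

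\textbf{Step 1 (evaluation).} At checkpoint $2i$ the honest prover, sitting at $L$, has received $\bar x_i,\hat x_{L,i},\hat x_{R,i}$ and can therefore recompute $x_i$ exactly. By correctness of $\QSIO$, $\QSIO.\QEval(x_i,\rho_{\QSIO})$ outputs $\widehat f_{\sk}(x_i)=(f(x_i),\Sign(\sk,(x_i,f(x_i))))$ with probability $1-\negl(\secp)$.
\begin{itemize}
\item We run $\QEval$ coherently and copy out only the classical output $(y_i,\sigma_i)$.
\item We then uncompute the evaluation.
\item Since the copied output is almost deterministic, gentle measurement implies the post-evaluation obfuscated state is $\negl(\secp)$-close to $\rho_{\QSIO}$.
\end{itemize}
The same $(y_i,\sigma_i)$ is reused for checkpoint $2i+1$, so only $\gamma$ evaluations occur in total.

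\textbf{Step 2 (oracle queries).} Given $(y_i,\sigma_i)=(f(x_i),\sigma_{x_i})$, we use that $\Ver$ accepts honestly generated tags. The standard-basis support of $X^vZ^u\ket{S}$ is $S+v=C_0$. Hence $\cO(h_{2i},x_i,y_i,\sigma_i,0,\cdot)$ is constant, equal to $\cO_{\sf random}(h_{2i},x_i,y_i,0)$, on every basis vector in the support. The coherent query therefore writes this fixed value into a fresh register and leaves the coset register untouched.

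Dually, the Hadamard-basis support is $S^\perp+u=C_1$. So the query at checkpoint $2i+1$ also returns the constant $\cO_{\sf random}(h_{2i+1},x_i,y_i,1)$ with no disturbance.

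These two values equal exactly what the verifiers compute, $\cO(h_{2i},x_i,f(x_i),\sigma_{x_i},0,v)$ and $\cO(h_{2i+1},x_i,f(x_i),\sigma_{x_i},1,u)$, using $v\in C_0$ and $u\in C_1$. One point needs care: the MAC must be deterministic, or the verifier's recomputed $\sigma_{x_i}$ must coincide with the one output by the obfuscated program. We will assume the MAC is deterministic, for example a PRF-based MAC, so that $\Sign(\sk,(x_i,f(x_i)))$ is a fixed string.

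\textbf{Step 3 (timing and aggregation).} The honest prover receives each checkpoint's messages at $(L,t_j)$ and broadcasts immediately, so every response arrives on time. A union bound over the $2\gamma$ checkpoints gives acceptance probability $1-\gamma\cdot\negl(\secp)=1-\negl(\secp)$. The triangle inequality over the $\gamma$ evaluation disturbances gives total trace distance $\negl(\secp)$ between $\rho_\breg$ and $\rho_{\breg'}$. This yields both $(1-\negl)$-completeness and non-destructiveness.

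The main obstacle is Step 1. Correctness of $\QSIO$ holds only up to $\negl(\secp)$ error, so the obfuscated state is genuinely, if slightly, disturbed by each evaluation, and we must make sure the coherent evaluate-copy-uncompute pattern applies. We will handle this with the standard gentle-measurement bound $\TD\le 2\sqrt{\negl(\secp)}$ per evaluation, which is still negligible after summing over $\gamma$ evaluations.
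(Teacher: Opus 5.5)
Your proposal is correct and follows essentially the same route as the paper: correctness of $\QSIO$ plus gentle measurement preserves $\rho_{\QSIO}$, the oracle is constant on the cosets $S+v$ and $S^\perp+u$ so the coherent queries leave the coset state unchanged, and a union bound over the polynomially many checkpoints finishes the argument. Your deterministic-MAC assumption is not needed for this lemma. When the tag verifies, $\cO$ returns $\cO_{\sf random}(h,x,y,c)$, which does not depend on $\sigma$, so MAC correctness alone guarantees that the prover's and verifier's values agree.
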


\begin{proof}
    By correctness of $\QSIO$, in every checkpoint pair the honest prover obtains $(f(x_i),\sigma_{x_i})$ except with negligible probability, and the state $\rho_{\QSIO}$ is preserved by gentle measurement. For each $i$, on checkpoint $2i$, $\cO(h_{2i},x_i,f(x_i),\sigma_{x_i},0,\cdot)$ is constant on the coset $S+v$, so coherently querying it on $X^vZ^u\ket{S}$ in the standard basis returns $\cO(h_{2i},x_i,f(x_i),\sigma_{x_i},0,v)$ while leaving the coset state unchanged. Similarly, $\cO(h_{2i+1},x_i,f(x_i),\sigma_{x_i},1,\cdot)$ is constant on the coset $S^\perp+u$, so coherently querying it in the Hadamard basis returns $\cO(h_{2i+1},x_i,f(x_i),\sigma_{x_i},1,u)$ while again leaving the coset state unchanged. A union bound over the polynomially many checkpoints gives acceptance probability $1-\negl(\secp)$ and final prover state negligibly close to the initial state.
\end{proof}

\begin{lemma}[Extraction Soundness]\label{prop:function_localization_existence}
    If $(\Sign,\Ver)$ is a secure MAC scheme and $\QSIO$ is an ideal obfuscator, then \Cref{con:function_localization_from_qsio} satisfies $\brackets{\beta,\Delta}$-\textbf{Extraction soundness}, where
    \[
        \beta(\secp,\eta)=\eta(\secp)\left(1-O\!\left(\sqrt{\frac{\log(1/\eta)}{\gamma}}\right)-\negl(\secp)\right).
    \]
\end{lemma}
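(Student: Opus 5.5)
We first pick a good checkpoint pair, then reduce to the standard-then-Hadamard monogamy game, and finally build an extractor from the compressed-oracle database. Let $\cal P^*$ be accepted with probability $\eta$. Let $q_i$ be the probability of passing checkpoint pair $(2i,2i+1)$, conditioned on passing all earlier pairs. Then $\prod_{i<\gamma}q_i=\eta$, so some $i^\star$ has $q_{i^\star}\ge\eta^{1/\gamma}$, and hence $1-q_{i^\star}\le\log(1/\eta)/\gamma$.

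\textbf{Step 1 (chunking reduction).} We chunk the spacetime execution of $\cal P^*$ exactly as in the proof of \Cref{thm:NDEL_overall}, with $t_0=t_{2i^\star}$ and $t_1=t_{2i^\star+1}$. This yields a \MultiStageSearchMonogamy adversary (\Cref{def:multi-Stage_monogamy-of-entanglement}) with $n_e=0$, i.e.\ a plain coset state; we will need to check that the lemmas of \Cref{sec:monogomy} go through for $n_e=0$. The components play the following roles:
\begin{itemize}
\item $\cal A_M^0$ samples $f$, $\sk$ and a $2r$-wise independent $F_k$, prepares $\rho_\QSIO$, and simulates all pairs before $i^\star$. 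It aborts on any failure, so it is non-aborting with probability at least $\eta$.
\item The hidden oracle $\cal O_{\sf random}$ is reprogrammed at the two points $(h_{2i^\star},x_{i^\star},f(x_{i^\star}),c)$ for $c=0,1$, using the challenge oracles $\cal O^{b_0}_{S+v}$ and $\cal O^{b_1}_{S^\perp+u}$.
\end{itemize}
Three ingredients justify this simulation:
\begin{itemize}
\item Membership in $S+v$ and $S^\perp+u$ comes from the public membership oracles.
\item \Cref{lem:oracle_reprogram}, applied to $\cO_{\sf chal}$ and the $h$-shares, makes the reprogramming invisible except at $(L,t_0)$ and $(L,t_1)$.
\item $\Ver$ checks are answered with $\sk$.
\end{itemize}
One issue remains: the reduction must reprogram only at $y=f(x_{i^\star})$, but the adversary could query with some $y\neq f(x_{i^\star})$ together with a valid $\sigma$. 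MAC unforgeability, combined with the ideal-obfuscation simulator for $\QSIO$, shows that the adversary only ever obtains signatures on the pairs $(x,f(x))$. Queries with $y\ne f(x)$ therefore return $\bot$ except with negligible probability. The monogamy adversary then wins, conditioned on not aborting, with probability at least $q_{i^\star}-\negl$.

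\textbf{Step 2 (extraction).} We apply the extractor pattern of \Cref{lem:extracting_standard_basis_measurement_results}, specialised to $n_e=0$. There is no EPR register, so the collapsing oracle is not needed. Instead, $\cE$ runs $\cal A_M^1$ on $\mreg$ with fresh $x\gets\cD$ embedded as $x_{i^\star}$. It implements $\cal O_{\sf random}(h_{2i^\star},x,\cdot,0)$ with a compressed oracle, measures the database, and outputs the $y$ of the unique recorded entry. By \Cref{lem:multi-stage_search_monogamy-of-entanglement}, the $\breg$ register at \BeforeSplit has overlap at most $2^{-m}+2\sqrt2(1-q_{i^\star})^{1/2}+\negl$ with $\ket{+}$. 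Hence a non-$\bot$ entry is present with probability $1-O(\sqrt{\log(1/\eta)/\gamma})-\negl$, conditioned on non-abort, and by Step~1 that entry has $y=f(x)$. Running the whole procedure by forward simulation from $\register[L_\Delta~@~t]$, as in \Cref{thm:entanglement_localization}, and multiplying by the non-abort probability $\eta$ gives $\beta$. Every step is efficient given the simulated $\sk,S,u,v$, which the extractor obtains from the reduction's own sampling. The one gap is $\sk$ itself, which the real extractor does not know. We handle it by arguing that the verifier's side information is supplied by the extraction experiment, as in \Cref{remark:efficient-extractor}, or else by using a publicly verifiable signature.

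\textbf{Main obstacle.} The hardest step is the MAC/obfuscation argument in Step~1. We must show that, even with superposition access to $\cO$, the adversary cannot make the hidden oracle be queried at any $(x_{i^\star},y)$ with $y\ne f(x_{i^\star})$. We would prove this by a hybrid. First, replace $\rho_\QSIO$ by $\Sim^{\widehat f_\sk}$. Next, replace $\Ver$ inside $\cO$ by a check that $y=f(x)$ and that $\sigma$ was output by the signing oracle, and bound the difference with a one-way-to-hiding style argument against MAC forgery. The remaining bookkeeping, namely the $x$-challenge hiding via $\cO_{\sf chal}$ so that $x$ is also first available at $(L,t_0)$, should follow \Cref{lem:oracle_reprogram} directly.
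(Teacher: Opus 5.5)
Your skeleton matches the paper's. You pick $i^\star$ with $q_{i^\star}\ge\eta^{1/\gamma}$ and chunk $\cal{P}^*$ into a \textsf{MultiStageSearchMonogamy} adversary with $n_e=0$. You use MAC unforgeability plus ideal obfuscation to confine accepted queries to $y=f(x)$, which is the paper's switch from $\cO$ to $\cO_f$. You then record $\cO_{\sf random}(h_{2i^\star},x,\cdot,0)$ in a compressed oracle indexed by $y$ and apply \Cref{lem:multi-stage_search_monogamy-of-entanglement}.

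The genuine gap is the one you flag at the end, and neither of your patches closes it. \Cref{def:functionality_localization_revised} demands a QPT extractor $\cE(\param,x,\areg^*)$ that runs in the real experiment. To answer a query $(h_{2i^\star},x,y,\sigma,0,z)$ with the recorded value, your extractor must check both $\mathsf{Ver}(\sk,(x,y),\sigma)=\top$ and $z\in S+v$. In the real experiment neither $\sk$ nor a standalone membership oracle for $S+v$ is available, since $\param=(\cO,\cO_{\sf chal})$ only. Even inside the reduction, $\cal{A}_M^0$ samples $\sk$ itself but never learns $S,u,v$; it holds only the challenger's membership oracles. \Cref{remark:efficient-extractor} concerns the unbounded extractors for entanglement and state localization, where $\ek$ is an optional convenience. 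It gives no license to hand secrets to a QPT functionality extractor. Switching to a publicly verifiable signature changes the construction and still leaves the coset-membership check unresolved.

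The missing idea is to let the public oracle perform the validity test itself. On a query $(h,x',y,\sigma,c,z)$ with $(h,x',c)=(h_{2i^\star},x,0)$, coherently compute $w=\cO(h_{2i^\star},x,y,\sigma,0,z)$. Answer $H(y)$ for a compressed oracle $H$ if $w\neq\bot$, answer $\bot$ otherwise, and then uncompute $w$. Since $\cO$ is non-$\bot$ exactly when the MAC verifies and $z\in S+v$, this realizes your reprogrammed oracle with no secret, so $\ek=\emptyset$ and $\cE$ is QPT. The fresh $x$ is planted by answering $\cO_{\sf chal}(\hat x_{L,i^\star}\oplus\hat x_{R,i^\star})$ with $x\oplus\bar x_{i^\star}$, using transcript values already contained in $\areg^*$.

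You then need a short hybrid chain tying this secret-free extractor to your reduction:
\begin{enumerate}
\item In the reduction world, the variant that additionally requires $y=f(x')$ touches $H$ only at $f(x_{i^\star})$. That database entry plays the role of $\breg$ in \Cref{lem:multi-stage_search_monogamy-of-entanglement}.
\item Dropping the $y=f(x')$ requirement, and with it all dependence on $f$, costs only negligibly by switching $\cO\to\cO_f\to\cO$ (\Cref{claim:oracle_switch_function_localization}).
\item The resulting procedure coincides with forward simulation from $\register[L_\Delta~@~t]$.
\item Replacing $x_{i^\star}$ by the fresh $x$ is undetectable by \Cref{lem:oracle_reprogram}.
\end{enumerate}
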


\begin{proof} The proof proceeds in several steps, which we outline below.
    \paragraph{Removing the Signature} To analyze the construction, we first consider modifying the public oracle to reject any queries not output by the signed functionality. Define
    \begin{equation}\label{eq:oracle_f}
        \cO_f(h,x,y,\sigma,c,z)=
        \begin{cases}
            \cO(h,x,y,\sigma,c,z) & \text{if } y=f(x),\\
            \bot & \text{otherwise.}
        \end{cases}
    \end{equation}
    \begin{claim}\label{claim:oracle_switch_function_localization}
        No QPT distinguisher has more than negligible advantage in telling apart a setup which outputs $\cO$, from a setup which outputs $\cO_f$.
    \end{claim}

    \begin{proof}
        The two worlds differ only on queries $(h,x,y,\sigma,c,z)$ such that $\Ver(\sk,(x,y),\sigma)=\top$ and $y\neq f(x)$. Any such query contains a valid signature on a message $(x,y)$ that is never signed by the functionality $\widehat f_{\sk}$, since the only signatures returned by $\widehat f_{\sk}$ are on messages of the form $(x,f(x))$. If there were a distinguisher between the two worlds, then by the security of $\QSIO$ we could replace the obfuscated state by black-box access to the functionality $\widehat f_{\sk}$ without losing more than negligible advantage. A standard BBBV-type search-to-distinguishing reduction would then find, with non-negligible probability, a query on which $\cO$ and $\cO_f$ differ, yielding a valid forgery for the MAC scheme.
    \end{proof}

    We may therefore analyze the experiment with $\cO_f$ in place of $\cO$, losing only a negligible term.

    \paragraph{Finding a Successful Checkpoint Pair}
    Write $\delta=\Delta/(2\gamma+2)$ as in the construction.
    Fix $L$, $t$, and a QPT prover $\cal{P}^*$ succeeding in $\Localize$ with probability $\eta$. Let $q_i$ denote the probability that $\cal P^*$ passes the $i$-th checkpoint pairs condition on passing all previous checkpoints. There exists $i^\star\in\{0,\ldots,\gamma-1\}$ such that the checkpoint pair $2i^\star,2i^\star+1$ is passed with probability
    \[
        q_{i^\star}\ge \eta^{1/\gamma}
    \]
    conditioned on passing checkpoint pairs $0,\ldots,i^\star-1$. Let time $t^\star=t+2i^\star\delta$ be the first challenge point of the pair, and denote the failure probability by $\varepsilon:=1-q_{i^\star}$ for convenience.

    \paragraph{Extracting the Functionality.} We now define the implementation extractor $\cE$. In addition to a sampled input $x$, the extractor takes as input the public setup parameters $\param$, and the local register $\areg^*$ from which it will extract. For our construction, $\ek=\emptyset$. We will split $\cE$ into two parts $\cal{E}_0$ and $\cal{E}_1$, which also act on different parts of the input, for convenience later on in the proof.

    $\cal{E}_0(\param,\areg^*)$: The extractor first forward-simulates $\areg^*=\register[L_\Delta~@~t]$ to obtain $\register[L_{2\delta}~@~t^\star]$, which we can see is possible by checking the backwards light-cone: $t^\star-t\le 2\gamma\delta=\Delta-2\delta$. Note that when $\cE_0$ is doing this forward-simulation, it is not sampling challenges or answering oracle queries on its own. Instead, it is using all information contained in $\areg^*$ (including inflight verifier messages, etc) along with the code of $\cal{P}^*$, and access to the oracles in $\param$, to simulate the prover perfectly.
    
    Next, in the same way, $\cE_0$ simulates $\cal{P}^*$ inside the spacetime region for checkpoint $2i^\star$, essentially corresponding to the light blue-shaded region of \Cref{fig:time-position-B}). Let $R_{\vee}$ be the vee-shaped top border of this region (bottom border of the green triangle in \Cref{fig:time-position-B}), and note that its backwards light-cone is $L_{2\delta}$ at time $t^\star$. The output of $\cE_0$ is \[\mreg:=\register[R_{\vee}](\text{Forward time-evolution of $\areg^*$})\]

    $\cal{E}_1(\param,\mreg,x)$: Continuing from $\cal{E}_0$, and now having additional input $x$, the extractor now recovers $\cal{V}$'s messages for checkpoints $2i^\star,2i^\star+1$,
    \[
        \bar{x}_{i^\star},\hat{x}_{L,i^\star},\hat{x}_{R,i^\star},h_{L,2i^\star},h_{R,2i^\star},h_{L,2i^\star+1},h_{R,2i^\star+1}
    \]
    from the transcript contained in the forward simulation. Let $\hat{x}_{i^\star}:=\hat{x}_{L,i^\star}\oplus\hat{x}_{R,i^\star}$, $h_{2i^\star}:=h_{L,2i^\star}\oplus h_{R,2i^\star}$, and $h_{2i^\star+1}:=h_{L,2i^\star+1}\oplus h_{R,2i^\star+1}$.

    It then simulates $\cal{P}^*$ inside the spacetime triangle $\nabla$, corresponding to the green triangle in \Cref{fig:time-position-B}. For convenience, we use $\cal{P}^*_\nabla$ to refer to the portion of $\cal{P}^*$'s circuits contained within the region $\nabla$. During its simulation of $\cal{P}^*_\nabla$, $\cE$ intercepts all oracle queries and responds in the following way.
    \begin{itemize}
        \item On query $\cO_{\sf chal}(x')$:
        \begin{itemize}
            \item If $x'=\hat{x}_{i^\star}$, respond with $x\oplus\bar{x}_{i^\star}$;
            \item otherwise, forward to the true oracle $\cO_{\sf chal}$.
        \end{itemize}
        \item On query $\cO(h,x',y,\sigma,c,z)$:
        \begin{itemize}
            \item If $(h,x',c)=(h_{2i^\star},x,0)$, first compute $w=\cO(h_{2i^\star},x,y,\sigma,c,z)$.
            \begin{itemize}
                \item If $w\neq\bot$, respond with $H(y)$, where $H$ is a compressed oracle;
                \item otherwise, respond with $\bot$.
            \end{itemize}
            \item otherwise, forward to the true oracle $\cO$;
        \end{itemize}
    \end{itemize}
    At the end of the simulation, $\cE$ measures the database of the compressed oracle $H$. If there is a unique non-$\bot$ entry $y$, it outputs $y$; otherwise it outputs $\bot$.

    We now analyze $\cE$, which proceeds very similarly to the analysis in \Cref{thm:NDEL_overall}.
    \begin{claim}
        In the extraction soundness experiment, $\cE$ outputs $y=f(x)$ with probability at least $\eta(1-2\sqrt{2\varepsilon})-\negl(\secp)$.
    \end{claim}
    \begin{proof}
    We will prove this by first constructing a reduction from the $\Localize$ experiment to the $\MultiStageSearchMonogamy$ game of \Cref{def:multi-Stage_monogamy-of-entanglement}, and then showing how to view our extractor $\cE$ as a component in that reduction.

    The reduction will be a seven-part adversary
    \[
        \cal A=
        \big(
        \underbrace{\cal A_M^0}_{\text{stage 0}},
        \underbrace{\cal A_M^1,\cal A_L^0,\cal A_R^0}_{\text{stage 1}},
        \underbrace{\cal A_M^2,\cal A_L^1,\cal A_R^1}_{\text{stage 2}}
        \big)
    \]
    for ${\sf MultiStageSearchMonogamy}$, where $S=T$ because the parameter $n_e$ is set to 0. The challenger samples $(S,u,v)$ and provides the initial coset-state register $X^v Z^u\ket{S}$ to $\cal A_M^0$. In stage~1 it gives oracle access to $\cO_{S+v}^{b_0}$ for a uniformly random hidden answer $b_0\in\F_2^m$, and in stage~2 it additionally gives oracle access to $\cO_{S^\perp+u}^{b_1}$ for an independent uniformly random hidden answer $b_1\in\F_2^m$.

    The adversary $\cal A$ is defined just as in \Cref{thm:NDEL_overall}, excepting a few differences specific to the functionality localization setting which we highlight below. Let $r(\secp)$ be an upper bound on the number of oracle queries made by $\cal{P}^*$.

    \begin{enumerate}
        \item Once $\cal A_M^0$ receives the coset-state $X^vZ^u\ket{S}$ along with membership oracles $\cO_{S+v}$ and $\cO_{S^\perp+u}$ from the challenger, it additionally samples the rest of $\Setup$:
        \[
            f\gets\cF,\qquad \sk\gets\keygen(1^\secp),\qquad \rho_{\QSIO}\gets \QSIO.\QObf(1^\secp,\widehat f_{\sk}).
        \]
        Letting $\breg$ be a register with state
        \(
            \rho=(\rho_{\QSIO},X^vZ^u\ket{S}),
        \)
        the simulated prover is initialized with input $\breg$.
        $\cal A_M^0$ also initializes two $2r(\secp)$-wise independent functions $F_k$ and $G_{k'}$ (for simulating $\cO_{\sf random}$ and $\cO_{\sf chal}$, respectively), and samples the following challenges for each $i\in\{0,\dots,i^\star\}$
        \[
            x_i,\bar{x}_i,\hat{x}_{L,i},\hat{x}_{R,i},h_{L,2i},h_{R,2i},h_{L,2i+1},h_{R,2i+1}
        \]
        where $x_i\gets\cD$, and the rest are uniform subject to $x_i=\bar{x}_i\oplus G_{k'}(\hat{x}_{L,i}\oplus \hat{x}_{R,i})$. Let $h_{2i^\star}:=h_{L,2i^\star}\oplus h_{R,2i^\star}$ and $h_{2i^\star+1}:=h_{L,2i^\star+1}\oplus h_{R,2i^\star+1}$.
        The first $i^\star$ of these challenge tuples are used, along with the hash functions $F_k$ and $G_{k'}$, to simulate the prover through rounds $0,\dots,2i^\star-1$ of the $\Localize$ experiment. If the prover responds incorrectly in any of these rounds, $\cal A_M^0$ aborts. The challenges for rounds $2i^\star$ and $2i^\star+1$ are placed in the $\info$ register which is copied along between the reduction components. In particular, $\info$ contains $x_{i^\star}$, $h_{2i^\star}$, $h_{2i^\star+1}$, and $f$. Like in \Cref{thm:TV_overall}, access to the oracle $\cal{O}$ is simulated using the sampled components of $\Setup$, along with $F_k$ for the random oracle and $\cO_{S+v}$ and $\cO_{S^\perp+u}$ for checking coset membership. Access to $\cO_{\sf chal}$ is similarly simulated with $G_{k'}$.
        \item For the simulation of all parts of $\cal A$ in stages 1 and 2, we reprogram the oracle as follows: all queries of the form $(h_{2i^\star},x_{i^\star},f(x_{i^\star}),\sigma,0,z)$ are remapped to $\cO_{S+v}^{b_0}(z)$. In other words, $\cO_{\sf random}$ is reprogrammed at the lone point $(h_{2i^\star},x_{i^\star},f(x_{i^\star}),0)$.

        \item For the simulation of stage 2, we additionally reprogram the oracle as follows: all queries of the form $(h_{2i^\star+1},x_{i^\star},f(x_{i^\star}),\sigma,1,z)$ are remapped to $\cO_{S^\perp+u}^{b_1}(z)$. In other words, $\cO_{\sf random}$ is reprogrammed at the lone point $(h_{2i^\star+1},x_{i^\star},f(x_{i^\star}),1)$.
\end{enumerate}

    \begin{claim}\label{claim:functionality-reduction}
        \begin{align*}
            \Pr\brackets{{\sf MultiStageSearchMonogamy}(\cal A,1^\secp)\neq\bot}
            &\ge \eta,\text{ and }\\
            \frac{\Pr\brackets{{\sf MultiStageSearchMonogamy}(\cal A,1^\secp)=1}}{\Pr\brackets{{\sf MultiStageSearchMonogamy}(\cal A,1^\secp)\neq\bot}}
            &\ge 1-\varepsilon-\negl(\secp).
        \end{align*}
    \end{claim}
    \begin{proof}
    The component $\cal A_M^0$ aborts exactly if the simulated prover fails one of checkpoint pairs $0,\ldots,i^\star-1$, giving the first bound. Since the tuple $(h_{2i^\star},x_{i^\star})$ is not jointly available before spacetime point $(L,t^\star)$, and the tuple $(h_{2i^\star+1},x_{i^\star})$ is not jointly available before $(L,t^\star+\delta)$, the same oracle-reprogramming argument used in \Cref{thm:TV_overall} (invoking \Cref{lem:oracle_reprogram}) implies the simulation of checkpoints $2i^\star,2i^\star+1$ is negligibly close to the real-world execution of adversary $\cal{P}^*$ in the \Localize experiment (when the oracle is $\cO_f$ instead of $\cO$), conditioned on reaching checkpoint pair $i^\star$. Then, the second inequality follows from \Cref{claim:oracle_switch_function_localization}.
    \end{proof}
We now introduce a sequence of hybrids, where the first corresponds to the \MultiStageSearchMonogamy reduction, and the last corresponds to the extraction experiment. Throughout these hybrids, probabilities are conditioned on $\cal A_M^0$ not aborting, corresponding to the event where the real execution of $\Localize$ reaches checkpoint pair $i^\star$.

\noindent\textbf{Hybrid 1:} This denotes the following experiment.
\begin{enumerate}
    \item Run $\cal{A}^0_M$ from the reduction above, and take its middle output $\mreg'=(\info,\mreg)$. Receive $\cO_{S+v}^{b_0}$ from the $\MultiStageSearchMonogamy$ challenger. Take $x_{i^\star},h_{2i^\star},h_{2i^\star+1},f\gets\info$.
    \item Simulate $\cal{P}^*_{\nabla}$ (defined above, in the description of $\cE$) on input $\mreg$. In this simulation, oracle queries to $\cO$ are answered with $\cO'$, which depends on $x_{i^\star},h_{2i^\star},f$ and is described below. Here, $H$ is a compressed oracle on $\cal{Y}_\secp\to\{0,1\}^m$.
    \[\hspace{-3em}
        \cO'(h,x',y,\sigma,c,z)=
        \begin{cases}
            H(f(x')) & \text{if } c=0\land x'=x_{i^\star}\land h=h_{2i^\star}\land y=f(x')\land\cO(h,x',y,\sigma,c,z)\neq\bot,\\
            \cO(h,x',y,\sigma,c,z) & \text{otherwise.}
        \end{cases}
    \]
    \item After simulating $\cal{P}^*_{\nabla}$, measure the compressed oracle database of $H$.
    \item If there is exactly one non-$\bot$ entry $y$, output $y$. Otherwise, output $\bot$.
\end{enumerate}
\begin{claim}\label{claim:functionality-hyb2}
    Let $p_1$ denote the probability that the output of \textbf{Hybrid 1} is exactly $f(x_{i^\star})$. Then,
    \[p_1\geq1-2\sqrt{2\varepsilon}-\negl(\secp)\]
\end{claim}
\begin{proof}
    Consider the measurement of the $H$ database. Clearly there can be at most one non-$\bot$ entry, since $H$ can only be accessed at $f(x_{i^\star})$. Thus we must now argue that the probability that the measurement of the $H(f(x_{i^\star}))$ database entry is $\bot$ is at most $2\sqrt{2\varepsilon}+\negl(\secp)$. We will do this as follows:

    First, we notice that steps 1 and 2 of \textbf{Hybrid 1} are the same experiment as running $\cal A^0_M$ and then $\cal A^1_M$ in the \MultiStageSearchMonogamy game, except that the queries $\cO(h_{2i^\star},x_{i^\star},f(x_{i^\star}),\sigma,0,z)$ which $\cal A^1_M$ would reprogram to $\cO^{b_0}_{S+v}(z)$ are now instead reprogrammed to $H(f(x_{i^\star}))$. In other words, $\cO_{\sf random}(h_{2i^\star},x_{i^\star},f(x_{i^\star}),0)$ is now reprogrammed to $H(f(x_{i^\star}))$ instead of $b_0$. These two experiments are identical, since $b_0$ and $H(f(x_{i^\star}))$ are both uniformly random. Then, by \Cref{claim:functionality-reduction}, if we run the \MultiStageSearchMonogamy reduction but replace $\cal A_M^0$ and $\cal A_M^1$ with steps 1 and 2 of \textbf{Hybrid 1}, it would succeed conditioned on non-abort with probability $1-\varepsilon-\negl(\secp)$.

    Finally, let $\breg$ denote the $f(x_{i^\star})^{th}$ entry in the compressed oracle database of $H$, which will initially be in the uniform superposition over $m$-bit output strings. We notice that the simulated prover has the following access to $\breg$:
    \[\cO(h_{2i^\star},x_{i^\star},f(x_{i^\star}),\sigma,0,\cdot)=\sum_{b_0\in\mathbb{F}_2^{m_0}}\ket{b_0}_{\breg}\bra{b_0}\otimes\cal{O}_{S+v}^{b_0}.\]
    and that there is no other dependence that the prover can have on $\breg$. This is the same as the interface given to the adversary in \Cref{def:multi-Stage_monogamy-of-entanglement}. Now, given the last paragraph, \Cref{lem:multi-stage_search_monogamy-of-entanglement} applies: if we write the purified state of the entire \textbf{Hybrid 1} experiment, after step 2, as
    \[\ket{\psi}_{\areg\breg}=\sum_{\sk}\sum_{b_0\in\F_2^m}\ket{\psi^{\sk}_{b_0}}_{\areg}\ket{b_0}_{\breg}~,\]
    where $\sk=(S,u,v)$, $\areg$ is all of $\cal{P}^*$'s registers and messages at the top edge of the region $\nabla$, and $\breg$ is the purified choice of $b_0$ from \Cref{def:multi-Stage_monogamy-of-entanglement}, then,
    \begin{equation*}
        \trace{\ket{+_\breg}_\breg\bra{+_\breg}\cdot\ket{\psi}_{\areg\breg}\bra{\psi}}\leq2\sqrt{2\varepsilon}+\negl(\secp).
    \end{equation*}
    Thus, if we measure this register in the Hadamard basis, it will be $\bot$ with probability $\leq2\sqrt{2\varepsilon}+\negl(\secp)$. This proves the claim.
\end{proof}
\noindent\textbf{Hybrid 2:} This denotes the following experiment:
\begin{enumerate}
    \item Run $\cal{A}^0_M$ from the reduction above, and take its middle output $\mreg'=(\info,\mreg)$. Receive $\cO_{S+v}^{b_0}$ from the $\MultiStageSearchMonogamy$ challenger. Take \textcolor{red}{$x_{i^\star},h_{2i^\star},h_{2i^\star+1}\gets\info$ (discarding $f$)}.
    \item Simulate $\cal{P}^*_{\nabla}$ on input $\mreg$. In this simulation, oracle queries to $\cO$ are answered with \textcolor{red}{$\cO''$}, which depends on $x_{i^\star},h_{2i^\star}$ and is described below. Here, $H$ is a compressed oracle on $\cal{Y}_\secp\to\{0,1\}^m$.
    \[
        \cO''(h,x',y,\sigma,c,z)=
        \begin{cases}
            \textcolor{red}{H(y)} & \textcolor{red}{\text{if } c=0\land x'=x_{i^\star}\land h=h_{2i^\star}\land\cO(h,x',y,\sigma,c,z)\neq\bot},\\
            \cO(h,x',y,\sigma,c,z) & \text{otherwise.}
        \end{cases}
    \]
    \item After simulating $\cal{P}^*_{\nabla}$, measure the compressed oracle database of $H$.
    \item If there is exactly one non-$\bot$ entry $y$, output $y$. Otherwise, output $\bot$.
\end{enumerate}
\begin{claim}
    Let $p_2$ denote the probability that the output of \textbf{Hybrid 2} is exactly $f(x_{i^\star})$. Then,
    \[|p_2-p_1|\leq\negl(\secp).\]
\end{claim}
\begin{proof}
    We can first swap $\cO$ in \textbf{Hybrid 1} for $\cO_f$, which is distinguishable with only negligible probability by \Cref{claim:oracle_switch_function_localization}. Then, the check for $y=f(x_{i^\star})$ becomes redundant, since it already exists in $\cO_f(h,x',y,\sigma,c,z)\neq\bot$, so we can remove it from the definition of $\cO'$. Then, we switch back to $\cO$ by another application of \Cref{claim:oracle_switch_function_localization}, and we get the definition of $\cO''$. Since the oracle is now independent of $f$, we can discard $f$ from $\info$.
\end{proof}
\noindent\textbf{Hybrid 3:} This denotes the following experiment.
\begin{enumerate}
    \item Run $\param,\sparam,f,\rho\gets\Setup(1^\secp)$.
    \item Start the procedure $\Localize\left(\cal{P}^*(\breg)\rightleftharpoons \cal{V}(\sparam,f)\right)(\param,L,t)$, but freeze all circuits and messages at time $t$. Let $\areg^*:=\register[L_\Delta~@~t]\big(\Localize\left(\cal{P}^*(\breg)\rightleftharpoons \cal{V}(\sparam,f)\right)(\param,L,t)\big)$.
    \item Use the code of $\cal{P}^*$, along with the oracles $(\cO,\cO_{\sf chal})=\param$, to evolve $\areg^*$ forward to $\register[L_{2\delta}~@~t^\star]$ and then find the state on all registers and messages in the region described by the following:
    \begin{enumerate}
        \item $s-p=t^\star-L$;
        \item $s+p=t^\star+L$;
        \item $p\in[L-2\delta,L+2\delta]$,
    \end{enumerate}
    which we call $R_{\vee}$. The output of this step is
    \[\mreg:=\register[R_{\vee}](\text{Forward time-evolution of $\areg^*$})\]
    \item Compute $\info=(x_{i^\star},h_{2i^\star},h_{2i^\star+1})$ for checkpoint pair $i^\star$ from the transcript of verifier messages in the forward simulation.
    \item Run steps 2-4 of \textbf{Hybrid 2} with inputs $(\info,\mreg)$.
    \begin{itemize}
        \item Simulate $\cal{P}^*_{\nabla}$ on input $\mreg$. In this simulation, oracle queries to $\cO$ are answered with {$\cO''$}, which depends on $x_{i^\star},h_{2i^\star}$ and is described in \textbf{Hybrid 2}.
        \item After simulating $\cal{P}^*_{\nabla}$, measure the compressed oracle database of $H$.
        \item If there is exactly one non-$\bot$ entry $y$, output $y$. Otherwise, output $\bot$.
    \end{itemize}
\end{enumerate}
\begin{claim}
    Let $p_3$ denote the probability that the output of \textbf{Hybrid 3} is exactly $f(x_{i^\star})$. Then,
    \[p_3=p_2~.\]
\end{claim}
\begin{proof}
    By examining the definition of $\cal A_M^0$, one can see that the output $$(\info=(x_{i^\star},h_{2i^\star},h_{2i^\star+1}),\mreg)\gets\cal A_M^0()$$ is equal to the output of the simulation in steps 1-4 above. In other words, we argue that since the reduction $\cal A$ is specified in such a way as to perfectly simulate the $\Localize$ experiment, \textbf{Hybrid 2} and \textbf{Hybrid 3} are the same experiment up to rewriting of steps.
\end{proof}
\noindent\textbf{Hybrid 4:} This denotes the following experiment.
\begin{enumerate}
    \item Run $\param,\sparam,f,\breg\gets\Setup(1^\secp)$, \textcolor{red}{and sample $x\gets\cD$.}
    \item Start the procedure $\Localize\left(\cal{P}^*(\breg)\rightleftharpoons \cal{V}(\sparam,f)\right)(\param,L,t)$, but freeze all circuits and messages at time $t$. Let $\areg^*:=\register[L_\Delta~@~t]\big(\Localize\left(\cal{P}^*(\breg)\rightleftharpoons \cal{V}(\sparam,f)\right)(\param,L,t)\big)$.
    \item \textcolor{red}{Run $\mreg\gets\cal{E}_0(\param,\areg^*)$.}
    \item \textcolor{red}{Run $y\gets\cal{E}_1(\param,\mreg,x)$, and output $y$.}
\end{enumerate}
\begin{claim}\label{claim:functionality-hyb4}
    Let $p_4$ denote the probability that the output of \textbf{Hybrid 4} is exactly $f(x)$. Then,
    \[|p_4-p_3|\leq\negl(\secp)\]
\end{claim}
\begin{proof}
    We notice that step 3 of \textbf{Hybrid 3} is definitionally equivalent to $\mreg\gets\cE_0(\param,\areg^*)$. Also, the fresh input $x$ in \textbf{Hybrid 4} has the same distribution as $x_{i^\star}$ in \textbf{Hybrid 3}. Thus, the remaining steps are exactly the same procedure as $y\gets\cE_1(\param,\mreg,x)$, except that $\cE_1$ reprograms $\cO_{\sf chal}$ at $\hat{x}_{i^\star}$ so the selected checkpoint challenge is this fresh $x$ rather than the original $x_{i^\star}$. By \Cref{lem:oracle_reprogram}, this reprogramming can be detected only with negligible probability.
\end{proof}
Now, all that is left to notice is that \textbf{Hybrid 4} is precisely the extraction experiment for checkpoint pair $i^\star$. Also, $\cal{E}$ is efficient. Thus, as a consequence of \Cref{claim:functionality-hyb2} through \Cref{claim:functionality-hyb4}, the extractor succeeds with probability $1-2\sqrt{2\varepsilon}-\negl(\secp)$ conditioned on reaching checkpoint pair $i^\star$. The overall extraction experiment succeeds with probability at least
\[
    \eta(\secp)\left(1-2\sqrt{2\varepsilon}-\negl(\secp)\right).
\]
\end{proof}
Finally, $\varepsilon=1-q_{i^\star}\le 1-\eta^{1/\gamma}\le \log(1/\eta)/\gamma$. Since $\gamma=\gamma$, the loss term $2\sqrt{2\varepsilon}$ is
\[
    O\!\left(\sqrt{\frac{\log(1/\eta)}{\gamma}}\right).
\]
\end{proof}
\begin{lemma}[Uniqueness]\label{prop:function_localization_uniqueness}
    Assume that $\cal F$ is a copy-protectable functionality with security $\zeta=\zeta(\secp)$, and that $\QSIO$ is an ideal obfuscator. Then, \Cref{con:function_localization_from_qsio} satisfies $\zeta$-\textbf{Uniqueness}.
\end{lemma}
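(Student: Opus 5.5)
The plan is to reduce a uniqueness violator for \Cref{con:function_localization_from_qsio} to a tri-partite copy-protection adversary $(\cA',\cB',\cC')$ against the assumed copy-protection of $\cF$. The reduction runs in two stages. First, we use ideal obfuscation to argue that $\rho_{\QSIO}$ is at least as good a copy-protection of $f$ as any other, and in particular as good as a secure one. Second, we show that the entire \Localize experiment, including the public oracles, can be simulated by an adversary that holds only some copy-protected version of $f$ (plus the signing capability we can afford to simulate) and never knows $f$ in the clear.

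\textbf{Step 1 (simulating everything except $f$).} Fix a QPT prover $\cal{P}^*$, regions $L_0,L_1$, times $t,t'$, and local adversaries $\cA,\cB$ that win the uniqueness experiment with probability $p$. The reduction samples on its own everything in $\Setup$ except $f$: the coset state $X^vZ^u\ket{S}$, $\sk$, $\cO_{\sf chal}$ and $\cO_{\sf random}$ (the random oracles via $2r$-wise independent functions). Evaluating $\cO$ only needs $\sk$ and the coset membership, and does not need $f$.

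The verifier's part does need $f$: the challenges are $x_i\gets\cD$, and the accept check uses $f(x_i)$. The check is irrelevant, because uniqueness does not condition on acceptance, so we simply drop it. The challenges are sampled by the reduction without knowing $f$. Hence the only $f$-dependent object is $\rho_{\QSIO}\gets\QSIO.\QObf(\widehat f_{\sk})$.

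\textbf{Step 2 (ideal obfuscation as best-possible copy-protection, following \cite{CG24}).} Let $\Psi$ denote the whole experiment. It takes $\rho_{\QSIO}$, splits into $\areg^*_0,\areg^*_1$ at time $t'$, and runs $\cA,\cB$ on fresh $x_0,x_1$; this is a QPT procedure acting on $\rho_{\QSIO}$. By the ideal obfuscation property, the output distribution of $\Psi$ is negligibly close to that of some simulator $\Sim^{\widehat f_{\sk}}$ that has only oracle access to $\widehat f_{\sk}$. The winning predicate ($y_0=f(x_0)$ and $y_1=f(x_1)$) is an efficiently checkable distinguisher given $f$, $x_0,x_1$, so it is preserved up to $\negl$. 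The same reasoning applies with a different obfuscated program: replace $(C_f,\ket{\psi_f})$ by the copy-protected program $\Protect(f)$ together with a circuit that runs $\Eval$ and then signs using $\sk$. Since $\sk$ is sampled by the reduction, this is a valid quantum program computing $\widehat f_{\sk}$. Its obfuscation $\QObf(\Protect(f)\text{-based program})$ has the same simulator behavior, so $\Psi$ on $\rho_{\QSIO}$ and $\Psi$ on this alternative obfuscation succeed with probabilities within $\negl$ of each other. This is the best-possible argument of \cite{CG24}.

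\textbf{Step 3 (building the copy-protection adversary).} $\cA'$ receives $\rho_f\gets\Protect(f)$. It samples $\sk$ and the rest of the setup, and forms the obfuscation of the program that signs $\Eval(\rho_f,\cdot)$. The ideal obfuscator exists in the classical oracle model, and the reduction can simulate its oracle. $\cA'$ then runs $\Psi$ up to time $t'$ and hands $\areg^*_0$ to $\cB'$ and $\areg^*_1$ to $\cC'$, each together with a copy of the classical secrets needed to keep answering oracle queries.

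\textbf{Main obstacle.} The two sides must continue answering oracle queries without communicating. This is resolved by classical shared randomness: the oracle descriptions (the $2r$-wise independent keys, $\sk$, $S,u,v$, and the obfuscator's oracle key) are sent to both parties. Giving $\sk$ to the splitters is fine because the copy-protection game places no restriction on what else they hold. $\cB',\cC'$ then run $\cA,\cB$ on their challenges $x_B,x_C$ and win with probability $p-\negl$. Therefore $p\le\zeta+\negl$.

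The subtle point I expect is ensuring that the ideal-obfuscation simulator argument applies to a \emph{two-party split} distinguisher. This follows because the split and the final answers are part of one QPT process $\cA$ in \Cref{def:obf}, and the verification of the answers is done by the distinguisher $\cD$ knowing $f$.
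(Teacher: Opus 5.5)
Your proposal is correct and takes essentially the paper's route: you use the best-possible copy-protection property of ideal quantum state obfuscation \cite{CG24} for $\widehat f_{\sk}$, which is implementable from $\Protect(f)$ plus signing with $\sk$, and you embed the uniqueness experiment into a copy-protection adversary that samples all $f$-independent parts of $\Setup$ and $\Localize$ itself and splits at time $t'$. The differences are in presentation: you unroll the \cite{CG24} simulator argument and reduce directly to the copy-protection game for $\Protect$ rather than citing it, and by having the reduction sample $\sk$ (needed to answer $\cO$) and checking only $y_b=f(x_b)$, you make explicit two points that the paper's short reduction leaves implicit.
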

\begin{proof}
    As noted in \cite{CG24}, $\QSIO$ is a best-possible copy-protector. In other words, if $\cF$ is copy-protectable with security $\alpha$, then $(\QSIO.{\sf QObf},\QSIO.{\sf QEval})$ is a copy-protection scheme with security at least $\alpha$.

    We first need to argue that $\QSIO(\widehat{f_{\sf sk}})$ is copy-protected. This is straightforward, since $\widehat{f_{\sf sk}}$ can always be implemented given only black-box access to $f\in\cal{F}$. Intuitively this means that $\widehat{f_{\sf sk}}$ is no more learnable than $\cal F$ and, more concretely, there is a simple reduction which simply invokes the functionality and computes the signature portion on top of it.

    Then, assume that there is an adversary $(\cal{P}^*,\cI_0,\cI_1)$ for uniqueness against \Cref{con:function_localization_from_qsio}, which succeeds with probability greater than $1-\alpha$. There is immediately an adversary $(\cA,\cB,\cC)$ for the copy-protection security game against $(\QSIO.{\sf QObf},\QSIO.{\sf QEval})$ for the signed functionality $\widehat{f_{sk}}$: $\cA$ is given $\rho_\breg$ by the challenger, samples the rest of $\Setup$ (which is independent of $f$) and gives everything to $\cal{P}^*$, which it simulates until time $t$. $\cA$'s two outputs are the registers in regions $L_0$ and $L_1$ (along with copies of the public parameters $\param$), which are then given to $\cB:=\cI_0$ and $\cC:=\cI_1$, respectively. Then, $(\cA,\cB,\cC)$ succeeds with the same probability as $(\cal{P}^*,\cI_0,\cI_1)$.
\end{proof}

\begin{theorem}
    Let $\cal{F}$ be a classical functionality which is copy-protectable with security $\zeta(\secp)$. Then there exists a {non-destructive} functionality localization scheme for $\cal{F}$ with $(1-\negl(\secp))$-\textbf{Completeness}, $\brackets{\beta,\Delta}$-\textbf{Extraction Soundness} for
    \[
        \beta(\secp,\eta)=\eta(\secp)\left(1-O\!\left(\sqrt{\frac{\log(1/\eta)}{\gamma}}\right)-\negl(\secp)\right)
    \]
    and $\zeta(\secp)$-\textbf{Uniqueness} in the classical oracle model. In particular, since we only consider adversaries for which $1/\eta$ is at most polynomial in $\secp$, for any polynomial $t(\secp)$, we can set $\gamma=c\secp t^2$ for some constant $c$ to achieve
    \[\beta(\secp,\eta)=\eta(\secp)\left(1-1/t(\secp)\right)~.\]
\end{theorem}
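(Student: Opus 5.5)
The theorem follows by instantiating \Cref{con:function_localization_from_qsio} and collecting the four lemmas just proved. The plan is to fix the components first, then check each property of \Cref{def:functionality_localization_revised} against the lemma that supplies it, and finally set the parameter $\gamma$.

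For the components, we take $\QSIO$ to be the quantum state ideal obfuscator of \cite{BBV24}, which exists in the classical oracle model for any classical functionality. The signed functionality $\widehat f_\sk$ is classical, so it qualifies. We take the MAC $(\keygen,\Sign,\Ver)$ to be any post-quantum secure MAC, for instance one built from a PRF. The PRF itself comes from one-way functions, which are already assumed to make the oracles efficient. The random oracles $\cO_{\sf random}$ and $\cO_{\sf chal}$ are replaced by PRFs inside the obfuscated oracles, as discussed in \Cref{subsec:ideal-obf}.

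The properties then follow one by one.
\begin{itemize}
\item Functionality preservation is the corresponding lemma above; it follows directly from correctness of $\QSIO$.
\item $(1-\negl)$-completeness and non-destructiveness are \Cref{prop:function_localization_completeness}.
\item Extraction soundness with the stated $\beta$ is \Cref{prop:function_localization_existence}. Its two hypotheses, MAC security and ideal obfuscation, hold for the choices above.
\item $\zeta$-uniqueness is \Cref{prop:function_localization_uniqueness}. It uses the assumption that $\cF$ is copy-protectable with security $\zeta$, together with the fact that $\QSIO$ is a best-possible copy-protector \cite{CG24}.
\end{itemize}

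The one point needing care is that the copy-protection distribution $\cD$ should be the same in the hypothesis and in the construction. The construction is parameterized by exactly the $\cD$ witnessing copy-protectability, so this is a matter of bookkeeping rather than a new argument.

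Finally, we choose $\gamma$. Since $1/\eta\le p(\secp)$ for some polynomial $p$, we have $\log(1/\eta)=O(\log\secp)\le c'\secp$. Setting $\gamma=c\secp t^2$ with $c$ large enough gives
\[
O\!\left(\sqrt{\log(1/\eta)/\gamma}\right)+\negl(\secp)\le 1/t(\secp),
\]
and therefore $\beta\ge\eta(1-1/t)$.

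We do not expect any real obstacle here, since all the technical content sits in the preceding lemmas. The step most worth double-checking is the instantiation of the uniqueness lemma. The reduction there passes from copy-protectability of $\cF$ to copy-protectability of $\widehat f_\sk$, and it must simulate the rest of $\Setup$ (the coset state, $\sk$, and the oracles $\cO$, $\cO_{\sf chal}$) without knowing $f$. The oracles themselves do not depend on $f$; they need only $\sk$, which the reduction samples itself, so the simulation is consistent.
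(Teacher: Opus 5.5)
Your proposal is correct and takes the same approach as the paper. You instantiate \Cref{con:function_localization_from_qsio} in the classical oracle model, then read off completeness and non-destructiveness, extraction soundness, and uniqueness from \Cref{prop:function_localization_completeness}, \Cref{prop:function_localization_existence}, and \Cref{prop:function_localization_uniqueness}, exactly as the paper's one-line proof does. Your extra details are consistent with the paper and make explicit what it leaves implicit: the concrete MAC and the \cite{BBV24} obfuscator, the functionality-preservation lemma, the check that $\gamma=c\secp t^2$ suffices, and the remark that the uniqueness reduction can simulate $\cO$ because it samples $\sk$ itself and neither the public oracles nor the verifier's messages depend on $f$.
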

\begin{proof}
    Instantiating \Cref{con:function_localization_from_qsio} in the classical oracle model gives such a scheme as claimed, with the properties following from \Cref{prop:function_localization_completeness}, \Cref{prop:function_localization_existence}, and \Cref{prop:function_localization_uniqueness}, respectively.
\end{proof}

%% file: applications.tex
\section{Discussion}
\subsection{Publicly-Verifiable Localization}\label{subsec:public_verifiability}
All of our constructions in \Cref{sec:localizing} utilize some secret parameters $\sparam$ generated by $\Setup$ and then held privately by the verifiers. The purpose of $\sparam$ is, generally speaking, to serve as a verification key for the protocol -- i.e. the eventual accept/reject decision is a deterministic predicate of two things, 1. the (classical) protocol transcript and 2. the secret parameters $\sparam$. It is natural to see if these can be made public, a la publicly-verifiable digital signatures. However unfortunately, in our constructions $\sparam$ contains the coset descriptors $u,v$ -- meaning that a prover knowing $\sparam$, even if he completely discards $\rho$, could respond to all challenges perfectly, nullifying any extraction guarantees.

An optimistic reader might hope for alternative solutions, and therefore ask:

\noindent\emph{Are there sound localization schemes in which the verifiers hold no secret information? I.e., schemes where \textbf{anyone} can play the role of the verifiers, and achieve an extraction guarantee?}

In this section, we will argue that we can indeed achieve this for entanglement localization, state localization, and trajectory verification, by making a slight modification to our existing constructions. Firstly, we introduce the formal notion of public verifiability.

\begin{definition}
     We say that a localization scheme (e.g. one of Definitions \ref{def:NDEL},\ref{def:trajectory_verification},\ref{def:state-localization}, or \ref{def:functionality_localization_revised}) is \textbf{publicly verifiable} if the $\Localize$ protocol takes no private verifier input $\sparam$. In other words, if the interface for the $\Localize$ protocol can be written as $\Localize\brackets{\cal{P}(\breg) \rightleftharpoons \cal{V}}(\param,\dots)\rightarrow \breg',b$.
\end{definition}

Next, we will show an idea for how to augment our constructions of each aforementioned localization scheme, so that the verifiers can check the protocol transcript without knowledge of $u,v$. In order to give the general idea and intuition without getting too deep into the implementation details for each specific protocol, we will keep this explanation relatively high-level.

\begin{theorem}[informal]
    Each of \Cref{con:entanglement_localization}, \Cref{con:non-destructive_state_localization}, and \Cref{con:trajectory_verification} can be made publicly verifiable\footnote{\label{footnote:unknown_public_verifiable_functionality_localization}We leave out functionality localization from this statement, because it is unclear how public verifiability should be defined with respect to the private functionality $f$. If $f$ itself has a natural notion of public verification (e.g. if $f$ is a copy-protectable signing algorithm for a public-key signature scheme), then we could potentially hope to extend the techniques in this section to allow such a functionality $f$ to be localized in a publicly verifiable way. }.
\end{theorem}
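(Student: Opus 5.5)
The idea is to remove the verifiers' need for $u,v$ by publishing, inside the obfuscated oracle, a verification interface that answers consistency queries on the transcript but does not let anyone recover the coset descriptors. Concretely, in each of \Cref{con:entanglement_localization}, \Cref{con:non-destructive_state_localization}, and \Cref{con:trajectory_verification}, I will add to $\param$ a second classical oracle
\[
\cO_{\sf ver}(\theta,x,y)=\begin{cases}\top & \text{if } y=\cO_{\sf random}(\theta,x),\\ \bot&\text{otherwise.}\end{cases}
\]
The new verifier accepts iff $y_{L,i}=y_{R,i}$ and $\cO_{\sf ver}(\theta_i,x_{L,i}\oplus x_{R,i},y_{L,i})=\top$ for every checkpoint. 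Then $\sparam$ becomes empty. Completeness and non-destructiveness carry over verbatim, since the honest prover's behaviour is unchanged and the verifier's predicate agrees with the old one.

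For extraction soundness, the plan is to reduce to the existing proofs. I first argue that giving the adversary $\cO_{\sf ver}$ changes nothing relevant. This oracle is a point-function check against the outputs of $\cO_{\sf random}$, which has $m\ge\secp$ output bits. By a standard search-to-decision (BBBV/one-way-to-hiding) argument, $\cO_{\sf ver}$ can be replaced by the all-$\bot$ oracle on every input $(\theta,x)$ whose value $\cO_{\sf random}(\theta,x)$ the adversary has not already learned, at negligible cost. For inputs the adversary has learned, $\cO_{\sf ver}$ gives no extra information.

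In the reduction to \MultiStageIndependentMonogamy (as in the proof of \Cref{thm:NDEL_overall}), the components $\cal A$ must then simulate $\cO_{\sf ver}$. They do so using $F_k$ at all non-reprogrammed points. At the two reprogrammed points they use the public membership oracles together with the challenge oracles $\cO^{b_j}_{C_{\theta_j}}$, and they can equivalently answer $\bot$ except on inputs equal to $b_j$. To make this last step clean, I will modify the monogamy game so the challenger also provides an equality-check oracle for $b_0,b_1$. I will then check that \Cref{lem:multi-stage_search_monogamy-of-entanglement} still holds: an equality oracle for a uniformly random $m$-bit string is simulable by $\bot$ up to $q^2/2^m$ error, so every hybrid in \Cref{sec:monogomy} survives with a negligible additive loss. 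The extractor from \Cref{thm:extracting_EPR_pairs} and the sequential-repetition argument of \Cref{thm:entanglement_localization} and \Cref{thm:TV_overall} then apply unchanged, as does the purification argument of \Cref{thm:state-localization}.

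The step I expect to be the main obstacle is the reprogramming lemma \Cref{lem:oracle_reprogram} in the presence of $\cO_{\sf ver}$. Before the challenge shares meet, the adversary could query $\cO_{\sf ver}$ at the reprogrammed location $x_{L}\oplus x_{R}$ and detect the reprogramming. My approach is to note that this location is unpredictable before both shares are available, so any such query hits it only with negligible probability. I would reprove the lemma with the extra oracle by treating $\cO_{\sf ver}$ as a function of $\cO_{\sf random}$ and applying the same hybrid. The same point-hiding argument has to be redone inside the compressed-oracle extractor of \Cref{lem:extracting_standard_basis_measurement_results}, where I must check that verification queries do not populate the database.
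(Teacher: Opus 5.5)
Your completeness argument and your handling of verification queries made before the two shares meet are fine. The argument breaks after the shares meet, in the step you treat as routine: the reduction answering $\cO_{\mathsf{ver}}$ at the two reprogrammed challenge points. Your $\cO_{\mathsf{ver}}(\theta,x,y)$ is exactly the naive transcript-verification oracle. Your patch is to have the monogamy challenger also supply equality oracles for $b_0,b_1$, with the claim that these are ``simulable by $\bot$ up to $q^2/2^m$''. That patch fails for three reasons.

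First, the $q^2/2^m$ bound holds only for an adversary with no other access to the secret. In every game of \Cref{sec:monogomy} the adversary holds $\cO^{b_0}_{C_{\theta_0}}$, which hands $b_0$ to anyone holding a coset vector. So the middle party learns $b_0$ by design, and the equality oracle then returns $\top$ where your simulation returns $\bot$. One-way-to-hiding bounds the difference only by the probability of querying $b_0$, which is large here.

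Second, the chain you want to reuse runs through one-bit secrets. In \Cref{lem:asymmetric_oracle_monogamy} and \Cref{lem:multi-stage_decision_monogamy-of-entanglement} we have $b_0\in\mathbb{F}_2$, so a single equality query reveals $b_0$ and those games become trivially winnable. The search-to-decision step of \Cref{lem:multi-stage_search_monogamy-of-entanglement} relabels $0\mapsto b_0'$ and $1\mapsto b_0''$, so it cannot answer ``is $y=b_0$?'' without the hidden bit.

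Third, in the extractor of \Cref{lem:extracting_standard_basis_measurement_results} the secret is replaced by $\cO_{\mathsf{random}}(\Can_S(z))$, which varies with the queried coset. There is then no single value to test equality against, and \Cref{lem:collapsing_versus_non-collapsing} does not cover such an oracle. A middle prover that feeds its own coset-oracle output into $\cO_{\mathsf{ver}}$ must receive $\top$, and you leave open how the extractor reproduces this without knowing $x$ or disturbing the database. Your appeal to ``inputs the adversary has already learned'' is also not a well-defined event for a superposition-querying prover, and that is precisely what a check-then-garble prover exploits.

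The missing idea is to make $\cO_{\mathsf{ver}}$ useless on the live transcript until every response has been sent. Then the reduction may answer $\bot$ there, and neither the monogamy games nor the extractor change. The paper achieves this by gating verification with a fresh secret key:
\begin{itemize}
\item the challenges are hash-chain iterates $H^{(j)}(\mathrm{key})$ under a new public random oracle $H$;
\item $\cO_{\mathsf{ver}}$ additionally checks the key against the challenges;
\item the key is broadcast, secret-shared, only after all responses are in.
\end{itemize}
The reduction simulates $H$ by a fresh $2q$-wise independent function and returns $\bot$ on any $\cO_{\mathsf{ver}}$ query containing one of the two target challenges. A one-way-to-hiding argument shows this is undetectable, since finding a key whose $H$-iterate hits a live challenge is hard.
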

\begin{proof}
    To do this, we will apply the same, simple technique to each construction. As an initial (flawed) attempt, consider adding a public verification oracle like the following:
    \[\cO_{\sf ver}(\{\text{challenge}_i\}_i,\{\text{response}_i\}_i)=\begin{cases}
        \top&\text{if }\text{response}_i=\cO_{\sf random}(\text{challenge}_i)~~\forall i,\\
        \bot&\text{otherwise.}
    \end{cases}\]
    This is essentially an oracle for the verifiers' accept/reject decision: it takes as input the entire protocol transcript, and one can check that its acceptance predicate is logically equivalent to the one computed by the verifiers in each of our constructions (under the proper interpretations of ``challenge'' and ``response'').
    
    However, we cannot add such a public oracle -- in our reduction to the monogamy game, the simulated prover might play a malicious trick on us: before sending a response, check if it's correct. If yes, send the response, and if not, instead send garbage. By doing so, a prover has not changed its success probability whatsoever. However, the monogamy game reduction is required to answer queries to this verification oracle -- without knowing what the right answers are! So a prover can succeed with very high probability in the real experiment, but might (maliciously) fail to produce correct answers in any reduction where the verification oracle cannot be faithfully simulated.

    To get around this, we use the following idea: at the start of $\Localize$, the verifier will generate a random ``key'' which somehow gates access to the public verification oracle on a certain set of inputs, and release this key at the very end of the protocol. Challenges are then generated with respect to this key, meaning that the prover is unable to use the public verification oracle on the active protocol transcript until after it has produced all of its responses, circumventing the malicious behavior described above.

    One such gating mechanism can be made with a simple hashing trick. Let $H$ be a new public random oracle, such that $H$ and $\cal{O}_{\sf ver}$ are now both generated during setup and output as part of the public parameters $\param$. The verifier will generate the key and challenges jointly at the start of $\Localize$ as follows:
    \[\text{key}\gets\set{0,1}^n~,\]
    \[\text{challenge}_n=H(\text{key}),~~\text{challenge}_{n-1}=H(\text{challenge}_n)=H^{(2)}(\text{key}),~~\cdots~~,\text{challenge}_1=H^{(n-1)}(\text{key})~.\]
    These challenges will then be directly used in place of the randomly-generated challenges from our original, privately-verifiable protocols. Additionally, we will now swap out the faulty verification oracle from above, and use the following new one which will be generated in $\Setup$ jointly with $H$:
    \[\cO_{\sf ver}(\text{key},\{\text{challenge}_i\}_i,\{\text{response}_i\}_i)=\begin{cases}
        \top&\text{if }H^{(i)}(\text{key})=\text{challenge}_i~~\forall i\\
        &\land~~\text{response}_i=\cO_{\sf random}(\text{challenge}_i)~~\forall i,\\
        \bot&\text{otherwise.}
    \end{cases}\]
    Finally, the verifier broadcasts the key in a secret-shared manner (like all of the challenges) after all prover responses have been received, thus enabling public verification of the transcript.

    To argue that this transformation preserves extraction soundness, we will describe the reduction from a given round $i$ of the publicly-verifiable protocol to the \MultiStageIndependentMonogamy game --- an updated version of the reduction $\cal{A}$ defined in \Cref{thm:NDEL_overall}. The reduction will work in the same way as for the original schemes, except that it is now tasked with simulating the new oracles $\cal{O}_{\sf ver}$ and $H$, as well as generating challenges and the key in the updated way. It uses a fresh $2q$-wise independent hash function to simulate $H$ in the same way as was done for $\cal{O}_{\sf random}$, and generates simulated verifier challenges and key according to $H$ just as in the modified construction. It remains to show that $\cal{O}_{\sf ver}$ can be simulated. On a query $({\sf key}',\{\text{challenge}'_i\}_i,\{\text{response}'_i\}_i)$, $\cal{A}$ first checks if ${\sf challenge}'_i=(\theta_0,x_{L,0}\oplus x_{R,0})$ or ${\sf challenge}'_i=(\theta_1,x_{L,1}\oplus x_{R,1})$ for any $i$. If yes, then it outputs $\bot$, otherwise it checks the predicate according to its simulated $\cal{O}_{\sf random}$ and $H$ and outputs appropriately. We then use a standard one-way to hiding argument to show that this simulation is computationally indistinguishable from the real experiment, since it is clearly hard to find any value ${\sf key}'$ such that $x_{L,0}\oplus x_{R,0}$ or $x_{L,1}\oplus x_{R,1}$ is an iterate of ${\sf key}'$ under $H$. Thus, the reduction still wins with only negligible loss, and extraction proceeds as usual.
\end{proof}
\subsubsection{Applications to Black-Box Trajectory Verification}
Aside from being an interesting and desirable property on its own, public verifiability of localization protocols has a secondary benefit: allowing trajectory verification to be constructed in a relatively black-box way from a state or entanglement localization protocol. The construction is simple: run $\Setup$ as normal, and then execute back-to-back instances of $\Localize$ for each checkpoint along the trajectory. The details are somewhat tedious, so we will argue about this construction at a very high level.

For completeness to hold, we must be careful that completeness of the localization protocol still holds when the prover is moving along the trajectory rather than staying in place (a la \Cref{fig:trajectory-time-position-B}), and that non-destructiveness of $\Localize$ applies over a short enough timescale such that the prover's state $\rho$ can be plugged into the next instance right away. 

As for soundness, public verifiability allows for a simple black-box reduction: given an adversary for trajectory verification, simulate rounds 1 through $i$ via the public verification algorithm. Then, forward the real $\Localize$ challenges to the adversary for the given round, which it should win with high probability.

\subsection{Position-Based Cryptography}\label{subsec:additional}

Position-based authentication (PBA) is a position-based cryptographic primitive that naturally extends plain position verification, and was initially introduced together with the original construction of QPV, in Buhrman et al. (\cite{buhrman2014position}). In a PBA protocol, the honest prover decides on a message $m$ to relay to the verifiers, and the security guarantee is twofold: 1. When the honest prover is positioned in the secure location, no coalition of malicious parties located elsewhere can successfully tamper with the authenticated message, and 2. A coalition of parties with nobody in the secure location cannot pass the authentication protocol, regardless of the message. The applications of such a scheme are quite numerous and natural, e.g. authenticating military orders from a secure command center, arguably even more so than for plain position verification. As for building PBA, \cite{buhrman2014position} gave a construction based on any black-box quantum position verification scheme, and Unruh showed that their QROM $f$-BB84 position verification scheme could be straightforwardly extended to a PBA scheme secure against all QPT attackers in the QROM (\cite{unruh2014pvqrom}).

However, we will now point out that the security notion described above, under which all PBA schemes from prior literature are proven secure, seems unsatisfying when put under scrutiny. Points 1 and 2 above rule out attacks where nobody is in the secure location, or where the honest prover algorithm runs in the presence of malicious tampering parties, but \emph{does not} rule out attacks where there might be a distributed attack involving a (potentially unwitting) party in the secure location. In particular, the distributed attack on $f$-BB84 described in \Cref{sec:non-local-fbb84} (and depicted in \Cref{fig:attack}) is an example of such an attack on the PBA scheme of Buhrman et al., when their scheme is instantiated with the $f$-BB84 QPV scheme. 
\subsubsection{An explicit ``attack'' scenario}
\newcommand{\Chal}{\ensuremath{{\sf Chal}}\xspace}
\newcommand{\Resp}{\ensuremath{{\sf Resp}}\xspace}
\renewcommand{\Ver}{\ensuremath{{\sf Ver}}\xspace}
\renewcommand{\dec}{\ensuremath{{\sf Dec}}\xspace}
\newcommand{\Auth}{\ensuremath{{\sf Auth}}\xspace}
We briefly describe an adversarial strategy for the PBA construction of \cite{buhrman2014position}, which does not constitute an attack according to the standard security definition, but is nonetheless a concerning scenario for any real-world application of PBA. To describe this attack we'll first give a summary of their construction.
\begin{construction}[\cite{buhrman2014position}]\label{con:buhrman-pba}
    Take a QPV protocol whose routines are described as $(\Chal,\Resp,\Ver)$: \Chal generates verifier challenges, and \Resp computes the prover's response, which is finally checked by \Ver (including timings). Now, suppose the prover wishes to authenticate a bit $c\in\{0,1\}$.
    \begin{enumerate}
        \item First, the prover computes $m=\enc(c)$, according to some carefully chosen multi-bit encoding $\enc$.
        \item Next, for each bit $m_i$, the prover and verifier run the following protocol:
        \begin{itemize}
            \item Verifiers send challenges $c_L,c_R\gets\Chal(1^\secp)$.
            \item If $m_i=1$, the prover responds to both verifiers with $r_i\gets\Resp(c_L,c_R)$. If $m_i=0$, however, the prover sends $r_i=\bot$ to both verifiers with some probability $q$, and otherwise responds honestly.
        \end{itemize}
        \item The verifiers compute $b\gets\dec(\{i:\Ver(r_i)=\top\})$.
    \end{enumerate}
\end{construction}
Intuitively, this protocol is secure in their model because if nobody is at the purported location, then indeed there is nothing a malicious prover can do to inject 1s into the encoding of the message. Injecting 0s is easier, since anyone can send $\bot$, but the encoding prevents this from ever succeeding to tamper with the true message bit.

We now get to describing our ``attack''. Imagine that there are three provers, $P_L,P_M,P_R$, where only $P_M$ is at the true location. They participate in the scheme of \Cref{con:buhrman-pba} where the QPV scheme is $f$-BB84, and have the following strategy. Say, without loss of generality, that the BB84 qubit comes from the left.
\begin{itemize}
    \item $P_L$ and $P_R$ pre-determine a message bit $c\in\{0,1\}$. They compute the encoding $\enc(c)$, and execute the following strategy for each index $i$ of the encoded string:
    \begin{enumerate}
        \item When $P_L$ receives the qubit $H^\theta\ket{b}$, it applies a quantum one-time pad $X^rZ^s$, where the one-time pad key $(r,s)$ belongs to private shared randomness of $P_L$ and $P_R$. Importantly, $P_M$ never sees this key. $P_L$ forwards the scrambled qubit to $P_M$.
        \item When $P_M$ receives the qubit, it measures in the correct basis determined by the $f$-BB84 challenges. It sends the measurement outcome $b'$ to $P_L$ and $P_R$.
        \item $P_L$ and $P_R$ receive this measurement outcome, and by this time they also have full knowledge of the $f$-BB84 challenges. Namely, they know which basis the middle prover measured in, and thus which correction to apply from the quantum one-time pad in order to find the intended measurement outcome $b$.
        \item If $m_i=1$, then $P_L$ and $P_R$ send the outcome $b$ to their respective nearest provers. Otherwise, if $m_i=0$, then with probability $q$ they instead send $\bot$ to each prover.
    \end{enumerate}
\end{itemize}
It is clear that this attack would not violate the typical security definition, since it involves active participation from a prover $P_M$ who is in the secure location. However, there is a very concerning property of this scheme -- namely, that $P_M$ is \emph{entirely unaware} of the message being authenticated. In other words, there might be a prover in the right location who was simply tricked into participating, and the authenticated message in no way originates from the secure location.
\subsubsection{Towards stronger security notions}
To address this attack, we give some informal directions for tackling this weakness in the security definition of PBA. We propose that the localization techniques developed in this paper, for example that of functionality localization, could be used to rule out this class of cheating strategies where the middle prover is acting blindly, without knowledge of the message.

A concrete idea is the following (informal) PBA scheme:
\begin{itemize}
    \item Let $\cal{F}=\{f_k\}_k$ be a PRF. Take $(\Setup,\Localize,\Eval)$ from the functionality localization scheme in \Cref{con:function_localization_from_qsio} instantiated for $\cal{F}$. Let $H$ be a public random oracle.
        \item The verifiers run $\param,\sparam,f_k,\rho\gets\Setup(1^\lambda)$, and send $\param,\rho$ to the prover. Next, they sample two random challenge strings $c_L,c_R\in\{0,1\}^\secp$.
        \item Two routines are executed simultaneously:
        \begin{itemize}
            \item The prover and verifiers execute
            \[
            \Localize\big( P(\rho) \rightleftharpoons V(\sparam,f)\big)(\param,L,t)
            \]
            \item The left and right verifiers send $c_L$ and $c_R$, respectively, towards $(L,t)$. The prover receives these, computes $c:=c_L\oplus c_R$ and runs $y\gets\Eval(\rho,H(m||c))$. The prover sends $(m,y)$ to both verifiers simultaneously.
        \end{itemize}
    \item The verifiers accept iff $\Localize$ accepted and $y=f_k(H(m||(c_L\oplus c_R)))$.
\end{itemize}
Intuitively, the point of this construction is to authenticate the prover's message with the copy-protected PRF, which is simultaneously being localized to a single position. If a group of provers pass this protocol with high probability, then soundness of functionality localization guarantees that we can extract the ability to sign (say, uniformly random) messages from the circuits at the secure location. Uniqueness also guarantees that this cannot be happening anywhere else at time $t$. Thus, when the verifiers receive the signed message $m,f_k(H(m||c))$, we know that this evaluation must have taken place locally at the secure location, which additionally tells us that the prover queried the oracle value $H(m||c)$ at this location at time $t$. The techniques from our localization constructions would then allow this query to be recorded and ``extracted'' from the prover's circuitry at that position.

We leave further exploration of this direction to future work, e.g. formalize a stronger notion of security for position-based authentication and analyze the above construction in terms of said notion.

%% file: ref.bib
@article{girish2026private,
  title={Private proofs of when and where},
  author={Girish, Uma and Gluch, Greg and Goldwasser, Shafi and Malkin, Tal and Orshansky, Leo and Yuen, Henry},
  journal={arXiv preprint arXiv:2601.18961},
  year={2026}
}

@inproceedings{chandran2009position,
  title={Position based cryptography},
  author={Chandran, Nishanth and Goyal, Vipul and Moriarty, Ryan and Ostrovsky, Rafail},
  booktitle={Annual International Cryptology Conference},
  pages={391--407},
  year={2009},
  organization={Springer}
}

@inproceedings{zhandry2019record,
  title={How to record quantum queries, and applications to quantum indifferentiability},
  author={Zhandry, Mark},
  booktitle={Annual International Cryptology Conference},
  pages={239--268},
  year={2019},
  organization={Springer}
}

@inproceedings{C:Zhandry12,
  author    = {Mark Zhandry},
  editor    = {Reihaneh Safavi-Naini and
  Ran Canetti},
  title     = {Secure Identity-Based Encryption in the Quantum Random Oracle Model},
  booktitle = {Advances in Cryptology - {CRYPTO} 2012 - 32nd Annual Cryptology Conference,
  Santa Barbara, CA, USA, August 19-23, 2012. Proceedings},
  series    = {Lecture Notes in Computer Science},
  volume    = {7417},
  pages     = {758--775},
  publisher = {Springer},
  year      = {2012},
  doi       = {10.1007/978-3-642-32009-5_44},
  bibsource = {dblp computer science bibliography, https://dblp.org}
}

@article{buhrman2014position,
  author    = {Harry Buhrman and
               Nishanth Chandran and
               Serge Fehr and
               Ran Gelles and
               Vipul Goyal and
               Rafail Ostrovsky and
               Christian Schaffner},
  title     = {Position-Based Quantum Cryptography: Impossibility and Constructions},
  journal   = {{SIAM} J. Comput.},
  volume    = {43},
  number    = {1},
  pages     = {150--178},
  year      = {2014},
  doi       = {10.1137/130913687},
  bibsource = {dblp computer science bibliography, https://dblp.org}
}

@article{Tomamichel_2013,
	doi = {10.1088/1367-2630/15/10/103002},
	year = 2013,
	publisher = {{IOP} Publishing},
	volume = {15},
	number = {10},
	pages = {103002},
	author = {Marco Tomamichel and Serge Fehr and J{\k{e}}drzej Kaniewski and Stephanie Wehner},
	title = {A monogamy-of-entanglement game with applications to device-independent quantum cryptography},
	journal = {New Journal of Physics}
}

@InProceedings{unruh2014pvqrom,
  author    = {Dominique Unruh},
  editor    = {Juan A. Garay and
               Rosario Gennaro},
  title     = {Quantum Position Verification in the Random Oracle Model},
  booktitle = {Advances in Cryptology - {CRYPTO} 2014 - 34th Annual Cryptology Conference,
               Santa Barbara, CA, USA, August 17-21, 2014, Proceedings, Part {II}},
  series    = {Lecture Notes in Computer Science},
  volume    = {8617},
  pages     = {1--18},
  publisher = {Springer},
  year      = {2014},
  doi       = {10.1007/978-3-662-44381-1_1},
  bibsource = {dblp computer science bibliography, https://dblp.org}
}

@article{liu2021beating,
  title={Beating classical impossibility of position verification},
  author={Liu, Jiahui and Liu, Qipeng and Qian, Luowen},
  journal={arXiv preprint arXiv:2109.07517},
  year={2021}
}

@misc{bluhm2021positionbased,
      title={Position-based cryptography: Single-qubit protocol secure against multi-qubit attacks}, 
      author={Andreas Bluhm and Matthias Christandl and Florian Speelman},
      year={2021},
      eprint={2104.06301v2},
      archivePrefix={arXiv},
      primaryClass={quant-ph}
}

@article{KMS11qubit-routing,
  title = {Quantum tagging: Authenticating location via quantum information and relativistic signaling constraints},
  author = {Kent, Adrian and Munro, William J. and Spiller, Timothy P.},
  journal = {Phys. Rev. A},
  volume = {84},
  issue = {1},
  pages = {012326},
  numpages = {7},
  year = {2011},
  publisher = {American Physical Society},
  doi = {10.1103/PhysRevA.84.012326}
}

@article{CGMO09PBC,
  author    = {Nishanth Chandran and
               Vipul Goyal and
               Ryan Moriarty and
               Rafail Ostrovsky},
  title     = {Position-Based Cryptography},
  journal   = {{SIAM} J. Comput.},
  volume    = {43},
  number    = {4},
  pages     = {1291--1341},
  year      = {2014},
  doi       = {10.1137/100805005},
  bibsource = {dblp computer science bibliography, https://dblp.org}
}

@misc{vidick2021classicalproofsquantumknowledge,
      title={Classical proofs of quantum knowledge}, 
      author={Thomas Vidick and Tina Zhang},
      year={2021},
      eprint={2005.01691},
      archivePrefix={arXiv},
      primaryClass={quant-ph},
      url={https://arxiv.org/abs/2005.01691}, 
}

@misc{cryptoeprint:2025/1219,
      author = {Fuyuki Kitagawa and Takashi Yamakawa},
      title = {Foundations of Single-Decryptor Encryption},
      howpublished = {Cryptology {ePrint} Archive, Paper 2025/1219},
      year = {2025},
      url = {https://eprint.iacr.org/2025/1219}
}

@inproceedings{CG24,
author = {Coladangelo, Andrea and Gunn, Sam},
title = {How to Use Quantum Indistinguishability Obfuscation},
year = {2024},
isbn = {9798400703836},
publisher = {Association for Computing Machinery},
address = {New York, NY, USA},
url = {https://doi.org/10.1145/3618260.3649779},
doi = {10.1145/3618260.3649779},
booktitle = {Proceedings of the 56th Annual ACM Symposium on Theory of Computing},
pages = {1003–1008},
numpages = {6},
location = {Vancouver, BC, Canada},
series = {STOC 2024}
}

@inproceedings{BBV24,
author = {Bartusek, James and Brakerski, Zvika and Vaikuntanathan, Vinod},
title = {Quantum State Obfuscation from Classical Oracles},
year = {2024},
isbn = {9798400703836},
publisher = {Association for Computing Machinery},
address = {New York, NY, USA},
url = {https://doi.org/10.1145/3618260.3649673},
doi = {10.1145/3618260.3649673},
booktitle = {Proceedings of the 56th Annual ACM Symposium on Theory of Computing},
pages = {1009–1017},
numpages = {9},
location = {Vancouver, BC, Canada},
series = {STOC 2024}
}

@misc{hao2026needquantummemoryshort,
      title={On the Need for (Quantum) Memory with Short Outputs}, 
      author={Zihan Hao and Zikuan Huang and Qipeng Liu},
      year={2026},
      eprint={2602.23763},
      archivePrefix={arXiv},
      primaryClass={cs.CC},
      url={https://arxiv.org/abs/2602.23763}, 
}

@inproceedings{10.1109/CCC.2009.42,
author = {Aaronson, Scott},
title = {Quantum Copy-Protection and Quantum Money},
year = {2009},
isbn = {9780769537177},
publisher = {IEEE Computer Society},
address = {USA},
url = {https://doi.org/10.1109/CCC.2009.42},
doi = {10.1109/CCC.2009.42},
booktitle = {Proceedings of the 2009 24th Annual IEEE Conference on Computational Complexity},
pages = {229–242},
numpages = {14},
series = {CCC '09}
}

@inproceedings{10.1145/2213977.2213983,
author = {Aaronson, Scott and Christiano, Paul},
title = {Quantum money from hidden subspaces},
year = {2012},
isbn = {9781450312455},
publisher = {Association for Computing Machinery},
address = {New York, NY, USA},
url = {https://doi.org/10.1145/2213977.2213983},
doi = {10.1145/2213977.2213983},
booktitle = {Proceedings of the Forty-Fourth Annual ACM Symposium on Theory of Computing},
pages = {41–60},
numpages = {20},
location = {New York, New York, USA},
series = {STOC '12}
}

@article{BenDavid2023quantumtokens,
  doi = {10.22331/q-2023-01-19-901},
  url = {https://doi.org/10.22331/q-2023-01-19-901},
  title = {Quantum {T}okens for {D}igital {S}ignatures},
  author = {Ben-David, Shalev and Sattath, Or},
  journal = {{Quantum}},
  issn = {2521-327X},
  publisher = {{Verein zur F{\"{o}}rderung des Open Access Publizierens in den Quantenwissenschaften}},
  volume = {7},
  pages = {901},
  month = jan,
  year = {2023}
}

@inproceedings{10.1007/978-3-030-84242-0_19,
author = {Aaronson, Scott and Liu, Jiahui and Liu, Qipeng and Zhandry, Mark and Zhang, Ruizhe},
title = {New Approaches for Quantum Copy-Protection},
year = {2021},
isbn = {978-3-030-84241-3},
publisher = {Springer-Verlag},
address = {Berlin, Heidelberg},
url = {https://doi.org/10.1007/978-3-030-84242-0_19},
doi = {10.1007/978-3-030-84242-0_19},
booktitle = {Advances in Cryptology – CRYPTO 2021: 41st Annual International Cryptology Conference, CRYPTO 2021, Virtual Event, August 16–20, 2021, Proceedings, Part I},
pages = {526–555},
numpages = {30}
}

@InProceedings{bartusek_et_al:LIPIcs.ITCS.2022.15,
  author =	{Bartusek, James and Malavolta, Giulio},
  title =	{{Indistinguishability Obfuscation of Null Quantum Circuits and Applications}},
  booktitle =	{13th Innovations in Theoretical Computer Science Conference (ITCS 2022)},
  pages =	{15:1--15:13},
  series =	{Leibniz International Proceedings in Informatics (LIPIcs)},
  ISBN =	{978-3-95977-217-4},
  ISSN =	{1868-8969},
  year =	{2022},
  volume =	{215},
  editor =	{Braverman, Mark},
  publisher =	{Schloss Dagstuhl -- Leibniz-Zentrum f{\"u}r Informatik},
  address =	{Dagstuhl, Germany},
  URL =		{https://drops.dagstuhl.de/entities/document/10.4230/LIPIcs.ITCS.2022.15},
  URN =		{urn:nbn:de:0030-drops-156115},
  doi =		{10.4230/LIPIcs.ITCS.2022.15}
}

@inproceedings{10.1145/3564246.3585179,
author = {Bartusek, James and Kitagawa, Fuyuki and Nishimaki, Ryo and Yamakawa, Takashi},
title = {Obfuscation of Pseudo-Deterministic Quantum Circuits},
year = {2023},
isbn = {9781450399135},
publisher = {Association for Computing Machinery},
address = {New York, NY, USA},
url = {https://doi.org/10.1145/3564246.3585179},
doi = {10.1145/3564246.3585179},
booktitle = {Proceedings of the 55th Annual ACM Symposium on Theory of Computing},
pages = {1567–1578},
numpages = {12},
location = {Orlando, FL, USA},
series = {STOC 2023}
}

@INPROCEEDINGS {11369076,
author = { Huang, Mi-Ying Miryam and Tang, Er-Cheng },
booktitle = { 2025 IEEE 66th Annual Symposium on Foundations of Computer Science (FOCS) },
title = {{ Obfuscation of Unitary Quantum Programs }},
year = {2025},
volume = {},
ISSN = {},
pages = {1665-1671},
doi = {10.1109/FOCS63196.2025.00088},
url = {https://doi.ieeecomputersociety.org/10.1109/FOCS63196.2025.00088},
publisher = {IEEE Computer Society},
address = {Los Alamitos, CA, USA},
month =Dec}

@misc{huang2026obfuscationarbitraryquantumcircuits,
      title={Obfuscation of Arbitrary Quantum Circuits}, 
      author={Miryam Mi-Ying Huang and Er-Cheng Tang},
      year={2026},
      eprint={2601.08969},
      archivePrefix={arXiv},
      primaryClass={quant-ph},
      url={https://arxiv.org/abs/2601.08969}, 
}

@article{JACM:BGIRSVY12,
  author    = {Boaz Barak and
               Oded Goldreich and
               Russell Impagliazzo and
               Steven Rudich and
               Amit Sahai and
               Salil P. Vadhan and
               Ke Yang},
  title     = {On the (im)possibility of obfuscating programs},
  journal   = {J. {ACM}},
  volume    = {59},
  number    = {2},
  pages     = {6:1--6:48},
  year      = {2012},
}

@misc{bartusek2025newapproachargumentsquantum,
      title={A New Approach to Arguments of Quantum Knowledge}, 
      author={James Bartusek and Ruta Jawale and Justin Raizes and Kabir Tomer},
      year={2025},
      eprint={2510.05316},
      archivePrefix={arXiv},
      primaryClass={quant-ph},
      url={https://arxiv.org/abs/2510.05316}, 
}

@InProceedings{10.1007/978-3-031-38551-3_8,
author="Jain, Aayush
and Lin, Huijia
and Luo, Ji
and Wichs, Daniel",
editor="Handschuh, Helena
and Lysyanskaya, Anna",
title="The Pseudorandom Oracle Model and Ideal Obfuscation",
booktitle="Advances in Cryptology -- CRYPTO 2023",
year="2023",
publisher="Springer Nature Switzerland",
address="Cham",
pages="233--262",
isbn="978-3-031-38551-3"
}

@article{10.1145/3450745,
author = {Zhandry, Mark},
title = {How to Construct Quantum Random Functions},
year = {2021},
issue_date = {October 2021},
publisher = {Association for Computing Machinery},
address = {New York, NY, USA},
volume = {68},
number = {5},
issn = {0004-5411},
url = {https://doi.org/10.1145/3450745},
doi = {10.1145/3450745},
journal = {J. ACM},
month = aug,
articleno = {33},
numpages = {43}
}

@inproceedings{10.1007/978-3-030-84242-0_20,
author = {Coladangelo, Andrea and Liu, Jiahui and Liu, Qipeng and Zhandry, Mark},
title = {Hidden Cosets and Applications to Unclonable Cryptography},
year = {2021},
isbn = {978-3-030-84241-3},
publisher = {Springer-Verlag},
address = {Berlin, Heidelberg},
url = {https://doi.org/10.1007/978-3-030-84242-0_20},
doi = {10.1007/978-3-030-84242-0_20},
booktitle = {Advances in Cryptology – CRYPTO 2021: 41st Annual International Cryptology Conference, CRYPTO 2021, Virtual Event, August 16–20, 2021, Proceedings, Part I},
pages = {556–584},
numpages = {29}
}

@article{Coladangelo2024quantumcopy,
  doi = {10.22331/q-2024-05-02-1330},
  url = {https://doi.org/10.22331/q-2024-05-02-1330},
  title = {Quantum copy-protection of compute-and-compare programs in the quantum random oracle model},
  author = {Coladangelo, Andrea and Majenz, Christian and Poremba, Alexander},
  journal = {{Quantum}},
  issn = {2521-327X},
  publisher = {{Verein zur F{\"{o}}rderung des Open Access Publizierens in den Quantenwissenschaften}},
  volume = {8},
  pages = {1330},
  month = may,
  year = {2024}
}

@inproceedings{10.1007/978-3-031-22318-1_11,
author = {Liu, Jiahui and Liu, Qipeng and Qian, Luowen and Zhandry, Mark},
title = {Collusion Resistant Copy-Protection for Watermarkable Functionalities},
year = {2022},
isbn = {978-3-031-22317-4},
publisher = {Springer-Verlag},
address = {Berlin, Heidelberg},
url = {https://doi.org/10.1007/978-3-031-22318-1_11},
doi = {10.1007/978-3-031-22318-1_11},
booktitle = {Theory of Cryptography: 20th International Conference, TCC 2022, Chicago, IL, USA, November 7–10, 2022, Proceedings, Part I},
pages = {294–323},
numpages = {30},
location = {Chicago, IL, USA}
}

@inproceedings{10.1007/978-3-031-68394-7_1,
author = {Ananth, Prabhanjan and Behera, Amit},
title = {A Modular Approach to Unclonable Cryptography},
year = {2024},
isbn = {978-3-031-68393-0},
publisher = {Springer-Verlag},
address = {Berlin, Heidelberg},
url = {https://doi.org/10.1007/978-3-031-68394-7_1},
doi = {10.1007/978-3-031-68394-7_1},
booktitle = {Advances in Cryptology – CRYPTO 2024: 44th Annual International Cryptology Conference, Santa Barbara, CA, USA, August 18–22, 2024, Proceedings, Part VII},
pages = {3–37},
numpages = {35},
location = {Santa Barbara, CA, USA}
}

@misc{kaleoglu2025equivalenceclassicalpositionverification,
      title={On the Equivalence between Classical Position Verification and Certified Randomness}, 
      author={Fatih Kaleoglu and Minzhao Liu and Kaushik Chakraborty and David Cui and Omar Amer and Marco Pistoia and Charles Lim},
      year={2025},
      eprint={2410.03982},
      archivePrefix={arXiv},
      primaryClass={quant-ph},
      url={https://arxiv.org/abs/2410.03982}, 
}

@misc{kalai2026classicallyverifyquantumcat,
      title={How to Classically Verify a Quantum Cat without Killing It}, 
      author={Yael Tauman Kalai and Dakshita Khurana and Justin Raizes},
      year={2026},
      eprint={2602.09282},
      archivePrefix={arXiv},
      primaryClass={quant-ph},
      url={https://arxiv.org/abs/2602.09282}, 
}

@InProceedings{10.1007/978-3-032-25291-3_19,
author="Ananth, Prabhanjan
and Behera, Amit
and Huang, Zikuan
and Kitagawa, Fuyuki
and Yamakawa, Takashi",
editor="Daemen, Joan
and Thom{\'e}, Emmanuel",
title="Copy-Protection from Unclonable Puncturable Obfuscation, Revisited",
booktitle="Advances in Cryptology -- EUROCRYPT 2026",
year="2026",
publisher="Springer Nature Switzerland",
address="Cham",
pages="541--570",
isbn="978-3-032-25291-3"
}

@InProceedings{10.1007/978-3-032-25291-3_17,
author="{\c{C}}akan, Alper
and Goyal, Vipul",
editor="Daemen, Joan
and Thom{\'e}, Emmanuel",
title="How to Copy-Protect Malleable-Puncturable Cryptographic Functionalities Under Arbitrary Challenge Distributions: A Unified Solution to Quantum Protection",
booktitle="Advances in Cryptology -- EUROCRYPT 2026",
year="2026",
publisher="Springer Nature Switzerland",
address="Cham",
pages="481--509",
isbn="978-3-032-25291-3"
}

@misc{asadi2025ranklowerboundsnonlocal,
      title={Rank lower bounds on non-local quantum computation}, 
      author={Vahid R. Asadi and Eric Culf and Alex May},
      year={2025},
      eprint={2402.18647},
      archivePrefix={arXiv},
      primaryClass={quant-ph},
      url={https://arxiv.org/abs/2402.18647}, 
}

@article{Asadi_2025,
   title={Linear gate bounds against natural functions for position-verification},
   volume={9},
   ISSN={2521-327X},
   url={http://dx.doi.org/10.22331/q-2025-01-21-1604},
   DOI={10.22331/q-2025-01-21-1604},
   journal={Quantum},
   publisher={Verein zur Forderung des Open Access Publizierens in den Quantenwissenschaften},
   author={Asadi, Vahid and Cleve, Richard and Culf, Eric and May, Alex},
   year={2025},
   month=Jan, pages={1604} }
